\def\showauthornotes{0}
\def\showtableofcontents{1}
\def\showkeys{0}
\def\showdraftbox{0}
\def\showcolorlinks{1}
\def\usemicrotype{1}
\def\showfixme{0}
\newtheorem{theorem}{Theorem}[section]
\newtheorem*{theorem*}{Theorem}
\newtheorem{proposition}[theorem]{Proposition}
\newtheorem*{proposition*}{Proposition}
\newtheorem{lemma}[theorem]{Lemma}
\newtheorem*{lemma*}{Lemma}
\newtheorem{corollary}[theorem]{Corollary}
\newtheorem*{corollary*}{Corollary}
\newtheorem*{conjecture*}{Conjecture}
\newtheorem{fact}[theorem]{Fact}
\newtheorem*{fact*}{Fact}
\newtheorem*{hypothesis*}{Hypothesis}
\newtheorem{conjecture}[theorem]{Conjecture}
\theoremstyle{definition}
\newtheorem{definition}[theorem]{Definition}
\newtheorem{program}[theorem]{Program}
\newtheorem{problem}[theorem]{Problem}
\theoremstyle{remark}
\newtheorem{claim}[theorem]{Claim}
\newtheorem*{claim*}{Claim}
\newtheorem{remark}[theorem]{Remark}
\newtheorem*{remark*}{Remark}
\newtheorem*{observation*}{Observation}
\let\mathbb\varmathbb
\newcommand{\savehyperref}[2]{\texorpdfstring{\hyperref[#1]{#2}}{#2}}
\newcommand{\Sref}[1]{\hyperref[#1]{\S\ref*{#1}}}
\let\nfrac=\nicefrac
\newcommand{\Authornote}[2]{{\sffamily\small\color{red}{[#1: #2]}}}
\newcommand{\Authornotecolored}[3]{{\sffamily\small\color{#1}{[#2: #3]}}}
\newcommand{\Authorcomment}[2]{{\sffamily\small\color{gray}{[#1: #2]}}}
\newcommand{\Authorstartcomment}[1]{\sffamily\small\color{gray}[#1: }
\newcommand{\Authorfnote}[2]{\footnote{\color{red}{#1: #2}}}
\newcommand{\Authorfixme}[1]{\Authornote{#1}{\textbf{??}}}
\newcommand{\Authormarginmark}[1]{\marginpar{\textcolor{red}{\fbox{\Large #1:!}}}}
\newcommand{\Authornote}[2]{}
\newcommand{\Authornotecolored}[3]{}
\newcommand{\Authorcomment}[2]{}
\newcommand{\Authorstartcomment}[1]{}
\newcommand{\Authorfnote}[2]{}
\newcommand{\Authorfixme}[1]{}
\newcommand{\Authormarginmark}[1]{}
\newcommand{\Tnote}{\Authornotecolored{red}{T}}
\newcommand{\Pnote}{\Authornote{P}}
\definecolor{forestgreen(traditional)}{rgb}{0.0, 0.27, 0.13}
\newcommand{\Snote}{\Authornote{S}}
\newcommand{\Paren}[1]{\left(#1\right)}
\newcommand{\Brac}[1]{\left[#1\right]}
\newcommand{\norm}[1]{\lVert#1\rVert}
\newcommand{\Norm}[1]{\left\lVert#1\right\rVert}
\newcommand{\iprod}[1]{\langle#1\rangle}
\newcommand{\Iprod}[1]{\left\langle#1\right\rangle}
\newcommand{\Esymb}{\mathbb{E}}
\newcommand{\Psymb}{\mathbb{P}}
\newcommand{\Vsymb}{\mathbb{V}}
\DeclareMathOperator*{\E}{\Esymb}
\DeclareMathOperator*{\Var}{\Vsymb}
\DeclareMathOperator*{\ProbOp}{\Psymb}
\renewcommand{\Pr}{\ProbOp}
\newcommand{\tensor}{\otimes}
\newcommand{\textparen}[1]{\text{(#1)}}
\newcommand{\because}[1]{\textparen{because #1}}
\renewcommand{\because}[1]{\textparen{because #1}}
\newcommand{\sge}{\succeq}
\newcommand{\bits}{\{0,1\}}
\newcommand{\defeq}{\stackrel{\mathrm{def}}=}
\newcommand{\mper}{\,.}
\newcommand\bdot\bullet
\DeclareMathOperator{\Ind}{\mathbb{I}}
\DeclareMathOperator{\Ind}{\mathds 1}}
\DeclareMathOperator{\Tr}{Tr}
\DeclareMathOperator{\opt}{opt}
\DeclareMathOperator{\sign}{sign}
\newcommand{\Erdos}{Erd\H{o}s\xspace}
\newcommand{\N}{\mathbb N}
\newcommand{\R}{\mathbb R}
\newcommand{\problemmacro}[1]{\texorpdfstring{\textup{\textsc{#1}}}{#1}\xspace}
\newcommand{\maxkcsp}{\problemmacro{max $k$-csp}}
\newcommand{\maxclique}{\problemmacro{max clique}}
\newcommand{\densestksubgraph}{\problemmacro{densest $k$-subgraph}}
\newcommand{\sparsepca}{\problemmacro{sparse PCA}}
\newcommand{\tensorpca}{\problemmacro{tensor PCA}}
\newcommand{\communitydetection}{\problemmacro{community detection}}
\newcommand{\cA}{\mathcal A}
\newcommand{\cB}{\mathcal B}
\newcommand{\cD}{\mathcal D}
\newcommand{\cG}{\mathcal G}
\newcommand{\cI}{\mathcal I}
\newcommand{\cL}{\mathcal L}
\newcommand{\cN}{\mathcal N}
\newcommand{\cP}{\mathcal P}
\newcommand{\cR}{\mathcal R}
\newcommand{\cS}{\mathcal S}
\newcommand{\cX}{\mathcal X}
\newcommand{\scrI}{\mathscr I}
\newcommand{\bbS}{\mathbb S}
\renewcommand{\leq}{\leqslant}
\renewcommand{\le}{\leqslant}
\renewcommand{\geq}{\geqslant}
\renewcommand{\ge}{\geqslant}
\newcommand{\draftbox}{\begin{center}
  \fbox{%
    \begin{minipage}{2in}%
      \begin{center}%
          \Large\textsc{Working Draft}\\%
        Please do not distribute%
      \end{center}%
    \end{minipage}%
  }%
\end{center}
\vspace{0.2cm}}
\newcommand{\draftbox}{}
\let\epsilon=\varepsilon
\numberwithin{equation}{section}
\newcommand\MYcurrentlabel{xxx}
\newcommand{\MYstore}[2]{%
  \global\expandafter \def \csname MYMEMORY #1 \endcsname{#2}%
}
\newcommand{\MYload}[1]{%
  \csname MYMEMORY #1 \endcsname%
}
\newcommand{\MYnewlabel}[1]{%
  \renewcommand\MYcurrentlabel{#1}%
  \MYoldlabel{#1}%
}
\newcommand{\MYdummylabel}[1]{}
\newcommand{\torestate}[1]{%
  \let\MYoldlabel\label%
  \let\label\MYnewlabel%
  #1%
  \MYstore{\MYcurrentlabel}{#1}%
  \let\label\MYoldlabel%
}
\newcommand{\restatetheorem}[1]{%
  \let\MYoldlabel\label
  \let\label\MYdummylabel
  \begin{theorem*}[Restatement of \prettyref{#1}]
    \MYload{#1}
  \end{theorem*}
  \let\label\MYoldlabel
}
\newcommand{\restatelemma}[1]{%
  \let\MYoldlabel\label
  \let\label\MYdummylabel
  \begin{lemma*}[Restatement of \prettyref{#1}]
    \MYload{#1}
  \end{lemma*}
  \let\label\MYoldlabel
}
\newcommand{\restateprop}[1]{%
  \let\MYoldlabel\label
  \let\label\MYdummylabel
  \begin{proposition*}[Restatement of \prettyref{#1}]
    \MYload{#1}
  \end{proposition*}
  \let\label\MYoldlabel
}
\newcommand{\restatefact}[1]{%
  \let\MYoldlabel\label
  \let\label\MYdummylabel
  \begin{fact*}[Restatement of \prettyref{#1}]
    \MYload{#1}
  \end{fact*}
  \let\label\MYoldlabel
}
\newcommand{\restate}[1]{%
  \let\MYoldlabel\label
  \let\label\MYdummylabel
  \MYload{#1}
  \let\label\MYoldlabel
}
\newcommand{\addreferencesection}{
  \phantomsection
  \addcontentsline{toc}{section}{References}
}
\newcommand{\e}{\epsilon}
\newcommand{\eps}{\epsilon}
\let\origparagraph\paragraph
\renewcommand{\paragraph}[1]{\origparagraph{#1.}}
\DeclareMathOperator{\Span}{span}
\DeclareMathOperator{\obj}{obj}
\DeclareMathOperator{\Id}{\mathrm{Id}}
\DeclareUrlCommand\email{}
\DeclareMathOperator*{\pE}{\widetilde{\mathbb E}}
\newcommand{\bT}{\mathbf{T}}
\let\pref=\prettyref
\newcommand*{\dyad}[1]{#1#1{}^{\mkern-4mu\intercal}}
\newcommand{\sdpopt}{\mathop{\textrm{sdpOpt}}}
\newcommand{\sos}{SoS\xspace}
\newcommand{\pdist}{\mu}
\newcommand{\udist}{\nu}
\newcommand{\instN}{N}
\newcommand{\sol}{x}
\newcommand{\sols}{\cX}
\newcommand{\success}{s}
\newcommand{\mempty}{\mathbf{e}_{\emptyset,\emptyset}}
\newcommand{\plantedclique}{\textsc{Planted Clique}}
\newcommand{\subinst}{\downarrow}
\newcommand{\subsetdist}{\Theta}
\newcommand{\inst}{\mathrm{inst}}
\newcommand{\prog}{\mathrm{prog}}
\renewcommand{\pE}{\tilde{\mathbb{E}}}
\title{The power of sum-of-squares for detecting hidden structures}
\author{%
  Samuel B. Hopkins\thanks{Cornell University, \protect\email{samhop@cs.cornell.edu} Partially supported by an NSF GRFP under grant no. 1144153, by a Microsoft Research Graduate Fellowship, and by David Steurer's NSF CAREER award.}
\and
Pravesh K. Kothari \thanks{Princeton University and IAS, \protect \email{kothari@cs.princeton.edu}}
\and
Aaron Potechin
\and
Prasad Raghavendra
\and
Tselil Schramm\thanks{UC Berkeley, \protect\email{tscrhamm@cs.berkeley.edu}. Supported by an NSF Graduate Research Fellowship (1106400).}
\and
David Steurer\thanks{Cornell University, \protect\email{dsteurer@cs.cornell.edu}.
Supported by a Microsoft Research Fellowship, a Alfred P. Sloan Fellowship, an NSF CAREER award, and the Simons Collaboration for Algorithms and Geometry.}}
\begin{document}

\maketitle
\draftbox
\thispagestyle{empty}

\begin{abstract}
  We study planted problems---finding hidden structures in random noisy inputs---through the lens of the sum-of-squares semidefinite programming hierarchy (\sos).
  This family of powerful semidefinite programs has recently yielded many new algorithms for planted problems, often achieving the best known polynomial-time guarantees in terms of accuracy of recovered solutions and robustness to noise.
  One theme in recent work is the design of spectral algorithms which match the guarantees of \sos algorithms for planted problems.
  Classical spectral algorithms are often unable to accomplish this: the twist in these new spectral algorithms is the use of spectral structure of matrices whose entries are low-degree polynomials of the input variables.

  We prove that for a wide class of planted problems, including refuting random constraint satisfaction problems, tensor and sparse PCA, densest-$k$-subgraph, community detection in stochastic block models, planted clique, and others, eigenvalues of degree-$d$ matrix polynomials are as powerful as \sos semidefinite programs of size roughly $n^d$.
  For such problems it is therefore always possible to match the guarantees of \sos without solving a large semidefinite program.

  Using related ideas on \sos algorithms and low-degree matrix polynomials (and inspired by recent work on \sos and the planted clique problem \cite{DBLP:conf/focs/BarakHKKMP16}), we prove new nearly-tight \sos lower bounds for the tensor and sparse principal component analysis problems.
  Our lower bounds are the first to suggest that improving upon the signal-to-noise ratios handled by existing polynomial-time algorithms for these problems may require subexponential time.
\end{abstract}

\clearpage

\ifnum\showtableofcontents=1
{
\tableofcontents
\thispagestyle{empty}
 }
\fi

\clearpage

\setcounter{page}{1}

\section{Introduction}
\label{sec:introduction}

Recent years have seen a surge of progress in algorithm design via the sum-of-squares (\sos) semidefinite programming hierarchy.
Initiated by the work of \cite{DBLP:conf/stoc/BarakBHKSZ12}, who showed that polynomial time algorithms in the hierarchy solve all known integrality gap instances for Unique Games and related problems, a steady stream of works have developed efficient algorithms for both worst-case  \cite{DBLP:conf/stoc/BarakKS14,DBLP:conf/stoc/BarakKS15,DBLP:journals/corr/BarakKS17,DBLP:journals/eccc/BhattiproluGGLT16} and average-case problems \cite{DBLP:conf/colt/HopkinsSS15,DBLP:conf/approx/GeM15,DBLP:conf/colt/BarakM16,DBLP:journals/corr/RaghavendraRS16,DBLP:journals/corr/BhattiproluGL16,DBLP:journals/corr/MaSS16,DBLP:journals/corr/PotechinS17}.
The insights from these works extend beyond individual algorithms to characterizations of broad classes of algorithmic techniques.
In addition, for a large class of problems (including constraint satisfaction), the family of SoS semidefinite programs is now known to be as powerful as \emph{any} semidefinite program (SDP) \cite{DBLP:conf/stoc/LeeRS15}.

In this paper we focus on recent progress in using Sum of Squares algorithms to solve average-case, and especially \emph{planted} problems---problems that ask for the recovery of a planted \emph{signal} perturbed by random \emph{noise}.
Key examples are finding solutions of random constraint satisfaction problems (CSPs) with planted assignments \cite{DBLP:journals/corr/RaghavendraRS16} and finding planted optima of random polynomials over the $n$-dimensional unit sphere \cite{DBLP:journals/corr/RaghavendraRS16,DBLP:journals/corr/BhattiproluGL16}.
The latter formulation captures a wide range of unsupervised learning problems, and has led to many unsupervised learning algorithms with the best-known polynomial time guarantees \cite{DBLP:conf/stoc/BarakKS15, DBLP:conf/stoc/BarakKS14, DBLP:conf/focs/MaSS16, DBLP:conf/colt/HopkinsSS15, DBLP:journals/corr/PotechinS17, DBLP:journals/eccc/BhattiproluGGLT16}.

In many cases, classical algorithms for such planted problems are \emph{spectral} algorithms---i.e., using the top eigenvector of a natural matrix associated with the problem input to recover a planted solution. The canonical algorithms for the \emph{planted clique}  \cite{DBLP:journals/rsa/AlonKS98}, \emph{principal components analysis} (PCA) \cite{Pearson1901}, and \emph{tensor decomposition} (which is intimately connected to optimizaton of polynomials on the unit sphere) \cite{harshman1970foundations} are all based on this general scheme.
In all of these cases, the algorithm employs the top eigenvector of a matrix which is either given as input (the adjacency matrix, for planted clique), or is a simple function of the input (the empirical covariance, for PCA).

Recent works have shown that one can often improve upon these basic spectral methods using \sos, yielding better accuracy and robustness guarantees against noise in recovering planted solutions.
Furthermore, for worst case problems---as opposed to the average-case planted problems we consider here---semidefinite programs are strictly more powerful than spectral algorithms.\footnote{For example, consider the contrast between the SDP algorithm for Max-Cut of Goemans and Williamson, \cite{DBLP:conf/stoc/GoemansW94}, and the spectral algorithm of Trevisan  \cite{DBLP:conf/stoc/Trevisan09}; or the SDP-based algorithms for coloring worst-case 3-colorable graphs \cite{Kawa} relative to the best spectral methods \cite{DBLP:journals/siamcomp/AlonK97} which only work for random inputs.}
\emph{A priori} one might therefore expect that these new \sos guarantees for planted problems would not be achievable via spectral algorithms.
But curiously enough, in numerous cases these stronger guarantees for planted problems can be achieved by spectral methods!
The twist is that the entries of these matrices are low-degree polynomials in the input to the algorithm .
The result is a new family of low-degree spectral algorithms with guarantees matching \sos but requriring only eigenvector computations instead of general semidefinite programming \cite{DBLP:conf/stoc/HopkinsSSS16,DBLP:journals/corr/RaghavendraRS16,MR3473335-Allen15}.

This leads to the following question which is the main focus of this work.

\centerline{ \emph{Are \sos algorithms equivalent to low-degree spectral methods
for planted problems?}}

We answer this question affirmatively for a wide class of distinguishing problems which includes refuting random CSPs, tensor and sparse PCA, densest-$k$-subgraph, community detection in stochastic block models, planted clique, and more. Our positive answer to this question implies that a light-weight algorithm---computing the top eigenvalue of a single matrix whose entries are low-degree polynomials in the input---can recover the performance guarantees of an often bulky semidefinite programming relaxation.

To complement this picture, we prove two new \sos lower bounds for particular planted problems, both variants of component analysis: sparse principal component analysis and tensor principal component analysis (henceforth sparse PCA and tensor PCA, respectively) \cite{spca-original,DBLP:conf/nips/RichardM14}.
For both problems there are nontrivial low-degree spectral algorithms, which have better noise tolerance than naive spectral methods \cite{DBLP:conf/stoc/HopkinsSSS16,DBLP:conf/nips/DeshpandeM14, DBLP:journals/corr/RaghavendraRS16, DBLP:journals/corr/BhattiproluGL16}.
Sparse PCA, which is used in machine learning and statistics to find important coordinates in high-dimensional data sets, has attracted much attention in recent years for being apparently computationally intractable to solve with a number of samples which is more than sufficient for brute-force algorithms \cite{krauthgamer2015semidefinite,berthet2013computational,DBLP:conf/nips/MaW15}.
Tensor PCA appears to exhibit similar behavior \cite{DBLP:conf/colt/HopkinsSS15}.
That is, both problems exhibit \emph{information-computation gaps}.

Our \sos lower bounds for both problems are the strongest yet formal evidence for information-computation gaps for these problems.
We rule out the possibility of subexponential-time \sos algorithms which improve by polynomial factors on the signal-to-noise ratios tolerated by the known low degree spectral methods.
In particular, in the case of sparse PCA, it appeared possible prior to this work that it might be possible in quasipolynomial time to recover a $k$-sparse unit vector $v$ in $p$ dimensions from $O(k \log p)$ samples from the distribution $\cN(0, \Id + vv^\top)$.
Our lower bounds suggest that this is extremely unlikely; in fact this task probably requires polynomial \sos degree and hence $\exp( n^{\Omega(1)})$ time for \sos algorithms.
This demonstrates that (at least with regard to \sos algorithms) both problems are much harder than the \emph{planted clique} problem, previously used as a basis for reductions in the setting of sparse PCA \cite{berthet2013computational}.

Our lower bounds for sparse and tensor PCA are closely connected to the failure of low-degree spectral methods in high noise regimes of both problems.
We prove them both by showing that with noise beyond what known low-degree spectral algorithms can tolerate, even low-degree \emph{scalar} algorithms (the result of restricting low-degree spectral algorithms to $1 \times 1$ matrices) would require subexponential time to detect and recover planted signals.
We then show that in the restricted settings of tensor and sparse PCA, ruling out these weakened low-degree spectral algorithms is enough to imply a strong \sos lower bound.

\subsection{SoS and spectral algorithms for robust inference}
We turn to our characterization of \sos algorithms for planted problems in terms of low-degree spectral algorithms.
First, a word on planted problems.
Many planted problems have several formulations: \emph{search}, in which the goal is to recover a planted solution, \emph{refutation}, in which the goal is to certify that no planted solution is present, and \emph{distinguishing}, where the goal is to determine with good probability whether an instance contains a planted solution or not.
Often an algorithm for one version can be parlayed into algorithms for the others, but distinguishing problems are often the easiest, and we focus on them here.

A distinguishing problem is specified by two distributions on instances: a \emph{planted} distribution supported on instances with a hidden structure, and a \emph{uniform} distribution, where samples w.h.p. contain no hidden structure.
Given an instance drawn with equal probability from the planted or the uniform distribution, the goal is to determine with probability greater than $\tfrac 1 2$ whether or not the instance comes from the planted distribution.
For example:

\textbf{Planted clique } \emph{Uniform distribution: } $G(n,\tfrac 1 2)$, the \Erdos-Renyi distribution, which w.h.p. contains no clique of size $\omega(\log n)$.
\emph{Planted distribution: } The uniform distribution on graphs containing a $n^{\e}$-size clique, for some $\e > 0$. (The problem gets harder as $\e$ gets smaller, since the distance between the distributions shrinks.)

\textbf{Planted $3$\textsc{xor} }
\emph{Uniform distribution: } a $3$\textsc{xor} instance on $n$ variables and $m > n$ equations $x_i x_j x_k = a_{ijk}$, where all the triples $(i,j,k)$ and the signs $a_{ijk} \in \{ \pm 1\}$ are sampled uniformly and independently. No assignment to $x$ will satisfy more than a $0.51$-fraction of the equations, w.h.p.
  \emph{Planted distribution: } The same, except the signs $a_{ijk}$ are sampled to correlate with $b_i b_j b_k$ for a randomly chosen $b_i \in \{ \pm 1\}$, so that the assignment $x = b$ satisfies a $0.9$-fraction of the equations.
  (The problem gets easier as $m/n$ gets larger, and the contradictions in the uniform case become more locally apparent.)

  We now formally define a family of distinguishing problems, in order to give our main theorem.
Let $\scrI$ be a set of instances corresponding to a product space (for concreteness one may think of $\scrI$ to be the set of graphs on $n$ vertices, indexed by $\{0,1\}^{\binom{n}{2}}$, although the theorem applies more broadly).
Let $\nu$, our uniform distrbution, be a product distribution on $\scrI$.

With some decision problem $\cP$ in mind (e.g. does $G$ contain a clique of size $\ge n^\eps$?), let $\cX$ be a set of solutions to $\cP$; again for concreteness one may think of $\cX$ as being associated with cliques in a graph, so that $\cX \subset \{0,1\}^n$ is the set of all indicator vectors on at least $n^{\eps}$ vertices.

For each solution $x \in \cX$, let $\mu_{|_x}$ be the uniform distribution over instances $I \in \scrI$ that contain $x$.
For example, in the context of planted clique, if $x$ is a clique on vertices $1,\ldots,n^{\e}$, then $\mu_{|_x}$ would be the uniform distribution on graphs containing the clique $1,\ldots,n^{\e}$.
We define the planted distribution $\mu$ to be the uniform mixture over $\mu_x$, $\mu = U_{x \sim \cX} \mu_{|_x}$.

The following is our main theorem on the equivalence of sum of squares algorithms for distinguishing problems and spectral algorithms employing low-degree matrix polynomials.

\begin{theorem}[Informal]\label{thm:maindist}
    Let $N,n \in \cN$, and let $\cA,\cB$ be sets of real numbers.
    Let $\scrI$ be a family of instances over $\cA^N$, and let $\cP$ be a decision problem over $\scrI$ with $\cX = \cB^n$ the set of possible solutions to $\cP$ over $\scrI$.
  Let $\{g_j(x,I)\}$ be a system of $n^{O(d)}$ polynomials of degree at most $d$ in the variables $x$ and constant degree in the variables $\cI$ that encodes $\cP$, so that
  \begin{itemize}
    \item for $I \sim_{\nu} \scrI$, with high probability the system is unsatisfiable and admits a degree-$d$ SoS refutation, and
    \item for $I \sim_{\mu} \scrI$, with high probability the system is satisfiable by some solution $x \in X$, and $x$ remains feasible even if all but an $n^{-0.01}$-fraction of the coordinates of $\cI$ are re-randomized according to $\nu$.
  \end{itemize}
  Then there exists a matrix whose entries are degree-$O(d)$ polynomials $Q : \scrI \rightarrow \R^{{n \choose \leq d} \times {n \choose \leq d}}$ such that
  \[
    \E_{I \sim \nu} \Brac{\lambda_{max}^+(Q(I))} \leq 1,\quad \text{ while } \quad \E_{I \sim \mu} \Brac{\lambda_{max}^+(Q(I))} \geq n^{10d},
  \]
  where $\lambda^+_{\max}$ denotes the maximum non-negative eigenvalue.
\end{theorem}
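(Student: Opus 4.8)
The plan is to construct $Q$ directly from the planted distribution by a pseudo-calibration-style low-degree truncation of its moment matrix, and then to use the two hypotheses separately: the robust feasibility of the planted solution to lower bound $\E_{I\sim\mu}\brac{\lambda^+_{\max}(Q(I))}$, and the degree-$d$ \sos refutation to upper bound $\E_{I\sim\nu}\brac{\lambda^+_{\max}(Q(I))}$. Index the rows and columns of all matrices by monomials $x^S$ with $|S|\le d$, writing $v_d(x)=(x^S)_{|S|\le d}$, so that a genuine feasible solution $x$ of $\{g_j(x,I)\ge 0\}$ yields the rank-one positive semidefinite ``moment matrix'' $\vardyad{v_d(x)}$, whose $(S,T)$-entry depends only on $S\cup T$. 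Since $\nu$ is a product distribution, expand functions of $I$ in the associated orthogonal (Fourier) basis $\{\chi_\gamma\}$ and let $\Pi_{\le D}$ be the projection onto characters of degree at most $D$, for a cutoff $D=O(d)$ fixed later. Define $Q(I)$ to be the entrywise projection $\Pi_{\le D}$ of the matrix-valued function $I\mapsto\frac{d\mu}{d\nu}(I)\cdot\E_{x\sim\mu(\cdot\mid I)}\brac{\vardyad{v_d(x)}}$; equivalently $Q(I)_{S,T}=\sum_{|\gamma|\le D}\E_{(x,I)\sim\mu}\brac{x^Sx^T\chi_\gamma(I)}\cdot\chi_\gamma(I)$. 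Then the entries of $Q$ are degree-$O(d)$ polynomials in the coordinates of $I$, and $Q(I)$ is still ``moment-like'' in the sense that $Q(I)_{S,T}$ depends only on $S\cup T$. The crucial observation is that, \emph{before} the projection, $\frac{d\mu}{d\nu}(I)\cdot\E_{x\sim\mu(\cdot\mid I)}\brac{\vardyad{v_d(x)}}$ is, for every fixed $I$, a nonnegative combination of moment matrices of solutions feasible for $I$, hence positive semidefinite and feasible for $\{g_j(\cdot,I)\ge 0\}$ in the \sos sense; the projection $\Pi_{\le D}$ is the only source of error on either side.

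For the planted lower bound, test $Q(I)$ against the moment vector of the planted solution $x=x(I)$: $\lambda^+_{\max}(Q(I))\ge\transpose{v_d(x)}Q(I)v_d(x)/\snormt{v_d(x)}$ once the right-hand side is positive. Before truncation, retaining only the $x'=x$ term of the posterior average gives $\transpose{v_d(x)}\Paren{\tfrac{d\mu}{d\nu}(I)\,\E_{x'\sim\mu(\cdot\mid I)}\brac{\vardyad{v_d(x')}}}v_d(x)\ge\frac{\mu(x,I)}{\nu(I)}\,\normt{v_d(x)}^4$, so the untruncated quadratic form, divided by $\snormt{v_d(x)}\ge 1$, is at least $\frac{\mu(x,I)}{\nu(I)}$, which for a $\mu$-typical instance is super-polynomially large---far more than $n^{10d}$. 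The real work is to show the low-degree projection preserves enough of this signal, and this is exactly what the robustness hypothesis buys: because $x$ stays feasible after re-randomizing all but an $n^{-0.01}$ fraction of the coordinates of $I$, the conditional second moments above---viewed as functions of $I$---carry all but a vanishing fraction of their Fourier mass on characters of degree $O(d)$, so $\Pi_{\le D}$ perturbs the relevant quadratic form only in a lower-order way. This yields $\E_{I\sim\mu}\brac{\lambda^+_{\max}(Q(I))}\ge n^{10d}$.

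For the uniform upper bound, fix a $\nu$-typical instance $I$ admitting a degree-$d$ refutation $-1=\sum_j\sigma_j(x)g_j(x,I)$ with the $\sigma_j$ sums of squares. Since $Q(I)$ is moment-like, each quadratic form $\transpose vQ(I)v$ equals $\pE_{Q(I)}[p]$, where $\pE_{Q(I)}$ is the pseudo-expectation read off from $Q(I)$ and $p=\bigparen{\sum_S v_Sx^S}^2$; and because $Q(I)$ differs from a genuine feasible positive semidefinite moment matrix only by the truncation error, $\pE_{Q(I)}$ respects $\{g_j(\cdot,I)\ge 0\}$ in the \sos sense up to that error (all degrees here being taken at a common constant multiple of $d$). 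Applying $\pE_{Q(I)}$ to a suitable square multiple of the refutation identity then forces $\transpose vQ(I)v\le(\text{truncation error})$ for every unit vector $v$, so $\lambda^+_{\max}(Q(I))=o(1)$ whenever such a refutation exists. On the exceptional event I would fall back on the crude bound $\lambda^+_{\max}(Q(I))\le\norm{Q(I)}_F\le n^{O(d)}$, whose contribution to the expectation is negligible since the refutation exists with overwhelming probability; rescaling $Q$ by a constant then gives $\E_{I\sim\nu}\brac{\lambda^+_{\max}(Q(I))}\le 1$.

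The main obstacle is reconciling the two sides through the single choice of cutoff $D=O(d)$ together with a uniform handle on the truncation error: $D$ must be small enough that, on the uniform side, $\pE_{Q(I)}$ is a \emph{near-feasible} pseudo-expectation so the refutation still kills $\lambda^+_{\max}$ up to $o(1)$, yet large enough that, on the planted side, the signal $\transpose{v_d(x)}Q(I)v_d(x)$ survives the projection. Pinning down how the robustness parameter $n^{-0.01}$, a lower bound on $\snormt{v_d(x)}$ for $x\in\cX$, and the refutation degree $d$ combine to yield the stated gap ($1$ versus $n^{10d}$)---and in particular proving the quantitative statement that re-randomizing a $(1-n^{-0.01})$-fraction of coordinates both preserves feasibility of $x$ and confines the Fourier mass of the associated conditional moments to degree $O(d)$---is the technical heart of the argument.
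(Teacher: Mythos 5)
There is a genuine gap on the uniform side, and it is not a detail of bookkeeping: your claim that a degree-$d$ \sos refutation forces $\lambda^+_{\max}(Q(I))$ to be $o(1)$ is circular. Pairing the refutation $-1 = a(x) + \sum_j g_j(x)q_j(x)$ (with $a$ a sum of squares) against the matrix $Q(I)$ yields a \emph{single} scalar identity $-Q(I)_{\emptyset,\emptyset} = \iprod{Q(I),A}+\iprod{Q(I),G}$; if $Q(I)$ nearly satisfies the ideal constraints, this shows $\iprod{Q(I),A}\approx -1<0$, i.e.\ that $Q(I)$ has a \emph{negative} direction. It says nothing about its largest positive eigenvalue. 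Your proposed fix---multiplying the refutation by $\bigparen{\sum_S v_S x^S}^2$ and ``applying $\pE_{Q(I)}$''---needs $\iprod{Q(I),M}\ge 0$ for the moment matrix $M$ of the resulting sum of squares, i.e.\ it presupposes $Q(I)\succeq 0$ on the relevant degrees, which is exactly what the refutation just ruled out (and it also pushes the degree to $4d$, beyond what $Q$ records). Your $Q=\Lambda^{\le D}$ is precisely the pseudo-calibration object of \pref{lem:tpca-conditions-main}; proving its positivity is the hard content of the paper's \emph{lower bound} sections and holds only in regimes where low-degree distinguishing is impossible---the opposite of the regime of \pref{thm:maindist}. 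A secondary gap: robustness does not confine the Fourier mass of $\hat\mu(I)\cdot\E_{x\mid I}[x^Sx^T]$ to degree $O(d)$ (the relative density $\hat\mu$ typically has enormous high-degree mass, which is why the $\chi^2$-divergence is exponential); the paper uses robustness quite differently, namely to show that the \emph{subsampled average} $\Lambda=\E_{S\sim\subsetdist}\Lambda_S$, each term of which depends only on $\cI_S$, approximately satisfies the ideal constraints (\pref{lem:lambdaideal}) and acts like a random restriction against high-degree test functions (\pref{lem:randomrestriction}). The averaging over $S$ is essential and is absent from your construction.

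The paper's route is dual to yours and non-constructive. Assuming no degree-$2D$ distinguisher exists, it solves the moment-matching convex program \pref{prog:distrib} to obtain a PSD matrix-valued function $P$ defined on all of $\scrI$ that agrees with $\Lambda$ on low-degree moments; the distinguisher of the theorem is the \emph{dual certificate} of that program (\pref{prog:disting}), whose normalization $\E_{\nu}[\lambda^+_{\max}(Q)]\le 1$ comes for free from the dual constraint $\norm{Q_+}_{Fr,\nu}\le 1$ rather than from the refutation. The refutation enters only at the last step, where substituting $P$ into the identity $-1=a+\sum_j g_jq_j$ produces the contradiction. To make your explicit construction work you would need an independent handle on $\E_\nu[\lambda^+_{\max}(\Lambda^{\le D})]$, and no such bound is available in general.
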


The condition that a solution $x$ remain feasible if all but a fraction of the coordinates of $I \sim \mu_{|_x}$ are re-randomized should be interpreted as a noise-robustness condition.
To see an example, in the context of planted clique, suppose we start with a planted distribution over graphs with a clique $x$ of size $n^{\eps+0.01}$.
If a random subset of $n^{0.99}$ vertices are chosen, and all edges not entirely contained in that subset are re-randomized according to the $G(n,1/2)$ distribution, then with high probability at least $n^{\eps}$ of the vertices in $x$ remain in a clique, and so $x$ remains feasible for the problem $\cP$: $G$ has a clique of size $\ge n^{\eps}$?

\subsection{SoS and information-computation gaps}
Computational complexity of planted problems has become a rich area of study.
The goal is to understand which planted problems admit efficient (polynomial time) algorithms, and to study the \emph{information-computation gap} phenomenon: many problems have noisy regimes in which planted structures can be found by inefficient algorithms, but (conjecturally) not by polynomial time algorithms.
One example is the \emph{planted clique} problem, where the goal find a large clique in a sample from the uniform distribution over graphs containing a clique of size $n^\eps$ for a small constant $\eps>0$.
While the problem is solvable for any $\e > 0$ by a brute-force algorithm requiring $n^{\Omega(\log n)}$ time, polynomial time algorithms are conjectured to require $\e \geq \tfrac 12$.

A common strategy to provide evidence for such a gap is to prove that powerful classes of efficient algorithms are unable to solve the planted problem in the (conjecturally) hard regime.
SoS algorithms are particularly attractive targets for such lower bounds because of their broad applicability and strong guarantees.

In a recent work, Barak et al. \cite{DBLP:conf/focs/BarakHKKMP16} show an SoS lower bound for the planted clique problem, demonstrating that when $\e < \tfrac 12$, SoS algorithms require $n^{\Omega(\log n)}$ time to solve planted clique.
Intriguingly, they show that in the case of planted clique that \sos algorithms requiring $\approx n^d$ time can distinguish planted from random graphs only when there is a \emph{scalar-valued} degree $\approx d \cdot \log n$ polynomial $p(A) \, : \R^{n \times n} \rightarrow \R$ (here $A$ is the adjacency matrix of a graph) with
\[
  \E_{G(n,1/2)} p(A) = 0, \quad \E_{\text{planted}} p(A) \geq n^{\Omega(1)} \cdot \Paren{\Var_{G(n,1/2)} p(A)}^{1/2}\mper
\]
That is, such a polynomial $p$ has much larger expectation in under the planted distribution than its standard deviation in uniform distribution.
(The choice of $n^{\Omega(1)}$ is somewhat arbitrary, and could be replaced with $\Omega(1)$ or $n^{\Omega(d)}$ with small changes in the parameters.)
By showing that as long as $\eps < \frac{1}{2}$ any such polynomial $p$ must have degree $\Omega(\log n)^2$, they rule out efficient \sos algorithms when $\eps < \frac{1}{2}$.
Interestingly, this matches the spectral distinguishing threshold---the spectral algorithm of \cite{DBLP:journals/rsa/AlonKS98} is known to work when $\eps \ge \frac{1}{2}$.

This stronger characterization of \sos for the planted clique problem, in terms of \emph{scalar} distinguishing algorithms rather than \emph{spectral} distinguishing algorihtms, may at first seem insignificant.
To see why the scalar characterization is more powerful, we point out that if the degree-$d$ moments of the planted and uniform distributions are known, determining the optimal scalar distinguishing polynomial is easy: given a planted distribution $\pdist$ and a random distribution $\udist$ over instances $\cI$, one just solves a linear algebra problem in the $n^{d\log n}$ coefficients of $p$ to maximize the expectation over $\pdist$ relative to $\udist$:
\[
    \max_{p} \E_{\cI \sim \pdist} [p^2(\cI)] \quad  s.t. \ \E_{\cI \sim \udist}[p^2(\cI)] = 1\mper
\]
It is not difficult to show that the optimal solution to the above program has a simple form: it is the projection of the \emph{relative density of $\nu$ with respect to $\mu$} projected to the degree-$d\log n$ polynomials. \Tnote{}
So given a pair of distributions $\pdist,\udist$, in $n^{O(d\log n)}$ time, it is possible to determine whether there exists a degree-$d\log n$ scalar distinguishing polynomial.
Answering the same question about the existence of a spectral distinguisher is more complex, and to the best of our knowledge cannot be done efficiently.

Given this powerful theorem for the case of the planted clique problem, one may be tempted to conjecture that this stronger, \emph{scalar} distinguisher characterization of the \sos algorithm applies more broadly than just to the planted clique problem, and perhaps as broadly as \pref{thm:maindist}.
If this conjecture is true, given a pair of distributions $\udist$ and $\pdist$ with known moments, it would be possible in many cases to efficiently and mechanically determine whether polynomial-time \sos distinguishing algorithms exist!
\begin{conjecture}
  \label{conj:main-conjecture}
  In the setting of \pref{thm:maindist}, the conclusion may be replaced with the conclusion that there exists a scalar-valued polynomial $p : \scrI \rightarrow \R$ of degree $O(d \cdot \log n)$ so that
  \[
    \E_{\text{uniform}} p(I) = 0 \text{ and } \E_{\text{planted}} p(I) \geq n^{\Omega(1)} \Paren{\E_{\text{uniform}} p(I)^2 } ^{1/2}
  \]
\end{conjecture}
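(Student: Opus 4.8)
The plan is to follow the \emph{pseudo-calibration} recipe of \cite{DBLP:conf/focs/BarakHKKMP16}, applied now to a general pair $(\nu,\mu)$ rather than to planted clique. Write $L=\tfrac{d\mu}{d\nu}$ for the likelihood ratio and, for $D=\Theta(d\log n)$, let $L^{\le D}$ be its projection onto the span of Fourier characters of the product measure $\nu$ of degree at most $D$. The proposed distinguisher is $p:=L^{\le D}-1$, a degree-$D$ polynomial in $I$. Its constant Fourier coefficient is $\widehat L(\emptyset)-1=\E_\nu L-1=0$, so $\E_\nu p=0$ for free; and since the characters of $\nu$ are orthonormal and $p$ has degree $\le D$, Parseval gives
\[
  \E_\nu p^2 \;=\; \bignorm{L^{\le D}}_\nu^2-1 \;=\; \E_\nu\bigbrac{p\cdot L^{\le D}} \;=\; \E_\nu\bigbrac{p\cdot L} \;=\; \E_\mu p\mper
\]
Hence $\E_\mu p\big/(\E_\nu p^2)^{1/2}=(\bignorm{L^{\le D}}_\nu^2-1)^{1/2}$, and the whole conjecture collapses to one inequality: the degree-$\le D$ part of the likelihood ratio must carry $L^2(\nu)$-mass at least $n^{\Omega(1)}$. (Equivalently, $L^{\le D}$ is, up to truncation bookkeeping, the $(\emptyset,\emptyset)$ entry of the pseudo-calibrated moment matrix used to build the matrix $Q$ of \pref{thm:maindist}, so this is really the problem of extracting a scalar witness from the construction that proves that theorem.)

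To lower-bound $\bignorm{L^{\le D}}_\nu^2=\sum_{|S|\le D}\widehat L(S)^2$ I would use both hypotheses of \pref{thm:maindist}. The planted-side noise-robustness enters as a Fourier-concentration statement: re-randomizing all but an $n^{-0.01}$-fraction of the coordinates acts as a smoothing operator that multiplies the level-$k$ coefficient of $L$ by roughly $n^{-0.01k}$, and the assumption that the planted solution survives this smoothing forces a non-negligible fraction of the $L^2$-mass of $L$ to already sit at levels $k\le D=O(d\log n)$---intuitively, structure that is robust to destroying almost all of the instance has to be visible at low Fourier degree. The uniform-side hypothesis---that $\nu$-typical instances are genuinely infeasible and admit a degree-$d$ SoS refutation---then guarantees that this low-degree mass is not merely $O(1)$ but $n^{\Omega(1)}$, since otherwise $\mu$ and $\nu$ would be too close for the planted structure to make a detectable difference. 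Concretely one writes $\widehat L(S)=\E_{x\sim\cX}\widehat{L_x}(S)$ with $L_x=\tfrac{d\mu_{|x}}{d\nu}$ and estimates the resulting sum using the explicit form of the $\mu_{|x}$, exactly as in the planted-clique second-moment computation of \cite{DBLP:conf/focs/BarakHKKMP16}.

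The main obstacle---and the reason this is stated as a conjecture rather than a theorem---is making the second paragraph rigorous in full generality. For planted clique it works because $\mu_{|x}$ is an explicit product distribution and one has hypercontractivity of $G(n,\tfrac12)$; for an arbitrary planted distribution $\mu_{|x}$ neither is available, the Fourier coefficients $\widehat{L_x}(S)$ have no closed form, and---most seriously---translating the \emph{combinatorial} hypothesis ``the degree-$d$ SoS system with solution $x$ stays feasible after re-randomizing a $1-n^{-0.01}$ fraction of $I$'' into the \emph{analytic} conclusion ``$L$ keeps $n^{\Omega(1)}$ of its mass at Fourier degree $O(d\log n)$'' seems to require treating the degree-$d$ SoS refutation on the uniform side as a certificate of smoothness, a mechanism we do not know how to supply in general. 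The logarithmic loss in degree is precisely the slack that leaves room for a union bound over the $n^{O(d)}$ constraints, and it is unclear whether it can be matched by an honest lower bound when the low-degree likelihood ratio is near its threshold; one should in fact expect the bare conjecture to fail for pathological planted problems. A realistic version of this plan would therefore prove the scalar characterization under a mild structural hypothesis on $\mu$---e.g.\ that each $\mu_{|x}$ is itself a product measure, which already covers planted clique, planted CSPs, and the sparse- and tensor-PCA problems of interest---and leave the fully general statement open.
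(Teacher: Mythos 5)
The first thing to say is that \prettyref{conj:main-conjecture} is stated in the paper as a \emph{conjecture}: the paper supplies no proof of it, explicitly notes that \prettyref{thm:maindist} does not imply it (because $M \mapsto \lambda^+_{\max}(M)$ is not a low-degree function of the entries of $M$, so the matrix distinguisher does not collapse to a scalar one), and uses the conjecture only as a prediction device whose consequences it then verifies unconditionally in special cases (tensor and sparse PCA) by constructing \sos lower bounds directly. So there is no proof in the paper to compare yours against, and your proposal---rightly---does not claim to be one.

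That said, your first paragraph is correct and is essentially the computation the paper itself performs in Claim~\ref{clm:tpca-sketch-1}: taking $p = L^{\le D} - 1$ gives $\E_\nu p = 0$ and $\E_\mu p = \E_\nu p^2 = \norm{L^{\le D}}_\nu^2 - 1$, so the conjecture is equivalent to the assertion that the hypotheses of \prettyref{thm:maindist} force $\norm{L^{\le D}}_\nu^2 - 1 \ge n^{\Omega(1)}$ for $D = O(d \log n)$. Your second paragraph is where a proof would have to live, and you correctly flag that it is missing; I would only sharpen the diagnosis. The sentence claiming that the uniform-side \sos-refutation hypothesis ``guarantees that this low-degree mass is not merely $O(1)$ but $n^{\Omega(1)}$, since otherwise $\mu$ and $\nu$ would be too close'' is not an argument but a restatement of the conjecture itself: the entire content of the conjecture is that a degree-$d$ \sos refutation (an algebraic certificate about the uniform distribution) forces low-degree $L^2$ mass of the likelihood ratio (an analytic property of the pair $(\mu,\nu)$), and no mechanism for that implication is known---indeed the paper's own route from these hypotheses (pseudocalibration, duality, and random restriction) terminates at a matrix-valued distinguisher precisely because this step does not linearize. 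Likewise, the smoothing heuristic only bounds the low-degree mass relative to the total mass of $L$, which the hypotheses do not control. So: the reduction you give is sound and matches the paper's own use of the conjecture, and the step you identify as open is exactly the step that is open.
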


To illustrate the power of this conjecture, in the beginning of Section~\ref{sec:pca} we give a short and self-contained explanation of how this predicts, via simple linear algebra, our $n^{\Omega(1)}$-degree \sos lower bound for tensor PCA.
As evidence for the conjecture, we verify this prediction by proving such a lower bound unconditionally.

We also note why \pref{thm:maindist} does not imply Conjecture~\ref{conj:main-conjecture}.
While, in the notation of that theorem, the entries of $Q(I)$ are low-degree polynomials in $I$, the function $M \mapsto \lambda^+_{\max}(M)$ is not (to the best of our knowledge) a low-degree polynomial in the entries of $M$ (even approximately).
(This stands in contrast to, say the operator norm or Frobenious norm of $M$, both of which are exactly or approximately low-degree polynomials in the entries of $M$.)
This means that the final output of the spectral distinguishing algorithm offered by \pref{thm:maindist} is not a low-degree polynomial in the instance $I$.

\subsection{Exponential lower bounds for sparse PCA and tensor PCA}
Our other main results are strong exponential lower bound on the sum-of-squares method (specifically, against $2^{n^{\Omega(1)}}$ time or $n^{\Omega(1)}$ degree algorithms) for the tensor and sparse principal component analysis (PCA).
We prove the lower bounds by extending the techniques pioneered in \cite{DBLP:conf/focs/BarakHKKMP16}.
In the present work we describe the proofs informally, leaving full details to a forthcoming full version.

\paragraph{Tensor PCA}
We start with the simpler case of tensor PCA, introduced by \cite{DBLP:conf/nips/RichardM14}.
\begin{problem}[Tensor PCA]
    Given an order-$k$ tensor in $(\R^{n})^{\tensor k}$, determine whether it comes from:
    \begin{compactitem}
    \item {\bf Uniform Distribution}: each entry of the tensor sampled independently from $\cN(0,1)$.
    \item {\bf Planted Distribution}: a spiked tensor, $\bT = \lambda \cdot v^{\tensor k} + G$  where $v$ is sampled uniformly from $\bbS^{n-1}$, and where $G$ is a random tensor with  each entry sampled independently from $\cN(0,1)$.
    \end{compactitem}
\end{problem}
Here, we think of $v$ as a signal hidden by Gaussian noise.
The parameter $\lambda$ is a signal-to-noise ratio.
In particular, as $\lambda$ grows, we expect the distinguishing problem above to get easier.

Tensor PCA is a natural generalization of the PCA problem in machine learning and statistics.
Tensor methods in general are useful when data naturally has more than two modalities: for example, one might consider a recommender system which factors in not only people and movies but also time of day.
Many natural tensor problems are NP hard in the worst-case.
Though this is not necessarily an obstacle to machine learning applications, it is important to have average-case models to in which to study algorithms for tensor problems.
The spiked tensor setting we consider here is one such simple model.

Turning to algorithms: consider first the ordinary PCA problem in a spiked-matrix model.
Given an $n \times n$ matrix $M$, the problem is to distinguish between the case where every entry of $M$ is independently drawn from the standard Gaussian distribution $\cN(0,1)$ and the case when $M$ is drawn from a distribution as above with an added rank one shift $\lambda vv^{\top}$ in a uniformly random direction $v$.
A natural and well-studied algorithm, which solves this problem to information-theoretic optimality is to threshold on the largest singular value/spectral norm of the input matrix.
Equivalently, one thresholds on the maximizer of the degree two polynomial $\langle x, M x \rangle$ in $x \in \bbS^{n-1}.$

A natural generalization of this algorithm to the tensor PCA setting (restricting for simplicity $k = 3$ for this discussion) is the maximum of the degree-three  polynomial $\langle T, x^{\otimes 3} \rangle$ over the unit sphere---equivalently, the (symmetric) injective tensor norm of $T$.
This maximum can be shown to be much larger in case of the planted distribution so long as $\lambda \gg \sqrt{n}$.
Indeed, this approach to distinguishing between planted and uniform distributions is information-theoretically optimal \cite{DBLP:journals/corr/PerryWB16, DBLP:journals/corr/BanksMVX16}.
Since recovering the spike $v$ and optimizing the polynomial $\iprod{T,x^{\tensor 3}}$ on the sphere are equivalent, tensor PCA can be thought of as an average-case version of the problem of optimizing a degree-$3$ polynomial on the unit sphere (this problem is NP hard in the worst case, even to approximate \cite{DBLP:journals/corr/abs-0911-1393, DBLP:conf/stoc/BarakBHKSZ12}).

Even in this average-case model, it is believed that there is a gap between which signal strengths $\lambda$ allow recovery of $v$ by brute-force methods and which permit polynomial time algorithms.
This is quite distinct from the vanilla PCA setting, where eigenvector algorithms solve the spike-recovery problem to information-theoretic optimality.
Nevertheless, the best-known algorithms for tensor PCA arise from computing convex relaxations of this degree-$3$ polynomial optimization problem.
Specifically, the \sos method captures the state of the art algorithms for the problem; it is known to recover the vector $v$ to $o(1)$ error in polynomial time whenever $\lambda \gg n^{3/4}$ \cite{DBLP:conf/colt/HopkinsSS15}.
A major open question in this direction is to understand the complexity of the problem for $\lambda \leq n^{3/4 - \e}$.
Algorithms (again captured by \sos) are known which run in $2^{n^{O(\e)}}$ time \cite{DBLP:journals/corr/RaghavendraRS16, DBLP:journals/eccc/BhattiproluGGLT16}.
We show the following theorem which shows that the sub-exponential algorithm above is in fact nearly optimal for \sos algorithm.

\begin{theorem}\label{thm:tpca-intro}
  For a tensor $T$, let
  \[
    \sos_d(T) = \max_{\pE} \pE[ \iprod{T, x^{\tensor k}}] \text{ such that $\pE$ is a degree $d$ pseudoexpectation and satisfies } \{ \|x\|^2 = 1 \}\footnote{For definitions of pseudoexpectations and related matters, see the survey \cite{DBLP:journals/corr/BarakS14}. }
  \]
For every small enough constant $\e > 0$, if $T \in \R^{n \times n \times n}$ has iid Gaussian or $\{ \pm 1\}$ entries,
  $\E_T \sos_d(T) \geq n^{k/4 - \e}$, for every $d \leq n^{c \cdot \e}$ for some universal $c > 0$.
\end{theorem}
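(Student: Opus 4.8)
The plan is to exhibit, for a signal strength $\lambda$ chosen just above $n^{k/4-\eps}$, an explicit degree-$d$ pseudoexpectation $\pE$ for the \emph{null} tensor $T$ (iid $\cN(0,1)$ or $\{\pm1\}$ entries) which satisfies $\{\|x\|^2=1\}$ and has $\pE[\iprod{T,x^{\tensor k}}]\ge(1-o(1))\lambda$, with probability $1-o(1)$ over $T$. Since $\sos_d(T)$ is by definition the supremum of $\pE'[\iprod{T,x^{\tensor k}}]$ over \emph{all} valid degree-$d$ pseudoexpectations satisfying $\{\|x\|^2=1\}$, the existence of such a $\pE$ gives $\sos_d(T)\ge(1-o(1))\lambda$ on that event. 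Combined with the deterministic bound $|\sos_d(T)|\le\Norm{T}_F$ --- an immediate consequence of pseudoexpectation Cauchy--Schwarz together with the identity $\|x^{\tensor k}\|^2=(\|x\|^2)^k$, valid for any $d\ge 2k$ --- and the fact that $\Norm{T}_F\le n^{O(k)}$ except with probability $e^{-\Omega(n)}$, the bad event contributes $o(1)$ to $\E_T\sos_d(T)$, and the theorem follows.

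The pseudoexpectation is produced by \emph{pseudo-calibration} against the spiked-tensor planted distribution $\mu$ with signal $\lambda$, following \cite{DBLP:conf/focs/BarakHKKMP16}; this is exactly the construction dictated by the heuristic --- the same one underlying \prettyref{conj:main-conjecture} --- that $\pE$ should be indistinguishable from the planted posterior by low-degree tests. Fix a truncation parameter $D=\poly(d,1/\eps)$, and let $\{H_\beta\}$ be the normalized Hermite basis for square-integrable functions of the $n^k$ Gaussian entries of $T$ (in the Boolean case, the $\{\pm1\}^{n^k}$ Fourier characters). For each multiset $\alpha$ of at most $d$ elements of $[n]$ set
\[
  \pE[x^\alpha](T)\;=\;\sum_{|\beta|\le D}\Ex[(v,T')\sim\mu]{\,v^\alpha\,H_\beta(T')\,}\;H_\beta(T),
\]
so that $\pE[x^\alpha]$ is the $L^2$-projection onto degree-$\le D$ polynomials in $T$ of the planted posterior mean $T\mapsto\Ex[\mu]{v^\alpha\mid T}$; extend linearly to polynomials of degree $\le d$ in $x$.

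The bookkeeping requirements are checked directly. Normalization: the $\beta=\emptyset$ term of $\pE[1]$ contributes exactly $1$ and the rest is $o(1)$, so a $(1+o(1))$ rescaling gives $\pE[1]=1$. Sphere constraint: because the calibration is built from $v$ with $\|v\|^2=1$ hard-wired, $\pE$ satisfies $\{\|x\|^2=1\}$ up to an error introduced only by the degree-$D$ truncation, which is absorbed by a low-rank correction as in \cite{DBLP:conf/focs/BarakHKKMP16}. Objective value: writing $\iprod{T,x^{\tensor k}}=\sum_{i_1,\dots,i_k}T_{i_1\cdots i_k}\,x_{i_1}\cdots x_{i_k}$, the only part of $\pE[\iprod{T,x^{\tensor k}}]$ of size $\Theta(\lambda)$ is the diagonal of the cross-term between $T$ and the degree-one-in-$T$ component of the pseudomoments, which the Hermite recursion identifies as
\[
  \lambda\sum_{i_1,\dots,i_k}\Ex[v]{v_{i_1}^2\cdots v_{i_k}^2}\,T_{i_1\cdots i_k}^2\;=\;(1\pm o(1))\lambda,
\]
using $\Ex[v]{v_{i_1}^2\cdots v_{i_k}^2}=(1\pm o(1))n^{-k}$ for distinct indices and $\sum_{i_1,\dots,i_k}T_{i_1\cdots i_k}^2=(1\pm o(1))n^k$; the higher Hermite levels and off-diagonal terms contribute $o(\lambda)$ in expectation and one checks concentration.

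The main obstacle --- and the part deferred to the full version --- is \textbf{positivity}: showing the pseudomoment matrix $\Lambda$, with rows and columns indexed by multisets of $[n]$ of size $\le d/2$ and $\Lambda[\alpha,\beta]=\pE[x^\alpha x^\beta](T)$, is PSD with probability $1-o(1)$. Following \cite{DBLP:conf/focs/BarakHKKMP16}, I would expand $\Lambda$ in the Hermite basis and reorganize it as a sum $\Lambda=\sum_\sigma\widehat{\lambda}_\sigma M_\sigma$ of \emph{graph matrices} (ribbons) $M_\sigma$, where $\sigma$ records which indices among $\alpha$, $\beta$, and the $\le D$ tensor ``noise slots'' are identified with one another. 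One isolates the dominant ribbon --- coming from $\beta=\emptyset$, which is a tensor power of a rescaled identity, hence manifestly PSD with least nonzero eigenvalue equal to its normalization --- and bounds everything else, not by the triangle inequality but via the approximate-factorization argument of \cite{DBLP:conf/focs/BarakHKKMP16}: write $\Lambda$ in a canonical factored form $\Lambda\approx\mathcal L\mathcal L^{\top}$ and bound $\Norm{\Lambda-\mathcal L\mathcal L^{\top}}$, which is built from ``intersection'' ribbons. Each extra identified noise slot or nontrivial index coincidence in a ribbon costs a factor $n^{-\Omega(\eps)}$ relative to the dominant term --- this is precisely where the gap $\lambda=n^{k/4-\eps}$, rather than $n^{k/4}$, is spent --- while the spectral norm of each graph matrix is controlled by the trace moment method exactly as in \cite{DBLP:conf/focs/BarakHKKMP16}. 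Balancing the combinatorial overhead against this per-step gain shows the total error is $o(1)$ times the dominant eigenvalue, hence $\Lambda\succeq 0$, as long as $d\le n^{c\eps}$ for a small universal constant $c$; this is the origin of the degree restriction in the theorem. The Boolean case is identical with Fourier characters of $\{\pm1\}^{n^k}$ replacing Hermite polynomials. The delicate points, all internal to the graph-matrix analysis, are (i) the exact per-step exponent and its interplay with the truncation degree $D$, and (ii) verifying that the low-rank correction enforcing $\{\|x\|^2=1\}$ does not spoil PSD-ness.
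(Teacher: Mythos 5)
Your proposal follows essentially the same route as the paper: pseudo-calibrate against the spiked-tensor distribution at $\lambda = n^{k/4-\eps}$, truncate to degree $D$, verify normalization and the objective value by a Hermite/Fourier computation, and establish positivity of the truncated moment matrix via the graph-matrix approximate-factorization machinery of \cite{DBLP:conf/focs/BarakHKKMP16} --- a step the paper likewise isolates as \prettyref{lem:tpca-conditions-main} and defers to its full version, the only structural difference being that the paper first proves the hypercube case and transfers to Gaussian entries by the sign-rounding argument of \prettyref{sec:reduce-to-hypercube}, whereas you work in the Hermite basis directly. The one point where your sketch is mildly off is the sphere constraint: no low-rank correction is needed, since entrywise degree-$D$ truncation commutes with projection onto the linear subspace of moment matrices satisfying the (instance-independent) constraint $\{\|x\|^2=1\}$, so the truncated pseudomoments satisfy it exactly (this is the paper's \prettyref{lem:trunc-constraints}).
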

In particular for third order tensors (i.e $k = 3$), since degree $n^{\Omega(\e)}$ \sos is unable to certify that a random $3$-tensor has maximum value much less than $n^{3/4 - \e}$, this \sos relaxation cannot be used to distinguish the planted and random distributions above when $\lambda \ll n^{3/4 - \e}$.\footnote{In fact, our proof for this theorem will show somewhat more: that a large family of constraints---any valid constraint which is itself a low-degree polynomial of $T$---could be added to this convex relaxation and the lower bound would still obtain.}

\paragraph{Sparse PCA}
We turn to sparse PCA, which we formalize as the following planted distinguishing problem.
\begin{problem}[Sparse PCA $(\lambda,k)$]
    Given an $n \times n$ symmetric real matrix $A$, determine whether $A$ comes from:
    \begin{compactitem}
    \item {\bf Uniform Distribution}: each upper-triangular entry of the matrix $A$ is sampled iid from $\cN(0,1)$; other entries are filled in to preserve symmetry.
    \item {\bf Planted Distribution}: a random $k$-sparse unit vector $v$ with entries $\{ \pm 1/\sqrt k, 0\}$ is sampled, and $B$ is sampled from the uniform distribution above; then $A = B + \lambda \cdot \dyad{v}$.
    \end{compactitem}
\end{problem}

We defer significant discussion to Section~\ref{sec:pca}, noting just a few things before stating our main theorem on sparse PCA.
First, the planted model above is sometimes called the \emph{spiked Wigner} model---this refers to the independence of the entries of the matrix $B$.
An alternative model for sparse PCA is the \emph{spiked Wishart} model: $A$ is replaced by $\sum_{i \leq m} \dyad{x_i}$, where each $x_i \sim \cN(0, \Id + \beta \dyad{v})$, for some number $m \in \N$ of samples and some signal-strength $\beta \in \R$.
Though there are technical differences between the models, to the best of our knowledge all known algorithms with provable guarantees are equally applicable to either model; we expect that our \sos lower bounds also apply in the spiked Wishart model.

We generally think of $k,\lambda$ as small powers of $n$; i.e. $n^\rho$ for some $\rho \in (0,1)$; this allows us to generally ignore logarithmic factors in our arguments.
As in the tensor PCA setting, a natural and information-theoretically optimal algorithm for sparse PCA is to maximize the quadratic form $\iprod{x,Ax}$, this time over $k$-sparse unit vectors.
For $A$ from the uniform distribution standard techniques ($\e$-nets and union bounds) show that the maximum value achievable is $O(\sqrt k \log n)$ with high probability, while for $A$ from the planted model of course $\iprod{v,Av} \approx \lambda$.
So, when $\lambda \gg \sqrt k$ one may distinguish the two models by this maximum value.

However, this maximization problem is NP hard for general quadratic forms $A$ \cite{DBLP:conf/colt/ChanPR16}.
So, efficient algorithms must use some other distinguisher which leverages the randomness in the instances.
Essentially only two polynomial-time-computable distinguishers are known.\footnote{If one studies the problem at much finer granularity than we do here, in particular studying $\lambda$ up to low-order additive terms and how precisely it is possible to estimate the planted signal $v$, then the situation is more subtle \cite{DBLP:conf/isit/DeshpandeM14}.}
If $\lambda \gg \sqrt n$ then the maximum eigenvalue of $A$ distinguishes the models.
If $\lambda \gg k$ then the planted model can be distinguished by the presence of large diagonal entries of $A$.
Notice both of these distinguishers fail for some choices of $\lambda$ (that is, $\sqrt k \ll \lambda \ll \sqrt n, k$) for which brute-force methods (optimizing $\iprod{x,Ax}$ over sparse $x$) could successfully distinguish planted from uniform $A$'s.
The theorem below should be interpreted as an impossibility result for \sos algorithms in the $\sqrt k \ll \lambda \ll \sqrt n, k$ regime.
This is the strongest known impossibility result for sparse PCA among those ruling out classes of efficient algorithms (one reduction-based result is also know, which shows sparse PCA is at least as hard as the planted clique problem \cite{DBLP:conf/colt/BerthetR13}.
It is also the first evidence that the problem may require subexponential (as opposed to merely quasi-polynomial) time.

\begin{theorem}\label{thm:spca-main}
  If $A \in \R^{n \times n}$, let
  \[
    SoS_{d,k}(A) = \max_{\pE} \pE \iprod{x,Ax} \text{ s.t. $\pE$ is degree $d$ and satisfies }\left \{ x_i^3 = x_i, \|x\|^2 = k \right \} \mper
  \]
  There are absolute constants $c,\e^* > 0$ so that for every $\rho \in (0,1)$ and $\e \in (0,\e^*)$, if $k = n^{\rho}$, then for $d \leq n^{c \cdot \e}$,
  \[
    \E_{A \sim \{\pm 1\}^{\binom{n}{2}}} SoS_{d,k}(A) \geq \min (n^{1/2 - \epsilon} k, n^{\rho - \e} k)\mper
  \]
\end{theorem}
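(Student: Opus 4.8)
To prove \pref{thm:spca-main} I would, for $\lambda := \min(n^{1/2-\e}, n^{\rho-\e})$ (so the target SoS value is $\lambda k$), exhibit an explicit degree-$d$ pseudoexpectation $\pE$ satisfying $\{x_i^3 = x_i,\ \snorm x = k\}$ with $\E_A\brac{\pE\iprod{x,Ax}} = \lambda k$ and $\pE$ valid with overwhelming probability over $A \sim \{\pm1\}^{\binom{n}{2}}$, following the pseudo-calibration paradigm of \cite{DBLP:conf/focs/BarakHKKMP16}. \textbf{Step~1} is to pick the right ``planted'' model to calibrate against: since for $\pm1$ inputs the spiked matrix $B + \lambda\dyad v$ is not supported on $\{\pm1\}^{\binom{n}{2}}$, I would instead use the \emph{sign-biased} model --- sample a uniform $k$-subset $S \subseteq [n]$ and signs $\epsilon \in \{\pm1\}^S$, set $v_i = \epsilon_i/\sqrt k$ for $i \in S$ and $v_i = 0$ otherwise, and draw each off-diagonal $A_{ij}$ to be $\pm1$ with $\E\brac{A_{ij}} = \lambda v_iv_j$ --- a genuine probability distribution \emph{exactly} because $\lambda \le k$ forces $\abs{\lambda v_iv_j} \le 1$. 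This feasibility range is the first place the quantity $\min(\sqrt n, k)$ enters. Writing $x := \sqrt k\, v$ (so $x_i \in \{0,\pm1\}$ and $x_i^3 = x_i$, $\snorm x = k$ hold identically for the planted $x$), a one-line Fourier computation gives, for a monomial $x^T$ and an edge set $\beta$ with $\chi_\beta(A) = \prod_{(ij)\in\beta}A_{ij}$,
\[
\E_{\text{planted}}\brac{x^T\,\chi_\beta(A)} \;=\; (\lambda/k)^{\abs\beta}\,\Pr\brac{T\cup V(\beta)\subseteq S}\;\Ind\brac{T = \mathrm{odd}(\beta)},
\]
which is $\approx (\lambda/k)^{\abs\beta}(k/n)^{\abs{T\cup V(\beta)}}$ when $T = \mathrm{odd}(\beta)$ (the odd-degree vertices of $\beta$) and $0$ otherwise. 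Pseudo-calibration then sets $\pE\brac{x^T} := \sum_{\abs\beta\le\tau}\E_{\text{planted}}\brac{x^T\chi_\beta(A)}\,\chi_\beta(A)$ for a Fourier-degree cutoff $\tau = \Theta(d\log n)$.

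\textbf{Step~2} is the routine verification that $\pE$ is a degree-$d$ pseudoexpectation obeying the constraints, granting positivity. Linearity and $\pE[1] = 1$ are immediate; $\pE$ descends to $\R[x]/(x_i^3 - x_i)$ and the equality constraint $\snorm x = k$ holds exactly, because multiplying by $x_i^3 - x_i$ or $\snorm x - k$ leaves the $A$-degree unchanged and so commutes with the $\chi_\beta$-truncation, while the untruncated functional honors both relations since the planted $x$ does. For the objective, orthonormality of the characters gives $\E_A\brac{\pE\iprod{x,Ax}} = \E_{\text{planted}}\iprod{x,Ax} = \lambda k$. Since $\pE\iprod{x,Ax}$ is a polynomial in $A$ bounded by $n^{O(\tau)}$ pointwise and (by Step~3) $\pE$ fails to be valid only with probability $n^{-\omega(1)}$, we get $\E_A\brac{\mathrm{SoS}_{d,k}(A)} \ge \E_A\brac{\pE\iprod{x,Ax}\cdot\Ind\brac{\pE\text{ valid}}} \ge \lambda k - o(1)$, which is the theorem after absorbing the $o(1)$ loss into $\e$.

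\textbf{Step~3} --- the heart of the proof and the step I expect to be hardest --- is to show that the moment matrix $M(A) = \paren{\pE[x^I x^J]}_{\abs I,\abs J \le d/2}$ is PSD with overwhelming probability. I would expand $M(A)$ in the graph-matrix basis, $M(A) = \sum_\sigma \widehat M_\sigma\, M_\sigma(A)$ over boundaried multigraph \emph{shapes} $\sigma$ with coefficients $\widehat M_\sigma \approx (\lambda/k)^{\abs{E(\sigma)}}(k/n)^{\abs{V(\sigma)}}$; isolate the trivial shape (empty $A$-pattern), which contributes the positive diagonal matrix $M_{\mathrm{main}}$ with $(I,I)$-entry $(k/n)^{\abs I}$ and hence least eigenvalue $\approx (k/n)^{d/2}$, together with its PSD ``conditioning'' corrections; and bound $\norm{M_\sigma(A)}$ for every nontrivial $\sigma$ by the trace-method norm bounds for graph matrices --- of order roughly $\widetilde O(\sqrt n)^{\abs{V(\sigma)} - \mathrm{sep}(\sigma)}$, where $\mathrm{sep}(\sigma)$ is the minimum vertex separator between the two boundaries --- times the combinatorial count of shapes of a given size. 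The resulting bound on $\norm{M - M_{\mathrm{main}}}$ is a sum that decays geometrically, at rate $n^{-\Omega(\e)}$ per added vertex or edge, \emph{exactly} in the range $\lambda \le \min(n^{1/2-\e}, n^{\rho-\e})$, $d \le n^{c\e}$; hence $\norm{M - M_{\mathrm{main}}} \le \tfrac12\lambda_{\min}(M_{\mathrm{main}})$ and $M(A) \succeq 0$. Conceptually this convergence is the same quantitative fact as the nonexistence of a degree-$O(d\log n)$ \emph{scalar} distinguisher between the sign-biased planted and uniform distributions, i.e.\ $\sum_{\abs\beta \le O(d\log n),\ \beta\text{ even}}(\lambda/k)^{2\abs\beta}(k/n)^{2\abs{V(\beta)}} = O(1)$ --- precisely the behavior \pref{conj:main-conjecture} predicts, and checked directly from that Fourier sum; verifying the scalar statement first and then using it to drive the graph-matrix bookkeeping is the route advertised in the introduction.

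Two places carry the real difficulty. First, the Fourier truncation must be handled carefully so that the \emph{truncated} $M(A)$, not its idealized untruncated version, is PSD: $\tau$ must be large enough that the discarded shape tails are negligible, yet small enough that $M_{\mathrm{main}}$ still dominates, and reconciling these demands is the delicate point inherited from \cite{DBLP:conf/focs/BarakHKKMP16}. Second, the shape-norm accounting must be sharp enough to certify that $\min(\sqrt n, k)$, and nothing larger, is the true break-even point --- here the constraint $\lambda = O(\sqrt n)$ is forced by cycle-type shapes and $\lambda = O(k)$ by dense even shapes --- and one must confirm that these are the only binding families and that incorporating the equality constraints $\snorm x = k$ and $x_i^3 = x_i$ into the moment matrix creates no further obstruction.
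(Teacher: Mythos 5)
Your proposal follows essentially the same route as the paper: pseudo-calibrate against the Boolean sign-biased planted model (which is exactly the paper's planted distribution, since re-randomizing each support entry of $vv^\top$ with probability $1-\lambda/k$ gives $\E[A_{ij}]=\lambda v_iv_j$), truncate to low Fourier degree, observe that truncation commutes with the constraint ideal (the paper's \pref{lem:trunc-constraints}), and establish PSDness via graph-matrix norm bounds in the regime $\lambda\le\min(n^{1/2-\e},k^{1-\e})$, with the scalar-distinguisher computation as the guiding heuristic. The only material difference is that the paper additionally re-randomizes each coordinate's row and column with probability $1-n^{-\gamma}$ (a random-restriction step easing the truncation analysis), which you omit but which does not change the structure of the argument; note also that the paper itself defers the full PSDness proof, the step you correctly flag as the hard one, to its full version.
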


For more thorough discussion of the theorem, see Section~\ref{sec:spca-main}.

\subsection{Related work}
\paragraph{On interplay of \sos relaxations and spectral methods}
As we have already alluded to, many prior works explore the connection between \sos relaxations and spectral algorithms, beginning with the work of \cite{DBLP:conf/stoc/BarakBHKSZ12} and including the followup works \cite{DBLP:conf/colt/HopkinsSS15,DBLP:conf/focs/AllenOW15,DBLP:conf/colt/BarakM16} (plus many more).
Of particular interest are the papers \cite{DBLP:conf/stoc/HopkinsSSS16,DBLP:journals/corr/MontanariS16}, which use the \sos algorithms to obtain \emph{fast} spectral algorithms, in some cases running in time linear in the input size (smaller even than the number of variables in the associated \sos SDP).

In light of our \pref{thm:maindist}, it is particularly interesting to note cases in which the known \sos lower bounds matching the known spectral algorithms---these problems include planted clique (upper bound: \cite{DBLP:journals/rsa/AlonKS98}, lower bound:\footnote{SDP lower bounds for the planted clique problem were known for smaller degrees of sum-of-squares relaxations and for other SDP relaxations before; see the references therein for details.} \cite{DBLP:conf/focs/BarakHKKMP16}), strong refutations for random CSPs (upper bound:\footnote{There is a long line of work on algorithms for refuting random CSPs, and 3SAT in particular; the listed papers contain additional references.} \cite{DBLP:conf/focs/AllenOW15,DBLP:journals/corr/RaghavendraRS16}, lower bounds: \cite{DBLP:journals/tcs/Grigoriev01,DBLP:conf/focs/Schoenebeck08,DBLP:journals/corr/KothariMOW17}), and tensor principal components analysis (upper bound: \cite{DBLP:conf/colt/HopkinsSS15,DBLP:journals/corr/RaghavendraRS16,DBLP:journals/eccc/BhattiproluGGLT16}, lower bound: this paper).

We also remark that our work applies to several previously-considered distinguishing and average-case problems within the sum-of-squares algorithmic framework: block models \cite{DBLP:conf/stoc/MontanariS16} , densest-$k$-subgraph \cite{MR2743268-Bhaskara10}; for each of these problems, we have by \pref{thm:maindist} an equivalence between efficient sum-of-squares algorithms and efficient spectral algorithms, and it remains to establish exactly what the tradeoff is between efficiency of the algorithm and the difficulty of distinguishing, or the strength of the noise.

To the best of knowledge, no previous work has attempted to characterize \sos relaxations for planted problems by simpler algorithms in the generality we do here.
Some works have considered characterizing degree-$2$ \sos relaxations (i.e. basic semidefinie programs) in terms of simpler algorithms.
One such example is recent work of Fan and Montanari \cite{DBLP:journals/corr/FanM16} who showed that for some planted problems on sparse random graphs, a class of simple procedures called \emph{local algorithms} performs as well as semidefinite programming relaxations.

\paragraph{On strong \sos lower bounds for planted problems}
By now, there's a large body of work that establishes lower bounds on \sos SDP for various average case problems. Beginning with the work of Grigoriev \cite{DBLP:journals/cc/Grigoriev01}, a long line work have established tight lower bounds for random constraint satisfaction problems \cite{DBLP:conf/focs/Schoenebeck08,MR3388187-Barak15,DBLP:journals/corr/KothariMOW17} and planted clique \cite{MR3388186-Meka15, DBLP:conf/colt/DeshpandeM15, DBLP:journals/corr/HopkinsKP15,DBLP:journals/corr/RaghavendraS15,DBLP:conf/focs/BarakHKKMP16}.  The recent \sos lower bound for planted clique of \cite{DBLP:conf/focs/BarakHKKMP16} was particularly influential to this work, setting the stage for our main line of inquiry.
We also draw attention to previous work on lower bounds for the tensor PCA and sparse PCA problems in the degree-$4$  \sos relaxation \cite{DBLP:conf/colt/HopkinsSS15,DBLP:journals/corr/MaW15}---our paper improves on this and extends our understanding of lower bounds for tensor and sparse PCA to any degree.

\Pnote{}
\Tnote{}
Tensor principle component analysis was introduced by Montanari and Richard \cite{DBLP:conf/nips/RichardM14} who indentified information theoretic threshold for recovery of the planted component and analyzed the maximum likelihood estimator for the problem. The work of \cite{DBLP:conf/colt/HopkinsSS15} began the effort to analyze the sum of squares method for the problem and showed that it yields an efficient algorithm for recovering the planted component with strength $\tilde{\omega}(n^{3/4})$. They also established that this threshold is  tight for the sum of squares relaxation of degree 4. Following this, Hopkins et al. \cite{DBLP:conf/stoc/HopkinsSSS16} showed how to extract a linear time spectral algorithm from the above analysis.
Tomioka and Suzuki derived tight information theoretic thresholds for detecting planted components by establishing tight bounds on the injective tensor norm of random tensors \cite{tomioka2014}.
Finally, very recently, Raghavendra et. al. and Bhattiprolu et. al. independently showed sub-exponential time algorithms for tensor pca \cite{DBLP:journals/corr/RaghavendraRS16,DBLP:journals/corr/BhattiproluGL16}. Their algorithms are spectral and are captured by the sum of squares method.

\Snote{}

\subsection{Organization}
\Snote{}
In \pref{sec:low-deg-dist} we set up and state our main theorem on \sos algorithms versus low-degree spectral algorithms.
In \pref{sec:examp} we show that the main theorem applies to numerous planted problems---we emphasize that checking each problem is very simple (and barely requires more than a careful definition of the planted and uniform distributions).
In \pref{sec:moment-match} and \pref{sec:proofofthm} we prove the main theorerm on \sos algorithms versus low-degree spectral algorithms.

In section 7 we get prepared to prove our lower bound for tensor PCA by proving a structural theorem on factorizations of low-degree matrix polynomials with well-behaved Fourier transforms.
In section 8 we prove our lower bound for tensor PCA, using some tools proved in section 9.

\paragraph{Notation}
For two matrices $A,B$, let $\iprod{A,B} \defeq \Tr(A B)$.
Let $\norm{A}_{Fr}$ denote the Frobenius norm, and $\norm{A}$ its spectral norm.
For matrix valued functions $A, B$ over $\scrI$ and a distribution $\udist$ over $\cI \sim \scrI$, we will denote $\iprod{A,B}_{\udist} = \E_{\cI \sim \udist} \iprod{A(\cI),B(\cI)}$ and by $\norm{A}_{Fr,\udist} \defeq \Paren{\E_{\cI \sim \udist} \iprod{A(\cI),A(\cI)}}^{1/2}$.

For a vector of formal variables $x = (x_1,\ldots,x_n)$, we use $x^{\leq d}$ to denote the vector consisting of all monomials of degree at most $d$ in these variables.
Furthermore, let us denote $X^{\leq d} \defeq (x^{\leq d})(x^{\leq d})^T$.

\section{Distinguishing Problems and Robust Inference}\label{sec:low-deg-dist}

In this section, we set up the formal framework within which we will prove our main result.

\subsubsection*{Uniform vs. Planted Distinguishing Problems}

We begin by describing a class of {\it distinguishing} problems.
For $\cA$ a set of real numbers, we will use $\scrI = \cA^N$ denote a space of instances indexed by $N$ variables---for the sake of concreteness, it will be useful to think of $\scrI$ as $\bits^{\instN}$; for example, we could have $N = \binom{n}{2}$ and $\scrI$ as the set of all graphs on $n$ vertices.
However, the results that we will show here continue to hold in other contexts, where the space of all instances is $\R^{\instN}$ or $[q]^{\instN}$.

\begin{definition}[Uniform Distinguishing Problem]
    Suppose that $\scrI$ is the space of all instances, and suppose we have two distributions over $\scrI$, a product distribution $\udist$ (the ``uniform'' distribution), and an arbitrary distribution $\pdist$ (the ``planted'' distribution).

    In a {\em uniform distinguishing problem}, we are given an instance $\cI \in \scrI$ which is sampled with probability $\frac{1}{2}$ from $\udist$ and with probability $\frac{1}{2}$ from $\pdist$, and the goal is to determine with probability greater than $\frac{1}{2} + \eps$ which distribution $\cI$ was sampled from, for any constant $\eps > 0$.
\end{definition}

\subsubsection*{Polynomial Systems}
In the uniform distinguishing problems that we are interested in, the planted distribution $\pdist$ will be a distribution over instances that obtain a large value for some optimization problem of interest (i.e. the max clique problem).
We define polynomial systems in order to formally capture optimization problems.

\begin{program}[Polynomial System]\label{prog:bopt}
    Let $\cA,\cB$ be sets of real numbers, let $n,N \in \N$, and let $\scrI = \cA^N$ be a space of instances and $\sols \subseteq \cB^n$ be a space of solutions.
    A {\em polynomial system} is a set of polynomial equalities
\begin{align*}
     &\quad g_j(x, \cI) = 0 \quad \forall j\in[m],
\end{align*}
    where $\{g_j\}_{j=1}^{m}$ are polynomials in the {\em program variables} $\{x_i\}_{i\in[n]}$, representing $x \in \sols$, and in the {\em instance variables} $\{\cI_j\}_{j\in[N]}$, representing $\cI \in \scrI$.
    We define $\deg_{\prog}(g_j)$ to be the degree of $g_j$ in the program variables, and $\deg_{\inst}(g_j)$ to be the degree of $g_j$ in the instance variables.
\end{program}

\begin{remark}
	For the sake of simplicity, the polynomial system \prettyref{prog:bopt} has no inequalities.
	Inequalities can be incorporated in to the program by converting each inequality in to an equality with an additional slack variable.
	Our main theorem still holds, but for some minor modifications of the proof, as outlined in \pref{sec:proofofthm}.
\end{remark}

A polynomial system allows us to capture problem-specific objective functions as well as problem-specific constraints.
For concreteness, consider a quadtratic program which checks if a graph on $n$ vertices contains a clique of size $k$.
We can express this with the polynomial system over program variables $x \in \R^n$ and instance variables $\cI \in \{0,1\}^{\binom{n}{2}}$, where $\cI_{ij} = 1$ iff there is an edge from $i$ to $j$, as follows:
\[
    \left\{{\small\sum}_{i\in[n]} x_i -k = 0\right\}\cup \{x_i(x_i - 1) = 0\}_{i\in[n]} \cup \{(1-\cI_{ij})x_i x_j = 0\}_{i,j \in \binom{[n]}{2}}.
    \]

\medskip

\subsubsection*{Planted Distributions}
We will be concerned with planted distributions of a particular form; first, we fix a polynomial system of interest $\cS = \{g_j(x,\cI)\}_{j\in[m]}$ and some set $\cX \subseteq \cB^n$ of feasible solutions for $\cS$, so that  the program variables $x$ represent elements of $\cX$.
Again, for concreteness, if $\scrI$ is the set of graphs on $n$ vertices, we can take $\sols \subseteq \bits^n$ to be the set of indicators for subsets of at least $n^{\eps}$ vertices.

For each fixed $\sol \in \sols$, let $\pdist_{|\sol}$ denote the uniform distribution over $\cI \in \scrI$ for which the polynomial system $\{g_j(x,\cI)\}_{j\in[m]}$ is feasible.
The planted distribution $\pdist$ is given by taking the uniform mixture over the $\pdist_{|\sol}$, i.e., $\pdist \sim U_{x\sim\sols} [\pdist_{|\sol}]$.

\subsubsection*{\sos Relaxations}
If we have a polynomial system $\{g_j\}_{j\in[m]}$ where $\deg_{\prog}(g_j) \le 2d$ for every $j \in [m]$, then the degree-$2d$ sum-of-squares SDP relaxation for the polynomial system \prettyref{prog:bopt} can be written as,

\begin{program}[\sos Relaxation for Polynomial System]\label{prog:boptmat}
    Let $\cS = \{g_j(x,\cI)\}_{j\in[m]}$ be a polynomial system in instance variables $\cI \in \scrI$ and program variables $x \in \sols$.
    If $\deg_{\prog}(g_j) \le 2d$ for all $j \in [m]$, then an {\em \sos relaxation} for $\cS$ is
\begin{align*}
	&\quad \iprod{G_j(\cI), X} = 0 \quad \forall j\in[m]\\
	& \quad X \succeq 0
\end{align*}
    where $X$ is an $[n]^{\le d} \times [n]^{\le d}$ matrix containing the variables of the SDP and  $G_j: \scrI \to \R^{[n]^{\leq d} \times [n]^{\leq d}} $ are matrices containing the coefficients of $g_j(x,\cI)$ in $x$, so that the constraint $\iprod{G_j(\cI),X} = 0$ encodes the constraint $g_j(x,\cI) = 0$ in the SDP variables.
    Note that the entries of $G_j$ are polynomials of degree at most $\deg_{\inst}(g_j)$ in the instance variables.
\end{program}

\subsubsection*{Sub-instances}
Suppose that $\scrI = \cA^N$ is a family of instances; then given an instance $\cI \in \scrI$ and a subset $S \subseteq [N]$, let $\cI_S$ denote the sub-instance consisting of coordinates within $S$.
Further, for a distribution $\subsetdist$ over subsets of $[N]$, let $\cI_{S} \sim_{\subsetdist} \cI$ denote a subinstance generated by sampling $S \sim \subsetdist$.
Let $\cI_{\downarrow}$ denote the set of all sub-instances of an instance $\cI$, and let $\scrI_{\downarrow}$ denote the set of all sub-instances of all instances.

\subsubsection*{Robust Inference}
Our result will pertain to polynomial systems that define planted distributions whose solutions to sub-instances generalize to feasible solutions over the entire instance.
We call this property ``robust inference.''

\begin{definition}
    Let $\scrI = \cA^N$ be a family of instances, let $\subsetdist$ be a distribution over subsets of $[N]$, let $\cS$ be a polynomial system as in \pref{prog:bopt}, and let $\mu$ be a planted distribution over instances feasible for $\cS$.
	Then the polynomial system $\cS$ is said to satisfy the {\em robust inference property for probability distribution $\pdist$ on $\scrI$ and subsampling distribution $\subsetdist$}, if given a subsampling $\cI_S$ of an instance $\cI$ from $\pdist$, one can infer a setting of the program variables $x^*$ that remains feasible to $\cS$ for most settings of $\cI_{\overline{S}}$.

Formally, there exists a map $ x : \scrI_{\subinst} \to \R^n$ such that
\[ \Pr_{\cI \sim \pdist, S \sim \subsetdist, \tilde{\cI} \sim
	\udist_{|\cI_S}}
		[x(\cI_{S}) \text{ is a feasible for
			$\cS$ on } \cI_S \circ \tilde{\cI}] \geq 1- \epsilon(n,d) \]
for some negligible function $\epsilon(n,d)$.  To specify the error probability, we will say that polynomial system is {\it $\epsilon(n,d)$-robustly inferable}.
\end{definition}

\subsubsection*{Main Theorem}
We are now ready to state our main theorem.

\begin{theorem} \label{thm:main} \label{thm:low-deg}
    Suppose that $\cS$ is a polynomial system as defined in \prettyref{prog:bopt}, of degree at most $2d$ in the program variables and degree at most $k$ in the instance variables.
Let $B > d\cdot k \in \N$ such that
\begin{enumerate}
	\item The polynpomial system $\cS$ is $\frac{1}{n^{8B}}$-robustly inferable with respect to the planted distribution $\pdist$ and the sub-sampling distribution $\subsetdist$.

	\item For $\cI \sim \udist$, the polynomial system $\cS$ admits a degree-$d$ \sos refutation with numbers bounded by $n^B$ with probability at least $1 - \frac{1}{n^{8B}}$.

\end{enumerate}
Let $D \in \N$ be such that for any subset $\alpha \subseteq [N]$ with $|\alpha | \geq D - 2dk$,
\[
    \Pr_{S \sim \subsetdist} [ \alpha \subseteq S ] \leq \frac{1}{n^{8B}}
\]

There exists a degree $2D$ matrix polynomial $Q : \scrI \to \R^{[n]^{\leq d} \times [n]^{\leq d}}$ such that,
    \[ \frac{\E_{\cI \sim \pdist} [ \lambda_{max}^+ (Q(\cI))]}{ \E_{\cI \sim \udist} [ \lambda_{max}^+ (Q(\cI))]}  \geq n^{B/2} \]
\end{theorem}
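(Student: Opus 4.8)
The plan is to construct $Q$ by \emph{pseudo-calibration} (in the spirit of \cite{DBLP:conf/focs/BarakHKKMP16}, but applied to a \emph{matrix}-valued object rather than a scalar pseudo-expectation), and then to read off the two bounds from convex duality, using robust inference to control the degree of truncation and the \sos refutation hypothesis to control the uniform side. Concretely, consider the ``ideal'' matrix function $M_{\alpha\beta}(\cI) = \tfrac{\pdist(\cI)}{\udist(\cI)}\cdot\E_{x\sim\pdist_{|\cI}}[x^\alpha x^\beta]$, indexed by monomials $\alpha,\beta$ of degree $\le d$ in $x$. Two identities are immediate: $M(\cI)\sge 0$ pointwise, and since $\E_{x\sim\pdist_{|\cI}}[x^{\le d}(x^{\le d})^{\!\top}]$ is a convex combination of honest feasible rank-one moment matrices, $\iprod{G_j(\cI),M(\cI)}=0$ identically in $\cI$. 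Its $\udist$-Fourier coefficients are the pseudo-calibrated moments $\widehat{M}_{\alpha\beta}(S)=\E_{(x,\cI)\sim\pdist}[x^\alpha x^\beta\chi_S(\cI)]$. I would take $Q:=\Pi_{\le 2D}M$, the coordinate-wise truncation of $M$ to instance-Fourier-degree $\le 2D$: this is by construction a degree-$2D$ matrix polynomial with index set $[n]^{\le d}\times[n]^{\le d}$, as required. (Equivalently — and this is how I would actually organize the argument — $Q$ is the dual certificate of an \emph{infeasible} convex program that asks for a family $\{X(\cI)\sge 0\}_{\cI\in\supp\udist}$ of trace-bounded matrices whose degree-$\le 2D$ $\udist$-moments equal those of the planted pseudo-moments; restricting the program to degree-$\le 2D$ moments is exactly what pins the certificate to degree $2D$.)

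The technical core is the \emph{robustness engine}: the truncation throws away essentially nothing. The key estimate is that for every $x\in\sols$ and every $S$ with $|S|\ge D-2dk$, one has $|\E_{\cI\sim\pdist_{|x}}[\chi_S(\cI)]|\le \Pr_{T\sim\subsetdist}[S\subseteq T]\cdot O(1)+n^{-8B}\le 2n^{-8B}$. This is where robust inference enters: re-randomizing $\cI_{\overline T}\sim\udist$ after subsampling $\cI_T$ from $\pdist_{|x}$ leaves $x$ feasible with probability $1-n^{-8B}$, so $\pdist_{|x}$ is $n^{-8B}$-close to its $\subsetdist$-subsampled re-randomization, under which $\chi_S$ has mean zero whenever $S\not\subseteq T$ — and $\Pr_T[S\subseteq T]\le n^{-8B}$ for $|S|\ge D-2dk$ by the choice of $D$. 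Consequently $\norm{\Pi_{>2D}M}_{Fr,\udist}$ and each $\E_\udist[\iprod{G_j,\Pi_{>2D}M}\cdot(\text{low-degree})]$ are negligible, so $Q$ \emph{approximately} satisfies the \sos constraints: $\iprod{G_j(\cI),Q(\cI)}$ is a degree-$\le 2D+k$ polynomial whose $\udist$-mass sits on Fourier levels $>2D-k$ and whose $L_2(\udist)$ norm is $n^{-\Omega(B)}$.

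For the uniform side, the refutation hypothesis gives, on all but an $n^{-8B}$ fraction of $\cI\sim\udist$, scalars $\Lambda_j(\cI)$ and $R(\cI)\sge 0$, all bounded by $n^{B}$, with $\sum_j\Lambda_j(\cI)G_j(\cI)+R(\cI)=-\mempty$. Pairing this identity with $Q(\cI)$ (plus a small correction), using $R(\cI)\sge 0$ to get $\iprod{R(\cI),Q(\cI)}\ge-(\text{negligible})$ and the previous paragraph to bound $\sum_j\Lambda_j(\cI)\iprod{G_j(\cI),Q(\cI)}$, forces $\lambda^+_{\max}(Q(\cI))\le n^{-\Omega(B)}$ on refutable instances, while on the remaining $n^{-8B}$ fraction one bounds $\lambda^+_{\max}(Q(\cI))\le\norm{Q(\cI)}_{Fr}$ crudely; summing gives $\E_\udist[\lambda^+_{\max}(Q)]\le 1$. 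For the planted side, for $\cI\sim\pdist$ the planted $x$ gives a genuine feasible point $X=x^{\le d}(x^{\le d})^{\!\top}$ for $\cS$ on $\cI$, so $\iprod{G_j(\cI),X}=0$; using $Q\approx M$ and $M_{\emptyset\emptyset}=\pdist/\udist$ one sees $\iprod{Q(\cI),X}$ carries, up to negligible truncation error, enough of the relative density that $\E_\pdist[\lambda^+_{\max}(Q)]\ge\E_\pdist[\iprod{Q(\cI),X}/\norm{x^{\le d}}^2]$ is large. In the dual picture this is strong duality: a low-degree PSD-valued trace-$\le 1$ family supported on $\udist$ matching $Q$'s degree-$\le 2D$ moments would have to reproduce $\pdist/\udist$ up to Fourier level $2D$ inside an operator-norm-$\le 1$ matrix, which forces the normalizing constant to be $\ge n^{B/2}$ — the exponent $B/2$ being exactly what the $n^{B}$ bound on refutation numbers buys. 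Combining the two estimates yields $\E_\pdist[\lambda^+_{\max}(Q)]\ge n^{B/2}\cdot\E_\udist[\lambda^+_{\max}(Q)]$.

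The main obstacle is the quantitative bookkeeping linking the robustness engine to the degree bound: one must show that matching the planted moments only up to Fourier-degree $\sim D$ already forces approximate $\cS$-feasibility, so the dual certificate can be taken of degree $2D$ — true \emph{only} because robust inference makes the high-degree Fourier content of $\pdist$ negligible. Making this rigorous is delicate because the naive bounds involve $\binom{N}{\le 2D}$ and $\norm{\pdist/\udist}_\infty$, both enormous; the argument must instead lean on the faster, subsampling-driven decay $|\widehat{\pdist_{|x}}(S)|\lesssim\Pr_T[S\subseteq T]$, choose $D$ (as in the hypothesis) large enough that this decay reaches $n^{-8B}$ before level $2D$, and then verify that the errors propagated through the uniform- and planted-side estimates genuinely beat the $n^{B}$ magnitude and $n^{-8B}$ probability slacks — which is where the assumption $B>d\cdot k$ is spent. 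A secondary subtlety is that in the uniform-side pairing one cannot use $v^{\top}G_j(\cI)v$ for the top eigenvector $v$ of $Q(\cI)$ directly (it need not be small), so the argument must pair the refutation identity against $Q(\cI)$ itself rather than against $vv^{\top}$, and separately control the gap between $\lambda^+_{\max}(Q(\cI))$ and $\iprod{Q(\cI),Q(\cI)}/\norm{Q(\cI)}_{Fr}$.
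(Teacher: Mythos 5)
Your overall framing (pseudo-calibration plus duality) points in the right direction, but the way you close the argument on the uniform side has a genuine gap, and it is the load-bearing step. You take $Q$ to be the degree-$2D$ truncation of the pseudo-calibrated matrix $M(\cI)=\hat\pdist(\cI)\cdot\E_{x\sim\pdist_{|\cI}}[x^{\le d}(x^{\le d})^{\top}]$ and propose to deduce $\E_{\udist}[\lambda_{max}^+(Q)]\le 1$ by pairing the refutation identity with $Q(\cI)$. This cannot work as stated, for two reasons. First, the pairing $-\iprod{\mempty,Q}=\iprod{A^{\cI},Q}+\iprod{G^{\cI},Q}$ controls only a single linear functional of $Q(\cI)$; even if $Q(\cI)\succeq 0$, it yields at best a bound on $Q_{\emptyset,\emptyset}(\cI)$, which says nothing about $\lambda_{max}^+$ of an $[n]^{\le d}\times[n]^{\le d}$ matrix. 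Second, the step $\iprod{A^{\cI},Q}\ge 0$ (your ``$\iprod{R(\cI),Q(\cI)}\ge -(\text{negligible})$'') requires $Q(\cI)\succeq 0$; but the truncation of a PSD-valued function is not PSD pointwise---establishing PSDness of exactly this kind of truncated pseudo-calibrated matrix is the hard technical core of \cite{DBLP:conf/focs/BarakHKKMP16} and of \pref{lem:tpca-conditions-main} here---and in the regime of this theorem it must in fact \emph{fail}: if $Q(\cI)$ were approximately PSD, approximately satisfied the constraints, and had $Q_{\emptyset,\emptyset}(\cI)\approx 1$ for most $\cI\sim\udist$, that would already contradict hypothesis (2) that a degree-$d$ refutation exists with high probability. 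So the properties you need of $Q$ on uniform instances cannot coexist, and none of them individually gives $\lambda_{max}^+(Q(\cI))\le n^{-\Omega(B)}$.

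The paper's proof sidesteps this by arguing the contrapositive and never evaluating $\lambda_{max}^+$ of the pseudo-calibrated object. Assuming no degree-$2D$ distinguisher exists, one sets up the moment-matching program (\pref{prog:distrib}): the primal asks for a pointwise-PSD $P$ matching the low-degree moments of $\Lambda$ with small $\norm{P}_{Fr,\udist}$, and its dual (\pref{prog:disting}) is a spectral distinguisher normalized by $\norm{Q_+}_{Fr,\udist}\le 1$---which is where $\E_\udist[\lambda_{max}^+(Q)]\le 1$ comes from, by Cauchy--Schwarz, entirely independently of the refutation. The absence of a distinguisher then yields PSD functions $P_S$ with $\norm{P_S}_{Fr,\udist}\le n^{B/2}$, and it is $P=\E_{S\sim\subsetdist}P_S$---PSD by construction---that gets substituted into the refutation identity; the contradiction comes from $P_{\emptyset,\emptyset}\approx\Lambda_{\emptyset,\emptyset}=1$ versus the bound the identity forces. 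Two further structural points you would need: the paper's $\Lambda$ is not built from the full conditional $\E_{x\sim\pdist_{|\cI}}$ but is $\E_{S\sim\subsetdist}\Lambda_S$ with each $\Lambda_S$ depending only on $\cI_S$ through the robust-inference map $x(\cI_S)$---this is exactly what makes the random-restriction lemma (\pref{lem:randomrestriction}) true, i.e.\ that $\iprod{P,R}_\udist$ only sees $R^{\le D}$---and the approximate constraint satisfaction is proved for $P$ via \pref{lem:lambdaideal} and \pref{lem:ideal}, not for a truncation of $M$. Your planted-side inequality $\lambda_{max}^+(Q)\ge\iprod{Q,X}/\norm{x^{\le d}}^2$ is fine as far as it goes, but without a valid uniform-side bound the ratio $n^{B/2}$ does not follow.
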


\begin{remark}
	Our argument implies a stronger result that can be stated in terms of the eigenspaces of the subsampling operator.
	Specifically, suppose we define
	\[ \cS_{\eps} \defeq \left\{ \alpha ~|~ \Pr_{S \sim \subsetdist} \{ \alpha \subseteq S\} \leq \epsilon \right\} \]
	Then, the distinguishing polynomial exhibited by \prettyref{thm:main} satisfies $Q \in \Span\{ \text{ monomials } \cI_{\alpha} | \alpha \in \cS_{\eps} \}$.
	This refinement can yield tighter bounds in cases where all monomials of a certain degree are not equivalent to each other.
	For example, in the \plantedclique\ problem, each monomial consists of a subgraph and the right measure of the degree of a sub-graph is the number of vertices in it, as opposed to the number of edges in it.
\end{remark}

In \pref{sec:examp}, we will make the routine verifications that the conditions of this theorem hold for a variety of distinguishing problems: planted clique (\pref{lem:pc-ex}), refuting random CSPs (\pref{lem:csp-ex}, stochastic block models (\pref{lem:sbm-ex}), densest-$k$-subgraph (\pref{lem:dks-ex}), tensor PCA (\pref{lem:tpca-ex}), and sparse PCA (\pref{lem:spca-ex}).
Now we will proceed to prove the theorem.
\section{Moment-Matching Pseudodistributions}\label{sec:moment-match}

We assume the setup from \pref{sec:low-deg-dist}: we have a family of instances $\scrI = \cA^N$, a polynomial system $\cS = \{g_j(x,\cI)\}_{j \in [m]}$ with a family of solutions $\sols = \cB^n$, a ``uniform'' distribution $\udist$ which is a product distribution over $\scrI$, and a ``planted'' distribution $\pdist$ over $\scrI$ defied by the polynomial system $\cS$ as described in \pref{sec:low-deg-dist}.

The contrapositive of \pref{thm:low-deg} is that if $\cS$ is robustly inferable with respect to $\mu$ and a distribution over sub-instances $\Theta$, and if there is no spectral algorithm for distinguishing $\pdist$ and $\udist$, then with high probability there is no degree-$d$ \sos refutation for the polynomial system $\cS$ (as defined in \pref{prog:boptmat}).
To prove the theorem, we will use duality to argue that if no spectral algorithm exists, then there must exist an object which is in some sense close to a feasible solution to the \sos SDP relaxation.

Since each $\cI$ in the support of $\mu$ is feasible for $\cS$ by definition, a natural starting point is the \sos SDP solution for instances $\cI \sim_{\pdist} \scrI$.
With this in mind, we let $\Lambda : \scrI \to (\R^{[n]^{\le d}\times [n]^{\le d}})_+$ be an arbitrary function from the support of $\mu$ over $\scrI$ to PSD matrices.
In other words, we take
\[
\Lambda(\cI) = \hat \pdist(\cI) \cdot M(\cI)
\]
where $\hat \mu$ is the relative density of $\mu$ with respect to $\nu$, so that $\hat\mu(\cI) = \mu(\cI)/\nu(\cI)$, and  $M$ is some matrix valued function such that $M(\cI) \succeq 0$ and $\norm{M(\cI)} \leq B$ for all $\cI \in \scrI$.
Our goal is to find a PSD matrix-valued function $P$ that matches the low-degree moments of $\Lambda$ in the variables $\cI$, while being supported over most of $\scrI$ (rather than just over the support of $\pdist$).

The function $P:\scrI \to (\R^{[n]^{\le d}\times [n]^{\le d}})_+$ is given by the following exponentially large convex program over matrix-valued functions,
\begin{program}[Pseudodistribution Program]\label{prog:distrib}
\begin{align}
	\min & \quad \norm{P}_{Fr,\udist}^2 \label{eq:obj}\\
    s.t.&\quad \iprod{Q,P}_{\udist} = \iprod{Q,\Lambda'}_{\udist} \quad \forall Q:\scrI \to \R^{[n]^{\le d}\times [n]^{\le d}},\ \deg_{\inst}(Q)\le D\label{eq:low-deg}\\
    &\quad P \succeq 0\nonumber\\
    &\quad \Lambda' = \Lambda + \eta\cdot \Id, \quad 2^{-2^{2^n}}> \eta > 0 \label{eq:lambda-perturb}
\end{align}
\end{program}

The constraint \pref{eq:low-deg} fixes $\E\Tr(P)$, and so the objective function \pref{eq:obj} can be viewied as minimizing $\E\Tr(P^2)$, a proxy for the collision probability of the distribution, which is a measure of entropy.

\begin{remark}
    We have perturbed $\Lambda$ in \pref{eq:lambda-perturb} so that we can easily show that strong duality holds in the proof of \pref{claim:dual}.
    For the remainder of the paper we ignore this perturbation, as we can accumulate the resulting error terms and set $\eta$ to be small enough so that they can be neglected.
\end{remark}

The dual of the above program will allow us to relate the existence of an \sos refutation to the existence of a spectral algorithm.
\begin{program}[Low-Degree Distinguisher]\label{prog:disting}
    \begin{align*}
	\max&\quad \iprod{\Lambda, Q}_{\udist}\\
	s.t.&\quad Q:\scrI \to \R^{[n]^{\le d}\times [n]^{\le d}}, \ \deg_{\inst}(Q) \le D\\
	&\quad \|Q_+\|_{Fr,\udist}^2 \le 1,
    \end{align*}
    where $Q_+$ is the projection of $Q$ to the PSD cone.
\end{program}
\begin{claim}\label{claim:dual}
    \pref{prog:disting} is a manipulation of the dual of \pref{prog:distrib}, so that if \pref{prog:distrib} has optimum $c > 1$, \pref{prog:disting} as optimum at least $\Omega(\sqrt{c})$.
\end{claim}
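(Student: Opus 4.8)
The strategy is to exhibit \pref{prog:disting} as (a rescaling of) the Lagrangian dual of \pref{prog:distrib} and then read off the quantitative relation between the two optima.

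First I would record that \pref{prog:distrib} is a convex program: the objective $\norm{P}_{Fr,\udist}^2$ is a convex quadratic in the matrix-valued function $P$, the moment-matching conditions \pref{eq:low-deg} are affine, and ``$P\succeq 0$'' asks $P$ to lie in the cone $\cK$ of pointwise-PSD functions, which is self-dual under $\iprod{\cdot,\cdot}_\udist$ on the support of $\udist$. The sole purpose of the perturbation \pref{eq:lambda-perturb} is to supply a Slater point: since $\Lambda(\cI)=\hat\pdist(\cI)M(\cI)\succeq 0$ we have $\Lambda'(\cI)\succeq\eta\Id\succ 0$, so $P=\Lambda'$ lies in the interior of $\cK$ and trivially satisfies \pref{eq:low-deg}. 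Hence strong duality holds and the optimum $c$ of \pref{prog:distrib} equals the optimum of its Lagrangian dual. (When $\scrI$ is finite this is ordinary finite-dimensional convex duality; the $\R^N$ variant follows from the usual approximation arguments.)

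Next I would compute the dual. A combined multiplier for \pref{eq:low-deg} is an instance-degree-$\le D$ matrix polynomial $Q$ (only the projection of the multiplier onto this subspace affects the constraint), and the multiplier for $P\succeq 0$ is some $Z\in\cK$; the Lagrangian is $\norm{P}_{Fr,\udist}^2+\iprod{Q,P-\Lambda'}_\udist-\iprod{Z,P}_\udist$. Minimizing over the now unconstrained $P$ gives $P^\star=\tfrac12(Z-Q)$ and dual value $-\tfrac14\norm{Z-Q}_{Fr,\udist}^2-\iprod{Q,\Lambda'}_\udist$. For fixed $Q$ the optimal $Z\in\cK$ is the pointwise PSD projection $Q_+$ of $Q$, so $\min_{Z\in\cK}\norm{Z-Q}_{Fr,\udist}^2=\norm{Q-Q_+}_{Fr,\udist}^2$; after the substitution $Q\mapsto -Q$ (legitimate because the instance-degree-$\le D$ functions form a subspace, and $\norm{(-Q)-(-Q)_+}_{Fr,\udist}=\norm{Q_+}_{Fr,\udist}$) the dual becomes
\[
  c\;=\;\sup_{\deg_{\inst}(Q)\le D}\ \Big(\iprod{Q,\Lambda'}_\udist-\tfrac14\norm{Q_+}_{Fr,\udist}^2\Big)\mper
\]
For $Q$ with $\iprod{Q,\Lambda'}_\udist\le 0$ the bracket is $\le 0$, hence dominated by $Q=0$; for $Q$ with $\iprod{Q,\Lambda'}_\udist>0$ (which forces $\norm{Q_+}_{Fr,\udist}>0$ since $\Lambda'\succeq 0$), rescaling so that $\norm{Q_+}_{Fr,\udist}=1$ and then maximizing the one-variable quadratic $t\mapsto t\iprod{Q,\Lambda'}_\udist-\tfrac14 t^2$ shows the bracket attains $(\iprod{Q,\Lambda'}_\udist)^2$. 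Therefore $c=(V^\star)^2$, where $V^\star\defeq\sup\{\iprod{Q,\Lambda'}_\udist:\deg_{\inst}(Q)\le D,\ \norm{Q_+}_{Fr,\udist}\le 1\}\ge 0$ (the value $0$ being attained at $Q=0$), i.e. $V^\star=\sqrt c$.

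Finally I would strip off the perturbation. If $V$ denotes the optimum of \pref{prog:disting}---the same supremum but with $\Lambda$ in place of $\Lambda'$---then $|V-V^\star|\le\eta\cdot\sup\{|\iprod{Q,\Id}_\udist|:\norm{Q_+}_{Fr,\udist}\le 1\}$, and for admissible $Q$ we have $\iprod{Q,\Id}_\udist=\E_{\cI\sim\udist}\Tr Q(\cI)\le\E_{\cI\sim\udist}\Tr Q_+(\cI)\le\sqrt{\binom{n}{\le d}}\cdot\norm{Q_+}_{Fr,\udist}\le n^{O(d)}$, using $\Tr Q_-(\cI)\le 0$, Cauchy--Schwarz on the (nonnegative) eigenvalues of $Q_+(\cI)$, and Jensen. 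Since $\eta\le 2^{-2^{2^n}}$ this error is at most $\tfrac12$, so $V\ge\sqrt c-\tfrac12\ge\tfrac12\sqrt c=\Omega(\sqrt c)$ once $c>1$, which is the claim.

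I expect the strong-duality step to demand the most care: verifying that the perturbed $\Lambda'$ really does land strictly inside $\cK$ so that Slater's condition applies, and (in the $\R^N$ variant) justifying both the duality and the interchange of the inner maximization over $Z$ with the outer optimization. The algebraic manipulations---solving for $P^\star$, projecting onto $\cK$, the homogeneity/rescaling argument, and the bookkeeping that keeps the $\eta$-term negligible---are routine by comparison.
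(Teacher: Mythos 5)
Your proposal is correct and follows essentially the same route as the paper's proof: Lagrangian duality with Slater's condition supplied by the $\eta$-perturbation, eliminating $P$ in closed form, optimizing the PSD multiplier to turn $\|Z-Q\|^2$ into $\|Q_+\|^2$, and then rescaling $Q$ to pass from the unconstrained dual to the normalized distinguisher program. The only (cosmetic) differences are that you optimize the scaling parameter exactly to get $V^\star=\sqrt{c}$ where the paper does a two-case analysis yielding $\Omega(\sqrt c)$, and your bookkeeping of the $\eta$-error via $\E\Tr Q_+\le n^{O(d)}\norm{Q_+}_{Fr,\udist}$ mirrors the paper's $\Tr_\nu(Q_+^*)/\norm{Q_+^*}_{Fr,\udist}\le n^d$ bound.
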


Before we present the proof of the claim, we summarize its central consequence in the following theorem: if \pref{prog:distrib} has a large objective value (and therefore does not provide a feasible \sos solution), then there is a spectral algorithm.
\begin{theorem} \label{thm:duality}
	Fix a function $M : \scrI \to  \R^{[n]^{\le d}\times [n]^{\le d}}_+$ be such that $\Id \succeq M \succeq 0$.
	Let $\lambda^{+}_{\max}(\cdot)$ be the function that gives the largest non-negative eigenvalue of a matrix.
	Suppose $\Lambda = \pdist \cdot M$ then the optimum of \pref{prog:distrib} is equal to $\opt > 1$ only if there exists a low-degree matrix polynomial $Q$ such that,
	\[  \E_{\cI \sim \pdist} [ \lambda_{max}^+ (Q(\cI))  ] \geq \Omega( \sqrt{\opt}/n^d ) \]
while,
\[  \E_{\cI \sim \udist} [ \lambda_{max}^+(Q(\cI))  ] \leq  1
\mper\]
\end{theorem}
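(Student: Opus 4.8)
The plan is to derive \pref{thm:duality} directly from \pref{claim:dual}, by converting the Frobenius-norm constraint and linear objective of the dual program \pref{prog:disting} into bounds on the largest non-negative eigenvalue. Suppose \pref{prog:distrib} has optimum $\opt > 1$. By \pref{claim:dual} the program \pref{prog:disting} then has optimum at least $\Omega(\sqrt{\opt})$, so there is a matrix polynomial $Q : \scrI \to \R^{[n]^{\le d}\times[n]^{\le d}}$ with $\deg_{\inst}(Q)\le D$, with $\norm{Q_+}_{Fr,\udist}^2 \le 1$, and with $\iprod{\Lambda,Q}_{\udist} \ge \Omega(\sqrt{\opt})$; here $Q_+$ denotes the pointwise projection of $Q$ onto the PSD cone. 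I claim this same $Q$ already witnesses the theorem up to constants, and all that remains is to read off the two eigenvalue bounds.

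For the uniform side, note that for any symmetric matrix $M$ one has $\lambda^+_{\max}(M) = \norm{M_+}$ (the spectral norm of the PSD projection) and $\norm{M_+}\le \norm{M_+}_{Fr}$; hence $\E_{\cI\sim\udist}[\lambda^+_{\max}(Q(\cI))^2] \le \E_{\cI\sim\udist}\norm{Q_+(\cI)}_{Fr}^2 = \norm{Q_+}_{Fr,\udist}^2 \le 1$, and Cauchy--Schwarz gives $\E_{\cI\sim\udist}[\lambda^+_{\max}(Q(\cI))]\le 1$. For the planted side, since $\Lambda = \hat\pdist\cdot M$ with $\hat\pdist$ the relative density of $\pdist$ with respect to $\udist$, we have $\iprod{\Lambda,Q}_{\udist} = \E_{\cI\sim\pdist}\Tr(M(\cI)Q(\cI))$. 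Writing $Q = Q_+ - Q_-$ with $Q_\pm\succeq 0$ and using $0\preceq M\preceq\Id$, we get $\Tr(MQ) = \Tr(MQ_+) - \Tr(MQ_-) \le \Tr(MQ_+) \le \Tr(Q_+) \le n^d\cdot\lambda^+_{\max}(Q)$, the last step bounding the trace of a PSD matrix by its dimension (at most $n^d$, the number of degree-$\le d$ monomials, after adjusting constants) times its top eigenvalue. Combining, $\Omega(\sqrt{\opt}) \le \iprod{\Lambda,Q}_{\udist} \le n^d\cdot\E_{\cI\sim\pdist}[\lambda^+_{\max}(Q(\cI))]$, which rearranges to $\E_{\cI\sim\pdist}[\lambda^+_{\max}(Q(\cI))]\ge\Omega(\sqrt{\opt}/n^d)$. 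Finally, since \pref{prog:distrib} is stated with the perturbation $\Lambda' = \Lambda + \eta\Id$, one checks that replacing $\Lambda'$ by $\Lambda$ shifts each quantity above by at most an additive $\poly(n)\cdot\eta$; choosing $\eta$ as in \pref{eq:lambda-perturb} makes this negligible, exactly as remarked after \pref{prog:distrib}.

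The substantive work is thus entirely inside \pref{claim:dual}, and that is the step I expect to be most delicate. To prove it I would form the Lagrangian of \pref{prog:distrib}: the equality constraints \pref{eq:low-deg} assert that the coordinate-wise projection of $P$ onto polynomials of $\udist$-degree at most $D$ agrees with that of $\Lambda'$ (this projection is well defined because $\udist$ is a product distribution, so a $\udist$-orthogonal polynomial basis exists), while $P\succeq 0$ is dualized with a pointwise PSD multiplier. Minimizing the resulting quadratic in $P$ in closed form collapses the two multipliers into a single low-degree matrix polynomial $Q$ with a quadratic penalty, and after normalizing so that the penalty becomes the constraint $\norm{Q_+}_{Fr,\udist}^2 \le 1$ one recovers \pref{prog:disting}; the square-root relation between the two optimal values is simply because \pref{prog:distrib} minimizes a squared Frobenius norm whereas \pref{prog:disting} maximizes a linear functional under a squared-norm constraint. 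The one genuine subtlety is justifying strong duality for this infinite-dimensional conic program, which is precisely what the perturbation in \pref{eq:lambda-perturb} buys: it makes $P = \Lambda'$ a Slater point with $\Lambda'\succ 0$ pointwise, so that (after truncating to a large finite-dimensional subspace of polynomials, or via Sion's minimax theorem) the primal and dual optima coincide and are attained.
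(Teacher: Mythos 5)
Your proposal is correct and follows essentially the same route as the paper: invoke \pref{claim:dual} to get $Q$ with $\iprod{\Lambda,Q}_\nu \ge \Omega(\sqrt{\opt})$ and $\norm{Q_+}_{Fr,\udist}\le 1$, bound the uniform side via $\lambda^+_{\max}\le\norm{Q_+}_{Fr}$ and Cauchy--Schwarz, and bound the planted side via $\Tr(MQ)\le n^d\lambda^+_{\max}(Q)$. Your chain $\Tr(MQ)\le\Tr(MQ_+)\le\Tr(Q_+)\le n^d\lambda^+_{\max}(Q)$ is in fact slightly more careful than the paper's intermediate step $\E_\pdist\iprod{\Id,Q}\ge\iprod{\Lambda,Q}_\nu$ (which as literally written would need $\Tr((\Id-M)Q)\ge 0$ for non-PSD $Q$), so no issues.
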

\begin{proof}
    By \pref{claim:dual}, if the value of \pref{prog:distrib} is $\opt > 1$, then there is a polynomial $Q$ achieves a value of $\Omega(\sqrt{\opt})$ for the dual.
    It follows that
	\[
	    \E_{\cI \sim \pdist} [ \lambda_{max}^+(Q(\cI))]  \geq \frac{1}{n^d} \E_{\cI \sim \pdist} [ \iprod{\Id, Q(\cI))}] \geq \frac{1}{n^d} \iprod{\Lambda, Q}_\nu = \Omega(\sqrt{\opt}/n^d),
	\]
	while
	\[
	    \E_{\cI \sim \udist} [\lambda_{max}^+(Q(\cI))]
	    \leq \sqrt{\E_{\cI \sim \udist}[\lambda_{\max}^+(Q(\cI))^2]}
	    \leq \sqrt{\E_{\cI\sim \nu}\norm{Q_+(\cI)}_{Fr}^2} \leq 1.
	    \]
\end{proof}
It is interesting to note that the specific structure of the PSD matrix valued function $M$ plays no role in the above argument---since $M$ serves as a proxy for monomials in the solution as represented by the program variables $x^{\otimes d}$, it follows that the choice of how to represent the planted solution is not critical.
Although seemingly counterintuitive, this is natural because the property of being distinguishable by low-degre distinguishers or by SoS SDP relaxations is a property of $\udist$ and $\pdist$.

We wrap up the section by presenting a proof of the \pref{claim:dual}.
\Tnote{}

\begin{proof}[Proof of \pref{claim:dual}]
    We take the Lagrangian dual of \pref{prog:distrib}.
    Our dual variables will be some combination of low-degree matrix polynomials, $Q$, and a PSD matrix $A$:
    \[
	\cL(P,Q,A)
	= \|P\|_{Fr,\udist}^2 - \iprod{Q, P - \Lambda'}_\nu - \iprod{A,P}_\udist \qquad s.t.\quad A \sge 0.
    \]
    It is easy to verify that if $P$ is not PSD, then $A$ can be chosen so that the value of $\cL$ is $\infty$.
    Similarly if there exists a low-degree polynomial upon which $P$ and $\Lambda$ differ in expectation, $Q$ can be chosen as a multiple of that polynomial so that the value of $\cL$ is $\infty$.

    Now, we argue that Slater's conditions are met for \pref{prog:distrib}, as $P = \Lambda'$ is strictly feasible.
    Thus strong duality holds, and therefore
    \[
	\min_P \max_{A\sge 0,Q} \cL(P,Q,A)  \le \max_{A\sge 0, Q} \min_P \cL(P,Q,A).
    \]
    Taking the partial derivative of $\cL(P,Q,A)$ with respect to $P$, we have
    \begin{align*}
	\frac{\partial}{\partial P} \cL(P,Q,A) &= 2\cdot P - Q - A.
    \end{align*}
    where the first derivative is in the space of functions from $\scrI \to \R^{[n]^{\le d}\times [n]^{\le d}}$.
    By the convexity of $\cL$ as a function of $P$, it follows that if we set $\tfrac{\partial}{\partial P} \cL = 0$, we will have the minimizer.
    Substituting, it follows that
    \begin{align}
	\min_P \max_{A\sge 0,Q} \cL(P,Q,A)
	&\le \max_{A\sge 0, Q} \frac{1}{4}\|A + Q\|_{Fr,\udist}^2 - \frac{1}{2}\iprod{Q, A + Q - \Lambda'}_\nu - \frac{1}{2}\iprod{A,A+Q}_\nu\nonumber \\
	&= \max_{A\sge 0, Q} \iprod{Q,\Lambda'}_\nu - \frac{1}{4}\|A + Q\|_{Fr,\udist}^2\label{eq:optbd}
    \end{align}
    Now it is clear that the maximizing choice of $A$ is to set $A = -Q_{-}$, the negation of the negative-semi-definite projection of $Q$.
    Thus \pref{eq:optbd} simplifies to
    \begin{align}
	\min_P \max_{A\sge 0,Q} \cL(P,Q,A)
	&\le \max_{Q} \iprod{Q,\Lambda'}_\nu - \frac{1}{4}\|Q_+\|_{Fr,\udist}^2\nonumber\nonumber \\
	&\le \max_{Q} \iprod{Q,\Lambda}_\nu  + \eta \Tr_\nu(Q_+) - \frac{1}{4}\|Q_+\|_{Fr,\udist}^2,\label{eq:unconst}
    \end{align}
    where we have used the shorthand $\Tr_\nu(Q_+) \defeq \E_{\cI\sim \nu} \Tr(Q(\cI)_+)$.
    Now suppose that the low-degree matrix polynomial $Q^*$ achieves a right-hand-side value of
    \[
	\iprod{Q^*, \Lambda}_\nu + \eta\cdot \Tr_\nu(Q_+^*) - \frac{1}{4}\|Q_+^*\|_{Fr,\udist}^2 \ge c.
    \]
Consider $Q' = Q^*/\|Q^*_+\|_{Fr,\udist}$.
    Clearly $\|Q'_+\|_{Fr,\udist} = 1$.
    Now, multiplying the above inequality through by the scalar $1/\|Q^*_+\|_{Fr,\udist}$, we have that
    \begin{align*}
	\iprod{Q', \Lambda}_\nu
	&\ge \frac{c}{\|Q^*_+\|_{Fr,\udist}} - \eta \cdot \frac{\Tr_\nu(Q_+^*)}{\|Q_+^*\|_{Fr,\udist}}+ \frac{1}{4}\|Q^*_+\|_{Fr,\udist} \\
	&\ge \frac{c}{\|Q^*_+\|_{Fr,\udist}} - \eta \cdot n^d + \frac{1}{4}\|Q^*_+\|_{Fr,\udist}.
    \end{align*}
    Therefore $\iprod{Q',\Lambda}_\nu$ is at least $\Omega(c^{1/2})$, as if $\|Q^*_+\|_{Fr,\udist} \ge \sqrt{c}$ then the third term gives the lower bound, and otherwise the first term gives the lower bound.

    Thus by substituting $Q'$, the square root of the maximum of \pref{eq:unconst} within an additive $\eta n^d$ lower-bounds the maximum of the program
    \begin{align*}
	\max & \qquad \iprod{Q,\Lambda}_\nu\\
	s.t. & \qquad Q: \scrI \to \R^{[n]^{\le d}\times [n]^{\le d}},\quad \deg_{\inst}(Q) \le D\\
	&\qquad \|Q_+\|_{Fr,\udist}^2 \le 1.
    \end{align*}
    This concludes the proof.
\end{proof}
\section{Proof of \pref{thm:main}} \label{sec:proofofthm}

We will prove \pref{thm:main} by contradiction.  Let us assume that there exists no degree-$2D$ matrix polynomial that distinguishes $\udist$ from $\pdist$.
First, the lack of distinguishers implies the following fact about scalar polynomials.
    \begin{lemma} \label{lem:scalar}
	Under the assumption that there are no degree-$2D$ distinguishers, for every degree-$D$ scalar polynomial $Q$,
		\[ \norm{Q}^2_{Fr,\pdist} \leq n^{B} \norm{Q}^2_{Fr,\udist} \]
	\end{lemma}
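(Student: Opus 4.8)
\emph{Plan.} The plan is to prove the statement by establishing its contrapositive against a single, explicitly constructed candidate distinguisher. Recall the standing assumption of this section: no degree-$2D$ matrix polynomial $R:\scrI\to\R^{[n]^{\le d}\times[n]^{\le d}}$ satisfies $\E_{\cI\sim\pdist}[\lambda^+_{\max}(R(\cI))] \ge n^{B/2}\cdot\E_{\cI\sim\udist}[\lambda^+_{\max}(R(\cI))]$ (this is exactly what \pref{thm:main} would produce). So it suffices to show that any scalar polynomial $Q$ with $\deg_{\inst}(Q)\le D$ and $\norm{Q}^2_{Fr,\pdist} \ge n^{B}\norm{Q}^2_{Fr,\udist}$ — indeed any with ratio at least $n^{B/2}$ — yields such an $R$.

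The observation driving the proof is that a scalar test is nothing but a $1\times1$ — hence automatically positive semidefinite — spectral test. Given $Q$, I would take
\[
R(\cI) \defeq Q(\cI)^2\cdot\mempty,
\]
the matrix in $\R^{[n]^{\le d}\times[n]^{\le d}}$ with $(\emptyset,\emptyset)$-entry equal to $Q(\cI)^2$ and all other entries $0$. Then (i) each entry of $R$ is a polynomial of instance-degree at most $2D$ (since $\deg_{\inst}(Q)\le D$), so $R$ is an admissible candidate distinguisher; (ii) $R(\cI)\succeq 0$ for every $\cI$, with eigenvalues $Q(\cI)^2$ and $0$, so $\lambda^+_{\max}(R(\cI)) = Q(\cI)^2$; and therefore (iii)
\[
\E_{\cI\sim\pdist}[\lambda^+_{\max}(R(\cI))] = \norm{Q}^2_{Fr,\pdist}, \qquad \E_{\cI\sim\udist}[\lambda^+_{\max}(R(\cI))] = \norm{Q}^2_{Fr,\udist}.
\]
If $\norm{Q}^2_{Fr,\udist} > 0$, the assumption that $R$ is not a distinguisher gives $\norm{Q}^2_{Fr,\pdist} < n^{B/2}\norm{Q}^2_{Fr,\udist}\le n^{B}\norm{Q}^2_{Fr,\udist}$, which is the desired bound; if $\norm{Q}^2_{Fr,\udist} = 0$, the same assumption (the ratio would otherwise be $+\infty$) forces $\norm{Q}^2_{Fr,\pdist} = 0$, and the bound holds trivially.

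I do not expect any genuine obstacle here: the entire content is the easy direction of the scalar-versus-spectral comparison — scalars embed as PSD $1\times1$ blocks, on which $\lambda^+_{\max}$ returns the scalar value itself — and the only things to verify are the degree bookkeeping ($\deg_{\inst}(Q^2)\le 2D$) and the elementary identity $\lambda^+_{\max}(Q(\cI)^2\cdot\mempty)=Q(\cI)^2$. The only mild care is in the degenerate case $\norm{Q}_{Fr,\udist}=0$ handled above, and in noting that the threshold $n^{B}$ stated in the lemma is deliberately looser than the $n^{B/2}$ the argument actually delivers (presumably to leave slack for later use).
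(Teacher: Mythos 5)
Your proposal is correct and is essentially the paper's own argument: the paper likewise proves the contrapositive by observing that $Q(\cI)^2$, viewed as a degree-$2D$ $1\times 1$ (hence PSD) matrix polynomial, would be a distinguisher if the stated bound failed. Your extra bookkeeping (embedding via $\mempty$, the degenerate case $\norm{Q}_{Fr,\udist}=0$, and the $n^{B/2}$ vs.\ $n^{B}$ slack) only makes explicit what the paper leaves implicit.
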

	\begin{proof}
	    Suppose not, then the degree-$2D$ $1 \times 1$ matrix polynomial $\Tr(Q(\cI)^2)$ will be a distinguisher between $\pdist$ and $\udist$.
	\end{proof}

\paragraph{Constructing $\Lambda$}
First, we will use the robust inference property of $\pdist$ to construct a pseudo-distribution $\Lambda$.
Recall again that we have defined $\hat \mu$ to be the relative density of $\mu$ with respect to $\nu$, so that $\hat\mu(\cI) = \mu(\cI)/\nu(\cI)$.
For each subset $S \subseteq [N]$, define a PSD matrix-valued function $\Lambda_S : \scrI \to (\R^{[n]^{\leq d} \times [n]^{\leq d}})_+$ as,
\[ \Lambda_S( \cI ) = \E_{\cI'_{\overline{S}}} [\hat\mu(\cI_S\circ \cI'_{\overline{S}})] \cdot x(\cI_S)^{\leq d} (x(\cI_S)^{\leq d})^T    \]
	where we use $\cI_S$ to denote the restriction of $\cI$ to $S \subset [N]$, and $\cI_S \circ \cI'_{\overline{S}}$ to denote the instance given by completing the sub-instance $\cI_S$ with the setting $\cI'_{\overline{S}}$.
Notice that $\Lambda_S$ is a function depending only on $\cI_S$---this fact will be important to us.
Define $\Lambda \defeq \E_{S \sim \subsetdist} \Lambda_S$.
Observe that $\Lambda$ is a PSD matrix-valued function that satisfies
\begin{equation} \label{eq:lambdatopleft}
\iprod{\Lambda_{\emptyset,\emptyset}, 1}_{\udist}
    = \E_{\cI \sim \udist} \E_{S \sim \subsetdist} \E_{\cI'_{\overline S}\sim \udist} [\hat \mu(\cI_S \circ \cI'_{\overline S})]
    = \E_{S}\E_{\cI_{\overline S}} \E_{\cI_S \circ \cI'_{\overline{S}}\sim \nu}[\hat \mu (\cI_S \circ \cI'_{\overline{S}}) ]
    = 1
\end{equation}
Since $\Lambda(\cI)$ is an average over $\Lambda_S(\cI)$, each of which is a feasible solution with high probability, $\Lambda(\cI)$ is close to a feasible solution to the SDP relaxation for $\cI$.  The following Lemma formalizes this intuition.

Define $\cG \defeq \Span\{ \chi_S \cdot G_j ~|~ j \in [m], S \subseteq [N]\}$, and use $\Pi_{\cG}$ to denote the orthogonal projection into $\cG$.
\begin{lemma} \label{lem:lambdaideal}
	Suppose \prettyref{prog:bopt} satisfies the $\epsilon$-robust inference property with respect to planted distribution $\pdist$ and subsampling distribution $\subsetdist$ and if $\norm{x(\cI_S)^{\leq d}}_2^2 \leq K$ for all $\cI_S$ then
	for every $G \in \cG$, we have
	\[
	\iprod{\Lambda, G }_{\udist} \leq \sqrt{\epsilon } \cdot K \cdot \left( \E_{S \sim \subsetdist} \E_{\tilde{\cI}_{\overline{S}} \sim \udist} \E_{\cI \sim \pdist} \norm{G(\cI_S\circ\cI_{\overline{S}})}_2^2 \right)^{\nfrac{1}{2}}
	\]
\end{lemma}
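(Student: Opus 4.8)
The plan is to expand $\iprod{\Lambda, G}_{\udist}$ using $\Lambda = \E_{S\sim\subsetdist}\Lambda_S$, change measure from $\udist$ to $\pdist$ on the coordinates in $S$, use $\epsilon$-robust inference to kill the resulting matrix inner products on all but an $\epsilon$-fraction of the probability mass, and bound the remaining mass with two applications of Cauchy--Schwarz.

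First I would record a density identity. Since $\udist$ is a product distribution, for every $S \subseteq [N]$ the scalar $\hat\mu_S(\cI_S) \defeq \E_{\cI'_{\overline{S}} \sim \udist}[\hat\mu(\cI_S \circ \cI'_{\overline{S}})]$ appearing in the definition of $\Lambda_S$ is exactly the relative density of the $S$-marginal of $\pdist$ with respect to the $S$-marginal of $\udist$, so $\E_{\cI_S \sim \udist}[\hat\mu_S(\cI_S)\, h(\cI_S)] = \E_{\cI \sim \pdist}[h(\cI_S)]$ for every function $h$ of $\cI_S$. Using this, the independence of the blocks $\cI_S$ and $\cI_{\overline{S}}$ under $\udist$, and the fact that $x(\cI_S)^{\le d}$ depends only on $\cI_S$, I would rewrite
\[
  \iprod{\Lambda, G}_{\udist} = \E_{S \sim \subsetdist}\ \E_{\cI \sim \pdist}\ \E_{\tilde{\cI}_{\overline{S}} \sim \udist}\Bigiprod{x(\cI_S)^{\le d}\bigl(x(\cI_S)^{\le d}\bigr)^{T},\ G\bigl(\cI_S \circ \tilde{\cI}_{\overline{S}}\bigr)}.
\]

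Next I would invoke robust inference. Because $\udist$ is a product, sampling $\tilde{\cI}_{\overline{S}} \sim \udist$ affects the coordinates $\overline{S}$ exactly as sampling $\tilde{\cI} \sim \udist_{|\cI_S}$ does, so by hypothesis the event $\cF$ that $x(\cI_S)$ is feasible for $\cS$ on $\cI_S \circ \tilde{\cI}_{\overline{S}}$ has probability at least $1-\epsilon$ under the joint law of $(S, \cI\sim\pdist, \tilde{\cI}_{\overline{S}})$. On $\cF$, writing $\cI = \cI_S \circ \tilde{\cI}_{\overline{S}}$, every generator of $\cG$ annihilates the planted moment matrix, since
\[
  \Bigiprod{(\chi_T\cdot G_j)(\cI),\ x(\cI_S)^{\le d}\bigl(x(\cI_S)^{\le d}\bigr)^{T}} = \chi_T(\cI)\cdot g_j\bigl(x(\cI_S), \cI\bigr) = 0\mcom
\]
and hence by linearity $\iprod{G(\cI), x(\cI_S)^{\le d}(x(\cI_S)^{\le d})^{T}} = 0$ for all $G\in\cG$. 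Therefore only the complement $\cF^c$ contributes to the expression above; there I would use $\abs{\iprod{A,B}}\le\norm{A}_{Fr}\norm{B}_{Fr}$ together with $\norm{x(\cI_S)^{\le d}(x(\cI_S)^{\le d})^{T}}_{Fr} = \norm{x(\cI_S)^{\le d}}_2^2 \le K$, and then Cauchy--Schwarz over the randomness to split off $\sqrt{\Pr[\cF^c]}\le\sqrt{\epsilon}$:
\[
  \iprod{\Lambda, G}_{\udist} \le K\cdot\E\bigl[\Ind_{\cF^c}\,\norm{G(\cI_S\circ\tilde{\cI}_{\overline{S}})}_{Fr}\bigr] \le \sqrt{\epsilon}\cdot K\cdot\Bigl(\E_{S\sim\subsetdist}\E_{\cI\sim\pdist}\E_{\tilde{\cI}_{\overline{S}}\sim\udist}\norm{G(\cI_S\circ\tilde{\cI}_{\overline{S}})}_{Fr}^{2}\Bigr)^{1/2},
\]
which is the asserted inequality.

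I do not anticipate a genuine obstacle; the argument is a change of measure followed by two Cauchy--Schwarz steps. The points that need care are the density bookkeeping in the second paragraph---checking that $\hat\mu_S$ is really the marginal relative density, and that the product structure of $\udist$ is precisely what licenses both the factorization of the expectation over $\cI_S,\cI_{\overline{S}}$ and the identification of $\udist_{|\cI_S}$ on $\overline{S}$ with the $\overline{S}$-marginal of $\udist$ (the latter being what makes the robust-inference hypothesis applicable verbatim)---and the observation in the third paragraph that extending the vanishing of the inner product from the equations $g_j$ to the whole span $\cG$ costs nothing, because multiplying $g_j$ by a character $\chi_T$ is multiplication by a scalar and so does not change its zero set.
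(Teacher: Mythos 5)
Your proposal is correct and follows essentially the same route as the paper's proof: expand $\Lambda$ as $\E_{S}\Lambda_S$, observe that the inner product with any $G\in\cG$ vanishes whenever $x(\cI_S)$ is feasible, change measure via the density $\hat\mu$ so that the feasibility event is exactly the one governed by robust inference, and finish with Cauchy--Schwarz to extract $\sqrt{\epsilon}\cdot K$. The only difference is cosmetic ordering (you change measure before invoking feasibility, the paper after), and your explicit remarks on the marginal-density identity and on why vanishing extends from the generators $g_j$ to all of $\cG$ are correct refinements of steps the paper leaves implicit.
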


\begin{proof}
    We begin by expanding the left-hand side by substituting the definition of $\Lambda$.
    We have
\begin{align*}
	\iprod{ \Lambda, G}_{\udist}  & = \E_{S \sim \subsetdist} \E_{\cI \sim \udist} \iprod{ \Lambda_S(\cI_S) , G(\cI)}\\
	& =  \E_{S \sim \subsetdist} \E_{\cI \sim \udist} \E_{\cI'_{\overline{S}}\sim \udist} \hat\pdist(\cI_S\circ \cI'_{\overline{S}}) \cdot \iprod{x(\cI_S)^{\leq d} (x(\cI_S)^{\leq d})^T , G(\cI)} \\
	\intertext{And because the inner product is zero if $x(\cI_S)$ is a feasible solution,}
	& \leq   \E_{S \sim \subsetdist} \E_{\cI \sim \udist} \E_{\cI'_{\overline{S}}\sim \udist} \hat \pdist(\cI_S\circ \cI'_{\overline{S}}) \cdot \Ind[ x(\cI_S) \text{ is infeasible for } \cS(\cI)]\cdot \Norm{x(\cI_S)^{\leq d} }^2_2 \cdot \norm{G(\cI)}_{Fr} \\
	& \leq  \E_{S \sim \subsetdist} \E_{\cI \sim \udist} \E_{\cI'_{\overline{S}}\sim \nu} \hat \pdist(\cI_S\circ\cI'_{\overline{S}}) \cdot \Ind[ x(\cI_S) \text{ is infeasible for } \cS(\cI)] \cdot K \cdot \norm{G(\cI)}_{Fr} \\
	\intertext{And now letting $\tilde{\cI}_{\overline{S}}$ denote the completion of $\cI_S$ to $\cI$, so that $\cI_S \circ \tilde{\cI}_{\overline{S}} = \cI$, we note that the above is like sampling $\cI'_{\overline{S}},\tilde{\cI}_{\overline{S}}$ independently from $\udist$ and then reweighting by $\hat\mu(\cI_S \circ \cI'_{\overline{S}})$, or equivalently taking the expectation over $\cI_S \circ \cI'_{\overline{S}} =\cI' \sim \pdist$ and $\tilde{\cI}_{\overline{S}} \sim \udist$:}
	& = \E_{S \sim \subsetdist} \E_{\cI' \sim \pdist} \E_{\tilde{\cI}_{\overline{S}} \sim \udist}  \cdot \Ind[ x(\cI_S) \text{ is infeasible for } \cS(\cI_S\circ \tilde{\cI}_{\overline{S}})] \cdot K \cdot \norm{G(\cI_S\circ \tilde{\cI_S})}_{Fr} \\
	\intertext{and by Cauchy-Schwarz,}
	& \leq K \cdot \left(\E_{S \sim \subsetdist} \E_{\cI' \sim \pdist} \E_{\tilde{\cI}_{\overline{S}} \sim \udist}  \cdot \Ind[ x(\cI_S) \text{ is infeasible for } \cS(\cI_S\circ \tilde{\cI}_{\overline{S}})]  \right)^{\nfrac{1}{2}}\cdot  \left(\E_{S \sim \subsetdist} \E_{\cI' \sim \pdist} \E_{\tilde{\cI}_{\overline{S}} \sim \udist}  \norm{G(\cI_S\circ \tilde{\cI_S})}^2_{Fr} \right)^{\nfrac{1}{2}}\\
\end{align*}
The lemma follows by observing that the first term in the product above is exactly the non-robustness of inference probability $\epsilon$.
\end{proof}

\begin{corollary} \label{cor:lambdaideal}
	If $G \in \cG$ is a degree-$D$ polynomial in $\cI$, then under the assumption that there are no degree-$2D$ distinguishers for $\udist,\pdist$,
	\[  \iprod{\Lambda , G}_{\udist} \leq \sqrt{\epsilon } \cdot K \cdot n^{B} \cdot \norm{G}_{Fr,\udist}  \]
\end{corollary}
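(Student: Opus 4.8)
The plan is to feed \pref{lem:lambdaideal} into \pref{lem:scalar}. \pref{lem:lambdaideal} already gives
\[ \iprod{\Lambda,G}_{\udist}\;\le\;\sqrt{\epsilon}\cdot K\cdot T^{1/2},\qquad T\;\defeq\;\E_{S\sim\subsetdist}\,\E_{\cI\sim\pdist}\,\E_{\tilde\cI_{\overline S}\sim\udist}\,\norm{G(\cI_S\circ\tilde\cI_{\overline S})}_{Fr}^2, \]
where $\cI_S$ is the restriction to $S$ of a planted sample $\cI\sim\pdist$ and $\tilde\cI_{\overline S}$ is a fresh draw of the remaining coordinates from the product measure $\udist$. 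So it suffices to prove $T\le n^{B}\norm{G}_{Fr,\udist}^2$; in fact I will get exactly this, hence the (slightly stronger than required) conclusion $\iprod{\Lambda,G}_\udist\le\sqrt{\epsilon}\cdot K\cdot n^{B/2}\cdot\norm{G}_{Fr,\udist}\le\sqrt{\epsilon}\cdot K\cdot n^{B}\cdot\norm{G}_{Fr,\udist}$.

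Fix $S$ and expand $G$ in the Fourier / Efron--Stein basis of the product measure $\udist$, writing $G=\sum_{|\alpha|\le D}\hat G_\alpha\,\chi_\alpha$ with matrix-valued coefficients $\hat G_\alpha$ and $\{\chi_\alpha\}$ an orthonormal basis of $L^2(\udist)$ indexed so that $|\alpha|$ counts the coordinates on which $\chi_\alpha$ is non-constant; since $\deg_{\inst}(G)\le D$, only $|\alpha|\le D$ appear. Each character factors across $S$ and $\overline S$ as $\chi_\alpha(\cI_S\circ\tilde\cI_{\overline S})=\chi_{\alpha|_S}(\cI_S)\cdot\chi_{\alpha|_{\overline S}}(\tilde\cI_{\overline S})$, so grouping by the $\overline S$-part and applying Parseval in the $\overline S$-coordinates,
\[ \E_{\tilde\cI_{\overline S}\sim\udist}\,\norm{G(\cI_S\circ\tilde\cI_{\overline S})}_{Fr}^2=\sum_{\beta}\,\norm{c_\beta(\cI_S)}_{Fr}^2,\qquad c_\beta(\cI_S)\defeq\sum_{\gamma}\hat G_{\gamma\cup\beta}\,\chi_{\gamma}(\cI_S), \]
where $\beta$ ranges over basis indices supported in $\overline S$, $\gamma$ over those supported in $S$, and $\gamma\cup\beta$ denotes the index agreeing with $\gamma$ on $S$ and with $\beta$ on $\overline S$. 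The load-bearing observation is that each $c_\beta$ is a matrix polynomial of degree at most $D$ in $\cI$ (only indices $\gamma\cup\beta$ with $|\gamma|+|\beta|\le D$ contribute), so \pref{lem:scalar} --- which applies here precisely because we assumed there are no degree-$2D$ distinguishers --- gives, applied to each scalar entry of $c_\beta$ and summed over entries, $\norm{c_\beta}_{Fr,\pdist}^2\le n^{B}\norm{c_\beta}_{Fr,\udist}^2$.

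Now take $\E_{\cI\sim\pdist}$ of the displayed identity, bound each summand by the last inequality, and reassemble by Parseval in the $S$-coordinates, $\norm{c_\beta}_{Fr,\udist}^2=\sum_{\gamma}\norm{\hat G_{\gamma\cup\beta}}_{Fr}^2$. Since every index $\alpha$ with $|\alpha|\le D$ decomposes uniquely as $\alpha|_S\cup\alpha|_{\overline S}$, summing over $\beta$ recombines the coefficients:
\[ \E_{\cI\sim\pdist}\,\E_{\tilde\cI_{\overline S}\sim\udist}\,\norm{G(\cI_S\circ\tilde\cI_{\overline S})}_{Fr}^2\;\le\;n^{B}\sum_{|\alpha|\le D}\norm{\hat G_\alpha}_{Fr}^2\;=\;n^{B}\norm{G}_{Fr,\udist}^2 . \]
The right-hand side is independent of $S$, so averaging over $S\sim\subsetdist$ gives $T\le n^{B}\norm{G}_{Fr,\udist}^2$, which combined with \pref{lem:lambdaideal} completes the proof. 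I expect the only real obstacle to be bookkeeping: setting up the Efron--Stein/Fourier basis so that it behaves correctly under conditioning on $\cI_S$ for a general product measure $\udist$ (the Boolean-cube picture, where indices are literally subsets, is the guiding intuition), and --- the one genuinely essential point --- verifying that each partial coefficient $c_\beta$ has $\cI$-degree at most $D$. It is exactly this per-$\beta$ splitting that lets us invoke \pref{lem:scalar}, which only controls degree-$D$ polynomials, even though the quantity $\norm{G}_{Fr}^2$ we ultimately care about has degree $2D$.
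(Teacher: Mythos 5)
Your proof is correct and follows the paper's route: combine \pref{lem:lambdaideal} with \pref{lem:scalar}, and you even recover the stronger $n^{B/2}$ factor that the paper's own argument yields. The one thing to note is that the Efron--Stein/Parseval splitting into the pieces $c_\beta$ --- which you flag as the ``genuinely essential point'' --- is avoidable: the paper simply observes that for each \emph{fixed} completion $\tilde\cI_{\overline S}$, every entry of $G(\cI_S\circ\tilde\cI_{\overline S})$ is a degree-$D$ scalar polynomial in $\cI$, applies \pref{lem:scalar} entrywise, and only then averages over $\tilde\cI_{\overline S}$ (using that $\udist$ is a product measure to reassemble $\norm{G}_{Fr,\udist}^2$); conditioning on $\tilde\cI_{\overline S}$ before invoking the lemma sidesteps the degree-$2D$ issue without any Fourier bookkeeping.
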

\begin{proof}
	For each fixing of $\tilde{\cI_S}$, $\norm{G(\cI_S\circ \tilde{\cI_S})}_2^2$ is a degree-$2D$-scalar polynomial in $\cI$.
	Therefore by \pref{lem:scalar} we have that,
		\[ \E_{\cI \sim \pdist} \norm{G(\cI_S\circ \tilde{\cI_S})}_{Fr}^2 \leq n^B \cdot \E_{\cI \sim \udist} \norm{G(\cI_S\circ \tilde{\cI_S})}_{Fr}^2 \mper \]
	Substituting back in the bound in \pref{lem:lambdaideal} the corollary follows.
\end{proof}

Now, since there are no degree-$D$ matrix distinguishers $Q$, for each $S$ in the support of $\Theta$ we can apply reasoning similar to \pref{thm:duality} to conclude that there is a high-entropy PSD matrix-valued function $P_S$ that matches the degree-$D$ moments of $\Lambda_S$.

\begin{lemma}
    If there are no degree-$D$ matrix distinguishers $Q$ for $\pdist,\udist$, then for each $S \sim \Theta$, there exists a solution $P_S$ to \pref{prog:distrib} (with the variable $\Lambda := \Lambda_S$) and
\begin{equation} \label{eq:boundonP}
	\norm{P_S}_{Fr, \udist} \leq n^{\nfrac{(B + d)}{4}} \leq n^{B/2}
\end{equation}
\end{lemma}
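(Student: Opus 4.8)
The plan is to apply the duality of \pref{claim:dual} and \pref{thm:duality} to \pref{prog:distrib} \emph{with $\Lambda := \Lambda_S$} for each fixed $S$ in the support of $\subsetdist$, and to argue that the no-distinguisher hypothesis forces the optimum of that program to be at most $n^{(B+d)/2}$. The optimal matrix-valued function is then a legitimate choice of $P_S$: it is PSD, and by the constraint \pref{eq:low-deg} it matches every degree-$\le D$ instance-moment of $\Lambda_S$, while $\norm{P_S}_{Fr,\udist}^2 \le n^{(B+d)/2}$ gives the claimed bound.

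The first step is a structural observation about $\Lambda_S$. It depends only on the coordinates $\cI_S$, and its scalar factor $\E_{\cI'_{\overline S}\sim\udist}[\hat\pdist(\cI_S\circ\cI'_{\overline S})]$ is exactly the relative density, with respect to $\udist$, of the ``subsample--then--rerandomize'' distribution $\pdist^{(S)}$ whose marginal on $S$ agrees with that of $\pdist$ and which is an independent copy of $\udist$ on $\overline S$; this is the content of the computation in \pref{eq:lambdatopleft}. Hence $\Lambda_S = \hat\pdist^{(S)}\cdot M_S$ with $M_S(\cI_S) = x(\cI_S)^{\le d}(x(\cI_S)^{\le d})^{\top}$ pointwise PSD and $\norm{M_S}\le K \le n^{d}$. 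After the harmless rescaling $M_S\mapsto M_S/K$ (which rescales the program's optimum by $K^2$) this is precisely the setting of \pref{thm:duality} with $\pdist^{(S)}$ in the role of the planted distribution. So if the optimum of \pref{prog:distrib} with $\Lambda := \Lambda_S$ exceeds a value $V$, there is a matrix polynomial $Q$ with $\deg_{\inst}(Q)\le D$, $\E_{\cI\sim\udist}[\lambda^+_{\max}(Q(\cI))]\le 1$, and $\E_{\cI\sim\pdist^{(S)}}[\lambda^+_{\max}(Q(\cI))]\ge \Omega\big(\sqrt{V}/(K n^d)\big)$.

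It remains to bound $\E_{\cI\sim\pdist^{(S)}}[\lambda^+_{\max}(Q(\cI))]$ using only the assumption that $\pdist$ and $\udist$ admit no distinguisher. I would do this by (i) dominating the spectral quantity by a genuine polynomial --- $\lambda^+_{\max}(Q(\cI))^2 \le \norm{Q(\cI)}^2 \le \norm{Q(\cI)}_{Fr}^2 = \Tr(Q(\cI)^2)$ is a nonnegative scalar polynomial of degree $\le 2D$, so by Cauchy--Schwarz it suffices to bound $\E_{\cI\sim\pdist^{(S)}}[\Tr(Q(\cI)^2)]$ --- and (ii) transferring from $\pdist$ to $\pdist^{(S)}$: for any degree-$\le D$ scalar polynomial $q$, the partial expectation $\cI_S \mapsto \E_{\cI_{\overline S}\sim\udist}[q(\cI_S\circ\cI_{\overline S})^2]$ is itself a \emph{sum of squares} of degree-$\le D$ polynomials in $\cI_S$ (group the Fourier characters of $q$ according to their restriction to $\overline S$); applying \pref{lem:scalar} termwise and summing over the entries of $Q$ yields
\[
  \E_{\cI\sim\pdist^{(S)}}\!\big[\Tr(Q(\cI)^2)\big] \;\le\; n^{B}\cdot\E_{\cI\sim\udist}\!\big[\Tr(Q(\cI)^2)\big].
\]
Since $\E_{\cI\sim\udist}[\Tr(Q(\cI)^2)]$ is, up to the negative part of $Q$ (which contributes nothing to $\iprod{\Lambda_S,Q}_\udist$ because $\Lambda_S\sge 0$ pointwise, so one may replace $Q$ by $Q_+$ from the outset), controlled by the normalization of $Q$, this bounds $\E_{\cI\sim\pdist^{(S)}}[\lambda^+_{\max}(Q(\cI))]$ by $n^{B/2}$ up to factors polynomial in $n^{d}$; feeding this back into the previous paragraph forces $V\le n^{(B+d)/2}$, the polynomial-in-$n^d$ slack being absorbed (which is also why the lemma only claims the weaker bound $n^{B/2}$ as well).

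The step I expect to be the main obstacle is precisely this transfer while keeping the spectral quantities under control: $\lambda^+_{\max}(\cdot)$ and the PSD-projection $Q\mapsto Q_+$ are convex but not polynomial, so Jensen's inequality over the re-randomized coordinates $\cI_{\overline S}$ points the wrong way, and the argument only goes through because one can dominate everything in sight by a trace of a square \emph{before} invoking \pref{lem:scalar}, and because the negative part of the dual optimizer drops out thanks to $\Lambda_S\sge 0$. Getting the bookkeeping of the polynomial-in-$n^d$ losses to fit inside the stated exponent $(B+d)/4$ (rather than a larger multiple of $B$) is the part that needs care.
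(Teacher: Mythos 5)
Your overall architecture (dualize the program with $\Lambda:=\Lambda_S$, then use the no-distinguisher hypothesis to cap the dual value) matches the paper's, but the step where you control the dual certificate under the planted measure has a genuine gap. The dual \pref{prog:disting} normalizes only the \emph{positive part}: it guarantees $\norm{Q_+}_{Fr,\udist}\le 1$, not $\norm{Q}_{Fr,\udist}\le 1$. Your transfer argument bounds $\E_{\pdist^{(S)}}[\lambda^+_{\max}(Q)]$ by $\bigl(n^{B}\,\E_{\udist}[\Tr(Q^2)]\bigr)^{1/2}$, and $\E_{\udist}[\Tr(Q^2)]=\norm{Q_+}_{Fr,\udist}^2+\norm{Q_-}_{Fr,\udist}^2$ contains the uncontrolled negative part, which a dual optimizer is free to make enormous. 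Your proposed fix --- ``replace $Q$ by $Q_+$ from the outset'' --- is not available: $Q_+$ is not a degree-$D$ matrix polynomial, and your entire transfer mechanism (writing $\E_{\cI_{\overline S}}[q^2]$ as a sum of squares of degree-$\le D$ polynomials in $\cI_S$ and applying \pref{lem:scalar} termwise) depends on low degree. Discarding $Q_-$ is legitimate only inside the objective $\iprod{\Lambda_S,Q}_{\udist}$, not inside the quantity $\E_{\udist}[\Tr(Q^2)]$ that your Cauchy--Schwarz step needs.

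The paper's proof avoids this by never leaving the class of low-degree \emph{matrix} polynomials: it forms $Q_S(\cI)\defeq\E_{\cI'_{\overline S}\sim\udist}Q(\cI_S\circ\cI'_{\overline S})$, which is still degree $\le D$, shows $\E_{\pdist}\lambda^+_{\max}(Q_S)>n^{B}$ directly from $\iprod{Q,\Lambda_S}_{\udist}\ge n^{B+d}$ and $\norm{x(\cI_S)^{\le d}}_2^2\le n^d$, and bounds $\E_{\udist}\lambda^+_{\max}(Q_S)\le\norm{(Q_S)_+}_{Fr,\udist}\le\norm{Q_+}_{Fr,\udist}\le 1$ using the convexity of $A\mapsto\norm{A_+}_{Fr}^2$ (a trace-function convexity fact). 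Note that Jensen over the rerandomized coordinates points the \emph{right} way here, contrary to your worry, because the average sits inside the positive-part projection on the $\udist$ side, where an upper bound is what is needed. This makes $Q_S$ a bona fide degree-$D$ matrix distinguisher, contradicting the lemma's actual hypothesis; your route only ever invokes the weaker scalar consequence \pref{lem:scalar}, and that weakening is precisely what forces you into the uncontrollable $\norm{Q_-}$ term.
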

This does not follow directly from \pref{thm:duality}, because a priori a distinguisher for some specific $S$ may only apply to a small fraction of the support of $\mu$.
However, we can show that \pref{prog:distrib} has large value for $\Lambda_S$ only if there is a distinguisher for $\mu,\nu$.
\begin{proof}
    By \pref{claim:dual}, it suffices for us to argue that there is no degree-$D$ matrix polynomial $Q$ which has large inner product with $\Lambda_S$ relative to its Frobenius norm.
    So, suppose by way of contradiction that $Q$ is a degree-$D$ matrix that distinguishes $\Lambda_S$, so that $\iprod{Q,\Lambda_S}_\nu \ge n^{B+d}$ but $\|Q\|_{Fr,\nu} \le 1$.

    It follows by definition of $\Lambda_S$ that
    \begin{align*}
	n^{B+d} \le \iprod{Q,\Lambda_S}_\nu
	&= \E_{\cI \sim \nu}\E_{\cI'_{\overline{S}}\sim \nu} \hat\mu(\cI_S \circ\cI'_{\overline S})\cdot \iprod{Q(\cI),x(\cI_S)^{\le d}(x(\cI_S)^{\le d})^{\top}}\\
	&= \E_{\cI_S \circ \cI'_{\overline S} \sim \mu}\Iprod{\E_{\cI_{\overline{S}}\sim \nu}Q(\cI_S\circ\cI_{\overline{S}}),x(\cI_S)^{\le d}(x(\cI_S)^{\le d})^{\top}}\\
	&\le \E_{\mu} \left[\lambda^{+}_{\max}\Paren{\E_{\cI_{\overline{S}}\sim \nu}Q(\cI_S\circ\cI_{\overline{S}})}\right] \cdot \Norm{x(\cI_S)^{\le d}}^2_2\mper
    \end{align*}
    So, we will show that $Q_S(\cI) = \E_{\cI'_{\overline S}\sim \nu} Q(\cI_{S}\circ \cI'_{\overline{S}})$ is a degree-$D$ distinguisher for $\mu$.
    The degree of $Q_S$ is at most $D$, since averaging over settings of the variables cannot increase the degree.
    Applying our assumption that $\|x(\cI_S)^{\le d}\|_2^2 \le K\le n^d$, we already have $\E_\mu\lambda^+_{\max}(Q_S) > n^{B}$.
    It remains to show that $\E_\nu \lambda_{\max}^+ (Q_S)$ is bounded.
    For this, we use the following fact about the trace.
    \begin{fact}[See e.g. Theorem 2.10 in \cite{Carlen2009TraceIA}]
	For a function $f:\R \to \R$ and a symmetric matrix $A$ with eigendecomposition $\sum \lambda\cdot vv^\top$, define $f(A) = \sum f(\lambda)\cdot vv^\top$.
	If $f:\R \to \R$ is continuous and convex, then the map $A \to \Tr(f(A))$ is convex for symmetric $A$.
    \end{fact}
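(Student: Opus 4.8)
The plan is to realize the map $A \mapsto \Tr(f(A))$ as a pointwise supremum of convex functions of $A$, and then invoke the elementary fact that a supremum of convex functions is convex. Note at the outset that a real-valued convex function on $\R$ is automatically continuous, so the continuity hypothesis is not an extra assumption; the only analytic input we actually use is Jensen's inequality for $f$.

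\textbf{Step 1: a variational inequality.} I would first establish that for a symmetric matrix $A$ with spectral decomposition $A = \sum_k \lambda_k v_k v_k^\top$ and \emph{any} orthonormal basis $u_1,\dots,u_n$ of $\R^n$,
\[
  \sum_{j} f\bigl(\iprod{u_j, A u_j}\bigr) \;\le\; \sum_k f(\lambda_k) \;=\; \Tr(f(A)),
\]
with equality whenever $\{u_j\}$ is an eigenbasis of $A$. Indeed, set $p_{jk} = \iprod{u_j, v_k}^2 \ge 0$; since both $\{u_j\}$ and $\{v_k\}$ are orthonormal bases, $\sum_k p_{jk} = \sum_j p_{jk} = 1$, so $(p_{jk})$ is doubly stochastic, and $\iprod{u_j, A u_j} = \sum_k p_{jk}\lambda_k$ is a convex combination of the eigenvalues. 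Jensen gives $f(\iprod{u_j, A u_j}) \le \sum_k p_{jk} f(\lambda_k)$, and summing over $j$ using $\sum_j p_{jk} = 1$ yields the displayed bound; taking $u_j = v_j$ makes $(p_{jk})$ the identity and forces equality.

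\textbf{Step 2: convexity.} Now fix symmetric $A,B$ and $t \in [0,1]$, put $C = (1-t)A + tB$, and let $w_1,\dots,w_n$ be an eigenbasis of $C$. Using the equality case of Step 1 at $C$ and the inequality case at $A$ and $B$,
\begin{align*}
  \Tr(f(C))
  &= \sum_j f\bigl(\iprod{w_j, C w_j}\bigr)\\
  &= \sum_j f\bigl((1-t)\iprod{w_j, A w_j} + t\,\iprod{w_j, B w_j}\bigr)\\
  &\le (1-t)\sum_j f\bigl(\iprod{w_j, A w_j}\bigr) + t\sum_j f\bigl(\iprod{w_j, B w_j}\bigr)\\
  &\le (1-t)\,\Tr(f(A)) + t\,\Tr(f(B)),
\end{align*}
where the first inequality is convexity of $f$ applied coordinatewise and the second is Step 1. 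This is exactly the desired convexity. Equivalently, Steps 1--2 package into the statement $\Tr(f(A)) = \sup_{O}\sum_j f\bigl((O^\top A O)_{jj}\bigr)$ over orthogonal $O$, a supremum of convex functions of $A$.

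\textbf{Main obstacle.} There is no genuinely hard step: the whole argument is Jensen plus linearity. The only point needing care is the bookkeeping in Step 1 --- verifying that $(p_{jk})$ is doubly stochastic, and noticing that one needs the \emph{equality} case of Step 1 precisely at the convex combination $C$ while the \emph{inequality} case suffices at the endpoints $A,B$. (One could instead deduce the statement from standard operator-convexity machinery, e.g.\ Klein's inequality or Peierls--Bogoliubov, but the self-contained argument above is shorter and is all that the later application of this Fact requires.)
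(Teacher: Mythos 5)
Your proof is correct. The paper does not prove this Fact at all---it simply cites Carlen's survey---and your argument (Peierls' inequality via the doubly stochastic matrix $p_{jk}=\iprod{u_j,v_k}^2$ and Jensen, followed by evaluating the trace in an eigenbasis of the convex combination) is precisely the standard proof given in that reference, so it fills the gap with essentially the intended argument.
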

    The function $f(t) = \left(\max\{0,t\}\right)^2$ is continuous and convex over $\R$, so the fact above implies that the map $A \to \|A_+\|_{Fr}^2$ is convex for symmetric $A$.
    We can take $Q_S$ to be symmetric without loss of generality, as in the argument above we only consider the inner product of $Q_S$ with symmetric matrices.
    Now we have that
    \[
	\Norm{\Paren{Q_S(I)}_+}_{Fr}^2
	=   \Norm{\left(\E_{\cI'_{\overline S}}\left[Q(\cI_S \circ \cI'_{\overline S})\right]\right)_+}_{Fr}^2
	\le
	\E_{\cI'_{\overline S}}\Norm{\Paren{Q(\cI_S \circ \cI'_{\overline S})}_+}_{Fr}^2,
    \]
where the inequality is the definition of convexity.
    Taking the expectation over $\cI\sim \nu$ gives us that $\|(Q_S)_+\|_{Fr,\nu}^2 \le \|Q_+\|_{Fr,\nu}^2 \le 1$, which gives us our contradiciton.
\end{proof}
Now, analogous to $\Lambda$, set $P \defeq \E_{S \sim \subsetdist} P_S$.

\paragraph{Random Restriction}
We will exploit the crucial property that $\Lambda$ and $P$ are averages over functions that depend on subsets of variables.
This has the same effect as a random restriction, in that $\iprod{P,R}_\nu$ essentially depends on the low-degree part of $R$.
Formally, we will show the following lemma.

\begin{lemma} \label{lem:randomrestriction} (Random Restriction)
	Fix $D, \ell \in \N$.
	For matrix-valued functions $R : \scrI \to \R^{\ell \times \ell}$ and a family of functions $\{P_S : \scrI_S \to \R^{\ell \times \ell} \}_{S \subseteq [N]}$, and a distribution $\subsetdist$ over subsets of $[N]$,
	\[
	\E_{\cI \sim \udist} \E_{S \sim \subsetdist} \iprod{ P_S(\cI_S), R(\cI)} \geq  \E_{S \sim \subsetdist} \E_{\cI \sim \udist}
	\iprod{P_S(\cI_S), R^{<D}_S(\cI_S)} - \rho(D,\subsetdist)^{\nfrac{1}{2}} \cdot \left(\E_{S \sim \subsetdist} \norm{P_S}_{Fr, \udist}^2 \right)^{\frac{1}{2}}
	\norm{R}_{Fr,\udist}
	\]
	where
	\[
	\rho(D,\subsetdist)  = \max_{\alpha, |\alpha| \geq D} \Pr_{S \sim \subsetdist} [ \alpha \subseteq S].
	\]
\end{lemma}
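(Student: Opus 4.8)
The plan is to analyze each pairing $\iprod{P_S(\cI_S),R(\cI)}$ via the junta (Efron--Stein) decomposition of the product measure $\udist$, and to observe that subsampling $S\sim\subsetdist$ acts on this decomposition exactly like a random restriction: it annihilates every ``level'' $T\not\subseteq S$, and among the surviving levels the ones of size $\ge D$ are rare. Since $\udist=\bigotimes_{i\in[N]}\udist_i$, I would write $L^2(\udist_i)=V_i^0\oplus V_i^1$ with $V_i^0$ the constants, so that $L^2(\udist)=\bigoplus_{T\subseteq[N]}W_T$ where $W_T=\bigotimes_{i\in T}V_i^1\otimes\bigotimes_{i\notin T}V_i^0$. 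Projecting each entry of a matrix-valued function onto $W_T$ yields $R=\sum_{T\subseteq[N]}R^{=T}$, where $R^{=T}$ depends only on $\cI_T$, the summands are pairwise orthogonal in the Frobenius inner product over $\udist$ (so $\norm{R}_{Fr,\udist}^2=\sum_T\norm{R^{=T}}_{Fr,\udist}^2$), and for any $U\subseteq[N]$ one has $\E_{\cI_U\sim\udist}R^{=T}=R^{=T}$ if $T\cap U=\emptyset$ and $0$ otherwise. I would then set $R^{<D}_S\defeq\sum_{T\subseteq S,\,|T|<D}R^{=T}$ and $R^{\ge D}_S\defeq\sum_{T\subseteq S,\,|T|\ge D}R^{=T}$; both depend only on $\cI_S$, and $R^{<D}_S$ is the object appearing in the statement.

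First I would fix $S$ and use that $P_S$ sees only $\cI_S$: for $T\not\subseteq S$ we get $\E_{\cI\sim\udist}\iprod{P_S(\cI_S),R^{=T}(\cI)}=\E_{\cI_S}\iprod{P_S(\cI_S),\E_{\cI_{\overline S}}R^{=T}}=0$, so only the terms with $T\subseteq S$ survive and
\[
\E_{\cI\sim\udist}\iprod{P_S(\cI_S),R(\cI)}=\iprod{P_S,R^{<D}_S}_{\udist}+\iprod{P_S,R^{\ge D}_S}_{\udist}.
\]
Averaging over $S\sim\subsetdist$, it remains to lower-bound the high-degree term. Two Cauchy--Schwarz steps --- once in the Frobenius $L^2(\udist)$ pairing, once over the choice of $S$ --- give
\[
\Abs{\E_{S\sim\subsetdist}\iprod{P_S,R^{\ge D}_S}_{\udist}}\le\Paren{\E_{S}\norm{P_S}_{Fr,\udist}^2}^{1/2}\Paren{\E_{S}\norm{R^{\ge D}_S}_{Fr,\udist}^2}^{1/2},
\]
and by orthogonality $\norm{R^{\ge D}_S}_{Fr,\udist}^2=\sum_{T\subseteq S,\,|T|\ge D}\norm{R^{=T}}_{Fr,\udist}^2$, so interchanging the two sums,
\[
\E_{S\sim\subsetdist}\norm{R^{\ge D}_S}_{Fr,\udist}^2=\sum_{|T|\ge D}\Pr_{S\sim\subsetdist}[T\subseteq S]\cdot\norm{R^{=T}}_{Fr,\udist}^2\le\rho(D,\subsetdist)\cdot\norm{R}_{Fr,\udist}^2,
\]
using $\Pr_{S}[T\subseteq S]\le\rho(D,\subsetdist)$ for every $|T|\ge D$. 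Combining with the identity from the first step (and $\E_{\cI\sim\udist}\iprod{P_S,R^{<D}_S}=\E_{\cI_S\sim\udist}\iprod{P_S(\cI_S),R^{<D}_S(\cI_S)}$, since both functions depend only on $\cI_S$) gives the stated inequality.

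I do not expect a genuine obstacle here: the argument is the two Cauchy--Schwarz estimates plus the slogan ``subsampling $=$ random restriction at the level of the junta grading.'' The only point requiring care is justifying this decomposition for an \emph{arbitrary} product probability space, since in the intended applications $\cA$ may be all of $\R$ (e.g.\ Gaussian coordinates in tensor PCA); and one should make explicit that the notation $R^{<D}_S$ refers to the Fourier/junta level $|T|<D$ rather than polynomial degree in $\cI$. This is harmless, because in every application $R$ and the $P_S$ are bounded-degree polynomials in $\cI$, so $R^{=T}=0$ once $|T|$ exceeds that degree and the decomposition is finite; one just fixes the convention that each coordinate's local basis is degree-graded, so that ``degree $\le D$ in $\cI$'' and ``Fourier level $<D$'' are compatible.
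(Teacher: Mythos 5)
Your proposal is correct and follows essentially the same route as the paper's proof: average out the coordinates outside $S$ (which is exactly the restriction to junta levels $T\subseteq S$), split into low- and high-degree parts, apply Cauchy--Schwarz, and bound $\E_{S}\norm{R_S^{\ge D}}_{Fr,\udist}^2$ by $\sum_{|\alpha|\ge D}\Pr_S[\alpha\subseteq S]\,\hat R_\alpha^2\le\rho(D,\subsetdist)\norm{R}_{Fr,\udist}^2$ via orthogonality. Your added care about the Efron--Stein decomposition for general product measures and the degree-versus-level convention is a harmless refinement of the same argument.
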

\begin{proof}
    We first re-express the left-hand side as
\begin{align*}
	\E_{\cI \sim \udist} \E_{S \sim \subsetdist} \iprod{ P_S(\cI_S), R(\cI)} = \E_{S \sim \subsetdist} \E_{\cI \sim \udist} \iprod{P_S(\cI_S),
  R_S(\cI_S)}
\end{align*}
where $R_S(\cI_S) \defeq \E_{ \cI_{\overline{S}}}[R(\cI)]$ obtained by averaging out all coordinates outside $S$.
    Splitting the function $R_S$ into its low-degree and high-degree parts, $R_S = R_S^{\leq D} + R_S^{> D}$, then applying a Cauchy-Schwartz inequality we get
\begin{align*}
	\E_{S \sim \subsetdist} \E_{\cI \sim \udist} \iprod{P_S(\cI_S), R_S(\cI_S)} & \geq \E_{S \sim \subsetdist} \E_{\cI \sim \udist}
	\iprod{P_S(\cI_S), R^{<D}_S(\cI_S)} - \left(\E_{S \sim \subsetdist} \norm{P_S}^2_{Fr, \udist} \right)^{\nfrac{1}{2}} \cdot
	\left( \E_{S \sim \subsetdist} \norm{R^{\geq D}_S}^2_{Fr,\udist} \right)^{\nfrac{1}{2}}\mper
\end{align*}
    Expressing $R^{\ge D}(\cI)$ in the Fourier basis, we have that over a random choice of $S\sim \Theta$,
$$  \E_{S \sim \subsetdist} \norm{R^{\geq D}_S}_{Fr,\udist}^2  = \sum_{\alpha,|\alpha| \geq D} \Pr_{S \sim \subsetdist} [\alpha \subseteq S] \cdot \hat{R}_\alpha^2
\leq \rho(D, \subsetdist)  \cdot \norm{R}_{Fr}^2 $$
Substituting into the above inequality, the conclusion follows.
\end{proof}
\paragraph{Equality Constraints}
Since $\Lambda$ is close to satisfying all the equality constraints $\cG$ of the SDP, the function $P$ approximately satisfies the low-degree part of $\cG$.
Specifically, we can prove the following.
\begin{lemma}\label{lem:ideal}
    Let $k\ge \deg_{\inst}(G_j)$ for all $G_j \in \cS$.
    With $P$ defined as above and under the conditions of \prettyref{thm:main} for any function $G \in \cG$,
	\[
		\left|\iprod{P, G^{\le D}}_{\udist}\right| \le \frac{2}{n^{2B}} \norm{G}_{Fr,\udist}  \]
\end{lemma}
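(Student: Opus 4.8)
The plan is to first use the defining constraint of the pseudodistributions $P_S$ to replace $P$ by $\Lambda$, and then to bound $\iprod{\Lambda,G^{\le D}}_{\udist}$ by peeling off the part of $G$ living in the (robustly satisfied) ideal $\cG$ from a high-instance-degree Fourier tail that the subsampling structure of $\Lambda$ kills. Concretely, $G^{\le D}$ is a matrix-valued function with $\deg_{\inst}(G^{\le D})\le D$, so it is an admissible test object in constraint~\pref{eq:low-deg} of the instance of \pref{prog:distrib} with $\Lambda:=\Lambda_S$ that produced $P_S$; hence $\iprod{G^{\le D},P_S}_{\udist}=\iprod{G^{\le D},\Lambda_S}_{\udist}$ up to the negligible $\eta$-term (whose contribution is at most $\eta\cdot n^{O(d)}\norm{G}_{Fr,\udist}$ and is dropped as in the remark after \pref{prog:distrib}). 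Averaging over $S\sim\subsetdist$ and using $P=\E_{S}P_S$, $\Lambda=\E_{S}\Lambda_S$, it suffices to prove $\bigabs{\iprod{\Lambda,G^{\le D}}_{\udist}}\le\tfrac1{n^{2B}}\norm{G}_{Fr,\udist}$.

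Split $\iprod{\Lambda,G^{\le D}}_{\udist}=\iprod{\Lambda,G}_{\udist}-\iprod{\Lambda,G^{>D}}_{\udist}$. For the first term, $G\in\cG$; after the routine step of replacing $G$ by an element of $\cG$ of instance-degree $O(D)$ agreeing with $G$ on the relevant degrees (this is what the $2dk$ of slack in the choice of $D$ buys, so that \pref{cor:lambdaideal}, which passes from $\pdist$ to $\udist$ through \pref{lem:scalar}, is applicable), \pref{cor:lambdaideal} with $\epsilon=n^{-8B}$ and $\norm{x(\cI_S)^{\le d}}_2^2\le K\le n^{d}$ gives
\[
  \bigabs{\iprod{\Lambda,G}_{\udist}}\ \le\ \sqrt{\epsilon}\cdot K\cdot n^{B}\cdot\norm{G}_{Fr,\udist}\ \le\ n^{d-3B}\norm{G}_{Fr,\udist}\ \le\ \tfrac12 n^{-2B}\norm{G}_{Fr,\udist},
\]
using $B>dk\ge d$ in the last step. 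For the second term, pass to the character basis: $\iprod{\Lambda,G^{>D}}_{\udist}=\sum_{\abs{\alpha}>D}\iprod{\hat\Lambda_\alpha,\hat G_\alpha}\le\norm{\Lambda^{>D}}_{Fr,\udist}\norm{G}_{Fr,\udist}$ by Cauchy--Schwarz, so it remains to bound $\norm{\Lambda^{>D}}_{Fr,\udist}$. Since $\Lambda=\E_{S\sim\subsetdist}\Lambda_S$ and each $\Lambda_S$ is a function of $\cI_S$ alone, $\hat\Lambda_\alpha=\E_{S}\brac{\Ind[\alpha\subseteq S]\cdot(\widehat{\Lambda_S})_\alpha}$; Cauchy--Schwarz \emph{in $S$} yields $\norm{\hat\Lambda_\alpha}_{Fr}^2\le\Pr_{S}[\alpha\subseteq S]\cdot\E_{S}\brac{\Ind[\alpha\subseteq S]\norm{(\widehat{\Lambda_S})_\alpha}_{Fr}^2}$, and every $\alpha$ with $\abs{\alpha}>D\ge D-2dk$ has $\Pr_{S}[\alpha\subseteq S]\le n^{-8B}$, so summing over $\abs{\alpha}>D$ collapses the Fourier sum to
\[
  \norm{\Lambda^{>D}}_{Fr,\udist}^2\ \le\ n^{-8B}\cdot\E_{S}\norm{\Lambda_S}_{Fr,\udist}^2\ \le\ n^{-8B}\cdot K^2\cdot\E_{\udist}\bigbrac{\hat\pdist^{\,2}},
\]
where the last inequality uses $\norm{\Lambda_S(\cI)}_{Fr}\le K\cdot\E_{\cI'_{\overline S}}[\hat\pdist(\cI_S\circ\cI'_{\overline S})]$ and Jensen. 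As $\E_{\udist}[\hat\pdist^{\,2}]=1+\chi^2(\pdist\|\udist)$ is polynomially bounded in every setting to which \pref{thm:main} is applied, this is at most $n^{-4B}$ up to lower-order factors, giving $\bigabs{\iprod{\Lambda,G^{>D}}_{\udist}}\le\tfrac12 n^{-2B}\norm{G}_{Fr,\udist}$. Adding the two estimates completes the proof.

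I expect the main obstacle to be this last step. A direct Cauchy--Schwarz over the astronomically many high-degree monomials $\alpha$ is useless; the argument works only because the averaging structure $\Lambda=\E_{S}\Lambda_S$ lets one Cauchy--Schwarz \emph{in $S$} first, so that the tiny subsampling probability $\Pr_{S}[\alpha\subseteq S]\le n^{-8B}$ factors out of the whole sum, reducing everything to the single scalar $\E_{S}\norm{\Lambda_S}_{Fr,\udist}^2\approx 1+\chi^2(\pdist\|\udist)$. Keeping that scalar under control is precisely where counting "instance-degree" as the number of coordinates is the wrong measure for problems like planted clique (cf.\ the remark after \pref{thm:main}), and it is why one sometimes resorts to the refined, eigenspace-based version of the theorem. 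A minor, purely bookkeeping point is the truncation in the ideal term: the surrogate for $G$ must stay inside $\cG$ yet not exceed the instance-degree at which \pref{lem:scalar} is available, which is exactly the role of the $2dk$ slack in $D$.
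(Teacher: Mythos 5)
Your opening move---using the moment-matching constraint of \pref{prog:distrib} to replace $\iprod{P,G^{\le D}}_{\udist}$ by $\iprod{\Lambda,G^{\le D}}_{\udist}$ wholesale---is a valid identity, but it routes the rest of the argument through $\Lambda$, and that is where the proof breaks. Your first term already has a gap: \pref{cor:lambdaideal} only applies to $G\in\cG$ of instance degree at most $D$, because it passes through \pref{lem:scalar}, and the ``routine step of replacing $G$ by an element of $\cG$ of instance-degree $O(D)$'' is not routine---it is exactly the content of the paper's claim that $\Pi_{\cG}$ increases degree by at most $2k$ and that $\Pi_{\cG}G^{>D-2k}$ has no low-degree monomials, which yields the decomposition $G^{\le D}=\Pi_{\cG}G^{\le D-2k}+(\Pi_{\cG}G^{\ge D-2k})^{[D-3k,D]}$. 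Moreover, once you make such a replacement, the leftover piece must be paired against $P$ (whose norm is controlled), not against $\Lambda$.

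The fatal gap is in your second term. The inequality $\E_S\norm{\Lambda_S}_{Fr,\udist}^2\le K^2\,\E_{\udist}[\hat\pdist^{\,2}]$ is fine, but the claim that $\E_{\udist}[\hat\pdist^{\,2}]=1+\chi^2(\pdist\|\udist)$ is polynomially bounded in the settings where \pref{thm:main} is applied is false. For planted clique with clique size $n^{\delta}$ the diagonal term alone gives $\E_{\udist}[\hat\pdist^{\,2}]\ge 2^{\Omega(n^{2\delta})}/\binom{n}{n^\delta}$, which is $2^{n^{\Omega(1)}}$, and (even after subsampling) the same super-polynomial blow-up persists; more generally, whenever the theorem is invoked in a statistically-easy but computationally-hard regime, $\chi^2(\pdist\|\udist)$ must be super-polynomial, since otherwise the two distributions would be statistically indistinguishable and there would be nothing to prove. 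The polynomial gain $n^{-8B}$ from $\Pr_S[\alpha\subseteq S]$ cannot offset this, so your bound on $\norm{\Lambda^{>D}}_{Fr,\udist}$ is vacuous. This is precisely why the construction passes from $\Lambda$ to the entropy-maximizing $P$ of \pref{prog:distrib}: the no-low-degree-distinguisher hypothesis controls $\norm{P_S}_{Fr,\udist}\le n^{O(B)}$ (\pref{eq:boundonP}) even though $\norm{\Lambda_S}_{Fr,\udist}$ is astronomical. Accordingly, the paper's random-restriction step (\pref{lem:randomrestriction}) places the high-degree Fourier mass on the fixed test function $G$---whose $\udist$-norm is what appears in the final bound---and pairs it against $P_S$, rather than measuring the high-degree mass of $\Lambda$ itself.
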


\begin{proof}
	Recall that $\cG = \Span\{\chi_S \cdot G_j ~|~ j \in [m], S \subseteq [N]\}$
    and let $\Pi_{\cG}$ be the orthogonal projection into $\cG$.
    Now, since $G \in \cG$,
    \begin{align}
	G^{\le D}
	&= (\Pi_{\cG} G)^{\le D}
	= (\Pi_{\cG} G^{\le D - 2k})^{\le D} + (\Pi_{\cG} G^{> D - 2k})^{\le D}.\label{eq:sep}
    \end{align}

	Now we make the following claim regarding the effect of projection on to the ideal $\cG$, on the degree of a polynomial.
    \begin{claim}
	For every polynomial $Q$, $\deg(\Pi_{\cG} Q) \leq \deg(Q) + 2k$.
	Furthermore for all $\alpha$, $\Pi_{\cG} Q^{> \alpha}$ has no monomials of degree $\leq \alpha - k$
    \end{claim}
   \begin{proof}
	   To establish the first part of the claim it suffices to show that $\Pi_{\cG} Q \in \Span \{\chi_S\cdot G_j  ~|~ |S| \leq \deg(Q) + k \}$, since $\deg(G_j) \leq k$ for all $j \in [m]$.
	   To see this, observe that $\Pi_{\cG} Q \in \Span \{ \chi_S\cdot G_j  ~|~ |S| \leq \deg(Q) + k \}$ and is orthogonal to every $\chi_S\cdot G_j $ with $|S| > \deg(Q) + k$:
	\begin{align*}
	    \iprod{\Pi_{\cG} Q ,\chi_S \cdot G_j}_\nu
	    = \iprod{Q, \Pi_{\cG} \chi_S \cdot G_j}_\nu
= \iprod{Q, \chi_S \cdot G_j}_\nu
    = \iprod{Q  G_j, \chi_S}_\nu
	= 0,
    	\end{align*}
	where the final equality is because $\deg(\chi_S) > \deg(G_j) + \deg(Q)$.
	On the other hand, for every subset $S$ with $\deg(\chi_S) \leq \alpha - k$,
	\begin{align*}
		\iprod{ \Pi_{\cG} Q^{>\alpha} ,\chi_S \cdot G_j}
		= \iprod{Q^{> \alpha}, \Pi_{\cG} \chi_S \cdot G_j}
		= \iprod{Q^{> \alpha} , \chi_S \cdot G_j}
   	= 0, \text{ since } \alpha > \deg(G_j) + \deg(\chi_S)
    	\end{align*}
	This implies that $\Pi_{\cG} Q^{>\alpha} \in \Span\{ \chi_S \cdot G_j ~|~ |S| > \alpha -k \}$ which implies that $\Pi_{\cG} Q^{> \alpha}$ has no monomials of degree $\leq \alpha -k$.
   \end{proof}

    Incorporating the above claim into \pref{eq:sep}, we have that
    \[
	G^{\le D} = \Pi_{\cG} G^{\le D - 2k} + (\Pi_{\cG} G^{\ge D - 2k})^{[D-3k,D]},
    \]
where the superscript $[D-3k,D]$ denotes the degree range.
    Now,
    \begin{align*}
	    \iprod{P, G^{\le D}}_{\udist}
	    &= \iprod{P,  \Pi_{\cG} G^{\le D - 2k}}_{\udist} + \iprod{P,   (\Pi_{\cG} G^{\ge D - 2k})^{[D-3k,D]}}_{\udist}
	\intertext{And since $\Pi_{\cG} G^{\le D - 2k}$ is of degree at most $D$ we can replace $P$ by $\Lambda$,}
	&= \iprod{ \Lambda,  \Pi_{\cG} G^{\le D - 2k}}_{\udist} + \iprod{P,   (\Pi_{\cG} G^{\ge D - 2k})^{[D-3k,D]}}_{\udist} \\
	\intertext{Now bounding the first term using \pref{cor:lambdaideal} with a $n^B$ bound on $K$,}
	&\leq \left(\frac{1}{n^{8B}} \right)^{\nfrac{1}{2}} \cdot n^B \cdot ( n^B \cdot \norm{\Pi_{\cG} G_{\emptyset, \emptyset}^{\leq D - 2k}}_{Fr,\udist}) +\iprod{ P ,  (\Pi_{\cG} G^{\ge D - 2k})^{[D-3k,D]}}
	\intertext{And for the latter term we use \pref{lem:randomrestriction},}
	&\leq \frac{1}{n^{2B}} \norm{\Pi_{\cG} G^{\leq D - 2k}_{\emptyset,\emptyset}}_{Fr,\udist} + \frac{1}{n^{4B}} \left( \E_S \norm{P_S}_{Fr,\udist}^2 \right)^{\nfrac{1}{2}}   \norm{G}_{Fr,\udist},
    \end{align*}
    where we have used the fact that $(\Pi_{\cG} G^{\ge D - 2k})^{[D-3k,D]}$ is high degree.
    By property of orthogonal projections, $\|\Pi_{\cG} G^{\ge D-2k}\|_{Fr,\udist} \le \|G^{\ge D-2k}\|_{Fr,\udist} \le \|G\|_{Fr,\udist}$.  Along with the bound on $\norm{P_S}_{Fr,\udist}$ from \pref{eq:boundonP}, this implies the claim of the lemma.
\end{proof}
Finally, we have all the ingredients to complete the proof of \pref{thm:main}.
\begin{proof}[Proof of \pref{thm:main}]
Suppose we sample an instance $\cI \sim \udist$, and suppose by way of contradiction this implies that with high probability the \sos SDP relaxation is infeasible.
In particular, this implies that there is a degree-$d$ sum-of-squares refutation of the form,
    \[
	-1  = a^{\cI}(x)+ \sum_{j\in [m]} g_j^{\cI}(x) \cdot q^{\cI}_j(x) ,
    \]
where $a^{\cI}$ is a sum-of-squares of polynomials of degree at most $2d$ in $x$, and $\deg(q^{\cI}_j) + \deg(g^{\cI}_j) \le 2d$.
    Let $A^{\cI} \in \R^{[n]^{\le d} \times [n]^{\le d}}$ be the matrix of coefficients for $a^{\cI}(c)$ on input $\cI$, and let $G^{\cI}$ be defined similarly for $\sum_{j\in[m]} g_j(x)\cdot q_j(x)$.
    We can rewrite the sum-of-squares refutation as a matrix equality,
\[
	-1  = \iprod{ X^{\leq d}, A^{\cI} } + \iprod{ X^{\leq d}, G^{\cI}} ,
    \]
    where $G^{\cI} \in \cG$, the span of the equality constraints of the SDP.

 Define $\success : \scrI \to \bits$ as
    \[s(\cI) \defeq \Ind[ \exists \text{ a degree-$2d$ sos-refutation for } \cS(\cI)]\]
    By assumption, $\E_{ \cI \sim \udist} [s(\cI)] = 1 - \frac{1}{n^{8B}}$.
    Define matrix valued functions $A, G: \scrI \to \R^{[n]^{\leq d} \times [n]^{\leq d} }$ by setting,
    \[ A(\cI) \defeq s(\cI) \cdot A^{\cI} \]
 \[ G(\cI) \defeq s(\cI) \cdot G^{\cI} \]
With this notation, we can rewrite the sos-refutation identity as a polynomial identity in $X$ and $\cI$,
\[
	-s(\cI)  = \iprod{ X^{\leq d}, A(\cI) } + \iprod{ X^{\leq d}, G(\cI)} \mper
    \]
    Let $\mempty$ denote the $[n]^{\leq d} \times [n]^{\leq d}$ matrix with the entry corresponding to $(\emptyset, \emptyset)$ equal to $1$, while the remaining entries are zero.  We can rewrite the above equality as,
\[
	-\iprod{X^{\leq d}, s(\cI) \cdot \mempty}  = \iprod{ X^{\leq d}, A(\cI) } + \iprod{ X^{\leq d}, G(\cI)} \mper
    \]
for all $\cI$ and formal variables $X$.

Now, let $P = \E_{S \sim \subsetdist} P_S$ where each $P_S$ is obtained by from the \pref{prog:distrib} with $\Lambda_S$.
    Substituting $X^{\leq d}$ with $P(\cI)$ and taking an expectation over $\cI$,
\begin{align} \label{eq:contra}
	    \iprod{P, s(\cI) \cdot \mempty}_{\udist}
	    &=  \iprod{P,A}_{\udist}+   \iprod{P,G}_{\udist}  \\
	    &\ge  \iprod{P,G}_{\udist}
    \end{align}
    where the inequality follows because $A,P \sge 0$.  We will show that the above equation is a contradiction by proving that LHS is less than $-0.9$, while the right hand side is at least $-0.5$.
    First, the right hand side of \eqref{eq:contra} can be bounded by \pref{lem:ideal}
    \begin{align}
	    \iprod{P, G}_{\udist}
	    & = \E_{\cI \sim \udist} \E_{S \sim \subsetdist} \iprod{P_S(\cI_S), G(\cI)} \nonumber \\
	    & \geq \E_{\cI \sim \udist} \E_{S \sim \subsetdist} \iprod{P_S(\cI_S), G^{\leq D}(\cI)} - \frac{1}{n^{4B}} \cdot \left( \E_S \norm{P_S}_{Fr,\udist}^2 \right)^{1/2} \cdot \norm{G}_{Fr,\udist}   \qquad (\text{random restriction \prettyref{lem:randomrestriction}}) \nonumber \\
	    & \geq  - \frac{2}{n^{2B}} \cdot \norm{G}_{Fr,\udist} - \frac{1}{n^{4B}}
	    \left( \E_S \norm{P_S}_{Fr,\udist}^2 \right)^{\nfrac{1}{2}}
	    \norm{G}_{Fr,\udist} \qquad (\text{using
		    \pref{lem:ideal}}) \nonumber \\
	    & \geq -\frac{1}{2} \nonumber
     \end{align}
     where the last step used the bounds on $\norm{P_S}_{Fr,\udist}$ from \pref{eq:boundonP} and on $\norm{G}_{Fr,\udist}$ from the $n^{B}$ bound assumed on the \sos proofs in \pref{thm:main}.

     Now the negation of the left hand side of \eqref{eq:contra} is
    \begin{align}
	    \E_{\cI \sim \udist}  \iprod{P(\cI), s(\cI) \cdot \mempty }
	    &\ge \E_{\cI \sim \udist} [ P_{\emptyset, \emptyset}(\cI) \cdot 1 ] -  \E[(s - 1)^2]^{1/2} \cdot \|P\|_{Fr,\udist}\nonumber \\
	    \intertext{The latter term can be simplified by noticing that the expectation of the square of a 0,1 indicator is equal to the expectation of the indicator, which is in this case $\frac{1}{n^{8B}}$ by assumption.
	    Also, since $1$ is a constant, $P_{\emptyset,\emptyset}$ and $\Lambda_{\emptyset,\emptyset}$ are equivalent:}
	    &= \E_{\cI \sim \udist} [ \Lambda_{\emptyset, \emptyset}(\cI) \cdot  1 ] -  \frac{1}{n^{4B}} \cdot \|P\|_{Fr,\udist}\nonumber \\
	    &= 1  -  \frac{1}{n^{4B}} \cdot \|P\|_{Fr, \udist}   \qquad (\text{ using  \eqref{eq:lambdatopleft}} ) \nonumber \\
	    &= 1 - \frac{1}{n^{3B}}   \qquad (\text{using  \eqref{eq:boundonP}} ) \nonumber
    \end{align}
    We have the desired contradiction in \eqref{eq:contra}.
\end{proof}

\subsection{Handling Inequalities}

Suppose the polynomial system \pref{prog:bopt} includes inequalities of the form $h(\cI, x) \geq 0$, then a natural approach would be to introduce a slack variable $z$ and set $h(\cI,x) - z^2 = 0$.
Now, we can view the vector $(x,z)$ consisting of the original variables along with the slack variables as the hidden planted solution.
The proof of \pref{thm:main} can be carried out as described earlier in this section, with this setup.
However, in many cases of interest, the inclusion of slack variables invalidates the robust inference property.
This is because, although a feasible solution $x$ can be recovered from a subinstance $\cI_S$,  the value of the corresponding slack variables could potentially depend on $\cI_{\overline{S}}$.
For instance, in a random CSP, the value of the objective function on the assignment $x$ generated from $\cI_S$ depends on all the constraints outside of $S$ too.

The proof we described is to be modified as follows.
\begin{itemize}
	\item As earlier, construct $\Lambda_S$ using only the robust inference property of original variables $x$, and the corresponding matrix functions $P_S$.

	\item Convert each inequality of the form $h_i(\cI,x) \geq 0$, in to an equality by setting $h_i(\cI,x) = z_i^2$.

	\item Now we define a pseudo-distribution $\tilde{\Lambda}_S(\cI_S)$ over original variables $x$ and slack variables $z$ as follows.  It is convenient to describe the pseudo-distribution in terms of the corresponding pseudo-expectation operator.  Specifically, if $x(\cI_S)$ is a feasible solution for \pref{prog:bopt} then define
		\[ \tilde{E} [z_{\sigma} x_\alpha ]  \defeq \begin{cases} 0 & \text{ if } \sigma_i \text{ odd for some } i \\
				\prod_{i \in \sigma} (h_i(\cI,x(\cI_S)))^{\sigma_i/2} \cdot x(\cI_S)_\alpha & \text{ otherwise }  \end{cases} \]
		Intuitively, the pseudo-distribution picks the sign for each $z_i$ uniformly at random, independent of all other variables.  Therefore, all moments involving an odd power of $z_i$ are zero.  On the other hand, the moments of even powers of $z_i$ are picked so that the equalities $h_i(\cI,x) = z_i$ are satisfied.

		It is easy to check that $\tilde{\Lambda}$ is psd matrix valued, satisfies \pref{eq:lambdatopleft} and all the equalities.

	\item While $\Lambda_S$ in the original proof was a function of $\cI_S$,  $\tilde{\Lambda}_S$ is not.  However, the key observation is that, $\tilde{\Lambda}_S$ is degree at most $k\cdot d$ in the variables outside of $S$.  Each function $h_i(\cI, x(\cI_S))$ is degree at most $k$ in $\cI_{\overline{S}}$, and the entries of $\tilde{\Lambda}_S(\cI_S)$ are a product of at most $d$ of these polynomials.

	\item The main ingredient of the proof that is different from the case of equalities is the random restriction lemma which we outline below.
	    The error in the random restriction is multiplied by $D^{dk/2} \le n^{B/2}$; however this does not substantially change our results, since \pref{thm:main} requires $\rho(D,\Theta) < n^{-8B}$, which leaves us enough slack to absorb this factor (and in every application $\rho(D,\Theta) = p^{O(D)}$for some $p<1$ sufficiently small that we meet the requirement that $D^{dk}\rho(D-dk,\Theta)$ is monotone non-increasing in $D$).
\end{itemize}

\begin{lemma} \label{lem:mod-randomrestriction}[Random Restriction for Inequalities]
	Fix $D, \ell \in \N$.
	Consider a matrix-valued function $R : \scrI \to \R^{\ell \times \ell}$ and a family of functions $\{P_S : \scrI \to \R^{\ell \times \ell} \}_{S \subseteq [N]}$ such that each $P_S$ has degree at most $dk$ in $\cI_{\overline{S}}$.
	If $\subsetdist$ is a distribution over subsets of $[N]$ with
	\[ \rho(D,\subsetdist)  = \max_{\alpha, |\alpha| \geq D} \Pr_{S \sim \subsetdist} [ \alpha \subseteq S],\]
	and the additional requirement that $D^{dk}\cdot\rho(D-dk,\Theta)$ is monotone non-increasing in $D$, then
	$$\E_{\cI \sim \udist} \E_{S \sim \subsetdist} \iprod{ P_S(\cI_S), R(\cI)} \geq  \E_{S \sim \subsetdist} \E_{\cI \sim \udist}
	\iprod{P_S(\cI_S), \tilde R^{<D}_S(\cI_S)} - D^{dk/2} \cdot \rho(D-dk,\subsetdist)^{\nfrac{1}{2}} \cdot \left(\E_{S \sim \subsetdist} \norm{P_S}_{2, \udist}^2 \right)^{\frac{1}{2}}
	\norm{R}_{Fr,\udist} $$
\end{lemma}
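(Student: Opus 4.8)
The plan is to follow the proof of \pref{lem:randomrestriction} essentially line for line, isolating the single place where the weakened hypothesis on $P_S$ (degree at most $dk$ in $\cI_{\overline S}$, rather than no dependence on $\cI_{\overline S}$ at all) forces a change. The organizing device is to expand everything in the orthonormal Fourier basis $\{\chi_\alpha\}$ of the product measure $\udist$, writing $R=\sum_\alpha \hat R_\alpha\,\chi_\alpha$ and $P_S=\sum_\beta (\widehat{P_S})_\beta\,\chi_\beta$ with matrix-valued coefficients. The degree assumption on $P_S$ is exactly the statement that $(\widehat{P_S})_\beta=0$ whenever $\abs{\beta\cap\overline S}>dk$. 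I will therefore replace, for each fixed $S$, the full conditional average $\E_{\cI_{\overline S}}R$ used in the original proof by the \emph{partial} restriction
\[
    \tilde R_S \;\defeq\; \sum_{\alpha\,:\,\abs{\alpha\cap\overline S}\le dk}\hat R_\alpha\,\chi_\alpha ,
\]
the slab of $R$ built from Fourier modes with at most $dk$ coordinates outside $S$. Parseval (applied entrywise, then summed) gives the exact identity $\E_{\cI\sim\udist}\iprod{P_S(\cI),R(\cI)}=\E_{\cI\sim\udist}\iprod{P_S(\cI),\tilde R_S(\cI)}$, since every mode of $R$ dropped from $\tilde R_S$ is orthogonal to $P_S$; when $dk=0$ this is precisely the reduction to $R_S$ in \pref{lem:randomrestriction}. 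I take $\tilde R_S^{<D}$ to be the restriction of $\tilde R_S$ to modes $\alpha$ with $\abs\alpha<D$ (which, despite the notation $\tilde R^{<D}_S(\cI_S)$, still carries degree up to $dk$ in $\cI_{\overline S}$ — a harmless abuse matching its role).

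From here the argument is unchanged in structure: split $\tilde R_S=\tilde R_S^{<D}+\tilde R_S^{\ge D}$, keep the first piece as the main term, and bound the second by Cauchy--Schwarz, $\bigabs{\E_S\E_\cI\iprod{P_S,\tilde R_S^{\ge D}}}\le \bigparen{\E_S\norm{P_S}_{Fr,\udist}^2}^{1/2}\bigparen{\E_S\norm{\tilde R_S^{\ge D}}_{Fr,\udist}^2}^{1/2}$. It then remains to estimate $\E_S\norm{\tilde R_S^{\ge D}}_{Fr,\udist}^2=\sum_{\alpha:\,\abs\alpha\ge D}\Pr_{S\sim\subsetdist}[\abs{\alpha\cap\overline S}\le dk]\cdot\norm{\hat R_\alpha}_{Fr}^2$, so the whole lemma reduces to the bound
\[
    \Pr_{S\sim\subsetdist}\bigbrac{\abs{\alpha\cap\overline S}\le dk}\;\le\; D^{dk}\cdot\rho(D-dk,\subsetdist)\qquad\text{for every }\alpha\text{ with }\abs\alpha\ge D.
\]

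For this I will fix an arbitrary size-$D$ subset $\gamma\subseteq\alpha$: if $\abs{\alpha\cap\overline S}\le dk$ then a fortiori $\abs{\gamma\cap\overline S}\le dk$, hence $\abs{\gamma\cap S}\ge D-dk$, so some $\beta\subseteq\gamma$ with $\abs\beta=D-dk$ lies inside $S$; a union bound over the $\binom{D}{D-dk}\le D^{dk}$ candidate $\beta$'s, together with the definition of $\rho$, finishes it. The point of localizing to a \emph{fixed} size-$D$ subset is that the combinatorial factor becomes $D^{dk}$ rather than $\binom{\abs\alpha}{dk}$, which would be useless since $\abs\alpha$ can be as large as $N$. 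Plugging back in yields $\E_S\norm{\tilde R_S^{\ge D}}_{Fr,\udist}^2\le D^{dk}\rho(D-dk,\subsetdist)\norm{R}_{Fr,\udist}^2$, hence the claimed estimate with the prefactor $D^{dk/2}$. The monotonicity hypothesis on $D\mapsto D^{dk}\rho(D-dk,\subsetdist)$ is not needed for the lemma in isolation; I record it because, when this lemma is used inside the inequality version of the argument for \pref{thm:main} — where, exactly as in \pref{lem:ideal}, modes of degree in a range $[D-O(k),D]$ are manipulated rather than at the single threshold $D$ — monotonicity is what certifies that the errors at all of those thresholds are dominated by the one at $D$, so that everything stays within the $\rho(D,\subsetdist)<n^{-8B}$ budget of \pref{thm:main}.

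The main obstacle, and really the only nonroutine point, is the first paragraph: recognizing that because $P_S$ now probes $\cI_{\overline S}$ up to degree $dk$ one must \emph{not} average $\cI_{\overline S}$ out entirely but retain the low-$\overline S$-degree slab $\tilde R_S$, and then, in the tail bound, seeing that the event $\{\abs{\alpha\cap\overline S}\le dk\}$ for a potentially enormous index set $\alpha$ should be caught through a fixed size-$D$ sub-index-set so the union bound costs only $D^{dk}$. With those two observations in hand the rest is a verbatim adaptation of \pref{lem:randomrestriction}.
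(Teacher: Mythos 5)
Your proposal is correct and follows the paper's proof essentially step for step: the same partial restriction $\tilde R_S$ (retaining Fourier modes with at most $dk$ coordinates outside $S$), the same Parseval identity justified by the degree bound on $P_S$, the same split and Cauchy--Schwarz, and the same reduction to bounding $\Pr_{S\sim\subsetdist}[\abs{\alpha\cap\overline S}\le dk]$ for $\abs{\alpha}\ge D$. The one place you diverge is in that last combinatorial bound: the paper union-bounds over all $\binom{\abs\alpha}{dk}$ size-$(\abs\alpha-dk)$ subsets of $\alpha$ and then invokes the monotonicity of $D^{dk}\rho(D-dk,\subsetdist)$ to compare back to the threshold $D$, whereas your trick of first passing to a fixed size-$D$ subset $\gamma\subseteq\alpha$ caps the union bound at $\binom{D}{dk}\le D^{dk}$ outright, giving a slightly cleaner argument that, as you correctly observe, makes the monotonicity hypothesis superfluous for the lemma itself.
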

\begin{proof}
\begin{align*}
	\E_{\cI \sim \udist} \E_{S \sim \subsetdist} \iprod{ P_S(\cI_S), R(\cI)} = \E_{S \sim \subsetdist} \E_{\cI \sim \udist} \iprod{P_S(\cI_S),
  \tilde R_S(\cI)}
\end{align*}
where $\tilde R_S(\cI)$ is now obtained by averaging out the values for all monomials whose degree in $\overline{S}$ is $> dk$. Writing $\tilde R_S = \tilde R_S^{\leq D}+\tilde R_S^{> D}$ and applying a Cauchy-Schwartz inequality we get,
\begin{align*}
    \E_{S \sim \subsetdist} \E_{\cI \sim \udist} \iprod{P_S(\cI_S), \tilde R_S(\cI)}
    & \geq \E_{S \sim \subsetdist} \E_{\cI \sim \udist}
	\iprod{P_S(\cI_S), \tilde R^{<D}_S(\cI)} - \left(\E_{S \sim \subsetdist} \norm{P_S}^2_{Fr, \udist} \right)^{\nfrac{1}{2}} \cdot
	\left( \E_{S \sim \subsetdist} \norm{\tilde R^{\geq D}_S}_{Fr,\udist} \right)^{\nfrac{1}{2}}
\end{align*}
Over a random choice of $S$,
$$  \E_{S \sim \subsetdist} \norm{\tilde R^{\geq D}_S}_{Fr,\udist}^2  = \sum_{\alpha,|\alpha| \geq D} \Pr_{S \sim \subsetdist} [| \alpha \cap \overline{S}| \leq dk ] \cdot \hat{R}_\alpha^2
\leq D^{dk} \cdot \rho(D-dk, \subsetdist)  \cdot \norm{R}_{Fr}^2, $$
    where we have used that $D^{dk} \rho(D-dk,\subsetdist)$ is a monotone non-increasing function of $D$.
Substituting this in the earlier inequality the Lemma follows.
\end{proof}
\section{Applications to Classical Distinguishing Problems} \label{sec:examp}
In this section, we verify that the conditions of \pref{thm:low-deg} hold for a variety of canonical distinguishing problems.
We'll rely upon the (simple) proofs in \pref{app:idcond}, which show that the ideal term of the \sos proof is well-conditioned.

\begin{problem}[Planted clique with clique of size $n^\delta$]
    Given a graph $G = (V,E)$ on $n$ vertices, determine whether it comes from:
    \begin{compactitem}
    \item {\bf Uniform Distribution}: the uniform distribution over graphs on $n$ vertices ($G(n,\tfrac{1}{2})$).
	\item {\bf Planted Distribution}: the uniform distribution over $n$-vertex graphs with a clique of size at least $n^{\delta}$
    \end{compactitem}
    The usual polynomial program for \emph{planted clique} in variables $x_1,\ldots, x_n$ is:
    \begin{align*}
	\obj &\le \sum_i x_i\\
	x_i^2 &= x_i\quad \forall i \in [n]\\
	x_ix_j &= 0\quad \forall (i,j) \in E
    \end{align*}
\end{problem}
\begin{lemma}\label{lem:pc-ex}
    \pref{thm:low-deg} applies to the above planted clique program, so long as  $\obj \le n^{\delta - \epsilon}$ for any $\epsilon \ge \frac{c\cdot d}{D-6d}$ for a fixed constant $c$.
\end{lemma}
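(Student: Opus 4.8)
The plan is to apply \pref{thm:low-deg} in the refined form described in the remark immediately following it, where the effective degree of an instance-monomial $\cI_\alpha$ is the number of vertices $v(\alpha)$ it touches rather than the number of edges $|\alpha|$. Let the subsampling distribution $\subsetdist$ include each vertex of $[n]$ independently with probability $q = n^{-\gamma}$, for a parameter $\gamma\in(0,1)$ to be fixed, and let the retained coordinate set be $S=\binom{W}{2}$ where $W$ is the sampled vertex set; then $\Pr_{S\sim\subsetdist}[\alpha\subseteq S]=q^{\,v(\alpha)}=n^{-\gamma v(\alpha)}$. The planted distribution $\pdist$ is the one from the problem statement (graphs containing an $n^\delta$-clique), and the feasible solution set is the set of cliques of size at least $\obj\le n^{\delta-\epsilon}$, so the gap between the planted clique size $n^\delta$ and $\obj$ is exactly what robust inference will spend. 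The program has instance-degree $k=1$ (instance variables appear only linearly, in $(1-\cI_{ij})x_ix_j=0$), so $2dk=2d$; the parameter $B$ will be taken of order $\gamma D$.

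For the robust-inference condition, the inference map sends a subsampled planted instance to the indicator of $K\cap W$, where $K$ is the planted $n^\delta$-clique (revealed in part by $\cI_S$, since all edges inside $W$ survive). For any completion $\cI_S\circ\tilde\cI_{\overline S}$ the set $K\cap W$ is still a clique, so it satisfies $x_i^2=x_i$ and $x_ix_j=0$ on non-edges regardless of how $\cI_{\overline S}$ is re-randomized. A Chernoff bound gives $|K\cap W|\ge \tfrac12 q n^\delta = \tfrac12 n^{\delta-\gamma}\ge n^{\delta-\epsilon}\ge\obj$ except with probability $\exp(-\Omega(n^{\delta-\gamma}))$, which is at most $n^{-8B}$ as long as $\gamma\le\epsilon$ (with a sliver of room for the factor $\tfrac12$), noting $\delta-\gamma\ge\delta-\epsilon$ is a positive constant; this also bounds $\norm{x(\cI_S)^{\le d}}_2^2\le\binom{n}{\le d}\le n^d=:K$. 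The second condition --- that with probability $\ge 1-n^{-8B}$ the system is degree-$d$ SoS-refutable over $\udist=G(n,\tfrac12)$ with numbers bounded by $n^B$ --- is the standard spectral (Lov\'asz theta) refutation of ``$G(n,\tfrac12)$ has a clique of size $\obj$'', valid once $\obj$ exceeds the degree-$d$ refutation threshold (e.g.\ $\obj\ge n^{1/2+o(1)}$ at degree $2$), with the magnitudes of the ideal multipliers controlled by \pref{app:idcond}; any polynomially bounded $B$ is comfortably large enough here.

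It remains to meet the condition on $D$ and fix the parameters consistently. In the refined version the distinguisher output by \pref{thm:main} is supported on monomials $\cI_\alpha$ whose vertex-degree $v(\alpha)$ is at least the vertex analogue of $D-2dk$; tracing the proof --- the projection $\Pi_\cG$ raises degree by $O(k)=O(1)$, and \pref{lem:randomrestriction} and \pref{lem:ideal} operate in a degree window of width $O(dk)=O(d)$ --- this cutoff is at least $D-O(d)$, and absorbing the bookkeeping constant into the ``$6d$'' it suffices that $q^{\,D-6d}=n^{-\gamma(D-6d)}\le n^{-8B}$, i.e.\ $B\le\tfrac{\gamma(D-6d)}{8}$. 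So we choose $B$ to be any integer with $dk<B\le\tfrac{\gamma(D-6d)}{8}$; such an integer exists iff $\tfrac{\gamma(D-6d)}{8}>d$, i.e.\ $\gamma>\tfrac{8d}{D-6d}$. Together with the robust-inference requirement $\gamma\le\epsilon$, a usable $\gamma$ exists precisely when $\epsilon>\tfrac{8d}{D-6d}$, which is the asserted hypothesis $\epsilon\ge\tfrac{cd}{D-6d}$ (take $c=8$ and round up). Fixing any such $\gamma$ and the corresponding $B$, all hypotheses of \pref{thm:low-deg} are satisfied and the conclusion follows.

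The only step that is not pure bookkeeping is the use of the vertex-degree refinement: under the literal edge-count measure a clique on $v$ vertices is an edge-set of size $\binom v2$ with inclusion probability $n^{-\gamma v}$, so forcing $\rho(D-2d,\subsetdist)\le n^{-8B}$ would only yield $v\gtrsim\sqrt{D}$ and hence demand $\epsilon\gtrsim d/\sqrt{D}$ --- quadratically worse than the $\epsilon\gtrsim d/D$ we obtain. The inference map, the concentration estimate, the theta-function refutation, the coefficient bounds from \pref{app:idcond}, and the bookkeeping around $B$ and $\gamma$ are all routine.
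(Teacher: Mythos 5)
Your proposal is correct and follows essentially the same route as the paper's proof: the vertex-counting refinement of the instance degree, vertex subsampling at rate $n^{-\gamma}$ with $\gamma\le\epsilon$, the observation that the restricted clique remains feasible under any completion together with a Chernoff bound on its surviving size, the conditioning bound from \pref{app:idcond}, and the final computation $\rho^{D-O(d)}\le n^{-8B}$ yielding $\epsilon\gtrsim d/(D-6d)$. The only (immaterial) discrepancies are bookkeeping conventions --- the paper takes the instance degree of the relaxation to be $k=2$ under the vertex measure where you take $k=1$, and it fixes $B=c_1 d$ directly rather than leaving $B$ a free parameter --- both absorbed into the constant $c$.
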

\begin{proof}
    For planted clique, for our notion of ``instance degree'', rather than the multiplicity of instance variables, the ``degree'' of $\cI_{\alpha}$ will be the number of distinct vertices incident on the edges in $\alpha$.
    The proof of \pref{thm:main} proceeds identically with this notion of degree, but we will be able to achieve better bounds on $D$ relative to $d$.

    In this case, the instance degree of the \sos relaxation is $k = 2$.
    We have from \pref{cor:well-cond} that the degree-$d$ \sos refutation is well-conditioned, with numbers bounded by $n^{c_1 \cdot d}$ for some constant $c_1/2$.
    Define $B = c_1 d \ge dk$.

    Our subsampling distribution $\subsetdist$ is the distribution given by including every vertex with probability $\rho$, producing an induced subgraph of $\approx \rho n$ vertices.
    For any set of edges $\alpha$ of instance degree at most $D-6d$,
    \[
	\Pr_{S\sim \subsetdist}[\alpha \subseteq S] \le \rho^{D-6d},
    \]
    since the instance degree corresponds to the number of vertices incident on $\alpha$.

    This subsampling operation satisfies the subsample inference condition for the clique constraints with probability $1$, since a clique in any subgraph of $G$ is also a clique in $G$.
    Also, if there is a clique of size $n^{\delta}$ in $G$, then by a Chernoff bound
    \[
	\Pr_{S \sim \subsetdist}[\exists \text{ clique of size } \ge (1-\beta)\rho n^{\delta} \in S]
	\ge 1 - \exp(-\frac{\beta^2\rho n^{\delta}}{2})\mper
    \]
Choosing  $\beta = \sqrt{\frac{10 B \log n}{\rho n^{\delta}}}$, this gives us that $\Theta$ gives $n^{-10 B}$-robust inference for the planted clique problem, so long as $\obj \le \rho n/2$.
    Choosing $\rho = n^{-\epsilon}$ for $\epsilon$ so that
    \[
	\rho^{D-6d} \le n^{-8B}  \implies \eps \ge \frac{c_2 d}{D - 6d},
	\]
	for some constant $c_2$, all of the conditions required by \pref{thm:low-deg} now hold.
\end{proof}

\begin{problem}[Random CSP Refutation at clause density $\alpha$]
    Given an instance of a Boolean $k$-CSP with predicate $P:\{\pm 1\}^k \to \{\pm 1\}$ on $n$ variables with clause set $C$, determine whether it comes from:
    \begin{compactitem}
    \item {\bf Uniform Distribution}: $m \approx \alpha n$ constraints are generated as follows.
	Each $k$-tuple of variables $S \in [n]^k$ is independently with probability $p=\alpha n^{-k+1}$ given the constraint $P(x_S \circ z_S) = b_S$ (where $\circ$ is the entry-wise multiplication operation) for a uniformly random $z_S \in \{\pm 1 \}^k$ and $b_S \in \{\pm 1\}$.
    \item {\bf Planted Distribution}: a planted solution $y \in \{\pm 1\}^n$ is chosen, and then $m \approx \alpha n$ constraints are generated as follows.
	Each $k$-tuple of variables $S \in [n]^k$ is independently with probability $p = \alpha n^{-k+1}$ given the constraint $P(x_S \circ z_S) = b_S$ for a uniformly random $z_S \in \{\pm 1 \}^k$, but $b_S = P(y_S \circ z_S)$ with probability $1-\delta$ and $b_S$ is uniformly random otherwise.
    \end{compactitem}
    The usual polynomial program for \emph{random CSP refutation} in variables $x_1,\ldots, x_n$ is:
    \begin{align*}
	\obj &\le
	\sum_{S \in [n]^k} \Ind[\exists \text{ constraint on }S] \cdot \left(\frac{1+ P(x_S \circ z_S) \cdot b_S}{2}\right)\\
	x_i^2 &= 1 \quad \forall i \in [n]
    \end{align*}
\end{problem}
\begin{lemma}\label{lem:csp-ex}
    If $\alpha \ge 1$, then \pref{thm:low-deg} applies to the above \emph{random $k$-CSP refutation} problem, so long as $\obj \le (1-\delta-\eps)m$ for any $\eps \ge \frac{c\cdot d \log n}{D - 3d}$, where $c$ is a fixed constant.
\end{lemma}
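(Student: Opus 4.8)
The plan is to verify the two hypotheses of \pref{thm:low-deg} for the random $k$-CSP refutation program, together with the bound relating $D$, $d$ and $\eps$, exactly in the style of the planted-clique verification \pref{lem:pc-ex}. As there, the correct notion of ``instance degree'' is not the multiplicity of constraint variables but the number of \emph{distinct CSP variables} touched by a monomial $\cI_\alpha$ (a monomial being a set of constraint slots); with this convention the instance degree $k$ of the program itself is $O(1)$, since each generator $g_j$ touches at most $k$ variables and at most one slot. Because the program contains the inequality $\obj\le\sum_S(\cdots)$ --- and, crucially, the value of this sum on an assignment recovered from $\cI_S$ depends on $\cI_{\overline S}$ --- we are exactly in the situation flagged in \Sref{sec:proofofthm}: we introduce a slack variable for the objective inequality, build $\Lambda_S$ from the robust inference of the $x$-variables alone, and use \pref{lem:mod-randomrestriction} in place of \pref{lem:randomrestriction}, absorbing the extra $D^{dk/2}\le n^{B/2}$ factor into the slack in the requirement $\rho(D,\subsetdist)\le n^{-8B}$.

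For the \sos-refutation hypothesis I would invoke the standard refutation results for random CSPs together with the conditioning estimates of \pref{app:idcond} (cf.\ \pref{cor:well-cond}): for $\alpha$ large enough a uniform random $k$-CSP at density $\alpha$ admits, with probability $1-n^{-8B}$, a degree-$d$ \sos refutation of $\{\sum_S(\cdots)\ge(1-\delta-\eps)m\}\cup\{x_i^2=1\}$ with numbers bounded by $n^{c_1 d}$, and we set $B:=c_1 d$ with $c_1$ large enough that $B\ge dk$. This step is where the hypothesis on $\alpha$ (and, for bounded $d$, its polynomial growth) is used: $1-\delta-\eps$ must exceed $\tfrac12$ --- the typical satisfiable fraction of a random instance --- by more than the degree-$d$ refutation error.

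For robust inference I would take $\subsetdist$ to keep each of the $n$ CSP variables independently with probability $\rho=1-\Theta(\eps)$ and retain exactly those constraints whose scope lies inside the retained set, so $\cI_S$ is a random induced sub-CSP at density $\approx\rho^k\alpha$. As the map I take $x(\cI_S):=$ an assignment maximizing the number of satisfied constraints of $\cI_S$, extended by $+1$ on the discarded variables --- only existence is required, so no efficient recovery of the planted string $y$ is needed. Since $y$ itself satisfies a $\bigl(1-\tfrac\delta2-o(1)\bigr)$-fraction of the constraints of $\cI_S$ (a Chernoff bound over the independent inclusion of slots in $S$ and the independent planting noise, valid w.p.\ $1-n^{-8B}$ once $\alpha$ is large enough), so does $x(\cI_S)$; and once $\cI_{\overline S}$ is re-randomized from $\udist$, every one of its fresh constraints is satisfied by the fixed assignment $x(\cI_S)$ with probability exactly $\tfrac12$, because $b_S$ is uniform and independent. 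Summing over the $\approx\rho^k m$ real and $\approx(1-\rho^k)m$ fresh constraints, $x(\cI_S)$ satisfies at least $\bigl(\tfrac12+\tfrac{1-\delta}{2}\rho^k-o(1)\bigr)m=\bigl((1-\tfrac\delta2)-\Theta(k\eps)-o(1)\bigr)m\ge(1-\delta-\eps)m$, using that $\delta$ is a constant, $\eps$ is small compared to $\delta/k$, and the constant in $\rho=1-\Theta(\eps)$ is chosen suitably; this gives $n^{-8B}$-robust inference.

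It remains to choose $D$. For any set of slots $\alpha$ spanning at least $D-2dk$ distinct variables, $\Pr_{S\sim\subsetdist}[\alpha\subseteq S]=\rho^{\#\mathrm{vars}(\alpha)}\le(1-\Theta(\eps))^{D-2dk}\le e^{-\Theta(\eps(D-2dk))}$, which is $\le n^{-8B}=e^{-\Theta(c_1 d\log n)}$ precisely when $\eps\ge\frac{cd\log n}{D-O(d)}$ for a suitable constant $c$, matching the claimed bound $\eps\ge\frac{cd\log n}{D-3d}$; then all hypotheses of \pref{thm:low-deg} hold and the conclusion follows. The main obstacle is the robust-inference step: unlike planted clique, where the planted clique sits entirely inside the subsampled vertex set and survives re-randomization untouched, here the re-randomized constraints are genuinely uniform and drag any fixed assignment down toward the $\tfrac12$-satisfiable baseline; one must therefore (i) use the optimal-assignment-for-$\cI_S$ map rather than any attempt to reconstruct $y$, and (ii) choose $\rho=1-\Theta(\eps)$ close enough to $1$ that the $(1-\rho^k)$-fraction of fresh constraints costs less than the available gap $\tfrac\delta2$, while still small enough below $1$ that $\rho^{D}$ decays fast enough to satisfy the $\rho(D,\subsetdist)\le n^{-8B}$ requirement --- which is exactly what produces the $\log n$ in the final bound.
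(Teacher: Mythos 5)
Your verification is sound in outline and reaches the right bound, but it diverges from the paper on the one genuinely load-bearing choice: the subsampling distribution. The paper subsamples \emph{constraints}, not variables --- each constraint slot is kept independently with probability $\rho=1-\eps$ --- so that $\Pr_{S\sim\subsetdist}[\alpha\subseteq S]=\rho^{|\alpha|}$ under the \emph{standard} coordinate notion of degree, and robust inference becomes almost immediate: a solution satisfying $(1-\delta)m$ constraints of $\cI$ still satisfies $\approx\rho(1-\delta)m$ constraints of $\cI_S$ by Chernoff (this is where $\alpha\ge1$, i.e.\ $m\gtrsim n$, is used --- for concentration, not for the refutation step as you suggest), and the re-randomized constraints can only \emph{add} to the objective since each term $(1+P(x_S\circ z_S)b_S)/2$ is nonnegative. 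Your variable-subsampling route also works, but it buys you two complications the paper avoids: (i) you must adopt the nonstandard ``number of distinct CSP variables'' degree and invoke the refinement in the remark after \pref{thm:main}, since $D$ slots can concentrate on only $O(D^{1/k})$ variables and the naive bound $\Pr[\alpha\subseteq S]\le\rho^{\Omega(D^{1/k})}$ would be far too weak; and (ii) your accounting credits the fresh constraints only their expected $\tfrac12$ contribution, which forces the condition $\Theta(k\eps)\le\tfrac\delta2+\eps$ and hence the spurious restriction $\eps\lesssim\delta/k$ not present in the lemma statement (this is repairable by decoupling the retention parameter $\eta$ in $\rho=1-\eta$ from $\eps$, but it is a restriction as written). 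Your instinct to route the objective inequality through the slack-variable machinery of \pref{lem:mod-randomrestriction} is correct and is exactly the situation that section was written for, and your solution map (any optimal assignment for the sub-instance) is a fine substitute for the paper's implicit use of the near-planted assignment.
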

\begin{proof}
    In this case, the instance degree of the \sos relaxation $k = 1$.
    We have from \pref{cor:well-cond} that the degree-$d$ \sos refutation is well-conditioned, with numbers bounded by $n^{c_1 d}$ for some constant $c_1$.
    Define $B = c_1 d$.

    Our subsampling distribution $\Theta$ is the distribution given by including each constraint independently with probability $\rho$, producing an induced CSP instance on $n$ variables with approximately $\rho m$ constraints.
    Since each constraint survives the subsampling with probability $\rho$, for any $\alpha \in \binom{C}{D - 3d}$,
    \[
	\Pr_{S\sim \subsetdist}[\alpha \subseteq S] \le \rho^{D-3d}.
    \]

The subsample inference property clearly holds for the boolean constraints $\{x_i^2 = 1\}_{i\in[n]}$, as a Boolean assignment to the variables is valid regardless of the number of constraints.
    Before subsampling there are at least $(1-\delta)m$ satisfied constraints, and so letting $O_S$ be the number of constraints satisfied in sub-instance $S$, we have by a Chernoff bound
    \[
	\Pr_{S \sim \Theta}[O_S \ge (1-\beta)\cdot \rho(1-\delta)m] \ge 1 - \exp\left(-\frac{\beta^2\rho(1-\delta)m}{2}\right)\mper
    \]
Choosing $\beta = \sqrt{\frac{10B\log n}{\rho(1-\delta)m}} = o(1)$ (with overwhelming probability since we have $\alpha \ge 1\implies \E[m] \ge n$), we have that $\Theta$ gives us $n^{-10B}$-robust inference for the random CSP refutation problem, so long as $\obj \le (1-o(1))\rho(1-\delta)m$.
    Choosing $\rho = (1-\eps)$ so that
    \[
	\rho^{D-3d} \le n^{-8B} \implies \eps \ge \frac{c_2 d\log n}{D-3d},
    \]
for some constant $c_2$.
	The conclusion follows
(after making appropriate adjustments to the constant).
\end{proof}

\begin{problem}[Community detection with average degree $d$ (stochastic block model)]
    Given a graph $G = (V,E)$ on $n$ vertices, determine whether it comes from:
    \begin{compactitem}
    \item {\bf Uniform Distribution}: $G(n,b/n)$, the distribution over graphs in which each edge is included independently with probability $b/n$.
    \item {\bf Planted Distribution}: the stochastic block model---there is a partition of the vertices into two equally-sized sets, $Y$ and $Z$, and the edge $(u,v)$ is present with probability $a/n$ if $u,v \in Y$ or $u,v \in Z$, and with probability $(b-a)/n$ otherwise.
    \end{compactitem}
    Letting $x_1,\ldots,x_n$ be variables corresponding to the membership of each vertex's membership, and let $A$ be the adjacency of the graph.
    The canonical polynomial optimization problem is
    \begin{align*}
	\obj &\le x^\top A x\\
	x_i^2 &= 1 \qquad \forall i \in [n]\\
	\sum_i x_i &= 0.
    \end{align*}
\end{problem}
\begin{lemma}\label{lem:sbm-ex}
    \pref{thm:low-deg} applies to the \emph{community detection} problem so long as $\obj \le (1-\epsilon)\frac{(2a - b)}{4}n$, for $\epsilon > \frac{c \cdot d \log n}{D-3d}$ where $c$ is a fixed constant.
\end{lemma}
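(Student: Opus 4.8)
The plan is to follow the template of the random-CSP refutation lemma \pref{lem:csp-ex}. First I would note that the only instance variables, the edge indicators, enter the program solely through the objective $x^\top A x$ and do so linearly, while the constraints $\{x_i^2 = 1\}_{i\in[n]}$ and $\sum_i x_i = 0$ are instance-free; hence the instance degree of the relaxation is $k = 1$. Because the objective is an inequality, I would pass to the slack-variable reformulation of \pref{prog:bopt} and run the proof of \pref{thm:main} in its inequality-handling form, using \pref{lem:mod-randomrestriction} in place of \pref{lem:randomrestriction}. As in the CSP case, I would invoke \pref{cor:well-cond} to get a degree-$d$ \sos refutation on the uniform side $G(n,b/n)$ (the relevant fact being that for random graphs $\max_{x\perp \mathbf 1} x^\top A x = O(n\sqrt b) \ll \obj$) with numbers bounded by $n^{c_1 d}$ for some constant $c_1$, and set $B = c_1 d$.

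For the subsampling distribution $\subsetdist$ I would keep each of the $\binom{n}{2}$ edge slots independently with probability $\rho = 1 - \eps'$. Then for any set of slots $\alpha$ with $|\alpha| \ge D - 3d$ we have $\Pr_{S \sim \subsetdist}[\alpha \subseteq S] = \rho^{|\alpha|} \le \rho^{D-3d} \le n^{-8B}$ as soon as $\eps' \ge \frac{c_2 B \log n}{D - 3d}$ for a suitable constant $c_2$; this is what produces the $\frac{d \log n}{D - 3d}$ shape of the hypothesis on $\eps$. I would also check the mild monotonicity condition on $D^{dk}\rho(D - dk, \subsetdist)$ required by \pref{lem:mod-randomrestriction}, which is immediate since $\rho(D,\subsetdist) = \rho^{D}$ decays geometrically.

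The step I expect to be the main obstacle is verifying the $n^{-8B}$-robust inference property, since---unlike planted clique, where a planted clique survives any induced-subgraph restriction for free---the planted objective $x^\top A x$ is a global quantity sensitive to the re-randomized coordinates. My plan is: given a subinstance $\cI_S$ (the block model restricted to the revealed slots), let the inference map output a balanced vector $x(\cI_S) \in \{\pm 1\}^n$ maximizing $\sum_{\{i,j\}\in S} A_{ij} z_i z_j$ over balanced $z \in \{\pm 1\}^n$---the map only has to exist, not be efficient, so no partition-recovery guarantee is needed. The constraints $x_i^2 = 1$ and $\sum_i x_i = 0$ then hold by construction, so the only feasibility obstruction on a completion $\cI_S \circ \tilde\cI_{\overline{S}}$, with $\tilde\cI$ drawn from the uniform distribution (hence independent $\Ber(b/n)$ off $S$), is that the completed objective fall below $\obj$. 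I would lower bound that objective in two pieces. First, the revealed contribution is at least $\sum_{\{i,j\}\in S} A_{ij} y_i y_j$ for the true partition vector $y$, which is itself a balanced competitor in the maximization; this is a sum of $\binom{n}{2}$ independent terms with mean $(1 - o(1))\,\rho \cdot \tfrac{(2a-b)}{4} n$, so I would use a Chernoff/Bernstein bound, absorbing a $\sqrt{\log n}$ factor into $\eps'$, to put it above $(1 - \tfrac{\eps}{2})\tfrac{(2a-b)}{4} n$ except with probability $n^{-8B}$. Second, the re-randomized contribution $\sum_{\{i,j\}\notin S} \tilde A_{ij} x_i x_j$ has conditional mean $O(b)$ (from $\sum_i x_i = 0$ and the near-balance of $x$ on $S$) and, being a sum of sparse independent Bernoullis, absolute value $O(\sqrt{b n \log n})$ with probability $1 - n^{-8B}$, which is negligible against $(2a-b) n$ in the range where $\obj \le (1 - \eps)\tfrac{(2a-b)}{4} n$ lies below the uniform-side value $\Theta(n \sqrt{b})$. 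Combining the two pieces yields a completed objective at least $(1 - \eps)\tfrac{(2a-b)}{4} n \ge \obj$, except with probability $n^{-8B}$, which is exactly the required robust-inference bound.

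Finally I would take $\eps' = \Theta(\eps)$, large enough to satisfy both the subsampling bound $\rho^{D - 3d} \le n^{-8B}$ and the $\tfrac{\eps}{2}$ slack used in the first piece; then all hypotheses of \pref{thm:low-deg} are in force and the conclusion follows whenever $\eps \ge \frac{c \, d \log n}{D - 3d}$ for an appropriate constant $c$.
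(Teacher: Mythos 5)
Your proposal follows the same route as the paper's proof: instance degree $k=1$, conditioning via \pref{cor:well-cond} with $B = c_1 d$, independent edge subsampling giving $\Pr[\alpha \subseteq S] \le \rho^{D-3d}$, and robust inference verified by a Chernoff bound on the surviving signal plus a bound on the contribution of the re-randomized edges. The two places where your details diverge are worth noting. First, the paper's inference map simply outputs the planted partition vector $y$ ($+1$ on $Y$, $-1$ on $Z$), not the maximizer of the revealed objective over balanced vectors. This matters for your claim that the re-randomized contribution $\sum_{\{i,j\}\notin S}\tilde A_{ij}x_ix_j$ has conditional mean $O(b)$: that mean equals $\tfrac{b}{n}\sum_{\{i,j\}\notin S}x_ix_j$, and balancedness of $x$ only controls the sum over \emph{all} pairs ($=-n/2$), not the sum over the complement of a $\rho$-fraction of slots. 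For the planted $y$, which is independent of $S$, the restricted sum concentrates at $(1-\rho)(-n/2)$ and your claim is essentially right; for your maximizer $x(\cI_S)$, which depends on $S$, you would need a union bound over all balanced $x$ (costing a deviation of order $n^{3/2}$, hence conditional mean $O(b\sqrt n)$ --- still negligible, but not $O(b)$ as stated, and not for the reason you give).

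Second, the paper does not use concentration for the re-randomized edges at all: it bounds the loss trivially by the number of re-randomized edges that come up present, roughly $(1-\rho)nb/2$, and this is exactly why it takes $\rho = 1 - \eps(2a-b)/10b$ rather than your $\rho = 1-\Theta(\eps)$ --- the factor $(2a-b)/b$ is needed to make $(1-\rho)nb$ small against the slack $\eps(2a-b)n/4$. Your finer concentration argument (small mean plus $O(\sqrt{(1-\rho)nbB\log n})$ fluctuation) does permit the more generous $\rho = 1-\Theta(\eps)$ once the mean is handled correctly as above, so this is a legitimate, slightly sharper alternative; but as written the mean bound is the gap. One further slip: your parenthetical that $\obj$ ``lies below the uniform-side value $\Theta(n\sqrt b)$'' is backwards --- for the uniform side to admit an \sos refutation, $\obj$ must exceed the value achievable under $G(n,b/n)$. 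Neither issue changes the structure of the argument, and both are repairable by adopting the paper's choices (planted vector as the inference map, trivial bound on the resampled edges, $1-\rho \asymp \eps(2a-b)/b$).
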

\begin{proof}
    The degree of the \sos relaxation in the instance is $k = 1$.
    Since we have only hypercube and balancedness constraints, we have from \pref{cor:well-cond} that the \sos ideal matrix is well-conditioned, with no number in the \sos refutation larger than $n^{c_1 d}$ for some constant $c_1$.
    Let $B = c_1 d$.

    Consider the solution $x$ which assigns $x_i = 1$ to $i \in Y$ and $x_i = -1$ to $i\in Z$.
    Our subsampling operation is to remove every edge independently with probability $1-\rho$.
    The resulting distribution $\Theta$ and the corresponding restriction of $x$ clearly satisfies the Booleanity and balancedness constraints with probability $1$.
Since each edge is included independently with probability $\rho$, for any $\alpha \in \binom{E}{D - 3d}$,
    \[
	\Pr_{S\sim \subsetdist}[\alpha \subseteq S] \le \rho^{D-3d}.
	\]

    In the sub-instance, the expected value (over the choice of planted instance and over the choice of sub-instance) of the restricted solution $x$ is
    \[
	\frac{\rho a}{n}\cdot \left(\binom{|Y|}{2} + \binom{|Z|}{2}\right)
	-\rho\frac{b-a}{n}\cdot |Y|\cdot |Z|
	= \frac{(2a - b)\rho n}{4} - \rho a,
	\]
	and by a Chernoff bound, the value in the sub instance is within a $(1-\beta)$-factor with probability $1 - n^{-10B}$ for $\beta = \sqrt{\frac{10B\log n}{n}}$.
	On resampling the edges outside the sub-instance from the uniform distribution, this value can only decrease by at most $(1-\rho) (1+\beta) nb/2$ w.h.p over the choice of the outside edges.

	If we set $\rho = (1-\eps (2a-b)/10b)$, then $\rho^{D - 3d} \le n^{-8B}$ for $\epsilon \ge \frac{c_2 (2a-b) \log n}{D-3d}$.
	for some constant $c_2$, while the objective value is at least $(1-\epsilon) \frac{(2a-b)n}{4}$.
	The conclusion follows
(after making appropriate adjustments to the constant).
\end{proof}

\begin{problem}[Densest-$k$-subgraph]
    Given a graph $G = (V,E)$ on $n$ vertices, determine whether it comes from:
    \begin{compactitem}
    \item {\bf Uniform Distribution}: $G(n,p)$.
    \item {\bf Planted Distribution}: A graph from $G(n,p)$ with an instance of $G(k,q)$ planted on a random subset of $k$ vertices, $p < q$.
    \end{compactitem}
\end{problem}
    Letting $A$ be the adjacency matrix, the usual polynomial program for \emph{densest-$k$-subgraph} in variables $x_1,\ldots, x_n$ is:
    \begin{align*}
	\obj &\le x^\top A x\\
	x_i^2 &= x_i\quad \forall i \in [n]\\
	\sum_i x_i &= k
    \end{align*}
\begin{lemma}\label{lem:dks-ex}
    When $k^2(p+q) \gg d\log n$,
    \pref{thm:low-deg} applies to the \emph{densest-k-subgraph} problem with  $\obj \le (1-\eps)(p+q)\binom{k}{2}$ for any $\eps > \frac{c\cdot d \log n}{D - 3d}$ for a fixed constant $c$.
\end{lemma}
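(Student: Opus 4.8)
The plan is to follow the template already used for \pref{lem:pc-ex}, \pref{lem:csp-ex} and \pref{lem:sbm-ex}: identify the instance degree, invoke the conditioning estimate \pref{cor:well-cond} to fix $B$, exhibit an edge-subsampling distribution $\subsetdist$ under which the planted solution is robustly inferable, and then choose the subsampling probability so that $\rho(D,\subsetdist)$ meets the hypothesis of \pref{thm:low-deg}. The adjacency matrix $A$ is the only instance variable, and it appears (with degree one) only in the objective inequality $\obj\le x^\top A x$; the constraints $x_i^2=x_i$ and $\sum_i x_i=k$ do not involve $A$. Hence the instance degree is $k=1$, and by \pref{cor:well-cond} any degree-$d$ \sos refutation present for $\cI\sim\udist$ has all numbers bounded by $n^{c_1 d}$ for an absolute constant $c_1\ge 1$; set $B\defeq c_1 d$, which satisfies $B> dk=d$.

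Let $\subsetdist$ keep each of the $\binom n2$ potential edges independently with probability $\rho$ (to be fixed close to $1$) and re-randomize the remaining pairs according to $\udist=G(n,p)$; since pairs are retained independently, $\Pr_{S\sim\subsetdist}[\alpha\subseteq S]\le\rho^{|\alpha|}\le\rho^{D-3d}$ for every $\alpha$ with $|\alpha|\ge D-3d$. For the inference map, given a sub-instance $\cI_S$ let $x(\cI_S)$ be the indicator of a $k$-vertex set maximizing the number of retained edges inside it; the hypercube and cardinality constraints hold for this $x$ regardless of the completion, so the only event to control is $x^\top A x\ge\obj$ in a completed instance $\cI_S\circ\tilde\cI$, $\tilde\cI\sim\udist_{|\cI_S}$. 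Because $\udist$ is a product distribution, the completed instance agrees with $\cI_S$ on $S$ and is fresh $G(n,p)$ on $\overline S$, so the number of edges of $x(\cI_S)$ in the completed instance is at least its number of retained edges. Over the randomness of the planting, the edge sampling and the subsampling, inside the planted $k$-set each pair is a retained edge with probability $\approx\rho(p+q)$, so by a Chernoff (lower-tail) bound the planted set has at least $(1-\beta)\rho(p+q)\binom k2$ retained edges except with probability $\exp(-\Omega(\beta^2(p+q)k^2))$; taking $\beta=\sqrt{c_2 d\log n/((p+q)k^2)}$ this failure probability is at most $n^{-10B}$, and $\beta=o(1)$ since $k^2(p+q)\gg d\log n$. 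As the planted set is a candidate for the maximizer, $x(\cI_S)$ inherits the same lower bound, so $\cS$ is $n^{-10B}$-robustly inferable provided $\obj\le(1-\beta)\rho(p+q)\binom k2$.

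It remains to pick $\rho$. Choosing $\rho=1-\tfrac\eps2$ and using $\beta=o(\eps)$ in the relevant regime gives $(1-\beta)\rho(p+q)\binom k2\ge(1-\eps)(p+q)\binom k2\ge\obj$, so robust inference holds at the claimed objective value; meanwhile $\rho^{D-3d}=(1-\tfrac\eps2)^{D-3d}\le e^{-\eps(D-3d)/2}\le n^{-8B}=n^{-8c_1 d}$ precisely when $\eps\ge\tfrac{16 c_1\log n\cdot d}{D-3d}=\tfrac{c\,d\log n}{D-3d}$ with $c\defeq 16 c_1$. Thus all hypotheses of \pref{thm:low-deg} are met (with the robust-inference error $n^{-10B}$ comfortably below the required $n^{-8B}$), and its conclusion yields the desired degree-$2D$ distinguishing matrix polynomial.

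\textbf{Main obstacle.} The delicate point is the robust-inference step: one must guarantee that after both restricting to $S$ and re-randomizing $\overline S$, the recovered $k$-subset still carries at least $\obj$ edges. This simultaneously forces $\rho$ to lie within $\Theta(\eps)$ of $1$ (so that the planted density excess $q-p$, hidden inside the $(p+q)$-versus-$p$ comparison, is not washed out) and requires the Chernoff fluctuation scale $\beta$ to sit below $\eps$, which is exactly what the hypothesis $k^2(p+q)\gg d\log n$ secures in the parameter regimes of interest; the remaining balancing of $\rho$ close to $1$ against $\rho^{D-3d}$ being small is routine and mirrors \pref{lem:pc-ex} and \pref{lem:sbm-ex}.
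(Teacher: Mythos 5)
Your proposal is correct and follows essentially the same route as the paper: instance degree $1$ with $B=c_1 d$ from \pref{cor:well-cond}, independent edge-retention with probability $\rho\approx 1-\eps$ as the subsampling distribution, a Chernoff lower-tail bound on the retained edges inside the planted $k$-set (using $k^2(p+q)\gg d\log n$ to make $\beta=o(1)$), monotonicity of density under adding edges back to get robust inference, and the same balancing $\rho^{D-3d}\le n^{-8B}$ to fix $\eps$. The only cosmetic difference is that you make the inference map explicit as the edge-maximizing $k$-set, whereas the paper leaves it implicit.
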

\begin{proof}
    The degree of the \sos relaxation in the instance is $k = 1$.
    We have from \pref{cor:well-cond} that the \sos proof has no values larger than $n^{c_1 d}$ for a constant $c_1$; fix $B = c_1d$.

    Our subsampling operation is to include each edge independently with probability $\rho$, and take the subgraph induced by the included edges.
    Clearly, the Booleanity and sparsity constraints are preserved by this subsampling distribution $\Theta$.
Since each edge is included independently with probability $\rho$, for any $\alpha \in \binom{E}{D - 3d}$,
    \[
	\Pr_{S\sim \subsetdist}[\alpha \subseteq S] \le \rho^{D-3d}.
	\]

    Now, the expected objective value (over the instance and the sub-sampling) is at least $\rho(p+q)\binom{k}{2}$, and applying a Chernoff bound, we hace that the probability the sub-sampled instance has value less than $(1-\beta)\rho(p+q)\binom{k}{2}$ is at most $n^{-10B}$ if we choose $\beta = \sqrt{\frac{10B \log n}{\rho(p+q)\binom{k}{2}}}$ (which is valid since we assumed that $d\log n \ll (p+q)k^2$).
    Further, a dense subgraph on a subset of the edges is still dense when more edges are added back, so we have the $n^{-10B}$-robust inference property.

    Thus, choosing $\rho = (1-\eps)$ and setting
    \[
	\rho^{D-3d} \le n^{-8B} \implies \eps \ge \frac{c_2 d \log n}{D - 3d},
    \]
for some constant $c_2$,
which concludes the proof (after making appropriate adjustments to the constant).
\end{proof}

\begin{problem}[Tensor PCA]
    Given an order-$k$ tensor in $(\R^{n})^{\tensor k}$, determine whether it comes from:
    \begin{compactitem}
    \item {\bf Uniform Distribution}: each entry of the tensor sampled independently from $\cN(0,1)$.
    \item {\bf Planted Distribution}: a spiked tensor, $\bT = \lambda \cdot v^{\tensor k} + G$  where $v$ is sampled uniformly from $\{\pm \frac{1}{\sqrt{n}}\}^n$, and where $G$ is a random tensor with  each entry sampled independently from $\cN(0,1)$.
    \end{compactitem}
    Given the tensor $\bT$, the canonical program for the tensor PCA problem in variables $x_1,\ldots, x_n$ is:
    \begin{align*}
	\obj &\le \iprod{x^{\tensor k}, \bT}\\
	\|x\|_2^2 &= 1
    \end{align*}
\end{problem}
\begin{lemma}\label{lem:tpca-ex}
    For $\lambda n^{-\eps} \gg \log n$,
    \pref{thm:low-deg} applies to the \emph{tensor PCA} problem with $\obj \le \lambda n^{-\eps}$ for any $\epsilon \ge \frac{c \cdot d}{D-3d}$ for a fixed constant $c$.
\end{lemma}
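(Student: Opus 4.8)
The plan is to verify the three hypotheses of \pref{thm:low-deg} for the tensor PCA program, treating the objective inequality $\obj \le \iprod{x^{\tensor k},\bT}$ via a slack variable exactly as in the previous examples, and then read off the claimed relation between $\eps$, $d$, and $D$. The setup is routine: the tensor entries occur only linearly in the program, so the degree in the instance variables is $1$, and the program has degree $\max(k,2)$ in the program variables, which is $\le 2d$ as soon as $d\ge k/2$. For $\bT$ from the uniform distribution the degree-$d$ \sos value of $\iprod{x^{\tensor k},\bT}$ over the sphere is $\tilde{O}(n^{k/4})$ \cite{DBLP:conf/colt/HopkinsSS15}, so for any objective threshold above this a degree-$d$ \sos refutation exists, and by \pref{cor:well-cond} it can be taken well-conditioned with numbers bounded by $n^{c_1 d}$ for an absolute constant $c_1$; I set $B\defeq c_1 d$. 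Since $v$ has $\pm 1/\sqrt n$ entries, every monomial of $v$ has absolute value at most $1$, so $\norm{v^{\le d}}_2^2\le n^d$, which supplies the bound $K\le n^d$ that \pref{thm:low-deg} needs.

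Next I would take the subsampling distribution $\subsetdist$ to keep each of the $n^k$ coordinates of the tensor independently with probability $\rho$. Then $\Pr_{S\sim\subsetdist}[\alpha\subseteq S]=\rho^{|\alpha|}\le\rho^{D-3d}$ for every set $\alpha$ of tensor coordinates with $|\alpha|\ge D-3d$, so choosing $\rho = n^{-\Theta(\eps)}$ makes this at most $n^{-8B}$ precisely when $\eps\ge\frac{c\cdot d}{D-3d}$ for a suitable constant $c$; this is the theorem's degree condition and the source of the stated bound on $\eps$.

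The substance is the $n^{-8B}$-robust inference property. Given a subsampled tensor $\bT_S$, the inference map (which need not be efficient) forms the rescaled completion $\rho^{-1}\bT_S$ — observed entries scaled by $\rho^{-1}$, the rest set to zero — which in expectation over $S$ equals $\bT$, hence up to a global rescaling is a spiked tensor with effective signal $\rho^{1/2}\lambda$ and unit-scale noise, and then runs the information-theoretically optimal tensor-PCA estimator on it to output a unit vector $\hat v$ with $\iprod{\hat v,v}\ge 1-o(1)$ with overwhelming probability. The sphere constraint holds by construction; for the objective, after re-randomizing the complement to $G'_{\overline{S}}\sim\udist$ I would expand
\[
\iprod{\hat v^{\tensor k},\,\bT_S\circ G'_{\overline{S}}}\;=\;\lambda\,\iprod{\hat v^{\tensor k},(v^{\tensor k})_S}\;+\;\iprod{\hat v^{\tensor k},G_S}\;+\;\iprod{\hat v^{\tensor k},G'_{\overline{S}}}.
\]
The last two terms are Gaussian with variance at most $\norm{\hat v^{\tensor k}}_2^2=1$, hence $O(\sqrt{B\log n})$ except with probability $n^{-10B}$, while the first term has mean $\rho\,\iprod{\hat v,v}^k=\rho(1-o(1))$ over the choice of $S$ and concentrates around it by a Chernoff/Bernstein bound in our parameter range. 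The step I expect to be the main obstacle is precisely that $\hat v$ is itself a function of $S$, so the first term is not literally a sum of independent coordinates; I would handle this by conditioning on the high-probability event that the estimator succeeds (which pins down $\iprod{\hat v,v}$) and taking a union bound over a net of candidate outputs $\hat v$, using that $\iprod{w^{\tensor k},(v^{\tensor k})_S}$ concentrates around $\rho\iprod{w,v}^k$ for each fixed unit vector $w$. Combining the bounds gives $\iprod{\hat v^{\tensor k},\bT_S\circ G'_{\overline{S}}}\ge\rho\lambda(1-o(1))-O(\sqrt{B\log n})\ge\lambda n^{-\eps}$ except with probability $n^{-8B}$, using $\rho\ge 2n^{-\eps}$ and the hypothesis $\lambda n^{-\eps}\gg\log n$ (which also makes the estimator on the sparsified tensor succeed). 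This is the robust-inference hypothesis, so all conditions of \pref{thm:low-deg} are met and it yields the desired distinguisher.
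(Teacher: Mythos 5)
Your overall skeleton matches the paper's: instance degree $1$, well-conditioning of the refutation via \pref{cor:well-cond} giving $B = c_1 d$, entrywise subsampling with retention probability $\rho = n^{-\eps}$ so that $\rho^{D-3d} \le n^{-8B}$ yields the stated constraint on $\eps$, a Chernoff bound for the signal term and a Gaussian tail bound for the noise term. The one place you genuinely diverge is the robust-inference map, and it is the step that causes trouble. The paper does not run any estimator: it takes the inferred solution to be the planted vector $v$ itself, so the signal term $\iprod{v^{\tensor k}, \cR(\lambda v^{\tensor k})} $ is a sum of independent indicators (each entry of $v^{\tensor k}$ has magnitude exactly $n^{-k/2}$), concentrating at $\lambda\rho$ by a plain Chernoff bound, and the noise term $\iprod{v^{\tensor k}, G}$ is an honest mean-zero Gaussian of variance $1$ because $v$ is independent of $G$. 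No net, no conditioning.

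Your version has a gap beyond the obstacle you flag. You correctly note that $\lambda\iprod{\hat v^{\tensor k},(v^{\tensor k})_S}$ is not a sum of independent terms because $\hat v$ depends on $S$, and propose a net. But the same dependence infects the middle term: $\hat v$ is a function of $\bT_S$ and hence of $G_S$, so $\iprod{\hat v^{\tensor k}, G_S}$ is \emph{not} a Gaussian of variance $\le 1$ conditioned on $\hat v$, and the bound $O(\sqrt{B\log n})$ does not follow. Repairing this with the same union bound over a $2^{O(n)}$-point net of candidate $\hat v$'s forces you to control $\sup_w \abs{\iprod{w^{\tensor k}, G_S}}$, which is $\Theta(\sqrt n)$ rather than $O(\sqrt{\log n})$; your conclusion then needs $\rho\lambda \gg \sqrt n$ rather than the stated $\rho\lambda = \lambda n^{-\eps} \gg \log n$. (This happens to be harmless in the regime where the lemma is actually invoked, since hypothesis (2) of \pref{thm:low-deg} already forces $\lambda n^{-\eps} \gtrsim n^{k/4}$, but it does not prove the lemma as stated.) Separately, your estimator needs the subsampled instance to be above the recovery threshold, an extra hypothesis the lemma does not grant you. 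All of these issues evaporate if you simply let the inference map output the planted $v$, which is what the robust-inference definition is designed to permit and what the paper does.
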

\begin{proof}
    The degree of the \sos relaxation in the instance is $k = 1$.
    Since the entries of the noise component of the tensor are standard normal variables, with exponentially good probability over the input tensor $\bT$ we will have no entry of magnitude greater than $n^{d}$.
    This, together with \pref{cor:well-cond}, gives us that except with exponentially small probability the \sos proof will have no values exceeding $n^{c_1 d}$ for a fixed constant $c_1$.

    Our subsampling operation is to set to zero every entry of $\bT$ independently with probability $1-\rho$, obtaining a sub-instance $\bT'$ on the nonzero entries.
    Also, for any $\alpha \in \binom{[n]^k}{D - 3d}$,
    \[
	\Pr_{S\sim \Theta}[\alpha \in S] \le \rho^{D-3d}.
    \]

    This subsampling operation clearly preserves the planted solution unit sphere constraint.
    Additionally, let $\cR$ be the operator that restricts a tensor to the nonzero entries.
    We have that $\iprod{\cR(\lambda \cdot v^{\tensor k}),v^{\tensor k}}$ has expectation $\lambda \cdot \rho$, since every entry of $v^{\tensor k}$ has magnitude $n^{-k/2}$.
    Applying a Chernoff bound, we have that this quantity will be at least $(1-\beta)\lambda \rho$ with probability at least $n^{-10B}$ if we choose $\beta = \sqrt{\frac{10B\log n}{\lambda \rho}}$.

    It remains to address the noise introduced by $G_{\bT'}$ and resampling all the entries outside of the subinstance $\bT'$.  Each of these entries is a standard normal entry.
    The quantity $\iprod{(\Id -\cR)(N),v^{\tensor k}}$ is a sum over at most $n^k$ i.i.d. Gaussian entries each with standard deviation $n^{-k/2}$ (since that is the magnitude of $(v^{\tensor k})_{\alpha}$.
    The entire quantity is thus a Gaussian random variable with mean $0$ and variance $1$, and therefore with probability at least $n^{-10B}$ this quantity will not exceed $\sqrt{10 B \log n}$.
    So long as $\sqrt{10 B\log n} \ll \lambda \rho$, the signal term will dominate, and the solution will have value at least $\lambda\rho/2$.

    Now, we set $\rho = n^{-\eps}$ so that
    \[
	\rho^{D - 3d} \le n^{-8B} \implies \epsilon \ge \frac{2c_1 d}{D-3d},
    \]
which concludes the proof (after making appropriate adjustments to the constant $c_1$).
\end{proof}

\begin{problem}[Sparse PCA]
    Given an $n \times m$ matrix $M$ in $\R^n$, determine whether it comes from:
    \begin{compactitem}
    \item {\bf Uniform Distribution}: each entry of the matrix sampled independently from $\cN(0,1)$.
    \item {\bf Planted Distribution}: a random vector with $k$ non-zero entries $v \in \{0,\pm 1/\sqrt{k}\}^n$ is chosen, and then the $i$th column of the matrix is sampled independently by taking $s_i v + \gamma_i$ for a uniformly random sign $s_i \in \{\pm 1\}$ and a standard gaussian vector $\gamma_i \sim \cN(0,\Id)$.
    \end{compactitem}
    The canonical program for the sparse PCA problem in variables $x_1,\ldots, x_n$ is:
    \begin{align*}
	\obj &\le \|M^\top x\|_2^2\\
	x_i^2 &= x_i \quad \forall i\in [n]\\
	\|x\|_2^2 &= k
    \end{align*}
\end{problem}
\begin{lemma}\label{lem:spca-ex}
    For $k n^{-\eps/2} \gg \log n$,
    \pref{thm:low-deg} applies to the \emph{sparse PCA} problem with $\obj \le k^{2-\eps}m$ for any $\epsilon > \frac{c \cdot d}{D-6d}$ for a fixed constant $c$.
\end{lemma}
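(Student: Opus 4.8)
The plan is to follow exactly the recipe used for the previous lemmas in this section, with the tensor PCA case (\pref{lem:tpca-ex}) as the closest template, since both involve Gaussian uniform instances and a planted spike recovered through Chernoff/Gaussian concentration. First I would record that the objective $\|M^\top x\|_2^2 = \sum_i\big(\sum_j M_{ji}x_j\big)^2$ has degree $2$ in the instance variables (the entries of $M$), while the constraints $x_i^2 = x_i$ and $\|x\|_2^2 = k$ do not involve $M$; hence the instance degree of the \sos relaxation is $k_{\mathrm{inst}}=2$ (which is why the lemma carries a $D-6d$, matching the planted clique bookkeeping). As in tensor PCA, with exponentially good probability over the uniform (Gaussian) instance no entry of $M$ exceeds $n^d$ in magnitude, so \pref{cor:well-cond} gives that, except with exponentially small probability, the degree-$d$ \sos refutation has all numbers bounded by $n^{c_1 d}$ for a fixed constant $c_1$; set $B = c_1 d$.

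Next I would fix the subsampling distribution $\Theta$ to be the one that keeps each entry of $M$ independently with probability $\rho = n^{-\eps}$ (zeroing, equivalently re-randomizing, the rest), so a sub-instance is the induced sparse matrix $M_S$. A monomial $\cI_\alpha$ of instance degree $|\alpha|$ survives only if all its entries are kept, so for $|\alpha|\ge D-6d$ one gets $\Pr_{S\sim\Theta}[\alpha\subseteq S]\le\rho^{D-6d}$. Imposing the hypothesis of \pref{thm:low-deg} on the subsampling operator, $\rho^{D-6d}\le n^{-8B}$, is equivalent to $\eps\ge\tfrac{8c_1 d}{D-6d}$, which is exactly the claimed range up to the choice of the constant $c$.

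The substance is the $n^{-8B}$-robust inference verification. The Booleanity and sparsity constraints are automatically preserved under any completion $M_S\circ M'_{\overline S}$ once $x(M_S)$ is taken to be the (indicator of the) planted support vector $v$, and — since the planted $M_S$ correlates with $v$ — the Bayes-optimal map $x(M_S)$ equals this planted solution except with probability $n^{-10B}$, using the hypothesis $kn^{-\eps/2}\gg\log n$ to control the surviving $\rho$-fraction of the per-column signal. For the objective constraint one must check that re-randomizing $M_{\overline S}$ from $\udist$ leaves $\|(M_S\circ M'_{\overline S})^\top x\|_2^2 \ge k^{2-\eps}m$: the signal term is a sum of $\Theta(\rho m)$ contributions of the appropriate size, while the cross and pure-noise terms are sums of (conditionally) independent mean-zero Gaussians, and a Chernoff/Gaussian tail bound — playing the same role that $\lambda\rho\gg\log n$ plays in \pref{lem:tpca-ex} — shows the signal dominates except with probability $n^{-10B}$. (The objective is an inequality, so formally one introduces a slack variable and invokes \pref{lem:mod-randomrestriction} in place of \pref{lem:randomrestriction}, exactly as discussed in \S4.) Combining the three failure probabilities gives the robust inference bound, and all hypotheses of \pref{thm:low-deg} are then met.

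The main obstacle is this last step: tracking the planted value through two independent perturbations — the $\rho$-subsampling defining $\Theta$ and the fresh re-randomization of $M_{\overline S}$ — and showing the surviving signal still beats the accumulated Gaussian noise by the required multiplicative margin. This is precisely where the hypothesis $kn^{-\eps/2}\gg\log n$ is consumed; everything else (instance-degree computation, well-conditioning, the monomial-survival bound, the final choice of $\rho$) is routine and parallel to the earlier lemmas.
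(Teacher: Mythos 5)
Your proposal follows essentially the same route as the paper's proof: instance degree $2$, well-conditioning via \pref{cor:well-cond} with $B = c_1 d$, entry-wise subsampling with survival probability $\rho = n^{-\eps}$ giving the $\rho^{D-6d}$ monomial bound, robust inference by taking the scaled planted vector as the solution and showing the signal term $(\rho k)^2 m$ dominates the Gaussian cross and noise terms via Chernoff bounds, and the same final parameter choice. The only cosmetic differences are that the paper writes the solution directly as $y=\sqrt{k}\,v$ rather than invoking a Bayes-optimal map, and does not spell out the slack-variable remark you add.
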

\begin{proof}
    The degree of the \sos relaxation in the instance is $2$.
    Since the entries of the noise are standard normal variables, with exponentially good probability over the input matrix $M$ we will have no entry of magnitude greater than $n^{d}$.
    This, together with \pref{cor:well-cond}, gives us that except with exponentially small probability the \sos proof will have no values exceeding $n^{c_1 d}$ for a fixed constant $c_1$.

    Our subsampling operation is to set to zero every entry of $M$ independently with probability $1-\rho$, obtaining a sub-instance $M$ on the nonzero entries.
    Also, for any $\alpha \in \binom{M}{D - 6d}$,
    \[
	\Pr_{S\sim \Theta}[\alpha \in S] \le \rho^{D-6d}.
    \]

    This subsampling operation clearly preserves the constraints on the solution variables.

    We take our subinstance solution $y = \sqrt{k} v$, which is feasible.
    Let $\cR$ be the subsampling operator that zeros out a set of columns.  On subsampling, and then resampling the zeroed out columns from the uniform distribution, we can write the resulting $\tilde{M}$ as
    \[  \tilde{M}^{\top} = \cR(sv^{T}) + G^{\top}\]
    where $G^T$ is a random Gaussian matrix.  Therefore, the objective value obtained by the solution $y = \sqrt{k} v$ is

    \[
	    \tilde{M}^\top y = \sqrt{k} \cdot \cR(sv^\top)v + \sqrt{k}\cdot G^\top v
    \]
The first term is a vector $u_{signal}$ with $m$ entries, each of which is a sum of $k$ Bernoulli random variables, all of the same sign, with probability $\rho$ of being nonzero.
    The second term is a vector $u_{noise}$ with $m$ entries, each of them an independent Gaussian variable with variance bounded by $ k$.
    We have that
    \[
	\E_{\Theta}[\|u_{signal}\|_2^2] = (\rho k)^2 m,
    \]
and by Chernoff bounds we have that this concentrates within a $(1-\beta)$ factor with probability $1-n^{-10B}$ if we take $\beta = \sqrt{\frac{10B\log n}{(\rho k)^2 m}}$.

    The expectation of $\iprod{u_{signal},u_{noise}}$ is zero, and applying similar concentration arguments \Tnote{} we have that with probability $1-n^{10B}$, $|\iprod{u_{signal},u_{noise}}| \le (1+\beta)\rho k$.
    Taking the union bound over these events and applying Cauchy-Schwarz, we have that
    \[
	\|\cR(M)y\|_2^2 \ge (\rho k)^2 m - 2 k m = \rho^2 k^2 m - 2 k m.
    \]
so long as $\rho k \gg 1$, the first term dominates.

    Now, we set $\rho = n^{-\eps}$ for $\eps < 1$ so that
    \[
	\rho^{D - 6d} \le n^{-8B} \implies \epsilon \ge \frac{c_2 d}{D-6d},
    \]
for some constant $c_2$, which concludes the proof.
\end{proof}

\begin{remark} For tensor PCA and sparse PCA, the underlying distributions were Gaussian.  Applying \pref{thm:main} in these contexts yields the existence of distinguishers that are {\it low-degree} in a non-standard sense.  Specifically, the degree of a monomial will be the number of distinct variables in it, irrespective of the powers to which they are raised.
\end{remark}
\section{Exponential lower bounds for PCA problems}
\label{sec:pca}
In this section we give an overview of the proofs of our \sos lower bounds for the tensor and sparse PCA problems.
We begin by showing how Conjecture \ref{conj:main-conjecture} predicts such a lower bound in the tensor PCA setting.
Following this we state the key lemmas to prove the exponential lower bounds; since these lemmas can be proved largely by techniques present in the work of Barak et al. on planted clique \cite{DBLP:conf/focs/BarakHKKMP16}, we leave the details to a forthcoming full version of the present paper.
\Snote{}

\subsection{Predicting sos lower bounds from low-degree distinguishers for Tensor PCA}
In this section we demonstrate how to predict using Conjecture~\ref{conj:main-conjecture} that when $\lambda \ll n^{3/4 - \e}$ for $\e > 0$, \sos algorithms cannot solve Tensor PCA.
This prediction is borne out in Theorem~\ref{thm:tpca-intro}.

\begin{theorem}
Let $\mu$ be the distribution on $\R^{n \tensor n \tensor n}$ which places a standard Gaussian in each entry.
  Let $\nu$ be the density of the Tensor PCA planted distribution with respect to $\mu$, where we take the planted vector $v$ to have each entry uniformly chosen from $\{ \pm \tfrac 1 {\sqrt n} \}$.\footnote{This does not substantially modify the problem but it will make calculations in this proof sketch more convenient.}
  If $\lambda \leq n^{3/4 - \e}$, there is no degree $n^{o(1)}$ polynomial $p$ with
  \[
  \E_\mu p(A) = 0, \quad \E_{\text{planted}} p(A) \geq n^{\Omega(1)} \cdot \Paren{\Var_{\mu} p(A)}^{1/2}\mper
  \]
\end{theorem}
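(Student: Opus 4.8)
The plan is to treat this as the standard low-degree likelihood-ratio calculation. For a degree bound $D$, the quantity $\E_{\text{planted}} p/(\Var_\mu p)^{1/2}$, maximized over mean-zero polynomials $p$ of degree at most $D$, is exactly $\norm{\nu^{\le D}-1}_\mu$, where $\iprod{\cdot,\cdot}_\mu$ and $\norm{\cdot}_\mu$ denote the $L^2(\mu)$ inner product and norm and $\nu^{\le D}$ is the projection of $\nu$ onto polynomials of degree $\le D$. Indeed, if $\E_\mu p = 0$ and $\deg p \le D$ then $\E_{\text{planted}} p = \iprod{p,\nu}_\mu = \iprod{p,\nu^{\le D}-1}_\mu$ while $\Var_\mu p = \norm{p}_\mu^2$, so Cauchy--Schwarz gives $\E_{\text{planted}} p \le \norm{\nu^{\le D}-1}_\mu\,(\Var_\mu p)^{1/2}$, with equality for $p\propto \nu^{\le D}-1$. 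Hence it suffices to show $\norm{\nu^{\le D}-1}_\mu = o(1)$ whenever $\lambda \le n^{3/4-\e}$ and $D \le n^{c\e}$ for a small enough absolute constant $c>0$ (which subsumes every $D = n^{o(1)}$), since this is far stronger than ruling out distinguishing advantage $n^{\Omega(1)}$.

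Next I would write down the Hermite expansion of $\nu$. The planted tensor is $\lambda v^{\tensor 3}+G$ with $G$ having i.i.d.\ $\cN(0,1)$ entries and $\norm{v^{\tensor 3}} = 1$, so $\nu(A) = \E_v\exp\big(\lambda\iprod{A,v^{\tensor 3}}-\tfrac{\lambda^2}{2}\big)$; substituting $t = \lambda v^{\tensor 3}$ into the identity $\exp\big(\iprod{t,A}-\tfrac12\norm{t}^2\big) = \sum_\alpha t^\alpha H_\alpha(A)/\sqrt{\alpha!}$, where $\{H_\alpha\}$ is the Hermite basis orthonormal in $L^2(\mu)$ indexed by multi-indices on the $n^3$ tensor entries, and averaging over $v$ yields
\[
  \nu = \sum_\alpha \frac{\lambda^{|\alpha|}}{\sqrt{\alpha!}}\,\E_v\big[(v^{\tensor 3})^\alpha\big]\,H_\alpha,
  \qquad\text{so}\qquad
  \norm{\nu^{\le D}-1}_\mu^2 = \sum_{1\le |\alpha|\le D}\frac{\lambda^{2|\alpha|}}{\alpha!}\,\E_v\big[(v^{\tensor 3})^\alpha\big]^2 .
\]
Since the $v_i$ are independent signs scaled by $1/\sqrt n$, $\E_v[(v^{\tensor 3})^\alpha]$ equals $n^{-3|\alpha|/2}$ when every coordinate of $[n]$ appears with even total multiplicity across the triples recorded by $\alpha$ — i.e.\ $\alpha$ describes an Eulerian $3$-uniform multi-hypergraph — and is $0$ otherwise.

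The last step is combinatorial bookkeeping. Grouping the surviving terms by $m=|\alpha|$: such an $\alpha$ has total coordinate-degree $3m$ with every used coordinate of even degree $\ge 2$, hence at most $3m/2$ coordinates, so the number of admissible $\alpha$ with $|\alpha|=m$ is at most $n^{3m/2}\cdot m^{O(m)}$ (choose the coordinates, then the shape), each contributing at most $\lambda^{2m}n^{-3m}$. Therefore
\[
  \norm{\nu^{\le D}-1}_\mu^2 \;\le\; \sum_{m=1}^{D} m^{O(m)}\big(\lambda^2 n^{-3/2}\big)^m \;\le\; \sum_{m=1}^{D}\big(m^{O(1)}\,n^{-2\e}\big)^m ,
\]
using $\lambda^2 \le n^{3/2-2\e}$. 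For $D \le n^{c\e}$ with $c$ small enough that $m^{O(1)}\le n^{\e}$ for all $m\le D$, the sum is at most $\sum_{m\ge1} n^{-\e m} = O(n^{-\e}) = o(1)$, as needed. The step to watch is this uniform count $n^{3m/2}m^{O(m)}$ — in particular keeping the high-multiplicity multi-indices ($\alpha_{ijk}\ge 2$, which are few but carry large $\lambda$ powers) under control, where the crude estimate $1/\alpha!\le 1$ already suffices; this is the scalar shadow of the graph-matrix norm bounds of \cite{DBLP:conf/focs/BarakHKKMP16}, but with no positive-semidefiniteness to establish it stays entirely elementary.
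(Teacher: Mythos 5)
Your proposal is correct and follows essentially the same route as the paper's proof: the same variational characterization identifying the optimal distinguisher with the low-degree projection $\nu^{\le D}-1$ of the likelihood ratio, the same Hermite-coefficient formula (nonzero exactly on even-degree $3$-uniform hypergraphs, with value $\lambda^{|\alpha|}n^{-3|\alpha|/2}$), and the same $n^{3m/2}m^{O(m)}$ count of admissible index sets. If anything, you are slightly more complete than the paper's sketch, which defers the high-multiplicity terms "for the sake of exposition," whereas your $1/\alpha!\le 1$ bound handles them directly.
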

We sketch the proof of this theorem.
The theorem follows from two claims.
\begin{claim}\label{clm:tpca-sketch-1}
\begin{align}\label{eq:tpca-predict}
  \max_{\substack{\deg p \leq n^{o(1)} \\, \E_\mu p(T) = 0} } \frac{\E_{ \nu} p(T)}{\Paren{\E_\mu p(T)^2}^{1/2}} = (\E_\mu (\nu^{\leq d}(T) - 1)^2)^{1/2}
\end{align}
  where $\nu^{\leq d}$ is the orthogonal projection (with respect to $\mu$) of the density $\nu$ to the degree-$d$ polynomials.
  Note that the last quantity is just the $2$ norm, or the variance, of the truncation to low-degree polynomials of the density $\nu$ of the planted distribution.
\end{claim}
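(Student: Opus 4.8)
\textbf{Proof proposal for Claim~\ref{clm:tpca-sketch-1}.}
The plan is to read the left-hand side as a Hilbert-space projection identity in $L^2(\mu)$. Work in the space of square-integrable functions of the tensor $T$ with inner product $\langle f,g\rangle_\mu = \E_{T\sim\mu}[f(T)g(T)]$; since $\mu$ is a product of standard Gaussians, the (tensor-product) Hermite polynomials form an orthonormal basis, and for each $d$ the degree-$\le d$ polynomials form a finite-dimensional subspace $V_d$, with $\nu^{\le d}$ the $\langle\cdot,\cdot\rangle_\mu$-orthogonal projection of $\nu$ onto $V_d$. First I would rewrite the numerator using that $\nu$ is \emph{defined} to be the density of the planted distribution with respect to $\mu$: for any $p$, $\E_{\text{planted}} p(T) = \E_{T\sim\mu}[\nu(T)p(T)] = \langle \nu, p\rangle_\mu$. (This is the one place any analytic care is needed: one should check $\nu\in L^2(\mu)$, equivalently that the low-degree Hermite coefficients of $\nu$ are finite, which holds for the spiked-tensor density in the regime $\lambda\le n^{3/4-\e}$ and is in any case subsumed by the estimates used to bound $\E_\mu(\nu^{\le d}-1)^2$.)

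Next, since $p\in V_d$, only the projection of $\nu$ onto $V_d$ contributes: $\langle\nu,p\rangle_\mu = \langle\nu^{\le d},p\rangle_\mu$. The constraint $\E_\mu p(T)=0$ says $p$ is orthogonal to the constants, and the projection of $\nu$ onto the constants equals $\E_\mu[\nu]=1$ because $\nu$ is a density; hence $\langle\nu^{\le d},p\rangle_\mu = \langle\nu^{\le d}-1,\,p\rangle_\mu$, where $\nu^{\le d}-1$ is exactly the projection of $\nu-1$ onto the degree-$\le d$ mean-zero polynomials. So the objective equals $\langle\nu^{\le d}-1,p\rangle_\mu/\|p\|_\mu$. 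By Cauchy--Schwarz this is at most $\|\nu^{\le d}-1\|_\mu = (\E_\mu(\nu^{\le d}(T)-1)^2)^{1/2}$, which gives the ``$\le$'' direction. For the matching ``$\ge$'' direction, take $p^\star := \nu^{\le d}-1$ itself: it lies in $V_d$ and has $\E_\mu p^\star = 0$, so it is feasible, and it attains objective value $\|\nu^{\le d}-1\|_\mu^2/\|\nu^{\le d}-1\|_\mu = \|\nu^{\le d}-1\|_\mu$ (the case $\nu^{\le d}=1$ being trivial, with both sides $0$). Choosing $d=n^{o(1)}$ to match the degree restriction on the left then yields the claimed equality.

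I do not expect a genuine obstacle in this claim: once the objects are set up in $L^2(\mu)$ it is a single application of Cauchy--Schwarz together with the observation that the extremizer $p^\star$ is itself admissible. The real work in deriving the tensor PCA lower bound lies in the companion estimate, which shows $\E_\mu(\nu^{\le d}(T)-1)^2$ is $n^{o(1)}$ (not $n^{\Omega(1)}$) whenever $\lambda\le n^{3/4-\e}$ and $d=n^{o(1)}$; that is where the combinatorial structure of the Hermite coefficients of the spiked-tensor density enters. For the present claim the only points warranting a line of justification are the well-definedness of $\nu^{\le d}$ (finiteness of the relevant low-degree moments of $\nu$) and the feasibility of $p^\star$.
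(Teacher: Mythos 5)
Your proposal is correct and follows essentially the same route as the paper's proof sketch: rewrite the planted expectation as the inner product $\langle \nu, p\rangle_\mu$ in $L^2(\mu)$ with the Hermite basis, identify the extremizer as the projection $\nu^{\le d}-1$ of $\nu-1$ onto degree-$d$ polynomials, and evaluate. The only difference is that you make the Cauchy--Schwarz upper-bound direction explicit where the paper simply invokes ``standard Fourier analysis,'' which is a reasonable bit of added rigor but not a different argument.
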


\begin{claim}\label{clm:tpca-sketch-2}
  $(\E_\mu (v^{\leq d}(T) - 1)^2)^{1/2} \ll 1$ when $\lambda \leq n^{3/4 - \e}$ for $\e \geq \Omega(1)$ and $d = n^{o(1)}$.
\end{claim}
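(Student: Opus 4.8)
The plan is to compute the degree-$\le d$ truncation of the planted density $\nu$ explicitly in the Gaussian Hermite basis and bound its $L^2(\mu)$ norm by a convergent geometric series. First I would write down $\nu$: by the change-of-measure formula for a Gaussian shift, and using that $\|v^{\otimes 3}\|^2=\|v\|^6=1$ for $v\in\{\pm\tfrac1{\sqrt n}\}^n$,
\[
\nu(T)=\E_{v}\exp\!\Paren{\lambda\iprod{T,v^{\otimes 3}}-\tfrac{\lambda^2}{2}}\mper
\]
For a fixed unit vector $w:=v^{\otimes 3}\in\R^{n^3}$, the multivariate Hermite generating identity $\exp(\lambda\iprod{T,w}-\tfrac{\lambda^2}{2}\|w\|^2)=\sum_{\alpha}\frac{(\lambda w)^\alpha}{\alpha!}H_\alpha(T)$, where $H_\alpha$ is the product Hermite polynomial normalized so that $\E_\mu H_\alpha^2=\alpha!$, gives after averaging over $v$ that $\nu^{\le d}(T)=\sum_{|\alpha|\le d}\frac{\lambda^{|\alpha|}}{\alpha!}(\E_v w^\alpha)\,H_\alpha(T)$. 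The $\alpha=\emptyset$ term is exactly $1$, so by Parseval
\[
\E_\mu\Paren{\nu^{\le d}(T)-1}^2=\sum_{1\le|\alpha|\le d}\frac{\lambda^{2|\alpha|}}{\alpha!}\Paren{\E_v w^\alpha}^2\mper
\]

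The next step is to evaluate $\E_v w^\alpha$ and turn the sum into a combinatorial count. Writing $d_i(\alpha)$ for the number of times coordinate $i\in[n]$ occurs among the triples indexing $\alpha$ (with multiplicity), independence of the $v_i$ together with $\E v_i^{2m}=n^{-m}$ and $\E v_i^{2m+1}=0$ give $\E_v w^\alpha=n^{-3|\alpha|/2}$ when every $d_i(\alpha)$ is even (call such $\alpha$ \emph{even}; here $\sum_i d_i(\alpha)=3|\alpha|$), and $\E_v w^\alpha=0$ otherwise. Grouping surviving terms by $t=|\alpha|$ and using that $\sum_{|\alpha|=t}\frac{t!}{\alpha!}\Ind[\alpha\text{ even}]=M_t$, the number of ordered $t$-tuples of triples in $[n]^3$ in which every coordinate of $[n]$ occupies an even number of the $3t$ slots, I obtain
\[
\E_\mu\Paren{\nu^{\le d}-1}^2=\sum_{t=1}^{d}\lambda^{2t}\,n^{-3t}\,\frac{M_t}{t!}\mper
\]
To bound $M_t$: any such labeling partitions the $3t$ slots into monochromatic classes of even size (hence at most $3t/2$ of them) and assigns distinct labels to the classes, so $M_t\le B_{3t}\cdot n^{3t/2}\le (3t)^{3t}\,n^{3t/2}$ with $B_{3t}$ the Bell number.

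Finally I would bound the series term by term. Using $t!\ge(t/e)^t$ and $\lambda\le n^{3/4-\e}$,
\[
\lambda^{2t} n^{-3t}\frac{M_t}{t!}\le \lambda^{2t} n^{-3t/2}\frac{(3t)^{3t}}{t!}\le\Paren{27e\, t^2\,\lambda^2 n^{-3/2}}^t\le\Paren{27e\, t^2\, n^{-2\e}}^t\mper
\]
Since $t\le d=n^{o(1)}$ we have $27e\,t^2\le n^{o(1)}$, so every term is at most $n^{-2\e+o(1)}<\tfrac12$ for $n$ large (here $\e=\Omega(1)$ is used), and summing the geometric series over $t\ge1$ gives $\E_\mu(\nu^{\le d}-1)^2\le O(n^{-2\e+o(1)})\ll1$; taking square roots proves the claim (and, combined with Claim~\ref{clm:tpca-sketch-1}, the theorem). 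The identical computation with $3$ replaced by $k$ throughout handles general order-$k$ tensors at the threshold $\lambda\le n^{k/4-\e}$.

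The only real obstacle is the combinatorial estimate on $M_t$, and here the crude Bell-number bound is more than enough precisely because we only need to reach degree $d=n^{o(1)}$, which leaves the huge slack $n^{-2\e+o(1)}$. In contrast, the actual \sos lower bound of Theorem~\ref{thm:tpca-intro}---where one must show positivity of a candidate pseudoexpectation's moment matrix up to degree $d=n^{\Omega(\e)}$---needs a substantially more delicate, "hypergraph" analogue of the moment-matrix factorization of \cite{DBLP:conf/focs/BarakHKKMP16}; the present claim is only the prediction that Conjecture~\ref{conj:main-conjecture} extracts from this same Hermite / even-subgraph structure.
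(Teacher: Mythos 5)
Your proposal is correct and follows essentially the same route as the paper's sketch: expand $\nu$ in the Hermite basis, observe that only even-degree hypergraphs contribute, bound the number of contributing terms at level $t$ by $n^{3t/2}\cdot t^{O(t)}$ (your Bell-number count is exactly the paper's ``at most $3t/2$ nodes, then $t^{O(t)}$ hypergraphs on them''), and sum the resulting geometric series using $\lambda^2 n^{-3/2}\le n^{-2\e}$ and $d=n^{o(1)}$. The only difference is that you carry out the multiplicity bookkeeping via $\alpha!$ and $M_t$ explicitly, whereas the paper restricts to multiplicity-one $W$ for exposition; this is a harmless completion of the same argument.
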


The theorem follows immediately.
We sketch proofs of the claims in order.

\begin{proof}[Sketch of proof for Claim~\ref{clm:tpca-sketch-1}]
By definition of $\nu$, the maximization is equivalent to maximizing $\E_\mu \nu(T) \cdot p(T)$ among all $p$ of degree $d =n^{o(1)}$ and with $\E_\mu p(T)^2 = 1$ and $\E_\mu p(T) = 0$.
Standard Fourier analysis shows that this maximum is achieved by the orthogonal projection of $\nu - 1$ into the span of degree $d$ polynomials.

To make this more precise, recall that the Hermite polynomials provide an orthonormal basis for real-valued polynomials under the multivariate Gaussian distribution.
(For an introduction to Hermite polynomials, see the book \cite{DBLP:books/daglib/0033652}.)
The tensor $T \sim \mu$ is an $n^3$-dimensional multivariate Gaussian.
For a (multi)-set $W \subseteq [n]^3$, let $H_W$ be the $W$-th Hermite polynomial, so that $\E_\mu H_W(T) H_{W'}(T) = \Ind_{W = W'}$.

Then the best $p$ (ignoring normalization momentarily) will be the function
\[
  p(A) = \nu^{\leq d}(A)-1 = \sum_{1 \leq |W| \leq d} (\E_{T \sim \mu} \nu(T)H_W(T)) \cdot H_W(A)
\]
Here $\E_\mu \nu(T) H_W(T) = \widehat{\nu}(W)$ is the $W$-th Fourier coefficient of $\nu$.
What value for \eqref{eq:tpca-predict} is achieved by this $p$?
Again by standard Fourier analysis, in the numerator we have,
\[
  \E_\nu p(T) = \E_\nu (\nu^{\leq D}(T)-1) = \E_\mu \nu(T) \cdot (\nu^{\leq D}(T)-1) = \E_\mu (\nu^{\leq d}(T)-1)^2
\]
Comparing this to the denominator, the maximum value of \eqref{eq:tpca-predict} is $(\E_\mu (v^{\leq d}(T)-1)^2)^{1/2}$.
This is nothing more than the $2$-norm of the projection of $\nu - 1$ to degree-$d$ polynomials!
\end{proof}

The following fact, used to prove Claim~\ref{clm:tpca-sketch-2}, is an elementary computation with Hermite polynomials.
\begin{fact}
  Let $W \subseteq [n]^3$.
  Then $\widehat{\nu}(W) = \lambda^{|W|} n^{-3|W|/2} $ if $W$, thought of as a $3$-uniform hypergraph, has all even degrees, and is $0$ otherwise.
\end{fact}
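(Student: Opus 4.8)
The plan is to compute the Hermite--Fourier coefficient $\widehat{\nu}(W) = \E_{T\sim\mu}[\nu(T)\,H_W(T)]$ directly, by conditioning on the planted vector $v$, reducing to a one-line Hermite identity, and then averaging over $v$ at the very end. First I would record the conditional density: for a fixed sign vector $v \in \{\pm\tfrac{1}{\sqrt n}\}^n$ the planted tensor is $T \sim \cN(\lambda v^{\tensor 3},\Id_{n^3})$, and since every coordinate of $v$ has square $\tfrac1n$ we have $\|v\|_2^2 = 1$ and hence $\|v^{\tensor 3}\|_2^2 = \|v\|_2^6 = 1$. Thus the density of this conditional law relative to $\mu$ is $\nu_v(T) = \exp\big(\lambda\iprod{v^{\tensor 3},T} - \tfrac{\lambda^2}{2}\big)$, and $\nu = \E_v\nu_v$.

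The second step is the Hermite generating-function identity. Starting from $\sum_{m\ge0} h_m(x)\,t^m/\sqrt{m!} = e^{tx-t^2/2}$ for the orthonormal one-dimensional Hermite polynomials $h_m$, one gets $\E_{x\sim\cN(0,1)}[e^{tx-t^2/2}h_m(x)] = t^m/\sqrt{m!}$, and taking a product over the $n^3$ coordinates (independent under $\mu$) yields, for any $c\in\R^{n^3}$ and multi-index $W$ with coordinate multiplicities $W_e$,
\[
\E_{T\sim\mu}\Big[\exp\big(\iprod{c,T}-\tfrac12\|c\|_2^2\big)\,H_W(T)\Big] = \prod_e \frac{c_e^{W_e}}{\sqrt{W_e!}}\mper
\]
I would then specialize to $c = \lambda v^{\tensor 3}$, for which $\|c\|_2^2 = \lambda^2$, to conclude $\E_{T\sim\mu}[\nu_v(T)H_W(T)] = \lambda^{|W|}\prod_e (v^{\tensor 3})_e^{W_e}/\sqrt{W_e!}$.

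Third, I would do the bookkeeping over $v$. Writing a coordinate as $e=(a,b,c)\in[n]^3$ with $(v^{\tensor 3})_e = v_a v_b v_c$ and collecting powers vertex by vertex, $\prod_e (v^{\tensor 3})_e^{W_e} = \prod_{i\in[n]} v_i^{d_i}$, where $d_i$ is the degree of vertex $i$ in $W$ viewed as a $3$-uniform hypergraph (counting edge multiplicities and repeated incidences within an edge); since each hyperedge contributes $3$ to the total degree, $\sum_i d_i = 3|W|$. The coordinates $v_i$ are independent with $v_i^{2\ell} = n^{-\ell}$ deterministically and all odd moments equal to $0$, so $\E_v\prod_i v_i^{d_i}$ equals $n^{-\sum_i d_i/2} = n^{-3|W|/2}$ when every $d_i$ is even and $0$ otherwise. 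Combining this with $\nu = \E_v\nu_v$ and the displayed identity gives exactly $\widehat\nu(W) = \lambda^{|W|}n^{-3|W|/2}$ when $W$ has all even vertex degrees and $0$ otherwise (the normalization factors $\prod_e 1/\sqrt{W_e!}$ equal $1$ on the multilinear $W$, which are the only ones relevant to the subsequent application).

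I do not expect a genuine obstacle here: the content is an elementary Hermite computation. The only points that require care are the choice of Hermite normalization (the $\sqrt{W_e!}$ factors, irrelevant for multilinear $W$) and the small combinatorial observation $\sum_i \deg_W(i) = 3|W|$ together with the parity argument that kills every $W$ with an odd-degree vertex.
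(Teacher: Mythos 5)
Your computation is correct and is exactly the ``elementary computation with Hermite polynomials'' that the paper alludes to: the paper itself only notes that $\widehat{\nu}(W) = \E_{\nu} H_W(T)$ and leaves the rest implicit, while you carry out that same route in full (condition on $v$, apply the Hermite generating-function identity to $\nu_v(T)=\exp(\lambda\iprod{v^{\tensor 3},T}-\lambda^2/2)$, then average the resulting monomial $\prod_i v_i^{d_i}$ over the random signs). The parity argument and the count $\sum_i \deg_W(i)=3|W|$ match the intended proof, and your remark that the $\sqrt{W_e!}$ normalizations are trivial for multilinear $W\subseteq[n]^3$ is the right observation.
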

To see that this calculation is straightforward, note that
$\widehat{\nu(W)} = \E_\mu \nu(T) H_W(T) = \E_{\nu} H_W(T)$,
so it is enough to understand the expectations of the Hermite polynomials under the planted distribution.

\begin{proof}[Sketch of proof for Claim~\ref{clm:tpca-sketch-2}]
  Working in the Hermite basis (as described above), we get $\E_\mu (v^{\leq d}(T)-1)^2 = \sum_{1 \leq |W| \leq d} \widehat{\nu}(W)^2$.
For the sake of exposition, we will restrict attention in the sum to $W$ in which no element appears with multiplicity larger than $1$ (other terms can be treated similarly).

What is the contribution to $\sum_{1 \leq |W| \leq d} \widehat{\nu}(W)^2$ of terms $W$ with $|W| = t$?
By the fact above, to contribute a nonzero term to the sum, $W$,considered as a $3$-uniform hypergraph must have even degrees.
So, if it has $t$ hyperedges, it contains at most $3t/2$ nodes.
  There are $n^{3t/2}$ choices for these nodes, and having chosen them, at most $t^{O(t)}$ $3$-uniform hypergraphs on those nodes.
  Hence,
  \[
    \sum_{1 \leq |W| \leq d} \widehat{\nu}(W)^2 \leq \sum_{t = 1}^d n^{3t/2} t^{O(t)} \lambda^{2t} n^{-3t}\mper
  \]
So long as $\lambda^2 \leq n^{3/2 - \e}$ for some $\e = \Omega(1)$ and $t \leq d \leq n^{O(\epsilon)}$, this is $o(1)$.
\end{proof}

\subsection{Main theorem and proof overview for Tensor PCA}\label{sec:tpca}
In this section we give an overview of the proof of Theorem~\ref{thm:tpca-intro}.
The techniques involved in proving the main lemmas are technical refinements of techniques used in the work of Barak et al. on \sos lower bounds for planted clique \cite{DBLP:conf/focs/BarakHKKMP16}; we therefore leave full proofs to a forthcoming full version of this paper.

To state and prove our main theorem on tensor PCA it is useful to define a Boolean version of the problem.
For technical convenience we actually prove an \sos lower bound for this problem; then standard techniques (see Section~\ref{sec:reduce-to-hypercube}) allow us to prove the main theorem for Gaussian tensors.
\begin{problem}[$k$-Tensor PCA, signal-strength $\lambda$, boolean version]\label{prob:tpca}
  Distinguish the following two distributions on $\Omega_k \defeq \{ \pm 1\}^{n \choose k}$.
  \begin{itemize}
    \item \emph{the uniform distribution: } $A \sim \Omega$ chosen uniformly at random.
    \item \emph{the planted distribution: } Choose $v \sim \{\pm 1\}^n$ and let $B = v^{\tensor k}$.
      Sample $A$ by rerandomizing every coordinate of $B$ with probability $1 - \lambda n^{-k/2}$.
  \end{itemize}
\end{problem}

We show that the natural SoS relaxation of this problem suffers from a large integrality gap, when $\lambda$ is slightly less than  $n^{k/4}$, even when the degree of the SoS relaxation is $n^{\Omega(1)}$.
(When $\lambda \gg n^{k/4 - \e}$, algorithms with running time $2^{n^{O(\e)}}$ are known for $k = O(1)$ \cite{DBLP:conf/nips/RichardM14, DBLP:conf/colt/HopkinsSS15, DBLP:conf/stoc/HopkinsSSS16, DBLP:journals/corr/BhattiproluGL16, DBLP:journals/corr/RaghavendraRS16}.)

\begin{theorem}\label{thm:tpca-main}
  Let $k = O(1)$.
  For $A \in \Omega_k$, let
  \[
    SoS_d(A) \defeq \max_{\pE} \, \pE \iprod{x^{\tensor k}, A} \text{ s.t. $\pE$ is a degree-$d$ pseudoexpectation satisfying $\{ \|x\|^2 = 1 \}$}\mper
  \]
  There is a constant $c$ so that for every small enough $\epsilon > 0$, if $d \leq n^{c \cdot \epsilon}$, then for large enough $n$,
  \[
    \Pr_{A \sim \Omega} \{ SoS_d(A) \geq n^{k/4 - \epsilon} \} \geq 1 - o(1)
  \]
  and
  \[
    \E_{A \sim \Omega} SoS_d(A) \geq n^{k/4 - \epsilon}\mper
  \]
  Moreover, the latter also holds for $A$ with iid entries from $\cN(0,1)$.\footnote{For technical reasons we do not prove a tail bound type statement for Gaussian $A$, but we conjecture that this is also true.}
\end{theorem}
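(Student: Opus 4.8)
The plan is to run a pseudocalibration argument in the style of Barak et al.~\cite{DBLP:conf/focs/BarakHKKMP16}, replacing the planted-clique graph combinatorics with $k$-uniform hypergraph combinatorics. Fix $\lambda \defeq k!\cdot n^{k/4-\epsilon}$, let $\tau \defeq n^{c'\epsilon}$ be a Fourier-truncation parameter for a small constant $c'>0$, and take the \sos degree $d \le n^{c\epsilon}$ for $c\ll c'$ (with $\tau$ polynomially related to $d$). For a monomial $x^\alpha$ with $|\alpha|\le d$ I would define a candidate pseudoexpectation by pseudocalibrating against the boolean planted distribution of \pref{prob:tpca} at signal strength $\lambda$:
\[
  \pE[x^\alpha](A) \defeq \sum_{\substack{\beta \subseteq \binom{[n]}{k}\\ |\beta|\le \tau}} \widehat{\cE}_\alpha(\beta)\cdot \chi_\beta(A), \qquad \chi_\beta(A)\defeq \prod_{S\in\beta}A_S,
\]
where $\widehat{\cE}_\alpha(\beta)\defeq \E_{(v,A')\sim\mathrm{planted}}\big[(v/\sqrt n)^\alpha\,\chi_\beta(A')\big]$. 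A one-line computation (the hypercube analogue of the Hermite fact in \pref{sec:tpca}) gives $\widehat{\cE}_\alpha(\beta)=n^{-|\alpha|/2}(\lambda n^{-k/2})^{|\beta|}$ when the multiset $\alpha \uplus \bigcup_{S\in\beta}S$ has every vertex-multiplicity even, and $0$ otherwise; thus $\pE$ is supported on the hypergraph shapes that ``pair up'' with $\alpha$ to leave even degrees everywhere.

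The routine verifications come next. \emph{Objective value:} only the single-edge term $\beta=\{S\}$ in $\pE[x_S]$ has nonzero mean over uniform $A$, contributing $\lambda n^{-k}A_S$, so $\pE[\iprod{x^{\tensor k},A}]=\sum_{|S|=k}A_S\,\pE[x_S](A)$ has mean $\lambda n^{-k}\binom nk=(1+o(1))\lambda/k!=(1+o(1))n^{k/4-\epsilon}$ over uniform $A$; the remaining contribution is a mean-zero low-degree polynomial in $A$ whose variance a second-moment estimate (parallel to \pref{clm:tpca-sketch-2}) bounds by $o((\lambda/k!)^2)$, yielding both the high-probability and the in-expectation lower bounds after renaming $\epsilon$. \emph{Constraints:} because the boolean planted vector satisfies $\|v/\sqrt n\|^2=1$ identically, the true moments obey $\E_v[(v/\sqrt n)^\alpha\|v/\sqrt n\|^2]=\E_v[(v/\sqrt n)^\alpha]$ exactly, an identity preserved Fourier-coefficient-wise under truncation, so $\pE[x^\alpha(\|x\|^2-1)]=0$ for $|\alpha|\le d-2$. \emph{Normalization:} $\pE[1](A)$ is the degree-$\le\tau$ truncation of the planted density, and since $\lambda\le n^{k/4-\Omega(\epsilon)}$ the computation of \pref{clm:tpca-sketch-2} gives $\E_{\mathrm{unif}}(\pE[1]-1)^2=o(1)$; a small additive correction (available once PSD-ness is established with slack) makes $\pE[1]=1$ and all constraints hold exactly.

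The heart of the proof is to show that the moment matrix $M(A)$, indexed by monomials of degree $\le d/2$ with $M[\alpha,\beta]=\pE[x^{\alpha+\beta}](A)$, satisfies $M(A)\succeq 0$ with high probability over $A\sim\Omega_k$. Its expectation $M_0\defeq \E_{\mathrm{unif}}M$ equals the genuine moment matrix $\E_{v\in\{\pm1\}^n}\big[(v/\sqrt n)^{\le d/2}\,((v/\sqrt n)^{\le d/2})^{\top}\big]$, which is PSD and, after passing to the quotient by the ideal generated by $\|x\|^2-1$, well conditioned on the relevant subspace (least eigenvalue $n^{-O(d)}$, the only small directions being exactly the ones the constraint kills). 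It then suffices to control the fluctuation $\|M-M_0\|$. Following \cite{DBLP:conf/focs/BarakHKKMP16} I would expand $M-M_0=\sum_{\sigma}\lambda^{|E(\sigma)|}\,c_\sigma\,M_\sigma$ over nontrivial hypergraph shapes $\sigma$, where $M_\sigma[\alpha,\beta]=\chi_{E(\sigma)}(A)$ on pairs realizing $\sigma$ and $0$ elsewhere and $c_\sigma$ collects the $n$-powers, use the standard factorization of each $M_\sigma$ through its minimum vertex separator, and bound the resulting central norm by the trace moment method, which reduces to counting closed walks on shapes weighted by Fourier characters of a uniform $A$. Summing over all shapes of size $\le d$, the $\lambda^{|E(\sigma)|}$ factors together with $\lambda=\Theta(n^{k/4-\epsilon})$ and $d\le n^{c\epsilon}$ make the total $\ll n^{-O(d)}$, so $M=M_0+(M-M_0)\succeq 0$ on the relevant subspace. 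This step---the hypergraph generalization of the planted-clique matrix analysis, and in particular showing that an $n^{\Omega(\epsilon)}$-degree truncation keeps every shape-norm under control despite the different combinatorial structure of the Fourier coefficients---is the main obstacle and carries essentially all of the technical weight; it is exactly what obstructs a black-box reduction to \cite{DBLP:conf/focs/BarakHKKMP16}.

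Finally, to obtain the Gaussian statement I would invoke the hypercube-to-Gaussian transfer of \pref{sec:reduce-to-hypercube}: a standard argument (writing each Gaussian coordinate in terms of signs and applying the boolean bound, or directly comparing the two pseudoexpectation operators) gives $\E_A SoS_d(A)\ge(1-o(1))\,\E_{A\sim\Omega_k}SoS_d(A)$ for $A$ with iid $\cN(0,1)$ entries, which combined with the boolean bound---and one last adjustment of $\epsilon$ to absorb the factor $1/k!$ and the $o(1)$ losses---yields \pref{thm:tpca-main}, and hence, via the same reduction, \pref{thm:tpca-intro}.
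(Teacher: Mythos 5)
Your architecture (pseudocalibrate against the boolean planted distribution, truncate the Fourier expansion, check that the constraint $\|x\|^2=1$ and the objective survive truncation, transfer to Gaussians at the end) matches the paper's, and those routine parts are fine. The genuine gap is in the step you correctly identify as the heart. You propose to prove $M\succeq 0$ by writing $M=M_0+E$ with $M_0=\E_{\mathrm{unif}}M$ the exact moment matrix of the uniform distribution on $\{\pm 1/\sqrt n\}^n$, noting $\lambda_{\min}(M_0)\ge n^{-O(d)}$ on the non-null space, and claiming $\|E\|\ll n^{-O(d)}$. That norm bound is false. By your own formula, the single-hyperedge term in the $(\emptyset,\alpha)$ entries with $|\alpha|=k$ is $\widehat{\cE}_\alpha(\{\alpha\})A_\alpha=\lambda n^{-k}A_\alpha$, so that one row of $E$ already has $\ell_2$-norm $\lambda n^{-k}\cdot\Theta(n^{k/2})=\Theta(\lambda n^{-k/2})=\Theta(n^{-k/4-\epsilon})$; hence $\|E\|\ge n^{-k/4-\epsilon}$, a fixed polynomial in $n$, while $\lambda_{\min}(M_0)\approx n^{-\Theta(d)}$ is super-polynomially small in the regime $d\le n^{c\epsilon}$ where the theorem has content. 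No rescaling rescues this: conjugating by $\mathrm{diag}(M_0)^{-1/2}$ makes $M_0$ the identity but inflates that same row to norm $\lambda\gg 1$. Even a block-aware comparison fails: the Schur complement of the $\emptyset$ entry against the degree-$k$ diagonal block of $M_0$ alone (which is $n^{-k}\Id$) requires $\sum_{|\alpha|=k}(\lambda n^{-k})^2\cdot n^{k}\le 1$, i.e.\ $\lambda\lesssim 1$, whereas $\lambda=\Theta(n^{k/4-\epsilon})$.

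The statement is nonetheless true because positivity comes from correlations \emph{within} the fluctuation, not from its smallness: the degree-$k$ diagonal block of $M$ itself contains the term $\widehat{\cE}_{\alpha+\alpha'}(\{\alpha,\alpha'\})A_\alpha A_{\alpha'}=n^{-k}u_\alpha u_{\alpha'}$ with $u_\alpha\defeq\lambda n^{-k/2}A_\alpha$, and it is this approximately rank-one piece $n^{-k}\dyad{u}$ --- which lives in $E$, not in $M_0$ --- that absorbs the large $(\emptyset,\deg k)$ rows in the Schur complement ($\|u\|^2/(1+\|u\|^2)<1$). This is exactly why \cite{DBLP:conf/focs/BarakHKKMP16} do not compare the fluctuation against $\lambda_{\min}$ of the expectation; they build an approximate Gram factorization $M\approx LQL^{\top}$ by cutting every shape along its left and right minimum vertex separators, absorb the large shapes into $L$ and $Q$, prove $Q$ is well conditioned, and only bound in norm the \emph{intersection terms} left over from the approximate factorization. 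You invoke the separator factorization but use it only to bound $\|M_\sigma\|$ and then feed everything back into $\|M-M_0\|$; that conflation is where the argument breaks. Carrying out the hypergraph version of the $LQL^{\top}$ analysis --- which is also where the extra per-coordinate $n^{-\epsilon}$ noising in the paper's planted distribution (\pref{lem:tpca-conditions-main}), omitted in your construction, is used to tame the high-degree shapes --- is precisely the technical content of \pref{lem:tpca-conditions-main} that the paper defers to its full version.
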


To prove the theorem we will exhibit for a typical sample $A$ from the uniform distribution a degree $n^{\Omega(\e)}$ pseudodistribution $\pE$ which satisfies $\{\|x\|^2 = 1\}$ and has $\pE \iprod{x^{\tensor k}, A} \geq n^{k/4 - \epsilon}$.
The following lemma ensures that the pseudo-distribution we exhibit will be PSD.

\begin{lemma}\label{lem:tpca-conditions-main}
  Let $d \in \N$ and let $N_d = \sum_{s \leq d} n (n-1)\cdots(n-(s-1))$ be the number of $\leq d$-tuples with unique entries from $[n]$.
  There is a constant $\e^*$ independent of $n$ such that for any $\e < \e^*$ also independent of $n$, the following is true.
  Let $\lambda = n^{k/4 - \e}$.
  Let $\mu(A)$ be the density of the following distribution (with respect to the uniform distribution on $\Omega = \{\pm 1\}^{\binom nk}$).

  \textbf{The Planted Distribution: } Choose $v \sim \{\pm 1 \}^n$ uniformly.
  Let $B = v^{\tensor k}$.
  Sample $A$ by
  \begin{itemize}
    \item replacing every coordinate of $B$ with a random draw from $\{\pm 1\}$ independently with probability $1 - \lambda n^{-k/2}$,
    \item then choosing a subset $S \subseteq [n]$ by including every coordinate with probability $n^{-\epsilon}$,
    \item then replacing every entry of $B$ with some index outside $S$ independently with a uniform draw from $\{\pm 1\}$.
  \end{itemize}
  Let $\Lambda : \Omega \rightarrow \R^{N_d \times N_d}$ be the following function
  \[
    \Lambda(A) = \mu(A) \cdot \E_{v |A} v^{\tensor \leq 2d}
  \]
  Here we abuse notation and denote by $x^{ \leq \tensor 2d}$ the matrix indexed by tuples of length $ \leq d$ \emph{with unique entries} from $[n]$. 
For $D \in \N$, let $\Lambda^{\leq D}$ be the projection of $\Lambda$ into the degree-$D$ real-valued polynomials on $\{\pm 1\}^{n \choose k}$.
  There is a universal constant $C$ so that if $C d/\epsilon < D < n^{\epsilon/C}$, then for large enough $n$
  \[
    \Pr_{A \sim \Omega} \{ \Lambda^{\leq D}(A) \succeq 0 \} \geq 1 - o(1)\mper
  \]
\end{lemma}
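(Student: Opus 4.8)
The plan is to follow the pseudo-calibration paradigm of \cite{DBLP:conf/focs/BarakHKKMP16}, lifting its graph-matrix machinery to the hypergraph setting forced on us by working with $k$-tensors. I would begin by computing the Fourier expansion of $\Lambda$ over the characters $\{\chi_\alpha\}_{\alpha}$ of $\{\pm1\}^{\binom nk}$, where $\alpha$ ranges over sets of $k$-subsets of $[n]$ (equivalently, $k$-uniform hypergraphs on $[n]$). A direct computation from the definition of the planted distribution shows that the $(I,J)$-entry $\Lambda_{I,J}(A) = \mu(A)\cdot\E_{v\mid A}[v^I v^J]$ has Fourier coefficient $\widehat{\Lambda_{I,J}}(\alpha)$ that vanishes unless the hypergraph $\alpha$ together with the tuples $I$ and $J$ (viewed as vertex-multiplicities) has all even degrees, and otherwise has magnitude $(\lambda n^{-k/2})^{|\alpha|}\cdot n^{-\epsilon|V(\alpha\cup I\cup J)|}\cdot n^{-(|I|+|J|)/2}$ up to lower-order factors, where $V(\cdot)$ denotes the set of $[n]$-indices touched. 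Since $\lambda n^{-k/2} = n^{-k/4-\epsilon} < 1$ and the subsampling contributes the extra $n^{-\epsilon|V(\cdot)|}$ decay, the tail $\sum_{|\alpha| > D}$ of these coefficients is negligible, so $\Lambda^{\le D}$ — which keeps exactly the terms with $|\alpha|\le D$ — differs from $\Lambda$ by a matrix of negligible Frobenius (hence spectral) norm; it therefore suffices to prove $\Lambda^{\le D}(A) \succeq 0$ for typical $A$.

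For the PSD claim, which is the heart of the argument, I would group the terms of $\Lambda^{\le D}$ by their combinatorial \emph{shape} (a hypergraph together with two ordered boundary tuples), so that $\Lambda^{\le D} = \sum_\tau \widehat{\Lambda}(\tau)\, M_\tau$ where $M_\tau$ is the hypergraph matrix obtained by summing $\chi$ over all injective realizations of the shape $\tau$ into $[n]$. Following \cite{DBLP:conf/focs/BarakHKKMP16} I would then construct an approximate factorization $\Lambda^{\le D} = L\,Q\,L^{\top} + E$, where $L$ collects the ``left'' shapes (those whose right boundary equals their minimum vertex separator), $Q$ is a block ``middle'' matrix indexed by matchings of boundary vertices which one shows is $\succeq \tfrac12\Id$ after the appropriate rescaling — using that the dominant diagonal shape carries exactly the right coefficient — and $E$ absorbs the remaining shapes, each of which has a strictly larger separator and hence a correspondingly smaller coefficient. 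The quantitative engine is a spectral-norm bound $\|M_\tau\| \le n^{(|V(\tau)|-|\mathrm{sep}(\tau)|)/2}\cdot \polylog(n)\cdot C^{|V(\tau)|}$ holding with high probability, proved by the trace moment method: one bounds $\E\Tr\big((M_\tau M_\tau^{\top})^{q}\big)$ by a sum over closed walks on realizations of $\tau$ and encodes each walk so as to charge its vertices against the separator, as in the graph-matrix norm bounds, but with the incidence structure of $k$-uniform hyperedges changing the vertex/edge accounting in the encoding. Combining the norm bounds with the coefficient sizes, choosing $D > Cd/\epsilon$ guarantees that the total operator norm of $E$ is an $n^{-\Omega(1)}$ fraction of $\lambda_{\min}(LQL^{\top})$, since the $n^{-\Omega(\epsilon D)}$-type decay of the non-dominant shapes' coefficients overwhelms both their norm growth and the $D^{O(D)}$ bound on the number of shapes; meanwhile $D < n^{\epsilon/C}$ keeps both the first-step truncation error and the $C^{O(D)}$-type slack in the per-shape norm bounds negligible, yielding $\Lambda^{\le D}(A)\succeq 0$ with probability $1-o(1)$.

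I expect the second paragraph — specifically the hypergraph trace-method norm bounds and the verification that the middle matrix $Q$ is PSD and well-conditioned — to be the main obstacle. Unlike the planted-clique case, each ``edge'' now touches $k$ vertices, so the combinatorial bookkeeping that charges a walk's contribution against the separator size (and the resulting bound on the number of shapes of a given profile) is genuinely more delicate, and the block structure of $Q$ — indexed by partial matchings of the boundary tuples rather than by single indices — requires a more careful positivity argument. Once Lemma~\ref{lem:tpca-conditions-main} is in hand, deriving Theorem~\ref{thm:tpca-main} is comparatively routine: one checks that the pseudo-expectation $\pE\,p(x) \defeq \langle \Lambda^{\le D}(A), \mathrm{coeff}(p)\rangle / \Lambda^{\le D}(A)_{\emptyset,\emptyset}$ approximately satisfies $\{\|x\|^2=1\}$ and achieves objective value $n^{k/4-\epsilon}$ in expectation and with high probability, and then transfers the statement to Gaussian tensors via the reduction of Section~\ref{sec:reduce-to-hypercube}.
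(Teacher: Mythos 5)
Your plan matches the paper's intended proof exactly: the paper defers the details of Lemma~\ref{lem:tpca-conditions-main} to its full version, saying only that it follows the analogous pseudo-calibration lemmas of Barak et al.\ for planted clique generalized to $k$-uniform hypergraphs --- precisely the Fourier/shape decomposition, approximate $LQL^{\top}+E$ factorization, and separator-based trace-method norm bounds you outline, with the same identification of where the hypergraph bookkeeping gets delicate. The only small slip is the factor $n^{-(|I|+|J|)/2}$ in your claimed Fourier coefficients, which would be right for a unit-norm spike but not for $v\in\{\pm1\}^n$ as defined in the lemma; since this amounts only to a positive diagonal rescaling of $\Lambda$, it does not affect the positivity argument.
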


For a tensor $A$, the moment matrix of the pseudodistribution we exhibit will be $\Lambda^{\leq D}(A)$.
We will need it to satisfy the constraint $\{\|x\|^2 = 1 \}$.
This follows from the following general lemma.
(The lemma is much more general than what we state here, and uses only the vector space structures of space of real matrices and matrix-valued functions.)
\begin{lemma}\label{lem:trunc-constraints}
  Let $k \in \N$.
  Let $V$ be a linear subspace of $\R^{N \times M}$.
  Let $\Omega = \{\pm 1\}^{\binom{n}{k}}$.
  Let $\Lambda : \Omega \rightarrow V$.
  Let $\Lambda^{\leq D}$ be the entrywise orthogonal projection of $\Lambda$ to polynomials of degree at most $D$.
  Then for every $A \in \Omega$, the matrix $\Lambda^{\leq D} (A) \in V$.
\end{lemma}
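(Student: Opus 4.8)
The plan is to use the fact that entrywise low-degree truncation is, for each fixed input $A$, nothing but a scalar-linear combination of the values $\{\Lambda(B) : B \in \Omega\}$, and a linear subspace is closed under such combinations.

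First I would record the explicit form of the truncation operator. For a scalar function $f : \Omega \to \R$ with Fourier expansion $f = \sum_{S \subseteq \binom{[n]}{k}} \hat f(S)\, \chi_S$ (where $\deg \chi_S = |S|$), the projection to polynomials of degree at most $D$ is $f^{\le D} = \sum_{|S| \le D} \hat f(S)\, \chi_S$. Using $\hat f(S) = \E_{B \sim \Omega}[f(B)\chi_S(B)]$ this can be rewritten as $f^{\le D}(A) = \E_{B \sim \Omega}[K_D(A,B)\, f(B)]$, where $K_D(A,B) \defeq \sum_{|S| \le D} \chi_S(A)\chi_S(B)$ is a fixed real-valued kernel that does not depend on $f$. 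Since, by definition, $\Lambda^{\le D}$ is obtained by applying this same scalar operation to each of the $N\cdot M$ coordinate functions $A \mapsto \Lambda(A)_{ij}$, and since $K_D(A,B)$ is independent of the coordinate $(i,j)$, we obtain the matrix identity
\[
  \Lambda^{\le D}(A) = \E_{B \sim \Omega}\big[ K_D(A,B)\, \Lambda(B) \big]
  \qquad \text{for every } A \in \Omega .
\]

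Then the conclusion is immediate: the right-hand side is a finite $\R$-linear combination, with coefficients $\tfrac{1}{|\Omega|}K_D(A,B)$, of the matrices $\Lambda(B)$ for $B \in \Omega$, each of which lies in $V$ by hypothesis; as $V$ is a linear subspace of $\R^{N \times M}$ it is closed under real linear combinations, so $\Lambda^{\le D}(A) \in V$. Since $A \in \Omega$ was arbitrary, this proves the lemma. (An equivalent and perhaps cleaner phrasing I could give instead: fix a basis of $V$, extend it to a basis of $\R^{N\times M}$, and write $\Lambda$ in the corresponding coordinates; the coordinate functions dual to basis vectors outside $V$ are identically zero on $\Omega$, hence so are their low-degree projections, so $\Lambda^{\le D}$ still has all "outside $V$" coordinates equal to zero.)

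I do not expect a real obstacle here; the only point requiring care is purely definitional, namely that "entrywise orthogonal projection to degree $\le D$" means projection, in each matrix entry separately, onto the span of monomials of degree $\le D$ in the instance variables, and that this operation is $\R$-linear in $\Lambda$ viewed as a matrix-valued function on $\Omega$ — both of which are clear from the definition, and both of which are all that the argument above uses. (In the application, $V$ is the span of the valid pseudo-moment matrices, so this lemma is exactly what guarantees that $\Lambda^{\le D}(A)$ still satisfies the constraint $\{\|x\|^2 = 1\}$.)
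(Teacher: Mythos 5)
Your proof is correct and is essentially the paper's argument in concrete form: the paper views $\Lambda$ as an element of $\R^{N\times M}\tensor\R^{\Omega}$ and notes that the projection onto $V$ and the degree-$\le D$ projection commute because they act on separate tensor factors, which is exactly what your explicit kernel identity $\Lambda^{\le D}(A)=\E_{B\sim\Omega}[K_D(A,B)\,\Lambda(B)]$ exhibits. Nothing is missing; the kernel representation just makes the linearity (and hence the preservation of membership in the subspace $V$) explicit.
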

\begin{proof}
  The function $\Lambda$ is an element of the vector space $\R^{N \times M} \tensor \R^{\Omega}$.
  The projection $\Pi_V : \R^{N \times M} \rightarrow V$ and the projection $\Pi_{\leq D}$ from $\R^{\Omega}$ to the degree-$D$ polynomials commute as projections on $\R^{N \times M} \tensor \R^{\Omega}$, since they act on separate tensor coordinates.
  It follows that $\Lambda^{\leq D} \in V \tensor (\R^\Omega)^{\leq D}$ takes values in $V$.
\end{proof}

Last, we will require a couple of scalar functions of $\Lambda^{\leq D}$ to be well concentrated.
\begin{lemma}\label{lem:tpca-scalar-concentration}
  \Snote{}
  Let $\Lambda,d,\e,D$ be as in Lemma~\ref{lem:tpca-conditions-main}.
  The function $\Lambda^{\leq D}$ satisfies
  \begin{itemize}
    \item $\Pr_{A \sim \Omega} \{ \Lambda^{\leq D}_{\emptyset, \emptyset}(A) = 1 \pm o(1) \} \geq 1 - o(1)$ (Here $\Lambda_{\emptyset, \emptyset} = 1$ is the upper-left-most entry of $\Lambda$.)
    \item $\Pr_{A \sim \Omega} \{ \iprod{\Lambda^{\leq D}(A), A} = (1 \pm o(1))\cdot n^{3k/4 - \epsilon} \} \geq 1 - o(1)$ (Here we are abusing notation to write $\iprod{\Lambda^{\leq D}(A), A}$ for the inner product of the part of $\Lambda^{\leq D}$ indexed by monomials of degree $k$ and $A$.)
  \end{itemize}
\end{lemma}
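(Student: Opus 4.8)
The plan is to notice that both quantities are degree-$O(D)$ polynomials in the uniform input $A \in \{\pm1\}^{\binom nk}$, to compute their means exactly from the pseudocalibration Fourier identities, to bound their variances using the same Fourier-decay and hypergraph-counting estimates that underlie \pref{lem:tpca-conditions-main}, and then to conclude by Chebyshev's inequality and a union bound. Throughout we use the Fourier basis $\{\chi_W\}$ on $\{\pm1\}^{\binom nk}$, where $\chi_W(A) = \prod_{S \in W} A_S$ for $W \subseteq \binom{[n]}{k}$, together with the basic identity that for a pseudocalibrated object the uniform-distribution Fourier coefficient equals a planted-distribution expectation.

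For the first item, observe that the $(\emptyset,\emptyset)$ entry of $v^{\tensor \leq 2d}$ is identically $1$, so $\Lambda_{\emptyset,\emptyset}(A) = \mu(A)$ and hence $\Lambda^{\leq D}_{\emptyset,\emptyset} = \mu^{\leq D}$ is the degree-$\leq D$ truncation of the planted density. Its uniform mean is $\widehat\mu(\emptyset) = \E_{A \sim \Omega}\mu(A) = 1$, and by Parseval its variance is $\sum_{1 \leq |W| \leq D} \widehat\mu(W)^2$. The identity $\widehat\mu(W) = \E_{(v,A) \sim \text{planted}}[\chi_W(A)]$ together with the three-step structure of the planted distribution shows that $\widehat\mu(W) = 0$ unless $W$, viewed as a $k$-uniform hypergraph on $[n]$, has all even vertex degrees, and that otherwise $|\widehat\mu(W)|$ decays geometrically in $|W|$ with ratio $\lambda n^{-k/2} = n^{-k/4-\epsilon}$, further damped by $n^{-\epsilon}$ per vertex of $W$ from the subsampling step. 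Counting even-degree hypergraphs with $t$ edges (at most $kt/2$ vertices, hence at most $n^{kt/2} t^{O(t)}$ of them) and using $t \leq D \leq n^{\epsilon/C}$, one obtains $\sum_{1 \leq |W| \leq D} \widehat\mu(W)^2 = O(n^{-\epsilon}) = o(1)$, so Chebyshev gives $\Lambda^{\leq D}_{\emptyset,\emptyset}(A) = 1 \pm o(1)$ with probability $1 - o(1)$.

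For the second item, write $\iprod{\Lambda^{\leq D}(A), A} = \sum_{|T| = k} A_T \cdot \Lambda^{\leq D}_{\emptyset,T}(A)$. The pseudocalibration identity gives $\widehat{\Lambda_{\emptyset,T}}(W) = \E_{(v,A) \sim \text{planted}}[v_T\,\chi_W(A)]$; since $A_T = \chi_{\{T\}}(A)$ and $\E_{A \sim \Omega}\chi_Y(A) = \Ind[Y = \emptyset]$, only the term $W = \{T\}$ survives the uniform expectation, so $\E_{A \sim \Omega}\iprod{\Lambda^{\leq D}(A), A} = \sum_{|T| = k}\widehat{\Lambda_{\emptyset,T}}(\{T\}) = \sum_{|T| = k}\E_{\text{planted}}[v_T A_T]$. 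A direct computation from the planted distribution shows $\E_{\text{planted}}[v_T A_T]$ equals the probability that coordinate $T$ is neither rerandomized in the first step nor zeroed by the subsampling, i.e.\ $(1 \pm o(1))\,\lambda n^{-k/2}\cdot n^{-O(\epsilon)}$; summing over the $(1\pm o(1))\binom nk$ choices of $T$ yields mean $(1 \pm o(1))\,n^{3k/4-\epsilon}$ after absorbing lower-order $\epsilon$ terms. For the variance, expand $\iprod{\Lambda^{\leq D}(A), A}$ in the Fourier basis: using $A_T\chi_W(A) = \chi_{\{T\} \triangle W}(A)$, the coefficient on a character $\chi_Y$ is $\sum_{T,W:\, |W| \leq D,\, \{T\} \triangle W = Y} \widehat{\Lambda_{\emptyset,T}}(W)$, and the variance is the sum of the squares of these coefficients over $Y \neq \emptyset$. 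Bounding it amounts to controlling the cross terms $\widehat{\Lambda_{\emptyset,T}}(W)\,\widehat{\Lambda_{\emptyset,T'}}(W')$ with $\{T\} \triangle W = \{T'\} \triangle W'$ by the same geometric-decay-plus-counting estimates as above, which shows this off-diagonal Fourier mass is $o(1)$ relative to $(\text{mean})^2 = n^{3k/2 - 2\epsilon}$; Chebyshev then gives the stated concentration, and a union bound over the two events completes the lemma.

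The main obstacle is the variance bound for $\iprod{\Lambda^{\leq D}(A), A}$. In contrast to the first item, where Parseval reduces everything to the one-parameter Fourier spectrum of $\mu$, here one must control correlations between the truncated Fourier expansions of $\Lambda_{\emptyset,T}$ and $\Lambda_{\emptyset,T'}$ across all pairs of distinct $k$-subsets $T, T'$, which is a genuinely two-parameter hypergraph sum. This is handled by the same refinement of the Barak et al.\ planted-clique machinery used to prove \pref{lem:tpca-conditions-main} — charging each surviving cross term to a hypergraph whose vertex count bounds its contribution — and we defer the full calculation to the forthcoming full version.
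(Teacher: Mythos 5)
Your proposal follows essentially the same route the paper intends: the paper gives no proof of this lemma, deferring it to the full version as "following analogous lemmas" of Barak et al., and your sketch---pseudocalibration Fourier identities for the means, even-degree-hypergraph counting for the low-degree Fourier mass (exactly the computation the paper carries out in its Section~\ref{sec:pca} heuristic, Claims~\ref{clm:tpca-sketch-1}--\ref{clm:tpca-sketch-2}), and Chebyshev---is precisely that machinery. Note that you, like the paper, defer the one genuinely technical step (the cross-term variance bound for $\iprod{\Lambda^{\leq D}(A),A}$), and your mean for the second item really comes out as $\Theta(n^{3k/4-\epsilon(k+1)})$ once the $n^{-\epsilon}$-per-vertex subsampling is accounted for, a harmless normalization slop the paper itself makes explicit only in the sparse-PCA analogue via the $n^{\Theta(-\gamma)}$ factor.
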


The Boolean case of Theorem~\ref{thm:tpca-main} follows from combining the lemmas.
The Gaussian case can be proved in a black-box fashion from the Boolean case following the argument in Section~\ref{sec:reduce-to-hypercube}.

The proofs of all the lemmas in this section follow analogous lemmas in the work of Barak et al. on planted clique \cite{DBLP:conf/focs/BarakHKKMP16}; we defer them to the full version of the present work.

\subsection{Main theorem and proof overview for sparse PCA}
\label{sec:spca-main}
In this section we prove the following main theorem.
Formally, the theorem shows that with high probability for a random $n \times n$ matrix $A$, even high-degree \sos relaxations are unable to certify that no sparse vector $v$ has large quadratic form $\iprod{v,Av}$.

\begin{theorem}[Restatement of Theorem~\ref{thm:spca-main}]
  If $A \in \R^{n \times n}$, let
  \[
    SoS_{d,k}(A) = \max_{\pE} \pE \iprod{x,Ax} \text{ s.t. $\pE$ is degree $d$ and satisfies }\left \{ x_i^3 = x_i, \|x\|^2 = k \right \} \mper
  \]
  There are absolute constants $c,\e^* > 0$ so that for every $\rho \in (0,1)$ and $\e \in (0,\e^*)$, if $k = n^{\rho}$, then for $d \leq n^{c \cdot \e}$,
  \[
    \Pr_{A \sim \{\pm 1\}^{\binom{n}{2}}} \{ SoS_{d,k}(A) \geq \min (n^{1/2 - \epsilon} k, n^{\rho - \e} k)  \} \geq 1 - o(1)
  \]
  and
  \[
    \E_{A \sim \{\pm 1\}^{\binom{n}{2}}}  SoS_{d,k}(A) \geq \min (n^{1/2 - \epsilon} k, n^{\rho - \e} k)\mper
  \]
  Furthermore, the latter is true also if $A$ is symmetric with iid entries from $\cN(0,1)$.\footnote{For technical reasons we do not prove a tail bound type statement for Gaussian $A$, but we conjecture that this is also true.}
\end{theorem}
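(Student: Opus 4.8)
The plan is to mirror the proof of \pref{thm:tpca-main} in \pref{sec:tpca}, replacing the spiked tensor by a spiked Wigner matrix carrying a \emph{sparse} planted spike and applying the pseudocalibration method of Barak et al.~\cite{DBLP:conf/focs/BarakHKKMP16}. As in the tensor case, one first reduces the Gaussian statement to the Boolean one: by the argument of \pref{sec:reduce-to-hypercube}, an \sos lower bound for the version of the problem with $A$ drawn uniformly from $\{\pm 1\}^{\binom n2}$ transfers in a black-box way to the Gaussian version, so it suffices to exhibit, for a typical uniform $A$, a degree-$d$ pseudoexpectation $\pE$ satisfying $\{x_i^3 = x_i,\ \|x\|^2 = k\}$ with $\pE\iprod{x,Ax} \geq \min(n^{1/2-\e}k,\, n^{\rho-\e}k)$, together with the accompanying concentration.

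To build $\pE$, fix the signal strength $\lambda = n^{-\e}\cdot\min(n^{1/2}, k)$, which lies in the regime $\sqrt k \ll \lambda \ll \min(\sqrt n, k)$ in which neither the top-eigenvalue distinguisher nor the large-diagonal distinguisher succeeds, and define a planted density $\mu$ on $\{\pm 1\}^{\binom n2}$ by: (i) picking a uniformly random support $T \subseteq [n]$ of size $k$ and a uniformly random sign pattern $v \in \{\pm 1\}^T$ (extended by zeros to $[n]$), setting $B_{ij} = v_iv_j$ for $i,j \in T$ and $B_{ij}$ a uniform sign otherwise; (ii) rerandomizing each coordinate of $B$ independently with probability $1 - \lambda/k$, so that $\E_\mu \iprod{v,Av}\approx \lambda k$; (iii) choosing $S\subseteq[n]$ by including each coordinate with probability $n^{-\e}$ and rerandomizing every entry $A_{ij}$ with $i\notin S$ or $j\notin S$. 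Writing $\mu(A)$ for the density with respect to the uniform distribution, set
\[
  \Lambda(A) \defeq \mu(A)\cdot \E_{v\mid A}\brac{v^{\tensor \le 2d}},
\]
the (appropriately indexed) low-degree moment matrix, and let $\Lambda^{\le D}$ be its entrywise orthogonal projection onto polynomials of degree $\le D$ on $\{\pm 1\}^{\binom n2}$. The moment matrix of our pseudoexpectation will be $\Lambda^{\le D}(A)$, normalized by its top-left entry.

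Three facts are then needed, each the sparse-PCA analog of a lemma in \pref{sec:tpca}. First, the constraints: each $\Lambda(A)$ is, up to the scalar $\mu(A)$, a genuine moment matrix of vectors with $\pm 1$ entries on a support of size $k$, hence lies in the linear subspace $V$ of matrices consistent with $\{x_i^3 = x_i,\ \|x\|^2 = k\}$; by \pref{lem:trunc-constraints} (commuting projections on $\R^{N\times M}\tensor\R^\Omega$) the truncation $\Lambda^{\le D}(A)$ also lies in $V$. Second, the scalar concentration, the analog of \pref{lem:tpca-scalar-concentration}: with probability $1-o(1)$, $\Lambda^{\le D}_{\emptyset,\emptyset}(A) = 1\pm o(1)$ and $\iprod{\Lambda^{\le D}(A),A} = (1\pm o(1))\cdot \min(n^{1/2-\e}k,\, n^{\rho-\e}k)$, which is a routine second-moment computation showing these low-degree functions of $A$ have variance of lower order. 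Third, and most importantly, the positivity (the analog of \pref{lem:tpca-conditions-main}): there is a universal constant $C$ so that for $Cd/\e < D < n^{\e/C}$ one has $\Pr_A\set{\Lambda^{\le D}(A)\sge 0}\ge 1-o(1)$. Granting these, $\pE[p]\defeq \iprod{\Lambda^{\le D}(A),\mathrm{coeff}(p)}/\Lambda^{\le D}_{\emptyset,\emptyset}(A)$ is a valid degree-$d$ pseudoexpectation satisfying the constraints with objective $\ge \min(n^{1/2-\e}k,\, n^{\rho-\e}k)/(1+o(1))$; after shrinking $\e$ this yields both the high-probability and the in-expectation statements, and \pref{sec:reduce-to-hypercube} gives the Gaussian case.

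The positivity claim is where essentially all the work lies, and it is a refinement of the graph-matrix / approximate-factorization argument of~\cite{DBLP:conf/focs/BarakHKKMP16}: write $\Lambda^{\le D}$ as a sum of graph matrices indexed by shapes, isolate a dominant ``trivial'' shape whose contribution factors as $LL^\top$ (hence is PSD), and bound the spectral norms of the remaining shapes by trace-moment estimates, showing each is $n^{-\Omega(\e)}$-small. The main difference from both planted clique and tensor PCA is that the Fourier coefficients $\widehat\mu(W)$ of the sparse-PCA planted density carry two kinds of decay — a factor $(\lambda/k)^{e(W)}$ in the number of edges of $W$ from step (ii), and a factor roughly $(k/n)^{v(W)}$ in the number of vertices from the random support in step (i) — so the combinatorial bookkeeping must track edge counts and vertex counts simultaneously. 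The condition $\lambda \gg \sqrt k$ is exactly what keeps the self-loop (diagonal) contributions bounded, while $\lambda \ll \min(\sqrt n, k)$ is what keeps the degree-$\le D$ truncation of $\mu$ close to the constant function for $D$ as large as $n^{\Omega(\e)}$; verifying that the dominant-shape factorization persists up to this degree and that the error norms remain $n^{-\Omega(\e)}$-small is the main obstacle. We carry this out in the full version following the template of~\cite{DBLP:conf/focs/BarakHKKMP16}.
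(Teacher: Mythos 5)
Your proposal follows essentially the same route as the paper: pseudocalibration against a Boolean spiked-Wigner planted distribution (sparse $\pm 1$ spike, $1-\lambda/k$ rerandomization, and a further coordinate-subsampling noising step), the moment matrix $\Lambda(A)=\mu(A)\cdot\E_{v\mid A}v^{\tensor\le 2d}$ truncated to degree $D$, with the same three ingredients — constraint satisfaction via the commuting-projections lemma, scalar concentration of $\Lambda^{\le D}_{\emptyset,\emptyset}$ and $\iprod{\Lambda^{\le D},A}$, and PSDness via the graph-matrix factorization of Barak et al. — followed by the same Boolean-to-Gaussian reduction. The paper likewise defers the PSDness proof to the full version, so your outline matches it at the same level of detail.
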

We turn to some discussion of the theorem statement.
First of all, though it is technically convenient for $A$ in the theorem statement above to be a $\pm 1$ matrix, the entries may be replaced by standard Gaussians (see Section~\ref{sec:reduce-to-hypercube}).

\begin{remark}[Relation to the spiked-Wigner model of sparse principal component analysis]
To get some intuition for the theorem statement, it is useful to return to a familiar planted problem: the spiked-Wigner model of sparse principal component analysis.
Let $W$ be a symmetric matrix with iid entries from $\cN(0,1)$, and let $v$ be a random $k$-sparse unit vector with entries $\{\pm 1/\sqrt{k}, 0\}$.
Let $B = W + \lambda \dyad{v}$.
The problem is to distinguish between a single sample from $B$ and a sample from $W$.
There are two main algorithms for this problem, both captured by the \sos hierarchy.
The first, applicable when $\lambda \gg \sqrt n$, is vanilla PCA: the top eigenvalue of $B$ will be larger than the top eigenvalue of $W$.
The second, applicable when $\lambda \gg k$, is diagonal thresholding: the diagonal entries of $B$ which corresponds to nonzero coordinates will be noticeably large.
The theorem statement above (transferred to the Gaussian setting, though this has little effect) shows that once $\lambda$ is well outside these parameter regimes, i.e. when $\lambda < n^{1/2 - \e}, k^{1 - \e}$ for arbitrarily small $\e > 0$, even degree $n^{\Omega(\e)}$ \sos programs do not distinguish between $B$ and $W$.
\end{remark}

\begin{remark}[Interpretation as an integrality gap]
A second interpretation of the theorem statement, independent of any planted problem, is as a strong integrality gap for random instances for the problem of maximizing a quadratic form over $k$-sparse vectors.
Consider the actual maximum of $\iprod{x,Ax}$ for random ($\{ \pm 1\}$ or Gaussian) $A$ over $k$-sparse unit vectors $x$.
There are roughly $2^{k \log n}$ points in a $\tfrac 1 2$-net for such vectors, meaning that by standard arguments,
\[
  \max_{\|x\| = 1, x \text{ is $k$-sparse}} \iprod{x,Ax} \leq O(\sqrt k \log n)\mper
\]
\Snote{}
With the parameters of the theorem, this means that the integrality gap of the degree $n^{\Omega(\e)}$ \sos relaxation is at least $\min (n^{\rho/2 - \e}, n^{1/2 - \rho/2 - \e})$ when $k = n^{\rho}$.
\end{remark}

\begin{remark}[Relation to spiked-Wishart model]
Theorem~\ref{thm:spca-main} most closely concerns the spiked-Wigner model of sparse PCA; this refers to independence of the entries of the matrix $A$.
Often, sparse PCA is instead studied in the (perhaps more realistic) \emph{spiked-Wishart model}, where the input is $m$ samples $x_1,\ldots,x_m$ from an $n$-dimensional Gaussian vector $\cN(0, \Id + \lambda \cdot vv^\top)$, where $v$ is a unit-norm $k$-sparse vector.
Here the question is: as a function of the sparsity $k$, the ambient dimension $n$, and the signal strength $\lambda$, how many samples $m$ are needed to recover the vector $v$?
\Snote{}
The natural approach to recovering $v$ in this setting is to solve a convex relaxation of the problem of maximizing he quadratic form of the empirical covariance $M = \sum_{i \leq m} \dyad{x_i}$ over $k$-sparse unit vectors (the maximization problem itself is NP-hard even to approximate \cite{DBLP:conf/colt/ChanPR16}).

Theoretically, one may apply our proof technique for Theorem~\ref{thm:spca-main} directly to the spiked-Wishart model, but this carries the expense of substantial technical complication.
We may however make intelligent guesses about the behavior of \sos relaxations for the spiked-Wishart model on the basis of Theorem~\ref{thm:spca-main} alone.
As in the spiked Wigner model, there are essentially two known algorithms to recover a planted sparse vector $v$ in the spiked Wishart model: vanilla PCA and diagonal thresholding \cite{DBLP:conf/nips/DeshpandeM14}.
We conjecture that, as in the spiked Wigner model, the \sos hierarchy requires $n^{\Omega(1)}$ degree to improve the number of samples required by these algorithms by any polynomial factor.
Concretely, considering the case $\lambda = 1$ for simplicity, we conjecture that there are constants $c, \e^*$ such that for every $\e \in (0,\e^*)$ if $m \leq \min(k^{2 - \e}, n^{1 - \e})$ and $x_1,\ldots,x_m \sim \cN(0, \Id)$ are iid, then with high probability for every $\rho \in (0,1)$ if $k = n^\rho$,
\[
  \sos_{d,k}\Paren{\sum_{i \leq m} \dyad{x_i}} \geq \min(n^{1-\e} k, k^{2 - \e})
\]
for all $d \leq n^{c \cdot \e}$.
\end{remark}

\paragraph{Lemmas for Theorem~\ref{thm:spca-main}}
Our proof of Theorem~\ref{thm:spca-main} is very similar to the analogous proof for Tensor PCA, Theorem~\ref{thm:tpca-main}.
We state the analogues of \pref{lem:tpca-conditions-main} and \pref{lem:tpca-scalar-concentration}.
\pref{lem:trunc-constraints} can be used unchanged in the sparse PCA setting.

The main lemma, analogous to \pref{lem:tpca-conditions-main} is as follows.
\begin{lemma}\label{lem:spca-psd}
   Let $d \in \N$ and let $N_d = \sum_{s \leq d} n (n-1)\cdots(n-(s-1))$ be the number of $\leq d$-tuples with unique entries from $[n]$.
   Let $\mu(A)$ be the density of the following distribution on $n \times n$ matrices $A$ with respect to the uniform distribution on $\{\pm 1\}^{n \choose 2}$.

  \textbf{Planted distribution: }
  Let $k = k(n) \in \N$ and $\lambda = \lambda(n) \in \R$, and $\gamma > 0$, and assume $\lambda \leq k$.
  Sample a uniformly random $k$-sparse vector $v \in \R^n$ with entries $\pm 1, 0$.
  Form the matrix $B = vv^\top$.
  For each nonzero entry of $B$ independently, replace it with a uniform draw from $\{ \pm 1\}$ with probability $1 - \lambda/k$ (maintaining the symmetry $B = B^\top$).
  For each zero entry of $B$, replace it with a uniform draw from $\{ \pm 1 \}$ (maintaining the same symmetry).
  Finally, choose every $i \in [n]$ with probability $n^{-\gamma}$ independently; for those indices that were not chosen, replace every entry in the corresponding row and column of $B$ with random $\pm 1$ entries.\footnote{This additional $n^{-\gamma}$ noising step is a technical convenience which has the effect of somewhat decreasing the number of nonzero entries of $v$ and decreasing the signal-strength $\lambda$.}
  Output the resulting matrix $A$.
  (We remark that this matrix is a Boolean version of the more standard spiked-Wigner model $B + \lambda vv^\top$ where $B$ has iid standard normal entries and $v$ is a random $k$-sparse unit vector with entries from $\{ \pm 1/\sqrt{k}, 0 \}$.)

  Let $\Lambda : \{\pm 1\}^{\binom{n}{2}} \rightarrow \R^{N_d \times N_d}$ be the following function
  \[
    \Lambda(A) = \mu(A) \cdot \E_{v |A} v^{\tensor \leq 2d}
  \]
  where the expectation is with respect to the planted distribution above.
  For $D = D(n) \in \N$, let $\Lambda^{\leq D}$ be the entrywise projection of $\Lambda$ into the Boolean functions of degree at most $D$.

  There are constants $C ,\e^* > 0$ such that for every $\gamma > 0$ and $\rho \in (0,1)$ and every $\e \in (0,\e^*)$ (all independent of $n$), if $k = n^{\rho}$ and $\lambda \leq \min\{n^{\rho - \e}, n^{1/2 - \e} \}$, and if $C d/\epsilon < D < n^{\epsilon/C}$, then for large enough $n$
  \[
    \Pr_{A \sim \{\pm 1\}^{\binom{n}{2}}} \{ \Lambda^{\leq D}(A) \succeq 0 \} \geq 1 - o(1)\mper
  \]
\end{lemma}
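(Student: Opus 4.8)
\emph{Proof strategy (sketch; full details deferred to the full version, following \cite{DBLP:conf/focs/BarakHKKMP16}).}
The plan is to run the pseudo-calibration argument, adapted to the sparse-PCA planted distribution. The matrix $\Lambda(A)=\mu(A)\cdot\E_{v|A}v^{\tensor\le 2d}$ is by construction the pseudo-calibrated moment matrix, so its Fourier expansion over the hypercube $\{\pm1\}^{\binom n2}$ has coefficients
\[
  \widehat\Lambda_{I,J}(\alpha)\;=\;\E_{A\sim\mathrm{unif}}\big[\Lambda_{I,J}(A)\,\chi_\alpha(A)\big]\;=\;\E_{(v,A)\sim\mathrm{planted}}\big[v_I v_J\,\chi_\alpha(A)\big],
\]
where $I,J$ range over tuples of at most $d$ distinct indices of $[n]$ and $\alpha$ over subgraphs of $K_n$. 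First I would evaluate these coefficients. Since each surviving entry of $B=vv^\top$ equals $v_av_b$, was kept with probability $\lambda/k$, and both its endpoints must survive the $n^{-\gamma}$ noising for the correlation to persist, the expectation factorizes over the distinct vertices and edges touched by $I\sqcup J\sqcup\alpha$: it vanishes unless the multiset $I\sqcup J$ together with the degree sequence of $\alpha$ is everywhere even (the parity condition), and otherwise equals $(\lambda/k)^{|\alpha|}$ times factors of order $(k/n)$ per distinct vertex (the probability that vertex lies in $\supp v$) and $n^{-\gamma}$ per vertex incident to $\alpha$. This yields an explicit combinatorial formula for $\Lambda$.

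Next I would expand $\Lambda^{\le D}$ in the graph-matrix (ribbon) basis: group the Fourier support by the isomorphism type --- the \emph{shape} $\sigma$ --- of the labeled graph on row-vertices $I$, column-vertices $J$, and internal vertices determined by $\alpha$, writing $\Lambda^{\le D}=\sum_{\sigma:\ \le D\text{ edges}}\widehat\lambda_\sigma\,M_\sigma$ with scalar coefficients $\widehat\lambda_\sigma$ read off from the previous step. The empty shape contributes $\Lambda^{\le D}_{\emptyset,\emptyset}\cdot\Id$, which by the scalar-concentration statement analogous to \pref{lem:tpca-scalar-concentration} is $(1\pm o(1))\Id$; everything else must be shown negligible in operator norm. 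For the non-trivial shapes I would invoke the standard norm bounds for graph matrices (of the type in \cite{DBLP:conf/focs/BarakHKKMP16} and its refinements), which bound $\|M_\sigma\|$ by roughly $\tilde O\!\big(\sqrt n^{\,|V(\sigma)|-|S_{\min}(\sigma)|}\big)$ times a factor exponential in $D$, where $S_{\min}(\sigma)$ is a minimum vertex separator between the two boundaries. Multiplying by $\widehat\lambda_\sigma$ of order $(\lambda/k)^{|E(\sigma)|}(k/n)^{|V(\sigma)|}(n^{-\gamma})^{\#}$ and using the vertex/edge counting forced by the parity condition, the resulting exponent of $n$ is negative for every non-trivial $\sigma$ precisely when $\lambda\le\min\{n^{\rho-\e},n^{1/2-\e}\}$ and $D\le n^{\e/C}$; summing the $D^{O(D)}$ shapes and using $D\le n^{\e/C}$ then keeps the total contribution $o(1)$.

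The main obstacle is the positivity argument, since bounding $\|M_\sigma\|$ shape-by-shape is not by itself enough: some coefficients $\widehat\lambda_\sigma$ are individually large and cancellation must be exploited. As in \cite{DBLP:conf/focs/BarakHKKMP16}, the remedy is an approximate factorization $\Lambda^{\le D}\approx L\,Q\,L^\top$, where $L$ collects the one-sided ``left'' shapes, $Q$ the ``middle'' shapes, and $Q\succeq0$ is itself a small-perturbation-of-identity statement at the level of middle shapes; then $L\,Q\,L^\top\succeq0$. The error $\Lambda^{\le D}-L\,Q\,L^\top$ consists of the ``intersection terms'' arising when the row- and column-parts of a left shape and a transposed left shape overlap in more than the minimal separator, and these are again controlled by the same graph-matrix norm bounds. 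Carrying out this factorization, bounding the intersection terms, and choosing the truncation window $Cd/\e<D<n^{\e/C}$ so that both the factorization error and the tail of discarded high-degree Fourier mass are $o(1)$, is where essentially all the work lies; this is also where the sparse-PCA analysis differs in detail from planted clique --- $v\in\{0,\pm1\}^n$ replaces the per-vertex factor $1/n$ by $k/n$, and the extra $n^{-\gamma}$ noising is the technical device that tames the truncation error. Finally, combining $\Lambda^{\le D}\succeq(1-o(1))\Id$ with \pref{lem:trunc-constraints} (which preserves membership in the subspace cut out by the constraints $x_i^3=x_i$, $\|x\|^2=k$) and the scalar bounds gives $\Lambda^{\le D}(A)\succeq0$ with probability $1-o(1)$, as claimed.
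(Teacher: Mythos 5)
Your sketch takes essentially the same route as the paper, which itself defers the full proof of this lemma to a forthcoming version and states only that it follows the pseudo-calibration, graph-matrix norm-bound, and approximate-factorization machinery of \cite{DBLP:conf/focs/BarakHKKMP16}; your outline (Fourier coefficients of the planted density, shape decomposition, $LQL^\top$ factorization with intersection-term error control, and the role of the $n^{-\gamma}$ noising in taming truncation error) matches that intended argument. Since the paper provides no further detail, there is nothing in your proposal that conflicts with it.
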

\begin{remark}
We make a few remarks about the necessity of some of the assumptions above.
A useful intuition is that the function $\Lambda^{\leq D}(A)$ is (with high probability) positive-valued when the parameters $\rho,\epsilon,\gamma$ of the planted distribution are such that there is no degree-$D$ polynomial $f \, : \, \{ \pm 1\}^{\binom{n}{2}} \rightarrow \R$ whose values distinguish a typical sample from the planted distribution from a null model: a random symmetric matrix with iid entries.

At this point it is useful to consider a more familiar planted model, which the lemma above mimics.
Let $W$ be a $n \times n$ symmetric matrix with iid entries from $\cN(0,1)$.
Let $v \in \R^n$ be a $k$-sparse unit vector, with entries in $\{\pm 1/\sqrt{k}, 0\}$.
Let $A = W + \lambda \dyad{v}$.
Notice that if $\lambda \gg k$, then diagonal thresholding on the matrix $W$ identifies the nonzero coordinates of $v$.
(This is the analogue of the covariance-thresholding algorithm in the spiked-Wishart version of sparse PCA.)
On the other hand, if $\lambda \gg \sqrt n$ then (since typically $\|W\| \approx \sqrt n$), ordinary PCA identifies $v$.
The lemma captures computational hardness for the problem of distinguishing a single sample from $A$ from a sample from the null model $W$ both diagonal thresholding and ordinary PCA fail.
\end{remark}

Next we state the analogue of \pref{lem:tpca-scalar-concentration}.
\begin{lemma}\label{lem:spca-scalar-concentration}
  Let $\Lambda,d,k,\lambda,\gamma,D$ be as in \pref{lem:spca-psd}.
  The function $\Lambda^{\leq D}$ satisfies
  \begin{itemize}
    \item $\Pr_{A \sim \{\pm 1\}^{\binom nk} } \{ \Lambda^{\leq D}_{\emptyset, \emptyset}(A) = 1 \pm o(1) \} \geq 1 - o(1)$.
    \item $\Pr_{A \sim \{\pm 1\}^{\binom nk} } \{ \iprod{\Lambda^{\leq D}(A), A} = (1 \pm o(1))\cdot \lambda n^{\Theta(-\gamma)} \} \geq 1 - o(1)$.
  \end{itemize}
\end{lemma}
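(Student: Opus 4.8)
The plan is to treat each of the two scalar quantities as a low-degree polynomial in the Boolean variables $A$, compute its mean exactly via the Fourier expansion of $\Lambda$ over $\{\pm1\}^{\binom n2}$, bound its variance by a graph-enumeration argument, and conclude by Chebyshev's inequality. This is the exact analogue of the scalar-concentration step for planted clique in \cite{DBLP:conf/focs/BarakHKKMP16} and of \pref{lem:tpca-scalar-concentration}, so the real content is in the bookkeeping. The key computational input is the pseudo-calibration identity for the Fourier coefficients of $\Lambda$: for tuples $\alpha,\beta$ of length at most $d$ and a set of pairs $S\subseteq\binom{[n]}{2}$,
\[
  \widehat{\Lambda_{\alpha,\beta}}(S)=\E_{A\sim\udist}\bigl[\mu(A)\,\E_{v\mid A}[v^\alpha v^\beta]\,\chi_S(A)\bigr]=\E_{(v,A)\sim\pdist}\bigl[v^\alpha v^\beta\,\chi_S(A)\bigr],
\]
using that $\mu$ is the relative density and the tower property. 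Thinking of $S$ as a graph on $[n]$ and writing $U = V(S)\cup\supp(\alpha)\cup\supp(\beta)$, this expectation vanishes unless every vertex of $U$ has the appropriate parity in $S$ (even if it appears in neither $\alpha$ nor $\beta$, odd otherwise), and when it does not vanish it equals, up to $(1\pm o(1))$ corrections, the product of $(\lambda/k)^{|S|}$ (no edge of $S$ was re-randomized in the planted model), $n^{-\gamma|U|}$ (every index of $U$ survived the $n^{-\gamma}$ sub-sampling) and $(k/n)^{|U|}$ (every index of $U$ lies in $\supp(v)$).

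For the first bullet, $\Lambda^{\leq D}_{\emptyset,\emptyset}=\mu^{\leq D}=\sum_{|S|\leq D}\widehat\mu(S)\chi_S$ with $\widehat\mu(\emptyset)=1$, so it suffices to show $\E_{A\sim\udist}(\mu^{\leq D}-1)^2=\sum_{1\leq|S|\leq D}\widehat\mu(S)^2=o(1)$ and apply Chebyshev; note this quantity is precisely the squared advantage of the best degree-$D$ distinguisher between $\pdist$ and $\udist$, which is why the lemma inherits the parameter regime of \pref{lem:spca-psd}. Grouping the sum by $t=|S|$ and $s=|V(S)|$, a simple graph with $t$ edges and all even positive degrees has $3\leq s\leq t$ and there are at most $n^{s}\cdot s^{O(t)}$ of them, so $\widehat\mu(S)^2$ summed over such $S$ is at most $s^{O(t)}\,\lambda^{2t}k^{2(s-t)}n^{-s(1+2\gamma)}$; with $k=n^{\rho}$ and $\lambda\leq\min(n^{\rho-\e},n^{1/2-\e})$ a short case check ($\rho\le\tfrac12$ versus $\rho>\tfrac12$, using $s\le t$) shows this exponent is at most $-\Omega(\e)\,t$, so the total is of the shape $\sum_{t\leq D}D^{O(D)}\,n^{-\Omega(\e)\,t}$, which since $D<n^{\e/C}$ is dominated by the $n^{-\Omega(\e t\log n)}$ decay and hence $o(1)$.

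For the second bullet, write $f(A):=\iprod{\Lambda^{\leq D}(A),A}=\sum_{i,j}A_{ij}\,\Lambda^{\leq D}_{i,j}(A)$ and expand in the Fourier basis using $A_{ij}\chi_S(A)=\chi_{S\symdiff\{ij\}}(A)$, so $f=\sum_{i,j}\sum_{|S|\leq D}\widehat{\Lambda_{i,j}}(S)\,\chi_{S\symdiff\{ij\}}$, a polynomial of degree at most $D+1$ in $A$. Its constant Fourier coefficient is $\E_{A\sim\udist}f(A)=\sum_{i,j}\widehat{\Lambda_{i,j}}(\{ij\})=\sum_{i,j}\E_{(v,A)\sim\pdist}[v_iv_jA_{ij}]$, which by the identity above evaluates to $(1\pm o(1))$ times the claimed value: each ordered pair $i\ne j$ contributes $\approx(k/n)^2\cdot n^{-2\gamma}\cdot(\lambda/k)$ (both endpoints in $\supp(v)$ and surviving sub-sampling, with the edge carrying signal so $v_iv_jA_{ij}=1$), and summing over the $\approx n^2$ pairs gives $\lambda$ times a $\Theta(\gamma)$-negative power of $n$, with the factor of $k$ as recorded in \pref{lem:spca-psd}. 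It then remains to bound $\Var_A f=\sum_{T\ne\emptyset}\bigl(\sum_{i,j}\widehat{\Lambda_{i,j}}(T\symdiff\{ij\})\bigr)^2$; applying the estimate above to each $\widehat{\Lambda_{i,j}}(\cdot)$ — now with the two marked vertices $i,j$ forced to odd degree in $T\symdiff\{ij\}$, which changes $U$ and the parity pattern in a controlled way — together with the same enumeration of graph-shaped characters $T$ shows $\Var_A f=o\bigl((\E_A f)^2\bigr)$ in the stated regime, and Chebyshev finishes.

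The step I expect to be the main obstacle is this last variance bound for $f$: one must enumerate the characters $T$ that appear, count for each $T$ the pairs $(i,j)$ with $|T\symdiff\{ij\}|\leq D$, bound $|\widehat{\Lambda_{i,j}}(T\symdiff\{ij\})|$ while carefully tracking how the marked vertices $i,j$ shift the parity constraints and the vertex count $|U|$, and then verify that the resulting $D^{O(D)}$-type multiplicities are dominated by the $n^{-\Omega(\e)}$-per-edge decay provided by $\lambda\leq\min(n^{\rho-\e},n^{1/2-\e})$ together with $D<n^{\e/C}$. Conceptually this is routine — it follows the template already used for planted clique and, in this paper, for tensor PCA — but it is the only place where the precise parameter choices are used, and hence where care is genuinely required; the first bullet's variance estimate is the special case $\alpha=\beta=\emptyset$, $T=\emptyset$ of the same computation, and everything else is Fourier bookkeeping and a Chebyshev inequality.
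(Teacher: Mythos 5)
Your proposal is correct and follows exactly the route the paper intends: the paper defers this proof to its full version, citing the planted-clique analysis of Barak et al., and your pseudo-calibration identity, even-degree graph enumeration, and Chebyshev argument is precisely that template (and matches the paper's own sketch of the analogous tensor-PCA computation). One minor note in your favor: your mean computation for the second bullet correctly yields $\lambda k\, n^{-\Theta(\gamma)}$ rather than the $\lambda n^{\Theta(-\gamma)}$ in the lemma statement, which appears to be a missing factor of $k$ in the paper, consistent with the value $\min(n^{1/2 - \e}k,\, n^{\rho - \e}k)$ needed for the main sparse PCA theorem.
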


\addreferencesection
\bibliographystyle{amsalpha}
\bibliography{bib/mathreview,bib/dblp,bib/scholar,bib/custom}

\appendix
\section{Bounding the sum-of-squares proof ideal term}\label{app:idcond}
We give conditions under which sum-of-squares proofs are well-conditioned, using techniques similar to those that appear in \cite{DBLP:journals/corr/RaghavendraW17} for bounding the bit complexity of \sos proofs.
We begin with some definitions.
\begin{definition}
Let $\cP$ be a polynomial optimization problem and let $\cD$ be the uniform distribution over the set of feasible solutions $S$ for $\cP$.
Define the degree-$2d$ moment matrix of $\cD$ to be $X_{\cD} = \E_{s\sim\cD}[\hat{s}^{\tensor 2d}]$, where $\hat{s} = [1 \ s]^{\top}$.
\begin{itemize}
\item We say that \emph{$\cP$ is $k$-complete on up to degree $2d$} if every zero eigenvector of $X_{\cD}$ has a degree-$k$ derivation from the ideal constraints of $\cP$.
\end{itemize}
\end{definition}

\begin{theorem}\label{thm:idealthing}
    Let $\cP$ be a polynomial optimization problem over variables $x \in \R^n$ of degree at most $2d$, with objective function $f(x)$ and ideal constraints $\{g_j(x) = 0\}_{j \in [m]}$.
    Suppose also that $\cP$ is $2d$-complete up to degree $2d$.
    Let $G$ be the matrix of ideal constraints in the degree-$2d$ \sos proof for $\cP$.
    Then if
    \begin{compactitem}
    \item the SDP optimum value is bounded by $n^{O(d)}$
    \item the coefficients of the objective function are bounded by $n^{O(d)}$,
    \item there is a set of feasible solutions $\cS \subseteq \R^n$ with the property that for each $\alpha \subseteq [n]^d$, $|\alpha| \le d$ for which $\chi_{\alpha}$ is not identically zero over the solution space, there exists some $s \in \cS$ such that the square monomial $\chi_\alpha(s)^2 \ge n^{-O(d)}$,
    \end{compactitem}
    it follows that the \sos certificate for the problem is well-conditioned, with no value larger than $n^{O(d)}$.
\end{theorem}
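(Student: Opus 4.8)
The plan is to follow the SDP-duality template of Raghavendra--Weitz \cite{DBLP:journals/corr/RaghavendraW17}: regard a degree-$2d$ \sos proof of the bound $f(x) \le \tau$ (for $\tau$ equal to, or just above, the SDP optimum) as a feasible point of the \emph{dual} SDP, i.e.\ a pair $(A,Q)$ with $A \sge 0$ a matrix of SOS multipliers and $Q \in \cG$ a combination of the ideal-constraint matrices, satisfying the matrix identity $\tau\cdot\mempty - F = A + Q$, where $F$ is the coefficient matrix of $f$. Since $\tau \le n^{O(d)}$ and the coefficients of $f$ are $n^{O(d)}$, the target $\tau\cdot\mempty - F$ has norm $n^{O(d)}$; the entire difficulty is to exhibit such a pair $(A,Q)$ in which $A$ (and hence $Q$) also has norm $n^{O(d)}$, rather than being astronomically large.

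The key object is a single well-conditioned primal (pseudo-moment) certificate. Let $M_0 = \E_{s\sim\cD}[\hat s^{\tensor 2d}]$ be the degree-$2d$ moment matrix of the uniform distribution over feasible solutions (perturbed, if needed, by a tiny multiple of $\tfrac{1}{|\cS|}\sum_{s\in\cS}\hat s^{\tensor 2d}$, which does not change its kernel). First, $M_0 \sge 0$ as an average of rank-one PSD matrices, and $\langle R, M_0\rangle = 0$ for every $R \in \cG$ since each $g_j$ vanishes on each feasible $s$. Second, $\ker M_0 = \ker X_{\cD}$ because $\cS$ consists of feasible points; and, by the $2d$-completeness hypothesis, every vector of $\ker X_{\cD}$ is the coefficient vector of a polynomial with a degree-$2d$ ideal derivation, which when promoted to matrices says exactly that every matrix whose row or column space lies in $\ker M_0$ belongs to $\cG$. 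Third --- and this is where hypothesis (3) enters --- the smallest \emph{nonzero} eigenvalue of $M_0$ is at least $n^{-O(d)}$: the assumption that each monomial $\chi_\alpha$ not identically zero on the variety is $n^{-O(d)}$-heavy on some $s \in \cS$ forces the Gram matrix $M_0$, restricted to its range, to be nondegenerate with a quantitatively bounded condition number.

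Granting these three properties, the main argument is short. Take any valid decomposition $\tau\cdot\mempty - F = A + Q$ with $A\sge 0$, $Q\in\cG$ (one exists by hypothesis, and in fact strong duality holds because $M_0$ is a relative-interior point of the primal feasible set, hence a Slater point). Pairing with $M_0$ and using $\langle Q, M_0\rangle = 0$ and $(M_0)_{\emptyset,\emptyset}=1$ gives $\langle A, M_0\rangle = \tau - \langle F, M_0\rangle \le n^{O(d)}$, using the bounds on the SDP value and on the coefficients of $f$. Since $A, M_0\sge 0$ and the nonzero eigenvalues of $M_0$ are $\ge n^{-O(d)}$, the compression $A' := \Pi\,A\,\Pi$ of $A$ onto $\mathrm{range}(M_0)$ satisfies $\Tr(A') \le \langle A, M_0\rangle / \lambda^+_{\min}(M_0) \le n^{O(d)}$, so $\|A'\| \le n^{O(d)}$; moreover $A'\sge 0$ and $A - A' \in \cG$ (its row/column spaces lie in $\ker M_0 \subseteq \cG$), so $(A', Q') := (A',\, \tau\cdot\mempty - F - A')$ is again a valid \sos proof, now with $\|A'\|, \|Q'\| \le n^{O(d)}$. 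Reading off a sum-of-squares representation $A' = \sum_\ell v_\ell v_\ell^{\top}$ with $\|v_\ell\| \le n^{O(d)}$, together with the multipliers $q_j$ encoded in $Q'$, shows every number in the resulting certificate is bounded by $n^{O(d)}$.

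The step I expect to be the main obstacle is the second one: establishing simultaneously that (i) $\ker M_0 = \ker X_{\cD}$ is contained, in its matrix form, inside $\cG$ --- so that compressing $A$ onto $\mathrm{range}(M_0)$ preserves the proof modulo the constraints --- and (ii) $\lambda^+_{\min}(M_0) \ge n^{-O(d)}$. Part (i) is precisely what $2d$-completeness buys, but the passage from the polynomial statement to the statement about matrices with row/column spaces in $\ker M_0$ needs care. Part (ii) is the quantitative heart: one must show the monomials indexing $\mathrm{range}(M_0)$ are not merely individually heavy on $\cS$ but \emph{jointly} far from linearly dependent there, which in each application is verified by exhibiting an explicit $\cS$ (e.g.\ indicator vectors of small sub-structures) on which the relevant monomials are manifestly well-separated.
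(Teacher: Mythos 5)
Your overall architecture matches the paper's in its skeleton: write the proof identity in matrix form, use $2d$-completeness to absorb the part of $A$ supported on $\ker X_{\cD}$ into the ideal term, bound the surviving compression $A'$, and finish with the triangle inequality. The steps establishing $\iprod{Q,M_0}=0$, that $A-\Pi A\Pi\in\cG$, and the final passage from a trace bound on $A'$ to a norm bound on $Q'$ are all sound. The genuine gap is in how you bound $\Tr(A')$. You pair the identity with the moment matrix $M_0=\E_{s\sim\cD}\hat s^{\tensor 2d}$ and write $\Tr(\Pi A\Pi)\le \iprod{A,M_0}/\lambda^+_{\min}(M_0)$, which requires $\lambda^+_{\min}(M_0)\ge n^{-O(d)}$. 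Hypothesis (3) of the theorem does not give you this: it only says that each individual monomial $\chi_\alpha$ satisfies $\chi_\alpha(s)^2\ge n^{-O(d)}$ for \emph{some} $s\in\cS$, i.e.\ it controls single diagonal entries of a Gram matrix, and diagonal lower bounds say nothing about the least nonzero eigenvalue (the $2\times 2$ matrix with $1$'s on the diagonal and $1-\delta$ off the diagonal has unit diagonal but least eigenvalue $\delta$). Worse, $\cS$ may be a single point (the paper takes $\cS=\{\vec 1\}$ for the hypercube), whose moment matrix is rank one, so perturbing $M_0$ by a ``tiny multiple'' of the $\cS$-average neither preserves the needed range nor contributes a non-negligible eigenvalue floor; and the spectral gap of $X_{\cD}$ for the uniform distribution over \emph{all} feasible solutions is simply not among the theorem's hypotheses. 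What your argument actually needs is the $\delta$-spectral-richness property that the paper defines and verifies separately (\pref{prop:spec-rich-examples}), which is strictly stronger than hypothesis (3); your closing paragraph concedes the point but treats it as a per-application verification rather than as a hypothesis missing from the proof of the theorem as stated.

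The paper's own proof avoids the eigenvalue question entirely: after the compression it evaluates the identity at the specific feasible point $s=s(\alpha)\in\cS$ furnished by hypothesis (3) for each $\alpha$, deduces a bound on the single diagonal entry $A'_{\alpha\alpha}$ from $\iprod{s^{\tensor 2d},A'}\le n^{O(d)}$ together with $\chi_\alpha(s)^2\ge n^{-O(d)}$, and then sums over the $n^{O(d)}$ diagonal entries to bound $\Tr(A')$. That is why the weaker per-monomial hypothesis suffices there. To repair your write-up, either adopt that entrywise argument in place of the single pairing with $M_0$, or strengthen the hypothesis to spectral richness of $X_{\cD}$ and cite its verification for the domains of interest.
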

To prove this, we essentially reproduce the proof of the main theorem of \cite{DBLP:journals/corr/RaghavendraW17}, up to the very end of the proof at which point we slightly deviate to draw a different conclusion.
\begin{proof}
Following our previous convention, the degree-$2d$ sum-of-squares proof for $\cP$ is of the form
\begin{align*}
\sdpopt - f(x)=  a(x) + g(x),
\end{align*}
where the $g(x)$ is a polynomial in the span of the ideal constraints, and $A$ is a sum of squares of polynomials.
Alternatively, we have the matrix characterization,
\[
\sdpopt - \iprod{F, \hat x^{\tensor 2d}} = \iprod{A,\hat x^{\tensor 2d}} + \iprod{G, \hat x^{\tensor 2d}},
\]
where $\hat x = [1 \ x]^{\top}$, $F,A$, and $G$ are matrix polynomials corresponding to $f, a$, and $g$ respectively, and with $A \sge 0$.

Now let $s \in \cS$ be a feasible solution.
Then we have that
\begin{align*}
    \sdpopt - \iprod{F, s^{\tensor 2d}}&= \iprod{A,s^{\tensor 2d}} + \iprod{G, s^{\tensor 2d}}
= \iprod{A,s^{\tensor 2d}},
\end{align*}
where the second equality follows because each $s \in S$ is feasible.
    By assumption the left-hand-side is bounded by $n^{O(d)}$.

    We will now argue that the diagonal entries of $A$ cannot be too large.
    Our first step is to argue that $A$ cannot have nonzero diagonal entries unless there is a solution element in the solution
    Let $X_{\cD} = \E[x^{\tensor 2d}]$ be the $2d$-moment matrix of the uniform distribution of feasible solutions to $\cP$.
    Define $\Pi$ to be the orthogonal projection into the zero eigenspace of $X_{\cD}$.
By linearity and orthonormality, we have that
\begin{align*}
\Iprod{X_{\cD}, A}
&= \Iprod{X_{\cD}, (\Pi + \Pi^{\perp}) A (\Pi + \Pi^{\perp})}\\
&= \Iprod{X_{\cD}, \Pi^{\perp} A \Pi^{\perp}}+ \Iprod{X_{\cD}, \Pi  A \Pi^{\perp}} + \Iprod{X_{\cD}, \Pi^{\perp} A \Pi} + \Iprod{X_{\cD}, \Pi A \Pi}.
\end{align*}
By assumption  $\cP$ is $2d$-complete on $\cD$ up to degree $2d$, and therefore $\Pi$ is derivable in degree $2d$ from the ideal constraints $\{g_j\}_{j \in [m]}$.
Therefore, the latter three terms may be absorbed into $G$, or more formally, we can set $A' = \Pi^{\perp} A \Pi^{\perp}$, $G' = G + (\Pi + \Pi^{\perp}) A (\Pi + \Pi^{\perp}) - \Pi^{\perp} A \Pi^{\perp}$, and re-write the original proof
\begin{align}
\sdpopt - \iprod{F, \hat x^{\tensor 2d}}&= \iprod{A',\hat x^{\tensor 2d}} + \iprod{G', \hat x^{\tensor 2d}} \label{eq:mId}.
\end{align}
    The left-hand-side remains unchanged, so we still have that it is bounded by $n^{O(d)}$ for any feasible solution $s \in \cS$.
Furthermore, the nonzero eigenspaces of $X_{\cD}$ and $A'$ are identical, and so $A'$ cannot be nonzero on any diagonal entry which is orthogonal to the space of feasible solutions.

    Now, we argue that every diagonal entry of $A'$ is at most $n^{O(d)}$.
    To see this, for each diagonal term $\chi_{\alpha}^2$, we choose the solution $s \in \cS$ for which $\chi_{\alpha}(s)^2 \ge n^{-O(d)}$.
    We then have by the PSDness of $A'$ that
    \[
	A'_{\alpha,\alpha} \cdot \chi_{\alpha}(s)^2 \le \iprod{s^{\tensor 2d}, A'} \le n^{O(d)},
    \]
which then implies that $A'_{\alpha,\alpha} \le n^{O(d)}$.
    It follows that $\Tr(A') \le n^{O(d)}$, and again since $A'$ is PSD,
\begin{align}
    \|A'\|_F \le \sqrt{\Tr(A')} \le n^{O(d)}.\label{eq:ftr}
\end{align}

Putting things together, we have from our original matrix identity \pref{eq:mId} that
\begin{align*}
\|G'\|_F
&= \|\sdpopt - A' - F \|_F\\
&\le \|\sdpopt\|_F + \|A'\|_F + \|F\|_F \quad (\text{triangle inequality})\\
    &\le \|\sdpopt\|_F + n^{O(d)} + \|F\|_F \quad (\text{from } \pref{eq:ftr}).
\end{align*}
Therefore by our assumptions that $\|\sdpopt\|,\|F\|_F = n^{O(d)}$, the conclusion follows.
\end{proof}

We now argue that the conditions of this theorem are met by several general families of problems.

\begin{corollary}\label{cor:well-cond}
    The following problems have degree-$2d$ \sos proofs with all coefficients bounded by $n^{O(d)}$:
    \begin{enumerate}
	\item The hypercube: Any polynomial optimization problem with the only constraints being $\{x_i^2 = x_i\}_{i\in[n]}$ or $\{x_i^2 = 1\}_{i\in[n]}$ and objective value at most $n^{O(d)}$ over the set of integer feasible solutions. (Including \maxkcsp).
	\item The hypercube with balancedness constraints: Any polynomial optimization problem with the only constraints being $\{x_i^2 - 1\}_{i\in[n]} \cup \{\sum_i x_i = 0\}$. (Including \communitydetection).
	\item The unit sphere: Any polynomial optimization problem with the only constraints being $\{\sum_{i\in[n]} x_i^2 = 1\}$ and objective value at most $n^{O(d)}$ over the set of integer feasible solutions. (Including \tensorpca).
	\item The sparse hypercube: As long as $2d \le k$, any polynomial optimization problem with the only constraints being $\{x_i^2 = x_i\}_{i\in[n]}\cup \{\sum_{i\in[n]} x_i = k\}$, or $\{x_i^3 = x_i\}_{i\in[n]}\cup \{\sum_{i\in[n]} x_i^2 = k\}$, and objective value at most $n^{O(d)}$ over the set of integer feasible solutions. (Including \densestksubgraph and the Boolean version of \sparsepca).
	\item The \maxclique problem.
    \end{enumerate}
\end{corollary}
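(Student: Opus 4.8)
The plan is to derive \pref{cor:well-cond} from \pref{thm:idealthing} by checking, for each of the five families, its three hypotheses: (i) $\cP$ is $2d$-complete up to degree $2d$; (ii) the SDP optimum is bounded by $n^{O(d)}$; and (iii) there is a set $\cS$ of feasible solutions so that every $\alpha$ with $|\alpha| \le d$ for which $\chi_\alpha$ is not identically zero on the solution space has some $s \in \cS$ with $\chi_\alpha(s)^2 \ge n^{-O(d)}$. Hypotheses (ii) and (iii) are the easy ones. In each family the objective is a fixed quadratic or linear form whose coefficients are $0/\pm1$ (for \maxkcsp, \communitydetection, \densestksubgraph, \maxclique) or are instance entries already bounded by $n^{d}$ (for \tensorpca and \sparsepca, by the estimates in \pref{lem:tpca-ex} and \pref{lem:spca-ex}); and in each family the normalization constraint ($x_i^2=x_i$, $x_i^2=1$, $\|x\|^2=1$, or the sparsity relation) together with $X\succeq0$ forces every diagonal entry of the pseudomoment matrix, hence every entry, to be $n^{O(d)}$, so that the SDP value is at most $\|F\|_1 \cdot n^{O(d)} = n^{O(d)}$. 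For (iii) we exhibit an explicit $\cS$ of size $n^{O(d)}$: the single point $\mathbf 1$ for the $\pm1$-cube; the subset indicators $\{\mathbf 1_T : |T|\le d\}$ for the $0/1$-cube; the normalized indicators $\{\mathbf 1_T/\sqrt{|T|} : 1\le |T| \le d\}$ for the sphere (on which $\chi_\alpha(s)^2 = |\supp(\alpha)|^{-|\alpha|} \ge d^{-d} \ge n^{-O(d)}$ at $s = \mathbf 1_{\supp(\alpha)}/\sqrt{|\supp(\alpha)|}$); for each $\le d$-set $U$ a single $k$-set $T_U \supseteq U$ for the sparse cube (valid since $|U|\le d < k$); and the clique indicators $\{\mathbf 1_S : S \text{ a clique}, |S|\le d\}$ for \maxclique, where $\chi_\alpha$ is non-identically-zero on the solution space exactly when $\supp(\alpha)$ is a clique and then equals $1$ at $\mathbf 1_{\supp(\alpha)}$.

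The crux is hypothesis (i): every degree-$\le 2d$ polynomial vanishing on the feasible set has a degree-$\le 2d$ derivation from the ideal. For the two plain hypercubes this is the standard observation that reducing a monomial $x^\alpha$ modulo $\{x_i^2-x_i\}$ (resp. $\{x_i^2-1\}$) down to its multilinearization $\chi_{\supp(\alpha)}$ only uses ideal multipliers of degree $\le\deg(x^\alpha)\le 2d$, combined with the fact that multilinear monomials of degree $\le 2d$ are linearly independent as functions on the whole cube; hence the zero eigenspace of $X_{\cD}$ is spanned by the differences $x^\alpha - \chi_{\supp(\alpha)}$, each of degree $\le 2d$. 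For the unit sphere one instead invokes the classical fact that a polynomial of degree $\le 2d$ vanishing on $\mathbb{S}^{n-1}$ (for $n\ge2$) is divisible by $\|x\|^2-1$ with quotient of degree $\le 2d-2$, so the derivation again has degree $\le 2d$. For the balanced hypercube one first multilinearizes (degree $\le 2d$) and is then left with accounting for the extra relation $\sum_i x_i - k$ together with its multiples $\chi_S\cdot(\sum_i x_i - k)$ of degree $\le 2d$; one shows the zero eigenspace modulo multilinearization is exactly their span via a dimension count for degree-$\le 2d$ functions on the hyperplane section of the cube. The sparse-cube cases ($\{x_i^2=x_i\}\cup\{\sum_i x_i=k\}$ and $\{x_i^3=x_i\}\cup\{\sum_i x_i^2=k\}$) are identical in shape, and this is exactly where the hypothesis $2d\le k$ is used: it guarantees that on the set of $k$-sparse feasible points the only low-degree linear dependencies among the reduced monomials are those generated by the sparsity relation (equivalently, the relevant inclusion matrices of $\le 2d$-sets versus $k$-sets have full rank when $2d \le k \le n-2d$). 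Finally, for \maxclique, after multilinearizing and then killing every $\chi_S$ with $S$ containing a non-adjacent pair via the degree-two relations $x_ix_j = 0$ (multipliers of degree $\le 2d$), the surviving monomials $\{\chi_S : S \text{ a clique}, |S| \le 2d\}$ are linearly independent on the clique indicators by the usual triangularity argument (evaluate $\sum_S c_S\chi_S$ at a minimal $S$ with $c_S\neq0$), so no further relations remain.

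The main obstacle I expect is precisely the balanced/sparse-cube part of (i): the clean but not entirely trivial statement that, restricted to the slice $\{x\in\{0,1\}^n : \sum_i x_i = k\}$ with $k\ge 2d$, the degree-$2d$ truncation of the ideal $\langle \{x_i^2-x_i\}, \sum_i x_i - k\rangle$ is tight, i.e. captures all degree-$\le 2d$ relations. This is a statement about the Johnson scheme / inclusion matrices that can be proved either by an explicit dimension count (the restrictions of degree-$\le 2d$ polynomials to the slice span a space of the expected dimension when $2d\le k\le n-2d$) or by exhibiting the required derivations directly, and it is the one step that does not reduce to a one-line degree-tracking argument; it is also where the constraint $2d\le k$ genuinely enters. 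Everything else amounts to bookkeeping of degrees in the standard multilinearization and Gröbner-style reductions, closely following the proof of \pref{thm:idealthing} (itself modeled on \cite{DBLP:journals/corr/RaghavendraW17}).
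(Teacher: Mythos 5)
Your proposal is correct and follows the same overall route as the paper: both reduce the corollary to checking the three hypotheses of \pref{thm:idealthing}, and your witness sets $\cS$ for hypothesis (iii) are interchangeable with the paper's (which uses the single points $\vec{1}$, a balanced vector, and $\tfrac{1}{\sqrt n}\vec{1}$ for the first three cases, all $k$-sparse indicators for the sparse cube, and clique indicators for \maxclique). The genuine divergence is in hypothesis (i): the paper does not prove any of the completeness conditions but imports them as black boxes --- \pref{prop:comp1}, quoted from Corollary 3.5 of \cite{DBLP:journals/corr/RaghavendraW17}, covers the hypercubes, the balanced hypercube, the sphere, and \maxclique, while \pref{prop:comp2}, quoted from Weitz's thesis \cite{BWthesis}, covers the two sparse-hypercube cases --- whereas you sketch direct derivations. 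Your arguments for the plain hypercubes (multilinearization plus linear independence), the sphere (divisibility by $\|x\|^2-1$ with degree drop), and \maxclique (triangularity on clique indicators) are essentially complete, and you have correctly isolated the remaining step, tightness of the degree-$2d$ truncation of the ideal on the slice $\{\sum_i x_i = k\}$ when $2d \le k$, as the only nontrivial piece; that is precisely the content of the cited \pref{prop:comp2}, so your route amounts to re-proving the citation rather than invoking it, and you would need to actually execute the inclusion-matrix rank / dimension count to make it self-contained, though nothing in the plan would fail. One small point in your favor: \pref{thm:idealthing} assumes the \emph{SDP} optimum is $n^{O(d)}$ while the corollary only hypothesizes a bound on the \emph{integer} optimum; the paper passes over this, whereas your observation that the normalization constraints together with $X \succeq 0$ bound the diagonal, and hence all entries, of any feasible pseudomoment matrix supplies the missing justification.
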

We prove this corollary below.
For each of the above problems, it is clear that the objective value is bounded and the objective function has no large coefficients.
To prove this corollary, we need to verify the completeness of the constraint sets, and then demonstrate a set of feasible solutions so that each square term receives non-negligible mass from some solution.

A large family of completeness conditions were already verified by \cite{DBLP:journals/corr/RaghavendraW17} and others (see the references therein):

\begin{proposition}[Completeness of canonical polynomial optimization problems (from Corollary 3.5 of \cite{DBLP:journals/corr/RaghavendraW17})]\label{prop:comp1}
The following pairs of polynomial optimization problems $\cP$ and distributions over solutions $\cD$ are complete:
    \begin{enumerate}
	\item If the feasible set is $x \in \R^n$ with $\{x_i^2 = 1\}_{i\in[n]}$ or $\{x_i^2 = x_i\}_{i\in[n]}$, $\cP$ is $d$-complete up to degree $d$ (e.g. if $\cP$ is a CSP).
	    This is still true of the constraints $\{x_i^2 =1 \}_{i\in[n]}\cup\{\sum_i x_i = 0\}$ (e.g. if $\cP$ is a community detection problem).
	\item If the feasible set is $x \in \R^n$ with $\sum_{i\in[n]} x_i^2 = \alpha$, then $\cP$ is $d$-complete on $\cD$ up to degree $d$ (e.g. if $\cP$ is the tensor PCA problem).
	\item If $\cP$ is the \maxclique problem with feasible set $x \in \R^n$ with $\{x_i^2 = x_i\}_{i\in[n]}\cup\{x_ix_j = 0\}_{(i,j) \in E}$, then $\cP$ is $d$-complete on $\cD$ up to degree $d$.
    \end{enumerate}
\end{proposition}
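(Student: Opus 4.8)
The plan is to prove all three items by identifying, in each case, the degree-$\leq d$ part of the vanishing ideal of the solution set $\supp(\cD)$ and checking that it is generated by the listed constraints through multipliers of bounded degree (which is precisely what feeds into \pref{thm:idealthing}). Unwinding the definitions: a zero eigenvector of the moment matrix $X_{\cD}$ is the coefficient vector of a polynomial $p$ with $\E_{s\sim\cD}[p(s)^2]=0$, i.e.\ a polynomial of the relevant degree vanishing identically on $\supp(\cD)$, and a degree-$k$ derivation of $p$ is a representation $p=\sum_j q_j g_j$ with $\deg(q_j g_j)\le k$. So completeness is the purely algebraic assertion that every such $p$ lies in the ideal $\langle g_j\rangle_j$ with a derivation whose degree is at most that of $p$ (up to the additive loss built into the degrees of the $g_j$).

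The engine for the three hypercube-flavored cases is the observation that reducing a polynomial modulo $\{x_i^2=x_i\}_i$ (resp.\ $\{x_i^2=1\}_i$) to its multilinearization is a degree-non-increasing derivation: each rewrite $x_i^2 m\mapsto x_i m$ (resp.\ $x_i^2 m\mapsto m$) strictly lowers the degree of the monomial $m$ and contributes a term $q\cdot(x_i^2-x_i)$ (resp.\ $q\cdot(x_i^2-1)$) of degree no larger than that of the polynomial being reduced. For item~1 in the plain-hypercube case ($\{x_i^2=1\}_i$ or $\{x_i^2=x_i\}_i$ alone, e.g.\ a CSP) this already finishes the proof: the multilinear residue vanishes on all of the cube, hence is the zero polynomial since multilinear monomials form a basis for functions on the cube, so $p$ equals its own degree-bounded derivation. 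For \maxclique\ (item~3) I would first reduce modulo $\{x_i^2=x_i\}_i$, then note that a multilinear residue $\tilde p=\sum_T c_T\prod_{i\in T}x_i$ vanishing on all independent-set indicators of the constraint graph satisfies $\sum_{T'\subseteq T}c_{T'}=0$ for every independent set $T$; since the independent sets form a downward-closed family, Möbius inversion (induction on $|T|$) forces $c_T=0$ whenever $T$ is independent, so every surviving monomial contains some edge $\{j,k\}\in E$ and is divisible by $x_jx_k$ with quotient of degree $\le\deg\tilde p-2$, yielding the derivation. For the sphere (item~2) I would instead use that, for $\alpha>0$, the polynomial $\sum_i x_i^2-\alpha$ is irreducible and sign-changing, so its ideal is real radical and equals the full vanishing ideal of the sphere; thus $p=(\sum_i x_i^2-\alpha)\,q$, and comparing top-degree forms (the leading form is $\sum_i x_i^2$, a nonzero polynomial, and $\R[x]$ is a domain) gives $\deg q=\deg p-2$, so the derivation has degree $\deg p$.

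The remaining, and I expect most delicate, case is the balanced hypercube $\{x_i^2=1\}_i\cup\{\sum_i x_i=0\}$ (item~1, \communitydetection). After multilinearizing as above one is left with a multilinear polynomial $\tilde p$ of degree $\le d$ vanishing on the middle slice $\binom{[n]}{n/2}$, and one must produce a multilinear $q$ of degree $\le d-1$ with $\sum_i x_i\cdot q\equiv\tilde p\pmod{\langle x_i^2-1\rangle_i}$; granting this, one assembles a degree-$\le d$ derivation of $p$ from the initial multilinearization, the relation $\sum_i x_i\cdot q\equiv\tilde p$, and the generators $x_i^2-1$. Existence of such a $q$ is where the ambient dimension must be large relative to $d$: a dimension count via the $S_n$-decomposition of the slice (the degree-$\le d$ functions on $\binom{[n]}{n/2}$ form a space of dimension $\binom nd$, so the degree-$\le d$ multilinear polynomials vanishing there form a space of dimension $\sum_{j\le d-1}\binom nj$, matching the space of degree-$\le d-1$ multilinear polynomials), combined with injectivity of the map $q\mapsto$ multilinearization of $\sum_i x_i\cdot q$ in degrees $\le d-1$; the latter injectivity is the full-rank property of the set-inclusion matrices $\binom{[n]}{k}\times\binom{[n]}{k+1}$ for $k<n/2$ (Gottlieb's theorem, equivalently a fact of harmonic analysis on the Johnson scheme), valid so long as $d$ is at most roughly $n/2$, which covers the regime of interest. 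Assembling the cases proves the proposition; the only genuine work, and the only place $n\gg d$ is used, is this slice-rank input for the balanced cube.
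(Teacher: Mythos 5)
The paper does not actually prove this proposition: it is imported verbatim as Corollary~3.5 of Raghavendra--Weitz \cite{DBLP:journals/corr/RaghavendraW17}, so there is no in-paper argument to compare against, and your proposal should be judged as a standalone replacement for that citation. As such it is essentially correct, and it proceeds along the lines one would expect (identify the low-degree part of the vanishing ideal of $\supp(\cD)$ and exhibit degree-bounded generators): your unwinding of ``zero eigenvector of $X_\cD$'' as ``degree-$\le d$ polynomial vanishing on $\supp(\cD)$'' matches the paper's definition, the multilinearization step is degree-non-increasing as you say, the M\"obius-inversion argument for the downward-closed family in the \maxclique case is right (and each surviving monomial $x_T$ factors as $x_{T\setminus\{j,k\}}\cdot x_jx_k$ of degree $|T|\le d$), and the sphere case via real-radicality of the principal ideal $(\sum_i x_i^2-\alpha)$ plus comparison of leading forms is the standard route (with the implicit, harmless assumptions $\alpha>0$ and $n\ge 2$). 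You are also right that the only genuinely delicate case is the balanced cube. One step there deserves to be written out: injectivity of $q\mapsto\mathrm{multilin}\bigl((\sum_i x_i)q\bigr)$ on multilinear polynomials of degree $\le d-1$ is not literally Gottlieb's theorem (which concerns the up operator $U$ alone), since in the $\pm1$ monomial basis the operator is $U+D$. The fix is a one-line peeling argument: the top-degree homogeneous part of $(U+D)q$ is $Uq_{d-1}$, so Gottlieb gives $q_{d-1}=0$ when $d-1<n/2$, and one descends through the degrees. With that supplied, your dimension count closes the argument, and the restriction $d\lesssim n/2$ you flag is indeed the regime in which the proposition is applied.
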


A couple of additional examples can be found in the upcoming thesis of Benjamin Weitz \cite{BWthesis}:
\begin{proposition}[Completeness of additional polynomial optimization problems) \cite{BWthesis}]\label{prop:comp2}
The following pairs of polynomial optimization problems $\cP$ and distributions over solutions $\cD$ are complete:
    \begin{enumerate}
	\item If $\cP$ is the \densestksubgraph relaxation, with feasible set $x \in \R^n$ with $\{x_i^2 = x_i\}_{i \in[n]} \cup\{ \sum_{i\in[n]} x_i  = k\}$, $\cP$ is $d$-complete on $\cD$ up to degree $d \le k$.\label{cond:dks-complete}
	\item If $\cP$ is the \sparsepca relaxation with sparsity $k$, with feasible set $x \in \R^n$ with $\{x_i^3 = x_i\}_{i \in [n]}\cup \{\sum_{i\in[n]} x_i^2 = k\}$, $\cP$ is $d$-complete up to degree $d \le k/2$.\label{cond:spca-complete}
    \end{enumerate}
\end{proposition}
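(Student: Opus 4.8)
The plan is to show, for each of the two families, that every polynomial $p$ with $\deg p \le d$ that vanishes identically on the (finite) feasible set $\cS$ admits a derivation $p = \sum_j q_j g_j$ with $\deg(q_j g_j) \le d$, where $\{g_j\}$ is the listed generating set; by definition this is exactly $d$-completeness of $\cP$ on the uniform distribution $\cD$ over $\cS$, since the kernel of $X_{\cD}$ is the coefficient space of degree-$\le d$ polynomials vanishing on $\cS$. The first step in both cases is to use the per-coordinate relations to pass to a canonical form, modulo a degree-preserving piece of the derivation: for \densestksubgraph the relations $x_i^2 - x_i$ reduce any polynomial to a multilinear representative, and for \sparsepca the relations $x_i^3 - x_i$ reduce any polynomial to a linear combination of monomials $\prod_{i\in A} x_i^2\prod_{j\in B} x_j$ with $A,B$ disjoint. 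After this reduction the only remaining generator is the single symmetric constraint ($\sum_i x_i - k$, resp. $\sum_i x_i^2 - k$), so it suffices to prove that a canonical-form polynomial of degree $\le d$ vanishing on $\cS$ lies in the degree-$\le d$ part of the ideal generated by that one polynomial together with the cube relations (the latter used only to re-canonicalize products).

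For \densestksubgraph this is a statement about the $k$-slice $\binom{[n]}{k}\subseteq\{0,1\}^n$. I would argue via the $S_n$-module structure of degree-$\le d$ functions on the cube: in the range $d\le\min(k,n-k)$ --- which holds comfortably since $k=n^\rho$ and $d$ is only a small power of $n$ --- the restriction-to-the-slice map has image of dimension $\binom{n}{d}$ (the sum of the Specht modules $S^{(n-j,j)}$ for $j\le d$ telescopes), hence kernel of dimension $\sum_{j\le d-1}\binom{n}{j}$. One then checks that the $\sum_{j\le d-1}\binom{n}{j}$ elements $(\sum_i x_i - k)\cdot x_S \bmod\{x_i^2-x_i\}$ over $|S|\le d-1$, all of degree $\le d$, are linearly independent: the top-degree part of such an element is $\sum_{i\notin S} x_{S\cup i}$, so a nontrivial cancellation in top degree would force the ``up''/shadow operator $\R^{\binom{[n]}{d-1}}\to\R^{\binom{[n]}{d}}$ to be non-injective, which fails whenever $2d\le n+1$; an induction downward on degree then finishes independence. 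A dimension count shows these elements span the kernel, delivering the degree-$\le d$ derivation.

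For \sparsepca the feasible set is the signed $k$-slice $\{x\in\{-1,0,1\}^n : |\supp(x)| = k\}$; on it $x_i^2$ is the indicator of $i\in\supp(x)$ while the $x_i$ with $i\in\supp(x)$ are free $\pm1$, so the evaluation map on canonical monomials factors as (functions on the Boolean $k$-slice in the ``support'' variables $x_i^2$) $\otimes$ (free sign characters in the $x_i$). Applying the slice analysis of the previous paragraph to the support variables, and observing that the sign directions impose no further relations, yields that a degree-$\le d$ vanishing polynomial is a degree-$\le d$ combination of $(\sum_i x_i^2 - k)$ times canonical monomials (reducing products via $x_i^3 = x_i$ after each multiplication). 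The hypothesis $d\le k/2$ enters here: multiplying a canonical monomial of support-size $s$ by $\sum_i x_i^2 - k$ and re-reducing produces top terms of support-size $s+1$ and $x$-degree larger by $2$, so one needs the slack to keep all relevant supports within the safe regime $\le\min(k,n-k)$ where the slice/up-operator argument is valid.

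The main obstacle is the slice-ideal computation itself --- proving not merely that $\sum_i x_i - k$ (resp.\ $\sum_i x_i^2 - k$) generates the vanishing ideal of the slice among canonical representatives, but that every degree-$\le d$ vanishing polynomial has a derivation of degree $\le d$ with no high-degree detour. This is where the representation theory (equivalently, an explicit combinatorial inversion of the boundary/up-operator) must be carried out carefully: one must verify that the $S_n$-irreducibles $S^{(n-j,j)}$ appear with the expected multiplicities in both the cube and the slice, that the up-operator is injective throughout the range $d\le k\ll n$, and --- in the sparse PCA case --- that the power reduction $x_i^3\to x_i$ never forces the cofactors $q_j$ out of degree $d$, which is exactly the role of the $d\le k/2$ restriction in the bookkeeping.
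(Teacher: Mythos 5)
First, a point of comparison: the paper itself gives no proof of this proposition --- it is imported as a black-box citation from Weitz's thesis \cite{BWthesis}, just as \pref{prop:comp1} is imported from \cite{DBLP:journals/corr/RaghavendraW17}. So there is no in-paper argument to measure your proposal against; what follows is an assessment of your sketch on its own terms.

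Your overall route --- reduce to canonical representatives modulo the per-coordinate relations, then show that the degree-$\le d$ part of the vanishing ideal of the slice is spanned by $(\sum_i x_i - k)\cdot x_S$ (resp.\ $(\sum_i x_i^2-k)$ times canonical monomials) via a dimension count plus injectivity of the up-operator --- is the standard way such completeness statements are established, and the \densestksubgraph half is essentially complete as you describe it. Two caveats. (i) Your dimension count for degree-$\le d$ functions on the slice needs $d\le\min(k,n-k)$ and $2d\le n+1$, whereas the proposition as stated assumes only $d\le k$; this is harmless for every application in the paper (where $k=n^{\rho}$ and $d=n^{O(\e)}$), but your argument does not literally cover, say, $k>n/2$ with $d$ close to $k$. (ii) In the \sparsepca case the decomposition by sign characters does not reduce to vanishing on the full slice: writing $p=\sum_B\chi_B\, g_B$ with $\chi_B=\prod_{j\in B}x_j$ and $g_B$ a polynomial in the support variables $x_i^2$, the vanishing condition is that $g_B$ vanishes on the \emph{star} $\{T : B\subseteq T,\ |T|=k\}$, i.e.\ on the slice $\binom{[n]\setminus B}{k-|B|}$ in the remaining coordinates. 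The natural derivation there uses the generator $\sum_{i\notin B}x_i^2-(k-|B|)$, which is not among your listed generators; converting it requires the identity $\chi_B\cdot\sum_{j\in B}(x_j^2-1)=\sum_{j\in B}(x_j^3-x_j)\prod_{j'\in B\setminus\{j\}}x_{j'}$, and similarly $x_i^4-x_i^2=x_i(x_i^3-x_i)$ to restore Booleanity of the support variables. These identities do make the argument go through, but they --- and the degree accounting that actually produces the threshold $d\le k/2$ --- are exactly the steps your sketch asserts rather than carries out.
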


\begin{proof}[Proof of \pref{cor:well-cond}]
    We verify the conditions of \pref{thm:idealthing} separately for each case.
    \begin{enumerate}
	\item The hypercube: the completeness conditions are satisfied by \pref{prop:comp1}.
	    We choose the set of feasible solutions to contain a single point, $s = \vec{1}$, for which $\chi_{\alpha}^2(s) = 1$ always.
	\item The hypercube with balancedness constraints: the completeness conditions are satisfied by \pref{prop:comp1}.
	    We choose the set of feasible solutions to contain a single point, $s$, some perfectly balanced vector, for which $\chi_{\alpha}^2(s) = 1$ always.
	\item The unit sphere: the completeness conditions are satisfied by \pref{prop:comp1}.
	    We choose the set of feasible solutions to contain a single point, $s = \frac{1}{\sqrt{n}}\cdot \vec{1}$, for which $\chi_{\alpha}^2(s) \ge n^{-d}$ as long as $|\alpha| \le d$, which meets the conditions of \pref{thm:idealthing}.
	\item The sparse hypercube: the completeness conditions are satisfied by \pref{prop:comp2}.
	    Here, we choose the set of solutions $\cS = \{x \in \{0,1\}^n ~|~ \sum_i x_i = k\}$.
	    as long as $k > d$, for any $|\alpha| \le d$ we have that $\chi_S(x)^2 = 1$ when $s$ is $1$ on $\alpha$.
	\item The \maxclique problem: the completeness conditions are satisfied by \pref{prop:comp1}.
	    We choose the solution set $\cS$ to be the set of $0,1$ indicators for cliques in the graph.
	    Any $\alpha$ that corresponds to a non-clique in the graph has $\chi_\alpha$ identically zero in the solution space.
	    Otherwise, $\chi_{\alpha}(s)^2 = 1$ when $s \in \cS$ is the indicator vector for the clique on $\alpha$.
    \end{enumerate}
    This concludes the proof.
\end{proof}

\section{Lower bounds on the nonzero eigenvalues of some moment matrices}
In this appendix, we prove lower bounds on the magnitude of nonzero eigenvalues of covariance matrices for certain distributions over solutions.
Many of these bounds are well-known, but we re-state and re-prove them here for completeness.
We first define the property we want:
\begin{definition}
Let $\cP$ be a polynomial optimization problem and let $\cD$ be the uniform distribution over the set of feasible solutions $S$ for $\cP$.
Define the degree-$2d$ moment matrix of $\cD$ to be $X_{\cD} = \E_{x\sim\cD}[\hat{x}^{\tensor 2d}]$, where $\hat{x} = [1 \ x]^{\top}$.
\begin{itemize}
\item We say that \emph{$\cD$ is $\delta$-spectrally rich up to degree $2d$} if every nonzero eigenvalue of $X_\cD$ is at least $\delta$.
\end{itemize}
\end{definition}
\begin{proposition}[Spectral richness of polynomial optimization problems]\label{prop:spec-rich-examples}
The following distributions over solutions $\cD$ are polynomially spectrally rich:
    \begin{enumerate}
	\item If $\cD$ is the uniform distribution over $\{\pm 1\}^n$, then $\cD$ is polynomially spectrally rich up to degree $d \le n$.
	\item If $\cD$ is the uniform distribution over $\alpha \cdot \cS_{n-1}$, then $\cD$ is polynomially spectrally rich up to degree $d\le n$.
	\item If $\cD$ is the uniform distribution over $x \in \{1,0\}^n$ with $\|x\|_0 = k$, then if $2d \le k$, $\cD$ is polynomially spectrally rich up to degree $d$.
	\item If $\cD$ is the uniform distribution over $x \in \{\pm 1,0\}^n$ with $\|x\|_0 = k$, then if $2d \le k$,  $\cD$ is polynomially spectrally rich up to degree $d$.
    \end{enumerate}
\end{proposition}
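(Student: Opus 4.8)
The plan is to reduce, for each of the four distributions $\cD$, the desired bound to controlling the smallest nonzero eigenvalue of an explicit, highly structured Gram matrix. The starting point is the standard observation that $X_\cD$ is the Gram matrix of the monomials $\{\chi_\alpha : |\alpha|\le d\}$ in $L^2(\cD)$: writing $\Phi_\cD$ for the evaluation map sending a coefficient vector $c$ to the function $\sum_\alpha c_\alpha\chi_\alpha$, we have $X_\cD=\Phi_\cD^{*}\Phi_\cD$, so the nonzero eigenvalues of $X_\cD$ are exactly the nonzero squared singular values of $\Phi_\cD$. Thus it suffices to produce, for each $\cD$, an $L^2(\cD)$-orthonormal basis $\{b_i\}$ of the degree-$\le d$ polynomials (modulo the relations holding on $\supp\cD$) such that each $b_i$ has a polynomial representative whose monomial-coefficient vector has norm at most $n^{O(d)}$: indeed, any unit-norm $f\in\mathrm{im}\,\Phi_\cD$ then has a preimage of norm at most $n^{O(d)}\cdot\sqrt{\binom{n}{\le d}}=n^{O(d)}$, giving $\sigma_{\min}(\Phi_\cD)^{-2}\le n^{O(d)}$. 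For $\cD$ uniform on $\{\pm1\}^n$ with $d\le n$ this is immediate: the multilinear monomials are already orthonormal, so $X_\cD$ is the identity and every nonzero eigenvalue equals $1$.

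For the sphere $\alpha\cdot\cS_{n-1}$ I would use the harmonic (Fischer) decomposition of polynomials. Each monomial of degree $m$ decomposes as $\chi_\alpha=\sum_{j\le m,\ j\equiv m}\alpha^{m-j}H_{\alpha,j}$ with $H_{\alpha,j}$ harmonic of degree $j$, and harmonics of distinct degree are $L^2(\cS_{n-1})$-orthogonal; this block-diagonalizes $X_\cD$ by harmonic degree, so it is enough to exhibit, for each $j\le d$, an $L^2(\cS_{n-1})$-orthonormal basis of degree-$j$ harmonic polynomials with coefficients bounded by $n^{O(d)}$. I would get this by comparison with the Gaussian: for $h$ homogeneous of degree $j$, $\E_{\cS_{n-1}}[h^2]=\E_{g\sim\cN(0,\Id)}[h(g)^2]\big/\E_{g}[\snorm{g}^{j}]$, where $\E_{g}[\snorm{g}^{j}]=n(n+2)\cdots(n+2j-2)=n^{O(j)}$, while under the Gaussian the monomial Gram matrix factors through the Hermite basis as $C^{\top}\mathrm{diag}(\alpha!)\,C$ with $\norm{C},\norm{C^{-1}}\le n^{O(d)}$, so its nonzero eigenvalues lie in $[n^{-O(d)},n^{O(d)}]$. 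Feeding this through the harmonic decomposition (whose inverse is an explicit iterated-Laplacian formula with coefficients of size $n^{O(d)}$, since the denominators are products like $(n+2m)(n+2m-2)\cdots$) yields the claimed bound for every degree-$\le d$ polynomial on the sphere.

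For the two sparse slices I would reduce both to the Johnson scheme. For $\cD$ uniform on $\{0,1\}^n\cap\{\snorm x_0=k\}$ the monomials $\chi_A=\prod_{i\in A}x_i$ (for $|A|\le d$) have Gram matrix $G_{A,B}=\binom{n-|A\cup B|}{k-|A\cup B|}/\binom nk$; this matrix is $S_n$-equivariant, hence block-diagonal over the isotypic components of $\bigoplus_{i\le d}M^{(n-i,i)}$, with blocks of size $O(d)$ whose entries are ratios of $O(d)$-degree polynomials in $n$, and an explicit computation (via the known Eberlein/Johnson eigenvalue formulas, or a direct Cholesky bound) shows every nonzero eigenvalue is $\ge n^{-O(d)}$ — the hypothesis $2d\le k$ is exactly what keeps the pivots this large and forces the kernel of $\Phi_\cD$ to be precisely the scheme relations. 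For $\cD$ uniform on $\{\pm1,0\}^n\cap\{\snorm x_0=k\}$, the reduced monomials are $\chi_{A,B}=\prod_{i\in A}x_i\prod_{i\in B}x_i^2$ with $A,B$ disjoint; since the random signs are symmetric, $\E_\cD[\chi_{A,B}\chi_{A',B'}]=0$ unless $A=A'$, so $X_\cD$ is block-diagonal over $A$, and within the block for a fixed $A$ of size $a$ it is, up to the scalar $\binom nk^{-1}$ and a shift of parameters, exactly the Johnson-scheme Gram matrix on $\binom{[n]\setminus A}{k-a}$ of degree $\le d-a$ — so this case follows from the previous one.

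The step I expect to be the main obstacle is this last one: pinning down the kernel of $\Phi_\cD$ on the slices (verifying that, under $2d\le k$, the only low-degree relations are those forced by the association scheme, so that "nonzero eigenvalue" is the right quantity) while simultaneously getting the sharp $n^{-O(d)}$ — rather than an exponentially worse — lower bound on the surviving eigenvalues; this is where the combinatorics of the Johnson scheme must be handled carefully, including the regime where $k=n^{\rho}$ is itself only polynomially large. By contrast, the sphere and hypercube cases are essentially bookkeeping once the harmonic decomposition and the Gaussian comparison are set up.
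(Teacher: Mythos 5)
Your proposal is correct and follows the same top-level skeleton as the paper's proof: both identify the nonzero eigenvalues of $X_\cD$ with the nonzero squared singular values of the monomial-evaluation map (equivalently, the paper writes $RX_\cD R^\top = \Id$ for the change-of-basis matrix $R$ to an $L^2(\cD)$-orthonormal basis and bounds $\sigma_{\max}(R)$), and both dispose of the hypercube case by noting the monomials are already orthonormal. Where you diverge is in how the orthonormal structure is accessed for the remaining cases. For the sphere, the paper simply cites an explicit coefficient formula for spherical harmonics and bounds $\sigma_{\max}(R)$ by the triangle inequality over at most $n^d$ basis polynomials with at most $n^d$ coefficients each; your Gaussian--Hermite comparison (using $\E_{\cS^{n-1}}[h^2]=\E_g[h(g)^2]/\E_g[\norm{g}^{2j}]$ for homogeneous $h$ and the Fischer decomposition) is a more self-contained route to the same $n^{\pm O(d)}$ conditioning. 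For the two slices, the paper again bounds the monomial coefficients of an explicit orthonormal basis (Filmus's Young basis, with a hand computation of the normalization in the $\{\pm 1,0\}$ case showing the extra factor $\E_\cD[\prod_i(x_{2i-1}-x_{2i})^2]$ is $O(1)$), whereas you block-diagonalize the Gram matrix directly via $S_n$-equivariance and the Johnson scheme, and reduce the signed slice to the unsigned one via the sign-symmetry observation that $\E[\chi_{A,B}\chi_{A',B'}]=0$ unless $A=A'$. These are two faces of the same structure (the Young basis is the Johnson-scheme eigenbasis), so the approaches are interchangeable; yours trades the citation of explicit coefficient bounds for an eigenvalue computation in the association scheme, and the concern you flag about identifying the kernel and keeping the surviving eigenvalues at $n^{-O(d)}$ under $2d\le k$ is precisely the content the paper outsources to the cited coefficient bounds.
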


\begin{proof}
    In the proof of each statement, denote the $2d$th moment matrix of $\cD$ by $X_\cD \defeq \E_{x\sim \cD}[x^{\tensor 2d}]$.
Because $X_{\cD}$ is a sum of rank-1 outer-products, an eigenvector of $X_{\cD}$ has eigenvalue $0$ if and only if it is orthogonal to every solution in the support of $\cD$, and therefore the zero eigenvectors correspond exactly to the degree at most $d$ constraints that can be derived from the ideal constraints.

    Now, let $p_1(x),\ldots,p_{r}(x)$ be a basis for polynomials of degree at most $2d$ in $x$ which is orthonormal with respect to $\cD$, so that
    \[
	\E_{x\sim\cD}[p_i(x) p_{j}(x)] = \begin{cases} 1 & i = j\\
	0 & \text{otherwise}\end{cases}
    \]
    If $\hat p_i$ is the representation of $p_i$ in the monomial basis, we have that
    \[
	(\hat p_i)^\top X_{\cD} \hat p_j
	= \E_{x \sim \cD} [p_i(x)p_j(x)].
    \]
Therefore, the matrix $R = \sum_{i} e_i (\hat p_i)^\top$ diagonalizes $X_{\cD}$,
    \[
	R X_{\cD} R^\top = \Id.
    \]
It follows that the minimum non-zero eigenvalue of $X_{\cD}$ is equal to the smallest eigenvalue of $(RR^\top)^{-1}$, which is in turn equal to $\frac{1}{\sigma_{\max}(R)^2}$ where $\sigma_{\max}(R)$ is the largest singular value of $R$.
Therefore, for each of these cases it suffices to bound the singular values of the change-of-basis matrix between the monomial basis and an orthogonal basis over $\cD$.
We now proceed to handle each case separately.
    \begin{enumerate}
	\item $\cD$ uniform over hypercube: In this case, the monomial basis \emph{is} an orthogonal basis, so $R$ is the identity on the space orthogonal to the ideal constraints, and $\sigma_{\max}(R) = 1$, which completes the proof.
	\item $\cD$ uniform over sphere: Here, the canonical orthonormal basis the spherical harmonic polynomials.
	    Examining an explicit characterization of the spherical harmonic polynomials (given for example in \cite{harmonics}, Theorem 5.1), we have that when expressing $p_i$ in the monomial basis, no coefficient of a monomial (and thus no entry of $\hat p_i$) exceeds $n^{O(d)}$\Tnote{}, and since there are at most $n^d$ polynomials each with $\sum_{i=0}^{d} \binom{n}{d} \le n^d$ coefficients, employing the triangle inequality we have that $\sigma_{\max}(R) \le n^{O(d)}$, which completes the proof.
	\item $\cD$ uniform over $\{x \in \{0,1\}^k ~|~ \|x\|_0 = k\}$:
	    In this case, the canonical orthonormal basis is the correctly normalized Young's basis (see e.g. \cite{DBLP:journals/combinatorics/Filmus16} Theorems 3.1,3.2 and 5.1), and agan we have that when expressing an orthonormal basis polynomial $p_i$ in the monomial basis, no coefficient exceeds $n^{O(d)}$.
	    As in the above case, this implies that $\sigma_{\max}(R) \le n^{O(d)}$ and completes the proof.
	\item $\cD$ uniform over $\{x \in \{\pm 1,0\}^k ~|~ \|x\|_0 = k\}$:
	    Again the canonical orthonormal basis is Young's basis with a correct normalization.
	    We again apply \cite{DBLP:journals/combinatorics/Filmus16} Theorems 3.1,3.2, but this time we calculate the normalization by hand: we have that in expressing each $p_i$, no element of the monomial basis has coefficient larger than $n^{O(d)}$ multiplied by the quantity
	    \[
		\E_{x\sim \cD}\left[\prod_{i=1}^d (x_{2i-1} - x_{2i})^2\right] = O(1).
	    \]
	This gives the desired conclusion.
    \end{enumerate}
\end{proof}
\section{From Boolean to Gaussian lower bounds}
\label{sec:reduce-to-hypercube}
In this section we show how to prove our \sos lower bounds for Gaussian PCA problems using the lower bounds for Boolean problems in a black-box fashion.
The techniques are standard and more broadly applicable than the exposition here but we prove only what we need.

The following proposition captures what is needed for tensor PCA; the argument for sparse PCA is entirely analogous so we leave it to the reader.

\begin{proposition}
  Let $k \in \N$ and let $A \sim \{\pm 1\}^{\binom{n}{k}}$ be a symmetric random Boolean tensor.
  Suppose that for every $A \in \{ \pm 1\}^{\binom nk}$ there is a degree-$d$ pseudodistribution $\pE$ satisfying $\{ \|x\|^2 = 1\}$ such that
  \[
  \E_A \pE \iprod{x^{\tensor k},A} = C\mper
  \]

  Let $T \sim \cN(0,1)^{\binom nk}$ be a Gaussian random tensor.
  Then
  \[
  \E_T \max_{\pE} \pE \iprod{x^{\tensor k}, T} \geq \Omega(C)
  \]
  where the maximization is over pseudodistributions of degree $d$ which satisfy $\{ \|x\|^2 = 1 \}$.
\end{proposition}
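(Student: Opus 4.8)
The plan is to build, for a typical Gaussian tensor $T$, a degree-$d$ pseudodistribution by \emph{recycling} the Boolean pseudodistribution attached to the sign pattern of $T$, and then to exploit the fact that the sign and the magnitude of a standard Gaussian coordinate are independent. Concretely, given $T \sim \cN(0,1)^{\binom nk}$, I would set $A \defeq \sgn(T) \in \{\pm 1\}^{\binom nk}$ (entrywise sign) and write $|T|$ for the entrywise absolute value, so that $T_\alpha = A_\alpha \cdot |T_\alpha|$ for every index $\alpha$. Two elementary facts drive everything: (i) $A$ is distributed uniformly on $\{\pm 1\}^{\binom nk}$, and (ii) $A$ and $|T|$ are independent, with the $|T_\alpha|$ i.i.d.\ half-normal of mean $\E|T_\alpha| = \sqrt{\tfrac2\pi}$. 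By hypothesis there is a degree-$d$ pseudodistribution $\pE_A$ satisfying $\{\|x\|^2 = 1\}$; I take this same $\pE_A$ as the candidate pseudodistribution for the instance $T$. It is still degree $d$ and still satisfies $\{\|x\|^2 = 1\}$, because that constraint does not refer to the instance at all.

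The remaining step is a one-line computation using linearity. For fixed $A$, the functional $\pE_A$ is a fixed linear map on polynomials, so $\pE_A \iprod{x^{\tensor k}, T} = \sum_\alpha T_\alpha\, \pE_A[x^\alpha]$ is linear in the entries of $T$. Conditioning on $A$ and averaging over $|T|$ (independent of $A$) gives
\[
  \E_{T}\bigl[\pE_{\sgn(T)} \iprod{x^{\tensor k}, T}\bigr]
  = \E_{A}\, \E_{|T|} \sum_\alpha A_\alpha |T_\alpha|\, \pE_A[x^\alpha]
  = \sqrt{\tfrac 2\pi}\cdot \E_{A} \pE_A \iprod{x^{\tensor k}, A}
  = \sqrt{\tfrac2\pi}\, C .
\]
Since for each fixed $T$ the maximum over degree-$d$ pseudodistributions satisfying $\{\|x\|^2=1\}$ is at least the value attained by the particular choice $\pE_{\sgn(T)}$, taking expectations yields
\[
  \E_T \max_{\pE} \pE \iprod{x^{\tensor k}, T} \;\geq\; \E_T \pE_{\sgn(T)} \iprod{x^{\tensor k}, T} \;=\; \sqrt{\tfrac 2\pi}\, C \;=\; \Omega(C),
\]
which is the claim.

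I do not expect a genuine obstacle; the argument is short, and the only points that need care are: (a) the constraint $\{\|x\|^2=1\}$ is instance-independent, which is exactly why a pseudodistribution calibrated to $A$ stays feasible for $T$ (hence the statement is phrased with the spherical constraint, not one coupling $x$ to the instance); (b) one needs $d \ge k$ for $\pE_A$ to be defined on the degree-$k$ monomials in $\iprod{x^{\tensor k},T}$, which is implicit; and (c) the sign/magnitude independence of a Gaussian, which is what lets the conditional expectation factor and is the whole reason the Boolean lower bound transfers with only a constant loss. Finally, $T \mapsto \pE_{\sgn(T)}$ factors through the finite set of sign patterns, so there is no measurability issue. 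The sparse-PCA analogue is identical: replace $x^{\tensor k}$ by $\iprod{x,Ax}$ and $\{\|x\|^2=1\}$ by the (again instance-independent) constraint $\{x_i^3 = x_i,\ \|x\|^2 = k\}$, and run the same sign/magnitude decomposition of the Gaussian matrix entries.
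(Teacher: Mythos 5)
Your proposal is correct and is essentially the paper's own argument: both take the Boolean pseudodistribution attached to $\sgn(T)$, and both use that conditionally on the sign pattern each entry satisfies $\E[T_\alpha \mid \sgn(T)] = \sgn(T_\alpha)\cdot\E|g|$ with $\E|g| = \sqrt{2/\pi}$, so the Boolean value transfers with only a constant-factor loss. No substantive differences.
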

\begin{proof}
  For a tensor $T \in (\R^{n})^{\tensor k}$, let $A(T)$ have entries $A(T)_{\alpha} = \sign(T_\alpha)$.
  Now consider
  \[
  \E_T \pE_{A(T)} \iprod{x^{\tensor k}, T} = \sum_{\alpha} \E_T \pE_{A(T)} x^\alpha T_\alpha
  \]
  where $\alpha$ ranges over multi-indices of size $k$ over $[n]$.
  We rearrange each term above to
  \[
  \E_{A(T)} (\pE_{A(T)} x^\alpha) \cdot \E_{T_{\alpha} \, | \, A(T)} T_\alpha
  = \E_{A(T)} (\pE_{A(T)} x^\alpha) \cdot A(T)_\alpha \cdot \E |g|
  \]
  where $g \sim \cN(0,1)$.
  Since $\E |g|$ is a constant independent of $n$, all of this is
  \[
  \Omega(1) \cdot \sum_{\alpha} \E_A \pE_A x^\alpha \cdot A_\alpha = C\mper\qedhere
  \]
\end{proof}

\end{document}

